\renewcommand\@pnumwidth{1.8em}
\newcommand{\Permsymm}{Per\-mu\-ta\-tion-sym\-met\-ric }
\newcommand{\permsymm}{per\-mu\-ta\-tion-sym\-met\-ric }
\newcommand{\permsymmnospace}{per\-mu\-ta\-tion-sym\-met\-ric}
\newcommand{\permantisymm}{per\-mu\-ta\-tion-an\-ti\-sym\-met\-ric }
\newcommand{\compactlist}{
  \setlength{\itemsep}{0pt}
  \setlength{\leftskip}{-1.0em}
}
\newcommand{\verycompactlist}{
  \setlength{\itemsep}{0pt}
  \setlength{\leftskip}{-1.3em}
}
\newcommand{\yestick}{\ding{51}}
\newcommand{\notick}{\ding{55}}
\newcommand{\toths}{T\'{o}th's }
\newcommand{\toth}{T\'{o}th }
\newcommand{\totx}{T\'{o}}
\newcommand{\tothx}{T\'{o}th}
\newcommand{\mob}{M\"{o}bius }
\newcommand{\ens}{\enspace}
\newcommand{\bmr}[1]{\bm{#1}}
\newcommand{\cc}[1]{{#1}^{*}}
\newcommand{\dash}{\nobreakdash-\hspace{0pt}}
\newcommand{\matha}{\mathcal{A}}
\newcommand{\mathb}{\mathcal{B}}
\newcommand{\mathd}{\mathcal{D}}
\newcommand{\mathm}{\mathcal{M}}
\newcommand{\mathh}{\mathcal{H}}
\newcommand{\Order}{\mathcal{O}}
\newcommand{\suc}{\text{SU}(2)}
\newcommand{\slc}{\text{SL}(2,\mathbb{C})}
\newcommand{\pslc}{\text{PSL}(2,\mathbb{C})}
\newcommand{\co}{\text{c}}
\newcommand{\si}{\text{s}}
\newcommand{\RE}{\operatorname{Re}}
\newcommand{\IM}{\operatorname{Im}}
\newcommand{\spa}{\operatorname{span}}
\newcommand{\rotx}{\operatorname{R_{x}}}
\newcommand{\rotxs}{\operatorname{R_{x}^{s}}}
\newcommand{\roty}{\operatorname{R_{y}}}
\newcommand{\rotz}{\operatorname{R_{z}}}
\newcommand{\rotzs}{\operatorname{R_{z}^{s}}}
\newcommand{\one}{\mathbbm{1}}
\newcommand{\mbbn}{\mathbb{N}}
\newcommand{\mbbz}{\mathbb{Z}}
\newcommand{\mbbr}{\mathbb{R}}
\newcommand{\mbbrp}{\mathbb{R}^{+}}
\newcommand{\mbbrr}{\mathbb{R}^{3}}
\newcommand{\mbbc}{\mathbb{C}}
\newcommand{\cext}{\overline{\mathbb{C}}}
\newcommand{\D}{\text{d}}
\newcommand{\E}{\text{e}}
\newcommand{\Trace}{\text{Tr}}
\newcommand{\etal}{\textit{et al. }}
\newcommand{\I}{\text{i}}
\newcommand{\Eg}{E_{\text{g}}}
\newcommand{\Egt}{\widetilde{E}_{\text{g}}}
\newcommand{\EG}{E_{\text{G}}}
\newcommand{\bracket}[2]{\braket{ #1 | #2 }}
\newcommand{\pure}[1]{\ket{ #1 } \! \bra{ #1 }}
\newcommand{\sym}[1]{\ket{\text{S}_{#1}}}
\newcommand{\psis}{\ket{\psi^{\text{s}}}}
\newcommand{\Psis}{\ket{\Psi^{\text{s}}}}
\newcommand{\phis}{\ket{\phi^{\text{s}}}}
\newcommand{\psisn}{\ket{\psi_{\!n}^{\text{s}}}}
\newcommand{\varphis}{\ket{\varphi^{\text{s}}}}
\providecommand{\Abs}[1]{\left\lvert#1\right\rvert}
\providecommand{\abs}[1]{\lvert#1\rvert}
\providecommand{\norm}[1]{\lVert#1\rVert}
\providecommand{\tfra}[2]{\tfrac{#1}{#2}}
\providecommand{\quo}[1]{{``{#1}''}}
\providecommand{\eq}[1]{Equation~\eqref{#1}}
\providecommand{\Eq}[1]{Equation~\eqref{#1}}
\providecommand{\fig}[1]{Figure~\ref{#1}}
\providecommand{\Fig}[1]{Figure~\ref{#1}}
\providecommand{\tabref}[1]{Table~\ref{#1}}
\providecommand{\Tabref}[1]{Table~\ref{#1}}
\providecommand{\chap}[1]{Chapter~\ref{#1}}
\providecommand{\Chap}[1]{Chapter~\ref{#1}}
\providecommand{\sect}[1]{Section~\ref{#1}}
\providecommand{\Sect}[1]{Section~\ref{#1}}
\providecommand{\theoref}[1]{Theorem~\ref{#1}}
\providecommand{\Theoref}[1]{Theorem~\ref{#1}}
\providecommand{\lemref}[1]{Lemma~\ref{#1}}
\providecommand{\Lemref}[1]{Lemma~\ref{#1}}
\providecommand{\corref}[1]{Corollary~\ref{#1}}
\providecommand{\Corref}[1]{Corollary~\ref{#1}}
\providecommand{\conref}[1]{Conjecture~\ref{#1}}
\acrodef{ICA}{independent component analysis}
\newtheorem{theorem}{Theorem}
\newtheorem{lemma}[theorem]{Lemma}
\newtheorem{corollary}[theorem]{Corollary}
\newtheorem{conjecture}[theorem]{Conjecture}
\newcommand{\blue}[1]{{\bf \textcolor[rgb]{0.0,0.39,1.0}{#1}}}
\begin{document}

\pagenumbering{alph}
\capstartfalse
\begin{titlepage}
  \begin{center}
    {\small Submitted in accordance with the requirements for the
      degree of\\Doctor of Philosophy}
    \vspace{0.5in}
    
    {\sffamily
      \begin{figure}[h]
        \centering
        \includegraphics[scale=1.25]{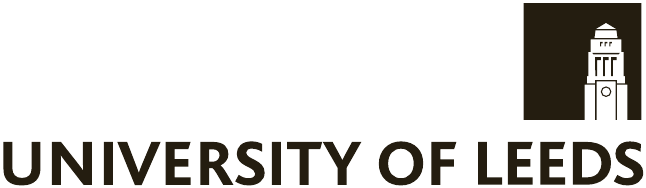}
      \end{figure}
      {\large \bfseries \textsf{ SCHOOL OF PHYSICS \& ASTRONOMY}}}
    \vspace{2.0in}
    
    {\huge \textbf{Classification of Entanglement in \\ Symmetric
        States}}
    \vspace{0.5in}
    
    {\Large Martin Aulbach}
    \vspace{0.5in}
    
    {\Large July 2011}
    \vfill
    
    {\small The candidate confirms that the work submitted is his own
      and that appropriate credit has been given where reference has
      been made to the work of others.
      \vspace{2mm}
      
      This copy has been supplied on the understanding that it is
      copyright material and that no quotation from the thesis may be
      published without proper acknowledgement.}
  \end{center}
\end{titlepage}

\capstarttrue
\pagestyle{empty}
\cleardoublepage

\pagenumbering{roman}
\pagestyle{plain}
\onehalfspacing

\phantomsection
\addcontentsline{toc}{chapter}{Acknowledgements}

\chapter*{Acknowledgements}

Firstly, I would like to thank my supervisor Vlatko Vedral with whom I
share not only a love for physics, but also for Guinness and a good
cigar.  He provided me with all the support I needed during my PhD,
and at the same time he allowed me to pursue my own research
interests. I greatly benefited from his extensive knowledge and from
his encouragement to visit conferences and research groups at other
universities.  Great thanks also go to my second supervisor, Jacob
Dunningham, who was extremely helpful whenever I needed advice. I
could not have wished for better supervisors.

Although a friend and colleague rather than a supervisor, Damian
Markham took the role of my informal mentor. He shared his broad
knowledge with me and provided me with guidance throughout my PhD, for
which I am very grateful.

Many thanks go to Viv Kendon and Andreas Winter for agreeing to be my
examiners, and to Almut Beige and Jiannis Pachos for acting as my
research assessment panel.

For stimulating discussions I am particularly grateful to Mark
Williamson, Jacob Biamonte, Lin Chen, Wonmin Son, Dagmar Bru\ss{},
Pedro Ribeiro, R{\'e}my Mosseri, Christopher Hadley, Andreas Osterloh
and of course Mio Murao, who always warmly welcomed me as a visitor in
her research group in Tokyo.

For financial assistance throughout my research degree I am very
grateful for the William Wright Smith Scholarship, provided to me by
the University of Leeds.

It was a pleasure to be part of the Quantum Information group in
Leeds, thanks to its members. Both during and (especially!) outside
office hours we had an almost indecent amount of fun. Thank you, Mark,
Michal, Cristhian, Jenny, Libby, Fran, Jess, Neil, Bruno, David, Tony,
Andreas, Rob, Katherine, Matt, Martin, Melody, Joe, Nick, Luke,
Jonathan, Abbas, Veiko, Adam, Elica, Giovanni and Mireia!

Finally, I would like to thank my parents for everything they did for
me. This thesis is dedicated to them.

\cleardoublepage

\onehalfspacing

\phantomsection
\addcontentsline{toc}{chapter}{Abstract}
\begin{center}
  {\Large \bfseries Classification of Entanglement in Symmetric
    States}
  
  \vspace{0.8cm}
  Martin Aulbach
  
  \vspace{0.3cm}
  Ph.D. thesis, July 2011
  
  \vspace{2.0cm}
  {\bfseries Abstract}
\end{center}

\noindent Quantum states that are symmetric with respect to
permutations of their subsystems appear in a wide range of physical
settings, and they have a variety of promising applications in quantum
information science.  In this thesis the entanglement of symmetric
multipartite states is categorised, with a particular focus on the
pure multi-qubit case and the geometric measure of entanglement.  An
essential tool for this analysis is the Majorana representation, a
generalisation of the single-qubit Bloch sphere representation, which
allows for a unique representation of symmetric $n$ qubit states by
$n$ points on the surface of a sphere.  Here this representation is
employed to search for the maximally entangled symmetric states of up
to $12$ qubits in terms of the geometric measure, and an intuitive
visual understanding of the upper bound on the maximal symmetric
entanglement is given.  Furthermore, it will be seen that the Majorana
representation facilitates the characterisation of entanglement
equivalence classes such as \ac{SLOCC} and the \ac{DC}. It is found
that \ac{SLOCC} operations between symmetric states can be described
by the \mob transformations of complex analysis, which allows for a
clear visualisation of the \ac{SLOCC} freedoms and facilitates the
understanding of \ac{SLOCC} invariants and equivalence classes.  In
particular, explicit forms of representative states for all symmetric
\ac{SLOCC} classes of up to 5 qubits are derived.  Well-known
entanglement classification schemes such as the 4 qubit entanglement
families or polynomial invariants are reviewed in the light of the
results gathered here, which leads to sometimes surprising
connections.  Some interesting links and applications of the Majorana
representation to related fields of mathematics and physics are also
discussed.

\cleardoublepage

\acresetall
\onehalfspacing
\pagestyle{fancy} \fancyhead[LO,RE]{}
\fancyhead[LE]{\nouppercase{\bfseries \leftmark}}
\fancyhead[RO]{\nouppercase{\bfseries \rightmark}}

\tableofcontents
\listoffigures
\listoftables
\clearpage

\pagestyle{plain}
\onehalfspacing

\phantomsection
\addcontentsline{toc}{chapter}{Acronyms}

\chapter*{List of Acronyms}

\begin{acronym}[SLOCC]
  \acro{LU}{Local Unitary}
  \acrodefplural{LU}{Local Unitaries}
  \acro{LO}{Local Operation}
  \acro{ILO}{Invertible Local Operation}
  \acro{LOCC}{Local Operations and Classical Communication}
  \acro{SLOCC}{Stochastic Local Operations and Classical Communication}
  \acro{DC}{Degeneracy Configuration}
  \acro{EF}{Entanglement Family}
  \acrodefplural{EF}{Entanglement Families}
  \acro{GHZ}{Greenberger-Horne-Zeilinger}
  \acro{MBQC}{measurement-based quantum computation}
  \acrodefplural{MBQC}[MBQC]{Measurement-based quantum computation}
  \acro{GM}{geometric measure of entanglement}
  \acro{LMG}{Lipkin-Meshkov-Glick}
  \acro{MP}{Majorana point}
  \acro{CPS}{closest product state}
  \acro{CPP}{closest product point}
  \acro{iff}{if and only if}
  \acro{d.f.}{degrees of freedom}
\end{acronym}

\cleardoublepage

\phantomsection
\addcontentsline{toc}{chapter}{Publications}

\chapter*{List of Publications}

\noindent
M.~Aulbach, D.~Markham, and M.~Murao.
\newblock
\textbf{The maximally entangled symmetric state in terms of the
  geometric measure}.
\newblock
\href{http://dx.doi.org/10.1088/1367-2630/12/7/073025}{New
  J. Phys. \textbf{12}, 073025 (2010)}.
\newblock
eprint: \href{http://arxiv.org/abs/1003.5643}{arXiv:1003.5643}.

\vspace{0.2cm}

\hfill (contains results presented in \chap{geometric_measure},
\ref{majorana_representation} and \ref{solutions})

\vspace{1.3cm}

\noindent
M.~Aulbach, D.~Markham, and M.~Murao.
\newblock
\textbf{Geometric Entanglement of Symmetric States and the Majorana
  Representation}.
\newblock
\emph{Proceedings of the 5th Conference on Theory of Quantum
  Computation, Communication and Cryptography}, edited by W.~van Dam,
V.~M.  Kendon, and S.~Severini,
\newblock
\href{http://dx.doi.org/10.1007/978-3-642-18073-6}{pp. 141--158 (LNCS,
  Berlin, 2010)}.
\newblock
ISBN 978-3-642-18072-9.
\newblock
eprint: \href{http://arxiv.org/abs/1010.4777}{arXiv:1010.4777}.

\vspace{0.2cm}

\hfill (contains results presented in \chap{geometric_measure},
\ref{majorana_representation}, \ref{solutions} and \ref{connections})

\vspace{1.3cm}

\noindent
M.~Aulbach.
\newblock
\textbf{Symmetric entanglement classes for n qubits}.
\newblock
\textit{in submission} (2011).
\newblock
preprint: \href{http://arxiv.org/abs/1103.0271}{arXiv:1103.0271}.

\vspace{0.2cm}

\hfill (contains results presented in \chap{classification})

\cleardoublepage

\pagestyle{fancy} \fancyhead[LO,RE]{}
\fancyhead[LE]{\nouppercase{\bfseries \leftmark}}
\fancyhead[RO]{\nouppercase{\slshape \rightmark}}
\pagenumbering{arabic}

\chapter{Introduction}\label{introduction}

\begin{quotation}
  In this preliminary chapter the subject of the present thesis is
  motivated and its objectives are formulated. This is followed by a
  brief review of some basic concepts of quantum information science,
  with a particular focus on entanglement theory and \permsymm states,
  the two topics that form the main focus of this work.  An overview
  of the subsequent chapters and the main results presented therein
  can be found at the end of this chapter.
\end{quotation}

\section{Motivation}\label{motivation}

Symmetry principles hold a special place in physics, and it is easy to
undervalue their significance for the historical development of many
important physical theories.  Newton himself did not consciously
formulate his revolutionary equations of motion for any particular
frame of reference, thus implicitly considering all directions and
points in space to be equivalent \cite{Wigner}.  Nearly two centuries
later the symmetries of electrodynamics were encapsulated into
Maxwell's equations, taking into account both Lorentz and gauge
invariance \cite{Gross96}, but it was not before Einstein that it was
realised that Maxwell's equations are merely a \emph{consequence} of
the relativistic invariance, and thus symmetry, of space-time itself.
In the standard model of modern particle physics the CPT\dash symmetry
postulates that our universe is indistinguishable from one with
inverted particle charges $C: q \mapsto -q$ (C\dash symmetry), parity
inversion $P: \bmr{r} \mapsto -\bmr{r}$ (P\dash symmetry) as well as
time reversal $T: t \mapsto -t$ (T\dash symmetry).  And going beyond
the standard model, the theory of supersymmetry postulates a further
physical symmetry between bosons and fermions, thus leading to the
postulation of yet-to-be-observed superpartners of the existing
elementary particles.

Noether's theorem outlines how continuous symmetries of physical
systems give rise to conserved quantities.  For example, the
conservation of energy arises from translations in time, and the
conversation of linear and angular momentum arises from translations
and rotations in space, respectively.  In quantum mechanics the
corresponding conservation laws follow directly from the kinematics of
the underlying theory, with physical quantities such as position and
momentum being expressed by operators on vectors of a Hilbert space
\cite{Wigner}. Many other important consequences of symmetry can be
observed in quantum mechanics: The selection rules governing atomic
spectra are the consequence of rotational symmetry, the different
aggregation behaviour of bosons and fermions is due to the invariance
or sign-change of the wave function under exchange of identical
particles, and in relativistic quantum mechanics the representations
of the full symmetry group -- the Poincar{\'e} group -- allows for a
complete classification of the elementary particles
\cite{Gross96}.\footnote{Slightly ironically, many phenomena in the
  world around us are due to \emph{symmetry breaking}. The more
  fundamental kind of symmetry breaking, spontaneous symmetry
  breaking, gives rise to non-symmetric states despite the laws of
  physics being symmetric themselves. Examples of such manifestations
  are crystals (broken translational invariance), magnetism (broken
  rotational invariance) and superconductivity (broken phase
  invariance) \protect\cite{Gross96}. Phase transitions between
  symmetric and non-symmetric states appear everywhere in physics,
  from down-to-earth occurrences in condensed matter physics to the
  unification of the fundamental forces of nature during the first
  moments after the big bang.}

The ground state of a quantum mechanical system with a finite number
of degrees of freedom is symmetric \cite{Gross96}, i.e. the state
remains invariant under permutations of the system's parts, and no
part is in any way different from any other.  This is a first
indication that symmetric quantum states play a particular role in
quantum physics.  Recently it has become possible to implement certain
symmetric states \cite{Prevedel09,Wieczorek09,Kiesel07} or even
arbitrary symmetric states \cite{Bastin09b} actively in experiments,
so it is only natural to gauge their possible applications in various
areas of physics. In this thesis the \permsymm quantum states will be
investigated from the perspective of quantum information theory
\cite{Nielsen}, a young, vibrant and highly interdisciplinary research
field that combines aspects of physics, mathematics, computer science,
chemistry and recently even biology \cite{Sarovar10,Gauger11}.  The
realisation that information is physical has lead to a revision of our
understanding of how nature works, and it has given rise to a
multitude of fascinating new applications. The most famous among these
is probably the \textbf{quantum computer}, initially suggested by
Feynman for efficient simulations of quantum systems
\cite{Feynman82}. Since then theorists have unearthed several
intriguing algorithms where a computer operating with qubits (quantum
mechanical spin-$\tfrac{1}{2}$ systems) rather than ordinary bits
would provide an exponential speedup (such as Shor's algorithm for
factorisation \cite{Shor94}), or at least a quadratic speedup
(Grover's algorithm for database searches \cite{Grover96}).  Other
exciting applications of quantum information are the teleportation of
quantum states over large distances via \textbf{quantum teleportation}
\cite{Bennett93}, and in principle unconditionally secure
communication between remote parties via \textbf{quantum cryptography}
\cite{Bennett84,Ekert91}. While the experimental realisation of
quantum computation and teleportation is still in its infancy, the
technically more mature status of quantum cryptography has allowed the
first commercial enterprises (e.g. ID Quantique) to enter the market.

Along with the superposition principle, the non-local property of
entanglement is considered to be one of the most striking consequences
of quantum physics.  Entanglement describes quantum correlations
between separate parts of a system that cannot be explained in terms
of classical physics, and these correlations are of particular
importance in quantum information science.  Entanglement is an
essential ingredient for quantum teleportation \cite{Bennett93},
superdense coding \cite{Bennett92}, \ac{MBQC} \cite{Nest07} and some
quantum cryptography protocols \cite{Ekert91}.  It can therefore be
considered as a \quo{standard currency} in many applications, and it
is desirable to know which states of a given Hilbert space are the
most entangled ones.  Unfortunately, for systems consisting of more
than two parts the quantification of entanglement is difficult due to
the existence of different types of entanglement, each of which may
capture a different desirable quality of a state as a resource
\cite{Dur00}.  It is therefore unsurprising that many different
entanglement measures have been proposed in order to quantify the
amount of entanglement of multipartite quantum states
\cite{Horodecki09}.  Some entanglement measures are not useful for the
analysis of larger systems, due to their bipartite definition, and
most measures are notoriously difficult to compute. For these reasons
the present thesis focuses on the \ac{GM} \cite{Shimony95, Wei04}, an
inherently multipartite entanglement measure that is not too difficult
to compute.

Returning to the concept of symmetry in physics, we recall that
\permsymm quantum states appear naturally in some systems
\cite{Ribeiro08,Orus08}, that it is possible to prepare them
experimentally \cite{Prevedel09,Wieczorek09,Kiesel07,Bastin09b}, and
that they have found some applications
\cite{Dhondt06,Korbicz05,Korbicz06,Ivanov10}.  Many canonical states
that appear in quantum information science are symmetric, e.g. Bell
diagonal states, \ac{GHZ} states \cite{Greenberger90}, W and Dicke
states \cite{Dicke54}, and the Smolin state \cite{Smolin01}.  These
aspects make it worthwhile to investigate the theoretical properties
as well as the practical usefulness of symmetric states for specific
quantum information tasks.  In particular, not much is known so far
about how to categorise the entanglement present in symmetric states,
and which symmetric states exhibit a high degree of entanglement. New
operational implications (in terms of usefulness for certain tasks) or
visualisations of symmetric states and their entanglement would also
be highly desirable.  With this we formulate the following goals for
the thesis:

\begin{itemize}
\item How can the entanglement of symmetric states be classified?
  
\item Which symmetric states are maximally entangled?
  
\item What operational meaning do symmetric states (or their
  entanglement) have?
  
\item How can symmetric states (or their entanglement) be visualised?
  
\item What kind of links exist between symmetric states and other
  fields of physics and mathematics?
\end{itemize}

A central tool for our analysis of symmetric entanglement will be the
Majorana representation \cite{Majorana32}, a generalisation of the
Bloch sphere representation of single qubits.  This will not only
provide us with a very useful visual representation of symmetric
states, but also allows us to classify the different types of
entanglement present in symmetric states, and to simplify the search
for maximal entanglement.  The Majorana representation will be
introduced, along with other elementary concepts of quantum
information theory, during the remainder of this introductory chapter.

\section{Quantum entanglement}\label{quantumentanglement}

In this section we will review some elementary concepts from the
theory of quantum entanglement and quantum information.  This is by no
means a comprehensive overview, but rather a selection of those
aspects that will be of particular importance in this thesis.  For a
comprehensive and recent review of quantum entanglement it is
suggested to consult the review article composed by the Horodecki
family \cite{Horodecki09}.

\subsection{Qubit and Bloch sphere}\label{blochsphere}

In analogy to the \textbf{bit} from classical information theory the
smallest unit of information in quantum information theory is called
the \textbf{qubit}, an abbreviation of \quo{quantum bit}. In contrast
to the classical bit which either takes the value $0$ or $1$, a qubit
can be in any superposition of the two basis vectors $\ket{0}$ and
$\ket{1}$, known as the computational basis.  Physically a qubit can
be realised by any quantum 2-level system, such as the spin of an
electron or the polarisation of a photon.  The state of a pure qubit
system can be written as $\ket{\phi} = a \ket{0} + b \ket{1}$, with
complex coefficients $a$ and $b$ that satisfy the normalisation
condition $\abs{a}^2 + \abs{b}^2 = 1$. By means of an unphysical
global phase the complex phase of the first coefficient can be
eliminated without restricting generality, which allows one to employ
the notation $\ket{\phi} = \cos \tfra{\theta}{2} \ket{0} + \E^{\I
  \varphi} \sin \tfra{\theta}{2} \ket{1}$ with two real parameters
$\theta \in [0, \pi]$ and $\varphi \in [0, 2 \pi)$.  Because of the
frequent use of this notation throughout the thesis, the trigonometric
expressions will be abbreviated as $\co_{\theta} := \cos
\tfra{\theta}{2}$ and $\si_{\theta} := \sin \tfra{\theta}{2}$.  The
famous \textbf{Bloch sphere} representation \cite{Nielsen} employs
this parameterisation to uniquely identify any pure qubit state with a
unit vector in $\mbbrr$, as shown in \fig{blochspherepicture}.  In
this picture the two basis vectors $\ket{0}$ and $\ket{1}$, which
correspond to the possible values of a classical bit, are represented
by the north pole and south pole of the Bloch sphere, respectively.
Any other point on the surface of the sphere represents a state
$\ket{\phi} = \co_{\theta} \ket{0} + \E^{\I \varphi} \si_{\theta}
\ket{1}$ that is in a superposition of the two basis states $\ket{0}$
and $\ket{1}$.  The \textbf{measurement} of such a state in the
computational basis $\{ \ket{0} , \ket{1} \}$ yields the outcome
$\ket{0}$ with probability $\abs{\co_{\theta}}^{2}$ and the outcome
$\ket{1}$ with probability $\abs{\si_{\theta}}^{2}$.  The natural
metric on the Bloch sphere is given by the Fubini-Study metric
\cite{Chruscinski06}, and the distance between two normalised qubits,
$\ket{\phi_{1}}$ and $\ket{\phi_{2}}$, in this metric is $\gamma (
\phi_{1} , \phi_{2} ) = \arccos \abs{\bracket{\phi_{1}}{\phi_{2}}}$,
i.e. the geometrical angle between the two corresponding points on the
Bloch sphere.

\begin{figure}
  \centering
  \begin{overpic}[scale=.5]{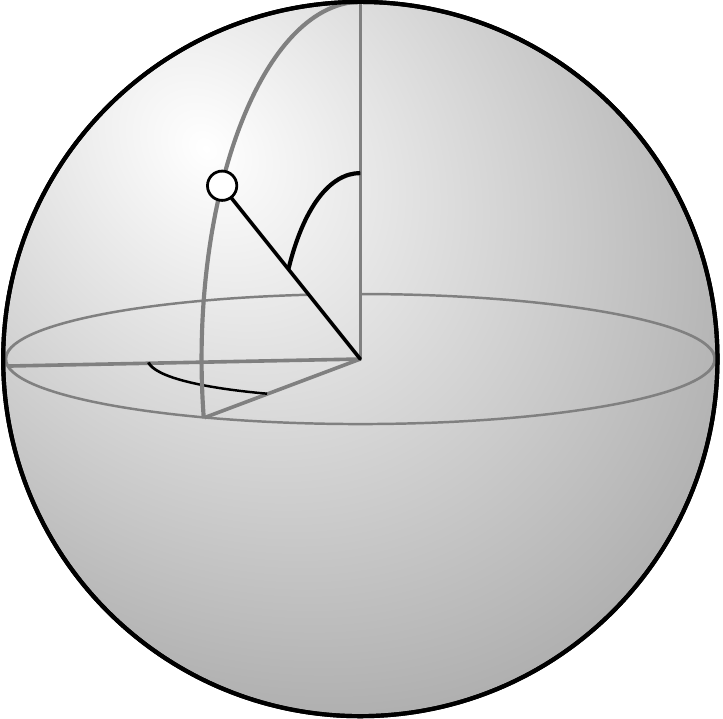}
    \put(14,74){$\ket{\phi}$}
    \put(44,61){$\theta$}
    \put(33,46.5){{\footnotesize $\varphi$}}
  \end{overpic}
  \caption[Bloch sphere representation of a
  qubit]{\label{blochspherepicture} Every pure state of a single qubit
    $\ket{\phi} = \co_{\theta} \ket{0} + \E^{\I \varphi} \si_{\theta}
    \ket{1}$ can be parameterised by two angles, the inclination
    $\theta \in [0, \pi]$ and the azimuth $\varphi \in [0, 2
    \pi)$. They give rise to the Bloch sphere representation on the
    surface of a unit sphere, with the Cartesian coordinates given by
    $(\sin \theta \cos \varphi , \sin \theta \sin \varphi , \cos
    \theta )$.}
\end{figure}

Pure qubit states are mathematically expressed as vectors of the
two-dimensional Hilbert space $\mathh = \mbbc^{2}$, but they are
unique only up to normalisation and an unphysical global phase, which
results in the two real degrees of freedom that manifest themselves as
the surface of the Bloch sphere.  If only partial information is known
about a quantum state, it has to be treated as a \textbf{mixed}
state\footnote{Mixed states are mathematically expressed as density
  matrices acting on the Hilbert space $\mathh$.  Any mixed state can
  be cast as a probability distribution of pure states, $\rho =
  \sum_{i=1}^{n} p_{i} \pure{\psi_{i}}$, and in general there exists
  an infinite number of such decompositions.  Every mixed state $\rho$
  must fulfil the following: 1.)  self-adjoint: $\rho =
  \rho^{\dagger}$, 2.)  semi-positive: $\rho \geq 0$
  (i.e. non-negative probabilities), and 3.)  unit trace: $\Trace [
  \rho ] = 1$ (i.e. probabilities sum up to one).  The set of mixed
  states is called the \emph{state space} $\mathcal{S}( \mathh )$, and
  a state $\rho \in \mathcal{S}( \mathh )$ is pure \acl{iff} $\rho^2 =
  \rho$.}.  While pure qubit states correspond to points on the
surface of the Bloch sphere, mixed qubit states correspond to the
interior of the sphere by means of the Pauli matrix representation of
the density matrix
\begin{equation}\label{mixed_state}
  \rho = \frac{1}{2} ( \one + x \sigma_{x} + y \sigma_{y} + z
  \sigma_{z} ) = \frac{1}{2} ( \one + \bmr{r} \bmr{\sigma} ) \ens ,
\end{equation}
with $\abs{\bmr{r}}^2 = \abs{x}^2 + \abs{y}^2 + \abs{z}^2 \leq 1$, and
where $\bmr{r} = (x,y,z) \in \mbbrr$ is the corresponding Bloch vector
within the unit sphere. The more mixed a state is, the closer it lies
to the centre of the Bloch sphere, with the maximally mixed state
$\rho = \one$ lying at the origin of the sphere.  The Pauli matrices
$\sigma_{x}$, $\sigma_{y}$ and $\sigma_{z}$ give rise to the rotation
operators which rotate Bloch vectors around the $X$-, $Y$- or $Z$-axis
by an angle $\vartheta$:

\begin{subequations}\label{rotationmatrices}
  \begin{align}
    \rotx (\vartheta)& = \E^{- \I \frac{\vartheta}{2} \sigma_{x}} =
    \begin{pmatrix}
      \phantom{-} \cos \tfra{\vartheta}{2}& - \I \sin
      \tfra{\vartheta}{2}
      \vspace*{0.5mm} \\
      - \I \sin \tfra{\vartheta}{2}& \phantom{-} \cos
      \tfra{\vartheta}{2}
    \end{pmatrix} \ens , \label{x_rotationmatrix} \\
    \roty (\vartheta)& = \E^{- \I \frac{\vartheta}{2} \sigma_{y}} =
    \begin{pmatrix}
      \cos \tfra{\vartheta}{2}& - \sin \tfra{\vartheta}{2}
      \vspace*{0.5mm} \\
      \sin \tfra{\vartheta}{2}& \phantom{-} \cos
      \tfra{\vartheta}{2}
    \end{pmatrix} \ens , \label{y_rotationmatrix} \\
    \rotz (\vartheta)& = \E^{- \I \frac{\vartheta}{2} \sigma_{z}} =
    \begin{pmatrix}
      \E^{- \I \frac{\vartheta}{2}}& 0 \\
      0& \E^{\I \frac{\vartheta}{2}}
    \end{pmatrix} \ens . \label{z_rotationmatrix}
  \end{align}
\end{subequations}

A rotation around an arbitrary axis $\bmr{n} = (n_{x} , n_{y} ,
n_{z})$, with $\abs{\bmr{n}} = 1$, that runs though the origin of the
Bloch sphere is given by $\operatorname{R}_{\bmr{n}} (\vartheta) =
\E^{- \I \frac{\vartheta}{2} \bmr{n} \bmr{\sigma}}$ and can be
straightforwardly calculated with the equations above.  In
mathematical terms the \textbf{unitary operations} are elements of
$\suc$, and in general they do not keep the coefficient of the
$\ket{0}$ vector of a pure state real and non-negative, so a
multiplication with a suitable global phase may be necessary after
rotation in order to return to the standard qubit notation $\ket{\phi}
= \co_{\theta} \ket{0} + \E^{\I \varphi} \si_{\theta} \ket{1}$. For
$Z$-axis rotations $\rotz (\vartheta)$ this global phase is simply
$\E^{\I \frac{\vartheta}{2}}$, independent of the Bloch vector
$\ket{\phi}$ that is being rotated.

While measurements destroy the state of an unknown qubit, this is not
the case with the rotation operators described above. Applying such a
unitary operation on an unknown qubit in the laboratory has the effect
of a rotation of its Bloch vector around an axis on the Bloch sphere,
without measuring or destroying the state unknown to the experimenter.

\subsection{Bipartite and multipartite systems}\label{multipartite}

Quantum systems that consist of two subsystems (e.g. two qubits) are
commonly known as \textbf{bipartite} systems, while systems with three
or more subsystems are referred to as \textbf{multipartite}
systems\footnote{Note that bipartite and multipartite quantum systems
  do not need to manifest themselves as an accumulation of distinct
  physical objects such as electrons, photons, etc., each of which
  gives rise to the Hilbert space of one subsystem. Instead,
  entanglement can exist between different degrees of freedom of a
  single physical particle, or even between different particle
  numbers, although the latter may lead to a violation of
  superselection rules \protect\cite{Wick52,Ashhab07}.}.  This
seemingly arbitrary distinction will become more meaningful later when
considering the qualitatively different manifestations of entanglement
in bipartite and multipartite systems.  Coined by Einstein as
\textit{spukhafte Fernwirkung} (\quo{spooky action at a distance}),
entanglement describes an inherently nonlocal correlation between
detached quantum systems that is predicted by quantum theory, and
which cannot be adequately described or explained in the language of
classical physics, at least without making assumptions about
\emph{hidden variables} \cite{Einstein35}.  The nonexistence of such
hidden variables in nature has been sufficiently validated
experimentally over the last few decades \cite{Aspect81}, thanks to
the ingenious Bell inequalities \cite{Bell64}.

In the language of quantum mechanics, an entangled quantum system is
one whose state vector cannot be expressed as the tensor product of
vectors of its subsystems.  In the simplest case of a bipartite
quantum system this is the case if $\ket{\psi} \neq \ket{\phi_1}
\otimes \ket{\phi_2}$, i.e.  no description of one part is complete
without information about the other.  One example for two qubits is
the Bell state $\ket{\psi^{+}} = \frac{1}{\sqrt{2}} \big( \ket{0}_{1}
\ket{0}_{2} + \ket{1}_{1} \ket{1}_{2} \big)$ where the two subsystems
are perfectly correlated with each other in the sense that the
measurement of one part of the system in a suitably chosen measurement
basis (for $\ket{\psi^{+}}$ the computational basis $\{ \ket{0},
\ket{1} \}$) mediates a \quo{collapse} of the other part into the same
state. For example, if a measurement of part $1$ in the computational
basis yields $\ket{0}_{1}$, then part 2 collapses to $\ket{0}_{2}$, so
any subsequent measurement of part 2 yields $\ket{0}_{2}$.  In other
words, the measurement turns the initially entangled state into one of
the two product states $\ket{\psi_{1}} = \ket{0}_{1} \ket{0}_{2}$ or
$\ket{\psi_{2}} = \ket{1}_{1} \ket{1}_{2}$ with equal probability.
The most striking aspect of this is that the measurement of one part
instantaneously affects the other part, regardless of the spatial
distance between the subsystems.  This cannot be used for superluminal
communication, however, because the randomness of the measurement
outcomes prevents the transmission of information by quantum
measurements alone, thus preserving a central tenet of special
relativity.

The Hilbert space of a multipartite system is given by the tensor
product of the subsystems' Hilbert spaces, i.e. $\mathh = \mathh_{1}
\otimes \cdots \otimes \mathh_{N}$, where $\mathh_{i}$ is the Hilbert
space of the $i$-th subsystem.  Since quantum states are uniquely
described only up to normalisation and a global phase, it makes sense
to introduce the projective Hilbert space \textbf{P}$\mathh$ as the
set of all unique pure quantum states.  The standard metric on
\textbf{P}$\mathh$ is the Fubini-Study metric \cite{Brody01}.

A multipartite pure quantum state $\ket{\psi} \in \mathh$ is
\textbf{separable} \ac{iff} it can be written as a tensor product of
states from the individual subsystems:
\begin{equation}\label{multipartite_product}
  \ket{\psi} = \ket{\phi_{1}} \otimes \cdots \otimes \ket{\phi_{N}}
  \ens , \quad \text{with} \quad \ket{\phi_{i}} \in \mathh_{i}
  \,\, \forall \, i \ens .
\end{equation}
States that are not separable are \textbf{entangled}.  For mixed
quantum states $\rho \in S(\mathh)$ of a multipartite system
separability is defined by the existence of a (non-unique) convex sum
of product states
\begin{equation}\label{multipartite_separable}
  \rho = \sum_{j} p_{j} \, \rho_{1}^{j} \otimes \cdots
  \otimes \rho_{N}^{j} \ens , \quad \text{with} \quad
  \rho_{i}^{j} \in S(\mathh_{i}) \,\, \forall \, i, j \ens .
\end{equation}
For finite-dimensional subsystems an orthonormalised basis $\{ \ket{0}
, \ldots , \ket{d_{i} - 1} \}_{i}$ can be chosen for each subsystem,
with $d_{i}$ denoting the dimension of $\mathh_{i}$. A pure quantum
state of the composite system can then be cast as

\begin{equation}\label{multipartite_state}
  \ket{\psi} = \sum_{i_{1}=0}^{d_{1} - 1} \cdots
  \sum_{i_{n}=0}^{d_{n} - 1} a_{ i_{1} , \ldots , i_{n} }
  \ket{i_{1}}_{1} \otimes \cdots \otimes \ket{i_{n}}_{n} \ens ,
\end{equation}

where the $a_{ i_{1} , \ldots , i_{n} }$ are complex coefficients, and
$_{j}\bracket{i_{A}}{i_{B}}_{j} = \delta_{AB}$ for all $j \in \{1,
\ldots , n \}$ and all $A,B \in \{ 0, \ldots , d_{j} - 1 \}$.  For
brevity the basis states $\ket{i_{1}}_{1} \otimes \cdots \otimes
\ket{i_{n}}_{n}$ of the composite system will be abbreviated as
$\ket{i_{1}} \ket{i_{2}} \cdots \ket{i_{n}}$, or simply $\ket{i_{1}
  i_{2} \cdots i_{n}}$. The normalisation $\bracket{\psi}{\psi} = 1$
will be implied throughout this thesis, except for a few cases where
states are easier to represent in unnormalised form and where the
normalisation does not matter.

Of particular interest in this thesis will be states whose
coefficients are all real or positive.  We call a quantum state
$\ket{\psi}$ of the form \eqref{multipartite_state} \textbf{real} if
$a_{ i_{1} , \ldots , i_{n} } \in \mbbr$ for all $i_{1}, \ldots ,
i_{n}$, and \textbf{positive} if $a_{ i_{1} , \ldots , i_{n} } \geq 0$
for all $i_{1}, \ldots , i_{n}$.  It should be noted that these
properties intrinsically depend on the chosen basis, and that states
which are real or positive in one \emph{computational basis} (a basis
made up of tensors of local bases) generally do not exhibit this
property in another basis.  In turn, a state that is not real or
positive in one basis may be recast as a real or positive state by
choosing a different basis, although this is in general not possible.
Only for bipartite states it is always possible to find
orthonormalised bases for the subsystems so that a given state can be
expressed as a positive state in the form
\eqref{multipartite_state}. This is possible thanks to the
\textbf{Schmidt decomposition} of linear algebra which -- applied to
quantum information -- states that any pure state of a bipartite
system $\ket{\psi} \in \mathh_{A} \otimes \mathh_{B}$ with $d = \min
\{ \dim ( \mathh_{A} ) , \dim ( \mathh_{B} ) \}$ can be expressed in
the form
\begin{equation}\label{schmidt_decomp}
  \ket{\psi} = \sum_{i=0}^{d-1} \alpha_{i} \ket{i} \ket{i} \ens ,
  \quad \text{with} \quad \alpha_{0} \geq \ldots \geq \alpha_{d-1}
  \geq 0 \ens ,
\end{equation}
where the non-negative numbers $\alpha_{i}$ are called the
\textbf{Schmidt coefficients} \cite{Nielsen}. The minimum number of
nonvanishing coefficients required for the Schmidt decomposition is
known as the \textbf{Schmidt rank}.  The Schmidt decomposition and the
Schmidt rank are important tools for the analysis of bipartite states,
which will become clear in the next section.

Unfortunately, the elegant Schmidt decomposition
\eqref{schmidt_decomp} does not exist in the multipartite setting, a
first indication that the bipartite case is qualitatively different
from the multipartite case. Several attempts have been made to find a
\textbf{generalised Schmidt decomposition}, a standard form for the
multipartite setting which imposes certain restrictions on the
coefficients of a given state by choosing suitable orthonormal bases
for all subsystems
\cite{Acin00b,Carteret00,Verstraete03,Kraus10,Kraus10b}.  Here we
mention the generalised Schmidt decomposition of Carteret \etal
\cite{Carteret00} which is defined for arbitrary finite-dimensional
multipartite states.  For the sake of simplicity, we consider $n$
equal subsystems, each of dimension $d$. In analogy to the 2-level
qubit, we refer to such a $d$-level quantum system as a
\textbf{qudit}.  The standard form imposes the following conditions on
the coefficients in \eq{multipartite_state} for the $n$ qudit case:
\begin{subequations}\label{generalised_schchmidt}
  \begin{align}
    a_{\underbrace{\scriptstyle i \ldots ik \!\!}_{j \text{ indices}} i
      \ldots i}& = 0&
    \forall& \, j, i \ens
    \forall \, i < k \leq d - 1 \ens ,
    \label{generalised_schmidt1} \\
    a_{\underbrace{\scriptstyle (d-1) \ldots (d-1)i}_{j \text{ indices}}
      (d-1) \ldots (d-1)}& \geq 0&
    \forall& \, j, i \ens , \label{generalised_schmidt2} \\
    \abs{a_{\scriptstyle ii \ldots ii}}& \geq
    \abs{a_{\scriptstyle j_{1} \ldots j_{n}}}&
    \forall& \, i \ens
    \forall \, i \leq j_{r} \,\, (r=1, \dots , n) \ens .
    \label{generalised_schmidt3}
  \end{align}
\end{subequations}
These conditions clearly do not necessarily result in a real or
positive state in general, and most multipartite states do not allow
for a real or positive representation.  If the subsystems do not have
equal dimensions, then the conditions are of a more complicated
form. However, \eq{generalised_schmidt1} straightforwardly generalises
to
\begin{equation}\label{generalised_schmidt_zero}
  a_{\underbrace{\scriptstyle i \ldots ik \!\!}_{j \text{ indices}} i
    \ldots i} = 0 \qquad \forall \, j, i \ens
  \forall \, i < k \leq d_{j} - 1 \ens ,
\end{equation}
where $d_{j}$ is the dimension of subsystem $j$.

A multipartite generalisation of the Schmidt rank has also been put
forward, and is commonly referred to as the \textbf{tensor rank}
\cite{Dur00,Eisert01}.  This quantity is given by the minimum number
of product states needed to expand a given state.  The tensor rank has
featured prominently in some recent works, and has been employed to
find further evidence of a qualitative difference between the
bipartite and multipartite setting \cite{Chen10b}.

Two well-known multipartite states with interesting entanglement
features are the \ac{GHZ} state \cite{Greenberger90} and the W state.
In the general case of $n$ qubits their form is
\begin{align}\label{ghz_w_def}
  \ket{\text{GHZ}_{n}}& = \tfrac{1}{\sqrt{2}}
  \left( \ket{00 \ldots 00} + \ket{11 \ldots 11} \right)
  \ens , \\
  \ket{\text{W}_{n}}& = \tfrac{1}{\sqrt{n}}
  \left( \ket{10 \ldots 0} + \ket{010 \ldots 0} +
    \ldots + \ket{00 \ldots 01} \right) \ens .
\end{align}
These two states have found a broad range of uses in quantum
information science \cite{Horodecki09}.  For example, the 3 qubit
\ac{GHZ} state has been employed to tell Bell's theorem without
inequalities \cite{Greenberger90}, and the $n$-qubit \ac{GHZ} state
can be considered the most non-local with respect to all possible
two-output, two-setting Bell inequalities \cite{Werner01}.  However,
the \ac{GHZ} state loses all its entanglement if a particle is lost,
because its one-particle reduced density matrix $\Trace_{i}
(\pure{\text{GHZ}_{n}}) = \tfra{1}{2} ( \pure{00 \ldots 00} + \pure{11
  \ldots 11} )$ is a separable state.  On the other hand, W states
still retain a considerable amount of entanglement after the removal
of an arbitrary particle, and it has been shown that the $n$-qubit W
state is the optimal state for leader election \cite{Dhondt06}.  This
shows that the entanglement of \ac{GHZ} and W states is of a different
nature, and such qualitative aspects of entanglement and their
characterisation will be investigated in the next section.

\subsection{Entanglement classes}\label{entanglement_classes}

In order to categorise different types of entanglement, it makes sense
to partition the given Hilbert space into equivalence
classes\footnote{In mathematical terms $\mathcal{F} = \{
  \mathcal{F}_{1}, \ldots , \mathcal{F}_{k} \}$ is a partition of a
  set $\mathcal{G}$ if it satisfies the conditions $\mathcal{F}_{i}
  \neq \varnothing$ for all $i$, $\mathcal{F}_{i} \cap \mathcal{F}_{j}
  = \varnothing$ for all $i \neq j$, and $\bigcup\limits_{i=1}^{k}
  \mathcal{F}_{i} = \mathcal{G}$. The $\mathcal{F}_{i}$ are the
  equivalence classes of $\mathcal{F}$.}, with an operationally
motivated definition of equivalence.  The most intuitive
classification scheme is that of \textbf{\ac{LU}} equivalence.  In
\sect{blochsphere} the effect of unitary operations on a single qubit
was outlined.  When generalising this concept to an arbitrary number
of quantum particles distributed among spatially separated
experimenters, then the local application of unitary operations on
each particle is referred to as an \ac{LU} operation. Such operations
are both deterministic and reversible, and -- from a mathematical
viewpoint -- equivalent to selecting a different orthonormalised basis
for the computational representation of a given state.  Therefore two
\ac{LU}-equivalent states $\rho_{\psi}
\stackrel{\text{LU}}{\longleftrightarrow} \rho_{\phi}$ are expected to
have precisely the same physical properties, in particular the same
entanglement.  A comprehensive analysis of the equivalence classes of
$n$ qubit pure states under \ac{LU} operations has recently been
achieved by Kraus \cite{Kraus10}, and subsequently employed to find
the different \ac{LU} equivalence classes of up to five qubits
\cite{Kraus10b}.

In order to perform quantum information tasks, it is necessary for the
experimenters to manipulate the states of their quantum particles in
more ways than by \ac{LU} operations alone.  The different types of
\textbf{quantum operations} \cite{Vidal00} that can be performed on a
given state $\rho$ are the following:
\begin{itemize}
\compactlist
\item \emph{unitary transformation:} \hspace{27.2mm}
  $\rho \longmapsto U \rho U^{\dagger} \ens ,$

  \hspace{4mm} where $U$ is a unitary operator.

\item \emph{selective projective measurement:} \hspace{13.2mm}
  $\rho \longmapsto \{ p_{i} , \sigma_{i} \} \ens ,$

  \hspace{4mm} i.e. the measurement outcome $\sigma_{i}$ is observed
  with probability $p_{i}$.

\item \emph{non-selective projective measurement:} \qquad
  $\rho \longmapsto \sum\limits_{i} p_{i} \sigma_{i} \ens ,$

  \hspace{4mm} i.e. discarding the measurement outcome yields a
  mixture of all possible outcomes.

\item \emph{addition of an ancilla system:} \hspace{18.5mm}
  $\rho \longmapsto  \rho \otimes \omega \ens ,$
  
  \hspace{4mm} where $\omega$ is an auxiliary quantum system
  (\quo{ancilla}) added to the system.

\item \emph{removal of a subsystem:} \hspace{27.3mm}
  $\rho \longmapsto \Trace_{A} [ \rho ] \ens ,$

  \hspace{4mm} where subsystem $A$ is removed from the quantum system
  by a partial trace.
\end{itemize}
These different kinds of quantum evolution can all be subsumed under
linear completely positive maps $\mathcal{E} : S ( \mathh ) \to S (
\mathh )$ with the help of Kraus operators \cite{Nielsen}.

As it is typically not possible in practice to perform joint
operations on spatially separated particles, the quantum operations
act locally.  The experimenters are however able to coordinate their
actions by communicating with each other over a classical channel,
e.g. by telephone.  This leads to the paradigm of \textbf{\ac{LOCC}}
whereby quantum states are modified by performing \acp{LO} on the
subsystems and allowing the transmission of classical communication
between the parties.  As seen from the list of quantum operations
above, such \ac{LOCC} operations are in general irreversible.  For the
case of pure states, however, it has been shown (see Corollary 1 of
\cite{Bennett00} or Theorem 4 of \cite{Vidal00}) that two states are
\ac{LOCC}-equivalent \ac{iff} they are \ac{LU}-equivalent.  This
defines a partition of the pure Hilbert space which is equivalent to
the partition generated by \ac{LU} equivalence. For two pure $n$ qudit
states \textbf{\ac{LOCC} equivalence} is mathematically expressed as
\begin{equation}\label{gen_locc_cond}
  \ket{\psi} \stackrel{\text{LOCC}}{\longleftrightarrow} \ket{\phi}
  \quad \Longleftrightarrow \quad
  \exists \, \matha_{1} , \ldots , \matha_{n} \in \text{SU}(d) :
  \ket{\psi} = \matha_{1} \otimes \cdots
  \otimes \matha_{n} \ket{\phi} \ens ,
\end{equation}
and by definition the \ac{LOCC} equivalence of two general states
(denoted as $\rho_{\psi} \stackrel{\text{LOCC}}{\longleftrightarrow}
\rho_{\phi}$) requires that a \emph{deterministic} conversion is
possible in \emph{both directions}. This is a much more stringent
requirement than deterministic \ac{LOCC} conversion in only one
direction (denoted as $\rho_{\psi} \stackrel{\text{LOCC}}{\longmapsto}
\rho_{\phi}$).  For the pure bipartite case the latter conversions are
fully characterised by the theory of \textbf{majorisation}
\cite{Nielsen99}, which induces a partial order with the help of the
Schmidt decomposition \eqref{schmidt_decomp}.  More precisely, the
necessary and sufficient conditions for deterministically converting a
pure two-qudit state into another one are
\begin{equation}\label{majorisation}
  \ket{\psi} \stackrel{\text{LOCC}}{\longmapsto} \ket{\phi}
  \quad \Longleftrightarrow \quad \forall \, 0 \leq j \leq d-1 :
  \sum_{i=0}^{j} \alpha_{i}^{2} \leq
  \sum_{i=0}^{j} {\alpha '}_{i}^{2} 
  \ens ,
\end{equation}
where the $\{ \alpha_{i} \}$ and $\{ {\alpha '}_{i} \}$ are the
Schmidt coefficients of $\ket{\psi}$ and $\ket{\phi}$, respectively.
From this is can be seen that pure bipartite states are
\ac{LOCC}-equivalent (or \ac{LU}-equivalent) to each other \ac{iff}
they have the same Schmidt coefficients:
\begin{equation}\label{locc_schmidt}
  \ket{\psi} \stackrel{\text{LU}}{\longleftrightarrow} \ket{\phi}
  \quad \Longleftrightarrow \quad
  \ket{\psi} \stackrel{\text{LOCC}}{\longleftrightarrow} \ket{\phi}
  \quad \Longleftrightarrow \quad
  \alpha_{i} = {\alpha '}_{i} \quad \forall \, i \ens .
\end{equation}

The conditions \eqref{majorisation} give rise to
\ac{LOCC}-incomparable states which cannot be converted into each
other either way.  On the other hand, there are \textbf{maximally
  entangled} states from which all other states, pure or mixed, can be
generated with certainty using only \ac{LOCC} operations. For two
$d$-level systems the maximally entangled states are those that are
\ac{LU}-equivalent to
\begin{equation}\label{max_ent_states}
  \ket{\Psi_{d}} = \frac{1}{\sqrt{d}} \sum_{i=0}^{d-1}
  \ket{i i} \ens .
\end{equation}
The non-existence of an analogous result for multipartite systems --
due to the absence of the Schmidt decomposition -- is one of the
reasons for the qualitative difference between the bipartite and
multipartite case.

As useful as the concept of \ac{LOCC} equivalence is from an
operational point of view, it is of little help to categorise the
wealth of inequivalent entanglement types found in multipartite
Hilbert spaces.  Many attempts have been made to find further
operationally motivated classifications, and the most prominent one
among these is the equivalence under \textbf{\ac{SLOCC}}, which is
identical to \ac{LOCC} equivalence except that the interconversion of
two states need not be deterministic.  Instead the success probability
of a conversion only needs to be non-zero.  The concept of stochastic
interconvertibility was first introduced by Bennett \etal
\cite{Bennett00} and later formalised by D\"{u}r \etal \cite{Dur00}.
\ac{SLOCC} operations are mathematically expressed as \acp{ILO}
\cite{Dur00}, and are also known as local filtering operations.  In
the case of pure $n$ qudit states the \ac{SLOCC}-equivalence reads
\begin{equation}\label{gen_slocc_cond}
  \ket{\psi} \stackrel{\text{SLOCC}}{\longleftrightarrow} \ket{\phi}
  \quad \Longleftrightarrow \quad
  \exists \, \mathb_{1} , \ldots , \mathb_{n} \in \text{SL}(d,\mbbc) :
  \ket{\psi} = \mathb_{1} \otimes \cdots
  \otimes \mathb_{n} \ket{\phi} \ens .
\end{equation}
It is clear that \ac{SLOCC}-equivalence implies \ac{LOCC}-equivalence,
and therefore the partition of the Hilbert space into \ac{LOCC}
equivalence classes is a \emph{refinement}\footnote{In the language of
  set theory, if $A$ and $B$ are two partitions of a set $M$, then the
  partition $A$ is a \textbf{refinement} of $B$ ($A \leq B$) if every
  element of $A$ is a subset of some element of $B$. For the
  entanglement classification schemes introduced here this means that
  $\text{LOCC} \leq \text{SLOCC}$.}  of the partition into \ac{SLOCC}
classes.

\ac{SLOCC} operations have a clear operational interpretation in the
sense that \emph{on average} they cannot increase the amount of
entanglement, although it is possible to obtain more entangled states
with a certain probability.  The latter is of importance for
experimentalists, because joint operations on multiple copies of a
state are often unfeasible, in which case \ac{SLOCC} operations on a
single copy are the best available entanglement distillation strategy
\cite{Verstraete03}.  While \ac{SLOCC} operations have the power to
dramatically increase or decrease the amount of entanglement shared
between parties, they cannot create entanglement out of nowhere or
completely destroy it, due to their local nature.  In particular, it
is not possible to generate entangled states from separable states by
\ac{SLOCC}, even probabilistically, something that is clear from the
definition of separability.

Multiqubit entanglement has been well studied in terms of \ac{SLOCC}
equivalence, in particular for a single copy of a pure $n$ qubit
state.  In the 2 qubit case there exist only two \ac{SLOCC}
equivalence classes, the class of separable states, and the generic
class of entangled states to which almost all states belong.  In
particular, any pure entangled state can be converted into any other
pure entangled state with non-zero probability.

For 3 qubits there exist six \ac{SLOCC} classes \cite{Dur00}, namely
the separable class, the three biseparable classes AB-C, AC-B, BC-A,
the class with W-type entanglement and the generic class with
\ac{GHZ}-type entanglement.  The canonical example of
\ac{SLOCC}-inequivalent entangled states are the three qubit
$\ket{\text{GHZ}_{3}}$ and $\ket{\text{W}_{3}}$ state.  Their tensor
rank is 2 and 3, respectively \cite{Dur00}, and the tensor rank has
been shown to be an \ac{SLOCC} invariant \cite{Dur00,Eisert01}.
Another way to distinguish between \ac{GHZ}-type and W-type states is
the \textbf{3-tangle}, an entanglement measure for three qubits
\cite{Coffman00,Dur00}.  The 3-tangle is zero not only for all states
that are separable under any bipartite cut, but even for states where
this is not the case, e.g. the $\ket{\text{W}_{3}}$ state.  The only
\ac{SLOCC} class with nonzero 3-tangle is that with \ac{GHZ}-type
entanglement, and in this sense $\ket{\text{GHZ}_{3}}$ is said to
contain \emph{genuine}\footnote{There is no universally accepted
  definition for the concept of \protect\quo{genuine} (or
  \protect\quo{true}) entanglement, but a common theme is that most or
  all of the local density matrices should be maximally mixed.  The
  \ac{GHZ} states exhibit this property, but the W states do not.
  Although W states are entangled over all parties, their multipartite
  entanglement is of a pairwise nature, i.e. within parts of the
  system.} tripartite entanglement \cite{Verstraete03}.

For 4 qubits the number of \ac{SLOCC} classes becomes infinite, and
there is no generic class to which almost all states belong.  Because
of this, various attempts have been made to find alternative and
physically meaningful classification schemes tailored for the 4 qubit
case. Techniques employed for this include Lie group theory
\cite{Verstraete02}, the hyperdeterminant \cite{Miyake03}, an
inductive approach \cite{Lamata07}, polynomial invariants
\cite{Viehmann11} and string theory \cite{Borsten10}.  For example,
Verstraete \etal \cite{Verstraete02} introduced the concept of
\textbf{\acp{EF}} with the help of normal forms, and found nine
different \acp{EF}. Since every \ac{SLOCC} equivalence class belongs
to exactly one \ac{EF}, the \ac{SLOCC} classes are a refinement of the
\acp{EF}.

An important tool for the study of entanglement equivalence classes
are quantities that do not change under a set of local operations such
as \ac{LU} or \ac{SLOCC} operations. Such quantities are known as
\textbf{invariants}, and they can provide information about the type
of entanglement present in a system.  Examples are the Schmidt rank
and the tensor rank, which are known to be invariant under \ac{SLOCC}
operations.

One popular approach to find \ac{SLOCC} invariants is to study
\textbf{polynomial invariants}. These are polynomials in the
coefficients of pure states that remain invariant under \ac{SLOCC}
operations.  Such polynomial invariants are entanglement monotones
with respect to \ac{SLOCC} operations \cite{Verstraete03}, and they
allow one to construct entanglement measures
\cite{Wootters98,Coffman00,Luque03,Osterloh05,Osterloh06}. In the two
and three qubit case the well-known concurrence (also known as
2-tangle) \cite{Wootters98} and 3-tangle \cite{Coffman00} are special
cases of polynomial invariants \cite{Miyake03}.  For four and five
qubits polynomial invariants have also been constructed
\cite{Luque03,Luque06,Osterloh05,Osterloh06,Levay06,Dokovic09}.  In
\cite{Luque03,Luque06} the polynomial invariants were constructed from
classical invariant theory, and the values of these invariants for the
\acp{EF} \cite{Verstraete02} were derived.  An alternative approach is
to employ the expectation values of antilinear operators, with an
emphasis on the permutational invariance of the \textbf{global
  entanglement measure} \cite{Osterloh05,Osterloh06}.

\subsection{Entanglement measures}\label{entanglement_measures}

In the previous section entanglement was characterised qualitatively
in the form of equivalence classes.  Now entanglement will be analysed
from a quantitative viewpoint by means of \textbf{entanglement
  measures}.  These help to assess the usefulness of given states as
resources for certain quantum informational tasks, and different
entanglement measures may capture different desirable qualities of a
state.  For bipartite, pure quantum states it is known that all
entanglement measures are essentially equivalent
\cite{Horodecki00,Horodecki09,Nielsen}, and one can find a unique
total order in the asymptotic regime of many copies.  For mixed states
of bipartite systems, as well as in the multipartite case, however, no
total ordering and therefore no unique entanglement measure exists
\cite{Morikoshi04,Wei03b,Nielsen}.

An entanglement measure $E : \mathcal{S} ( \mathcal{H} ) \to \mbbrp$
is a functional which maps the set of density matrices acting on a
Hilbert space $\mathcal{H}$ to non-negative numbers, $\rho \mapsto E (
\rho ) \geq 0$, and which satisfies certain axioms. Some of the most
common axioms are the following \cite{Vidal00,Horodecki00,Vedral98}:
\begin{enumerate}
\item \emph{Separable states:} \hspace{18.2mm} $E ( \rho ) = 0 \ens ,$
  \hspace{28.05mm} if $\rho$ is separable.
  
\item \emph{Invariance under \ac{LU}:} \hspace{11.45mm} $E ( \rho ) = E
  ( \sigma ) \ens ,$ \hspace{21.9mm} if $\rho
  \stackrel{\text{LU}}{\longleftrightarrow} \sigma$.

\item \emph{Monotonicity under \ac{LOCC}:} \quad $E ( \rho )
  \geq \sum\limits_{i} p_{i} E ( \sigma_{i} ) \ens ,$ \hspace{13.55mm}
  if $\rho \stackrel{\text{LOCC}}{\longmapsto} \left\{
    \begin{array}{l}
      \{ p_{i} , \sigma_{i} \}  \\
      \sum_{i} p_{i} \sigma_{i} \\
    \end{array} \right.$

\item \emph{Convexity:} \hspace{27.9mm} $E ( \rho ) \leq
  \sum\limits_{i} p_{i} E ( \rho_{i} ) \ens ,$ \hspace{13.8mm} where
  $\rho = \sum\limits_{i} p_{i} \rho_{i}$.

\item \emph{Additivity:} \hspace{27.9mm} $E ( \rho^{\otimes n} ) = n E
  ( \rho ) \ens ,$ \hspace{16.2mm} for all $n \in \mbbn$.

\item \emph{Strong Additivity:} \hspace{16.5mm} $E ( \rho \otimes
  \sigma ) = E ( \rho ) + E ( \sigma ) \ens ,$ \quad for all $\sigma
  \in \mathcal{S} ( \mathcal{H} )$.
\end{enumerate}
It is natural to require that an entanglement measure be zero for
non-entangled states, and from the previous section it is clear that
the measure should remain invariant under \ac{LU}.  Axiom 3 is the
most fundamental one, as the non-increase of entanglement under local
transformations (i.e. \ac{LOCC}) \cite{Vidal00,Horodecki00} lies at
the heart of our understanding of entanglement as a non-local resource
shared between parties. The natural extension of this axiom to
\ac{SLOCC} operations is that the entanglement shall not increase
\emph{on average}. The fourth axiom guarantees that entanglement
cannot be increased by mixing, something that can be understood as the
information loss encountered when going from a selection of
identifiable states to a mixture of those states. Since mixing is a
local operation, Axiom 4 is automatically fulfilled if Axiom 3
holds\footnote{At first glance the mathematical forms of Axiom 3 and
  Axiom 4 seem to contradict each other, so we stress the difference
  between their physical motivations: Axiom 3 describes a
  (non-)selective projective measurement of a given state $\rho$
  (l.h.s.), resulting in a random measurement outcome $\sigma_{i}$ or
  a superposition thereof (r.h.s.).  Axiom 4 starts with a selection
  of \emph{identifiable} states $\rho_{i}$ (r.h.s.) which are
  transformed into a mixture $\rho$ (l.h.s.), something that can be
  physically realised if an ancilla system (with orthonormal basis $\{
  \ket{i} \}$) attached to the initial state is lost: $\sum_{i} p_{i}
  \pure{i} \otimes \rho_{i} \mapsto \sum_{i} p_{i} \rho_{i}$.}.

Axioms 1 to 4 are regarded as the most important criteria for any
entanglement measure, and they coincide with the necessary properties
of \textbf{entanglement monotones}, as defined by Vidal
\cite{Vidal00}. Indeed, entanglement monotones derive their name from
the crucial requirement of monotonicity under \ac{LOCC}.

Axioms 5 and 6 are only two of many further properties that could be
required from any well-behaved entanglement measure. Even though
additivity looks like a natural requirement for entanglement measures
and is closely related to various operational meanings
\cite{Christandl04,Hastings09,Shor04}, many measures lack this
property. The strong additivity, also known as full additivity, is an
even more elusive property which is featured only by very few
measures, e.g. the squashed entanglement \cite{Christandl04}.  From
the definition it is clear that strong additivity implies regular
additivity.  The property of (strong) additivity can not only be
defined for entanglement measures, but also for individual states: A
state $\rho$ is additive with respect to an entanglement measure $E$
if $E ( \rho^{\otimes n} ) = n E ( \rho )$ holds for all $n \in
\mbbn$, and strongly additive if $E ( \rho \otimes \sigma ) = E ( \rho
) + E ( \sigma )$ for any state $\sigma$.

Many different entanglement measures have been defined, with either
operational or abstract advantages in mind. We will refrain from
providing an overview here, and instead refer to the review articles
\cite{Plenio07,Horodecki09}.  The single most important entanglement
measure for this thesis, the \acl{GM}, will be comprehensively
reviewed in \chap{geometric_measure}.

\section{Symmetric states}\label{symmetric_states}

\Permsymm quantum states are states that are invariant under any
permutation of their subsystems.  For an $n$-partite state
$\ket{\psi}$ this is the case \ac{iff} $P \ket{\psi} = \ket{\psi}$ for
all $P \in S_{n}$, where $S_{n}$ is the symmetric group of $n$
elements.  In the $n$ qubit case the symmetric sector
$\mathh_{\text{s}} \subset \mathh$ of the Hilbert space is spanned by
the $n+1$ Dicke states $\sym{n,k}$, $0 \leq k \leq n$, the equally
weighted sums of all permutations of computational basis states with
$n-k$ qubits being $\ket{0}$ and $k$ being $\ket{1}$
\cite{Dicke54,Toth07}:
\begin{equation}\label{dicke_def}
  \sym{{n,k}} = {\binom{n}{k}}^{- 1/2} \sum_{\text{perm}}
  \underbrace{ \ket{0} \ket{0} \cdots \ket{0} }_{n-k}
  \underbrace{ \ket{1} \ket{1} \cdots \ket{1} }_{k} \ens .
\end{equation}
From a physical point of view the Dicke states are the simultaneous
eigenstates of the total angular momentum $J$ and its $z$-component
$J_z$ \cite{Dicke54,Stockton03,Toth07}.  Dicke states were recently
produced in several experiments
\cite{Kiesel07,Prevedel09,Wieczorek09,Chiuri10,Thiel07}, they can be
detected experimentally \cite{Toth07,Krammer09,Thiel07,Hume09}, and
they have been proposed for certain tasks \cite{Ivanov10}.  We will
abbreviate the above notation of the Dicke states to $\sym{k}$
whenever the total number of qubits is clear.

A general pure symmetric state of $n$ qubits $\psis$ is a linear
combination of the $n+1$ orthonormalised Dicke states,
\begin{equation}\label{gen_symm_state}
  \psis = \sum_{k=0}^{n} a_k \sym{n,k} \ens ,
\end{equation}
with $a_{k} \in \mbbc$.  A generalisation to the qudit case is
straightforward \cite{Wei04}, with a general symmetric state of an $n$
qudit system being a linear combination of the Dicke states,
\begin{equation}\label{dicke_def_qudit}
  \sym{{n,\bmr{k}}} = \sqrt{\frac{k_{0}! k_{1}!
      \cdots k_{d-1}! }{n!}}
  \sum_{\text{perm}}
  \underbrace{ \ket{0} \cdots \ket{0} }_{k_{0}}
  \underbrace{ \ket{1} \cdots \ket{1} }_{k_{1}} \cdots
  \underbrace{ \ket{d-1} \cdots \ket{d-1} }_{k_{d-1}} \ens ,
\end{equation}
with $\bmr{k} = ( k_{0} , k_{1} , \cdots , k_{d-1} )$, and
$\sum_{i=0}^{d-1} k_{i} = n$. The main focus of this thesis will
however be symmetric states of $n$ qubits, as defined in
\eq{gen_symm_state}.

The theoretical and experimental analysis of symmetric states, e.g. as
entanglement witnesses or in experimental setups
\cite{Korbicz05,Korbicz06,Prevedel09,Wieczorek09,Kiesel07,Bastin09b,Bastin11},
is valuable for a variety of reasons.  Symmetric states have found use
in quantum information tasks such as leader election \cite{Dhondt06}
or as the initial state in Grover's algorithm \cite{Ivanov10}, and
they could possibly be useful for \acf{MBQC} \cite{Nest06} because
they are not too entangled for being computationally universal
\cite{Gross09}.  Symmetric states are known to appear in the Dicke
model \cite{Schneider02}, as eigenstates in various models of solid
states physics such as the \ac{LMG} model \cite{Orus08,Ribeiro08}, and
in the study of macroscopic entanglement of $\eta$-paired high $T_c$
superconductivity \cite{Vedral04}.  Furthermore, symmetric states have
been actively implemented experimentally
\cite{Prevedel09,Wieczorek09,Kiesel07,Bastin09b}, and their symmetric
properties facilitate the analysis of their entanglement properties
\cite{Bastin09,Hayashi08,Hubener09,Markham11,Mathonet10,Toth09}.  In
experiments with many qubits, it is often not possible to access
single qubits individually, necessitating a fully symmetrical
treatment of the initial state and the system dynamics \cite{Toth07}.

For these reasons symmetric states have featured prominently in recent
studies of entanglement theory, such as the characterisation of their
entanglement classes under \ac{SLOCC}
\cite{Markham11,Bastin09,Mathonet10,Aulbach11}, or the determination
of their maximal entanglement in terms of the \acl{GM}
\cite{Aulbach10,Martin10,Aulbach10lncs}.

\subsection{Majorana representation}\label{majorana_definition_sect}

In classical physics, the angular momentum $\bmr{J}$ of a system can
be represented by a point on the surface of the 3D unit sphere $S^2$,
which corresponds to the direction of $\bmr{J}$.  No such simple
representation is possible in quantum mechanics, but Ettore Majorana
\cite{Majorana32} pointed out that a pure state of spin-$j$ (in units
of $\hbar$) can be uniquely represented by $2j$ not necessarily
distinct points on $S^2$.  Given that $S^2$ can be associated with the
Bloch sphere, it is clear that this is a generalisation of the
spin-$\tfra{1}{2}$ (qubit) case, where the 2D Hilbert space is
isomorphic to the unit vectors on the Bloch sphere.  As seen in
\fig{majorana_dicke}, the three eigenstates $\ket{1, -1}$, $\ket{1,
  0}$ and $\ket{1, 1}$ of a spin-$1$ particle correspond to two points
being at the north pole, one at the north pole and the other at the
south pole and both of them at the south pole, respectively.

An equivalent representation can be shown to exist for \permsymm
states of $n$ spin-$\tfra{1}{2}$ particles \cite{Majorana32,Bacry74},
with an isomorphism mediating between all states of a spin-$j$
particle and the symmetric states of $2j$ qubits.  For a system of $n$
spin-$\tfra{1}{2}$ particles the eigenbasis of the square of the total
spin operator $\bmr{S}^2$ and its $z$ component $S_z$ can be
represented in the form $\ket{S,m}$, where $S (S+1) \hbar^2$ and $m
\hbar$ are the corresponding eigenvalues. It is the $n+1$ states from
the maximum spin sector $S = \tfra{n}{2}$ that are fully
\permsymmnospace, and it is those states that are identified as the
symmetric basis states, the Dicke states $\sym{n,k} \equiv
\ket{\tfra{n}{2},k - \tfra{n}{2}}$, with $k= 0 , \ldots , n$.  A
general state belonging to the maximum spin sector $\sum\limits_{m = -
  n/2}^{n/2} a_{m} \ket{\tfra{n}{2},m}$ is therefore equivalent to the
previous definition \eqref{gen_symm_state} of symmetric states.

\begin{figure}
  \centering
  \begin{overpic}[scale=1.0]{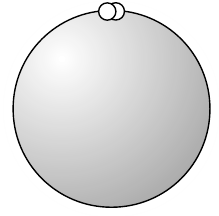}
    \put(-9,0){(a)}
  \end{overpic}
  \hspace{6mm}
  \begin{overpic}[scale=1.0]{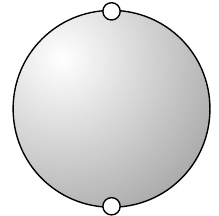}
    \put(-9,0){(b)}
  \end{overpic}
  \hspace{6mm}
  \begin{overpic}[scale=1.0]{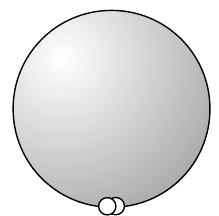}
    \put(-9,0){(c)}
  \end{overpic}
  \caption[Majorana representations of the Dicke states of two
  qubits]{\label{majorana_dicke} The Majorana representations of the
    three eigenstates (a) $\ket{1, -1}$, (b) $\ket{1, 0}$ and (c)
    $\ket{1, 1}$ of a spin-$1$ particle are shown, with the \aclp{MP}
    indicated by white circles.  By means of the isomorphism between
    spin-$j$ states and the symmetric states of $2j$ qubits, these are
    also the Majorana representations of the three symmetric basis
    states of two qubits, the Dicke states (a) $\sym{0} = \ket{00}$,
    (b) $\sym{1} = \tfra{1}{\sqrt{2}} ( \ket{01} + \ket{10} )$, and
    (c) $\sym{2} = \ket{11}$.}
\end{figure}

By means of the \textbf{Majorana representation} any symmetric state
of $n$ qubits $\psis$ can be uniquely composed, up to an unphysical
global phase, from a sum over all permutations $P \in S_{n}$ of $n$
indistinguishable single qubit states $\{ \ket{\phi_{1}} , \dots ,
\ket{\phi_{n}} \}$, with $S_n$ being the symmetric group of $n$
elements.
\begin{gather}
  \psis = \frac{\E^{\I \delta}}{\sqrt{K}} \sum_{ \text{perm} }
  \ket{\phi_{P(1)}} \otimes \ket{\phi_{P(2)}} \otimes \cdots \otimes
  \ket{\phi_{P(n)}} \ens ,
  \label{majorana_definition} \\
  \text{with} \quad \ket{\phi_i} = \cos \tfrac{\theta_i}{2} \ket{0} +
  \E^{\I \varphi_i} \sin \tfrac{\theta_i}{2} \ket{1} \ens , \quad
  \text{and} \quad K = n! \sum_{\text{perm}} \prod_{i = 1}^{n}
  \bracket{\phi_{i}}{\phi_{P(i)}} \ens . \nonumber
\end{gather}
Here $\E^{\I \delta}$ is a global phase, and the normalisation factor
$K$ is in general different for different $\psis$.  The qubits
$\ket{\phi_{i}}$ are uniquely determined by the choice of $\psis$ and
they determine the normalisation factor $K$.  By means of
\eq{majorana_definition}, any $n$ qubit state $\psis$ can be
unambiguously visualised by a multiset of $n$ points (each of which
has a Bloch vector pointing in its direction) on the surface of
$S^{2}$.  We call these points the \textbf{\acp{MP}}, and the sphere
on which they lie the \textbf{Majorana sphere}.

One nice property of the Majorana representation is that the \ac{MP}
distribution rotates rigidly on the Majorana sphere under the effect
of \ac{LU} operations on the subsystems. We have already seen in
\sect{blochsphere} that unitary operations $U \in \suc$ acting on a
single qubit, $U \ket{\phi} = \ket{\vartheta}$, correspond to
rotations of the Bloch vector around an axis on the Bloch sphere.
Applying the same single-qubit unitary operation $U$ on each of the
$n$ subsystems of a symmetric state $\psis$ yields another symmetric
state $\varphis$ by means of the map
\begin{equation}\label{symmetric_lu}
  \psis \longmapsto \varphis =
  U \otimes \cdots \otimes U \psis \ens ,
\end{equation}
and from \eq{majorana_definition} it follows that
\begin{equation}\label{lu_rotation}
  \varphis = \frac{\E^{\I \delta}}{\sqrt{K}} \sum_{\text{perm}}
  \ket{\vartheta_{P(1)}} \ket{\vartheta_{P(2)}} \cdots
  \ket{\vartheta_{P(n)}} \ens , \quad
  \text{with} \quad \ket{\vartheta_{i}} = U \ket{\phi_{i}} \,\,
  \forall \, i \ens .
\end{equation}
In other words, the \ac{MP} distribution of $\varphis$ is obtained by
a joint rotation of the \ac{MP} distribution of $\psis$ along a common
axis on the Majorana sphere.  Therefore the two \ac{LOCC}-equivalent
states $\psis$ and $\varphis$ have different \acp{MP}, but the same
\emph{relative} distribution (i.e., unchanged distances and angles) of
the \acp{MP} \cite{Markham11}.

To present some examples of \ac{MP} distributions, we consider the
three symmetric basis states of two qubits, the Dicke states $\sym{0}
= \ket{00}$, $\sym{1} = \tfra{1}{\sqrt{2}} ( \ket{01} + \ket{10} )$,
and $\sym{2} = \ket{11}$.  Their Majorana representations, shown in
\fig{majorana_dicke}, are two points on the north pole
($\ket{\phi_{1}} = \ket{\phi_{2}} = \ket{0}$), one on the north pole
and the other on the south pole ($\ket{\phi_{1}} = \ket{0},
\ket{\phi_{2}} = \ket{1}$), and two points on the south pole
($\ket{\phi_{1}} = \ket{\phi_{2}} = \ket{1}$), respectively.  While
$\sym{0}$ and $\sym{2}$ are separable states with zero entanglement,
$\sym{1}$ is the Bell state $\ket{\psi^{+}} = \tfra{1}{\sqrt{2}} (
\ket{01} + \ket{10} )$, a maximally entangled state of two qubits.
This state is represented by an antipodal pair of \acp{MP}, and it is
easy to verify that the amount of bipartite entanglement directly
increases with the distance between the two \acp{MP}.  For symmetric
states of three and more qubits this picture is not as clear, but one
would expect that symmetric states with a high degree of entanglement
are represented by \ac{MP} distributions that are well spread out over
the sphere.  We will use this idea along with other symmetry arguments
to look for the most entangled symmetric states in \chap{solutions}.

It is important to realise that the \ac{MP} states $\ket{\phi_{i}}$
that make up the Majorana representation \eqref{majorana_definition}
do not belong to a particular subsystem of the underlying physical
system.  Instead, the \acp{MP} should be viewed as abstract qubit
states from which the symmetric state of a physical system can be
reconstructed. In the next section we will see that the relationship
between a symmetric state and its \acp{MP} is equivalent to the
relationship between the coefficients and the zeroes of a complex
polynomial.

If the \acp{MP} of a symmetric state are known, then the explicit form
of the composite state can be directly calculated from
\eq{majorana_definition}.  On the other hand, if the \acp{MP} of a
given symmetric state $\psis = \sum^{n}_{k=0} a_k \sym{k}$ are
unknown, they can be determined by solving a system of $n+1$ equations
equivalent to Vieta's formulas \cite{WeissteinVieta}:
\begin{gather}
  a_k = {\binom{n}{k}}^{\frac{1}{2}} \sum_{\text{perm}}
  \text{S}_{P(1)} \cdots \text{S}_{P(k)} \text{C}_{P(k+1)} \cdots
  \text{C}_{P(n)} \ens , \label{state_to_mp} \\
  \text{with} \quad \text{C}_{i} = \cos \tfrac{\theta_i}{2} \ens ,
  \quad \text{S}_{i} = \E^{\I \varphi_{i}} \sin \tfrac{\theta_i}{2}
  \ens . \nonumber
\end{gather}

The Majorana representation has been rediscovered several times
\cite{Leboeuf91,PenroseRindler}, and has been put to many different
uses across physics. In relation to the foundations of quantum
mechanics, it has been used to find efficient proofs of the
Kochen-Specker theorem \cite{Zimba93,PenroseRindler} and to study the
\quo{quantumness} of pure quantum states in several respects
\cite{Zimba06,Giraud10}, as well as the approach to classicality in
terms of the discriminability of states \cite{Markham03}. It has also
been used to study Berry phases in high spin systems \cite{Hannay98}
and quantum chaos \cite{Hannay96,Leboeuf91}, and it has been put into
relation to geometrically motivated \ac{SLOCC} invariants
\cite{Levay06}.  Within many-body physics it has been used for finding
solutions to the \acf{LMG} model \cite{Ribeiro08}, and for studying
and identifying phases in spinor Bose-Einstein-condensates
\cite{Barnett06,Barnett07,Barnett08,Makela07}.  It has also been used
to look for optimal resources for reference frame alignment
\cite{Kolenderski08}, for phase estimation, and in quantum optics for
the multi-photon states generated by spontaneous parametric
down-conversion \cite{Kolenderski10}.  Furthermore, the Majorana
representation has been employed for finding a new proof of
Sylvester's theorem on Maxwell multiples \cite{Dennis04}, and for
analysing the relationship between spherical designs \cite{Crann10}
and anticoherent spin states \cite{Zimba06}.  The Majorana
representation has recently become a useful tool in studying and
characterising the entanglement of permutation-symmetric states
\cite{Aulbach11,Bastin09,Markham11,Mathonet10}, which has interesting
mirrors in the classification of phases in spinor condensates
\cite{Markham11,Barnett07}. Very recently further operational
interpretations of the \ac{MP} distribution have been discovered with
respect to additivity \cite{Chen11} and the equivalence of different
entanglement measures \cite{Markham11}.

\subsection{Stereographic projection}\label{stereo_proj}

The stereographic projection, a well-known concept from complex
analysis \cite{Forst}, describes an isomorphism between the points on
the surface of the $\mathcal{S}^{2}$ sphere and the points of the
extended complex plane $\cext = \mbbc \cup \{ \infty \}$.  As seen in
\fig{sterproj_examples}, the projection is mediated by rays
originating from the north pole of the Riemann sphere, thus projecting
points from the surface of the sphere along rays onto the complex
plane. By definition, the north pole is projected onto the \quo{point
  at infinity}.  The inverse projection from the plane onto the sphere
is also possible, and if the centre of the Riemann sphere coincides
with the origin of the complex plane, as shown in
\fig{sterproj_examples}(a), the inverse stereographic projection $v:
\cext \to \mbbrr$ has the form
\begin{equation}\label{sterproj_coord1}
  v(z) = \left\{ 
    \begin{array}{l l}
      \frac{1}{\abs{z}^2 + 1}
      \big( 2\RE (z) , 2 \IM (z) , \abs{z}^2 -1 \big)&
      \quad \text{for } z \in \mbbc
      \vspace{0.2cm} \\
      (0,0,1)&
      \quad \text{for } z = \infty \\
    \end{array} \right.
\end{equation}

\begin{figure}[ht]
  \centering
  \begin{overpic}[scale=0.92]{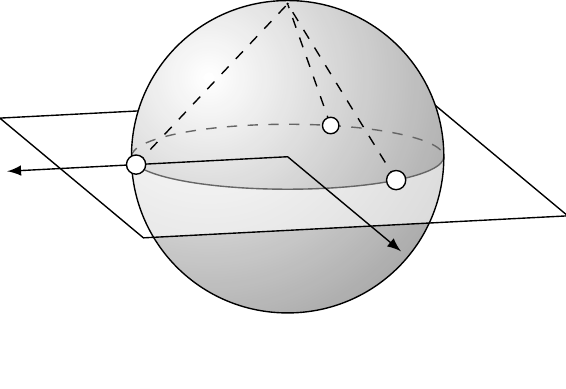}
    \put(5,0){(a)}
    \put(4,31){Re}
    \put(59,21){Im}
    \put(17,34.5){$z_1$}
    \put(70,31.5){$z_2$}
    \put(56,40.5){$z_3$}
  \end{overpic}
  \hfill
  \begin{overpic}[scale=0.92]{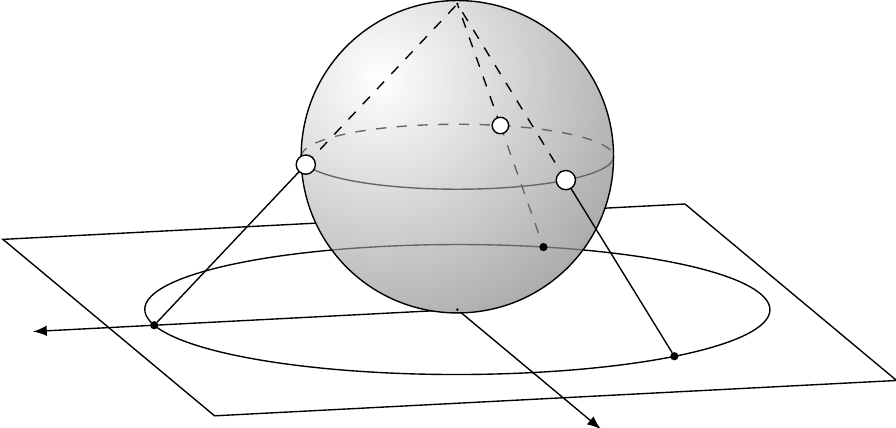}
    \put(-2,0){(b)}
    \put(7,6){Re}
    \put(58,-1){Im}
    \put(12,13.5){$z_1$}
    \put(75,10){$z_2$}
    \put(56,17){$z_3$}
  \end{overpic}
  \caption[Stereographic projection]{\label{sterproj_examples} An
    example of a stereographic projection of points on the Riemann
    sphere onto the complex plane is shown for two different positions
    of the sphere.}
\end{figure}

The stereographic projection is well-defined as long as the sphere's
north pole lies above the complex plane, and a frequently used
alternative position for the Riemann sphere is shown in
\fig{sterproj_examples}(b). Here the sphere rests on the plane, and
the inverse stereographic projection reads
\begin{equation}\label{sterproj_coord2}
  \widetilde{v}(z) = \left\{ 
    \begin{array}{l l}
      \frac{1}{ \frac{1}{2} \abs{z}^2 + 2}
      \big( 2\RE (z) , 2 \IM (z) , \abs{z}^2 \big)&
      \quad \text{for } z \in \mbbc
      \vspace{0.2cm} \\
      (0,0,2)&
      \quad \text{for } z = \infty \\
    \end{array} \right.
\end{equation}
The stereographic projection is of interest to us, because it is
closely linked to the Majorana representation of symmetric states.
For a given symmetric state $\psis = \sum_{k=0}^{n} a_k \sym{n,k}$ the
coefficients $a_k$ uniquely define a function $\psi (z) : \mbbc \to
\mbbc$ known as the Majorana polynomial, or alternatively the
characteristic polynomial, amplitude function \cite{Radcliffe71}, or
coherent state decomposition \cite{Leboeuf91}:
\begin{equation}\label{majpoly}
  \psi (z) = \sum_{k=0}^{n} ( -1 )^{k} \binom{n}{k}^{\frac{1}{2}}
  a_{k} \, z^{k} = A \prod_{k=1}^{n} ( z - z_{k} ) \ens .
\end{equation}
The Majorana polynomial represents symmetric states in terms of spin
coherent states \cite{Kolenderski08}, which can be seen from its
definition as $\psi (z) = \braket{ \sigma (z) |^{\otimes n} |
  \psi^{\text{s}} }$, where $z := \E^{\I \varphi} \tan
\frac{\theta}{2}$ uniquely parameterises the single qubit states
$\ket{\sigma (z)} = \ket{0} - \cc{z} \ket{1}$.  The right-hand side of
the equation follows from the fundamental theorem of algebra which
states that every polynomial of degree $n$ has $n$ not necessarily
distinct complex roots, and can be uniquely factorised up to a
prefactor $A$. We will call the $\{ z_{k} \}_{k}$ the Majorana roots,
and from the preceding discussion it is clear that there exist
one-to-one correspondences between the unordered set of \acp{MP} of a
symmetric state, its coefficients and the Majorana roots
\begin{equation}\label{isomrophisms}
  \text{\acp{MP} } \{ \ket{\phi_{k}} \}_{k} \hspace{2mm}
  \Longleftrightarrow \hspace{2mm}
  \text{coefficients } \{ a_{k} \}_{k} \hspace{2mm}
  \Longleftrightarrow \hspace{2mm}
  \text{Majorana roots } \{ z_{k} \}_{k} \ens .
\end{equation}
Intriguingly, the isomorphism between the \acp{MP} and the Majorana
roots is precisely described by the (inverse) stereographic projection
if the Riemann sphere is considered to be the Majorana sphere. The
\acp{MP} $\ket{\phi_{k}}$, represented on the sphere by the end points
of their Bloch vectors, are then projected onto the Majorana roots
$z_{k} \in \mbbc$ lying in the complex plane.  If any \acp{MP} lie at
the north pole, they are associated with the \quo{point at infinity},
and in this case the sum and product in \eq{majpoly} only run up to $n
- r$, where $r$ is the number of \acp{MP} being $\ket{0}$.

\section{Overview of the thesis}\label{overview}

With the recapitulation of some basic concepts of quantum information
theory behind us, we can now shift our focus towards new results.
\Chap{geometric_measure} through \chap{connections} are all research
chapters with original results.  Nevertheless, most of these chapters
are interspersed with introductory notes on non-elementary topics in
quantum information and related fields. Among these are the
introduction of the geometric measure of entanglement (\sect{gm_def}),
\acf{MBQC} (\sect{resources_for_mbqc}), an overview of spherical
optimisation problems (\sect{extremal_point}), the review of symmetric
entanglement classification schemes (\sect{overview_entclass}), the
\mob transformations of complex analysis (\sect{mobius_def}),
symmetric \ac{SLOCC} invariants (\sect{symm_inv}) and global
entanglement measures (\sect{global_ent}).  In \chap{connections}
several topics from mathematics and physics that are viewed in light
of results obtained in this thesis are introduced.

An overview of the contents presented in this thesis is given in the
following, sorted by chapters.  At the end of each summary of a
chapter reference is made where work has been published or is the
result of a collaboration.

\subsubsection*{Chapter 2: Geometric Measure of Entanglement}

The geometric measure of entanglement, an entanglement measure
particularly suited for the analysis of multipartite states, is
discussed in \chap{geometric_measure}.  After a comprehensive review
of this measure in \sect{gm_def}, it is applied to the general case of
arbitrary quantum systems in \sect{gen_results}, where a high number
of distinct closest product states is conjectured for maximally
entangled states.  In \sect{standard_form_coeff} a standard form is
derived for arbitrary $n$ qubit states with the help of the geometric
measure.  This is followed by an examination of states with positive
coefficients in \sect{pos_results}, with the conclusion that in
general the addition of complex phases to the coefficients of a
positive state leads to a considerable increase of the
entanglement. The case of symmetric $n$ qubit states is considered in
\sect{symm_results}, where a new proof for the upper bound on the
maximal symmetric entanglement is presented in connection with
\theoref{const_integral}.  This proof has the advantage of an
intuitive visualisation by means of a constant integration volume of a
spherical function, something that will be valuable for later
chapters.  In \sect{resources_for_mbqc} arguments are presented that
symmetric $n$ qubit states are not useful as resources for \ac{MBQC},
even in the context of stochastic approximate \ac{MBQC}.

\Sect{standard_form_coeff} is based on unpublished work with Seiji
Miyashita, Mio Murao and Damian Markham.  The results of
\sect{upper_bound} and \ref{resources_for_mbqc} were published in
\cite{Aulbach10,Aulbach10lncs}.

\subsubsection*{Chapter 3: Majorana Representation and Geometric
  Entanglement}

In \chap{majorana_representation} the Majorana representation is
applied to analyse the geometric entanglement of $n$ qubit symmetric
states. The first section combines the review of some known aspects
with the presentation of new results or methods.  After a discussion
in \sect{visualisation} about the visualisation of all the information
about symmetric states and their entanglement, the well-understood
properties of two and three qubit symmetric states are reviewed from
the perspective of our methodology in \sect{two_three}.  This is
followed by an introduction of the concept of totally invariant states
in \sect{invariant_and_additivity}, where it is shown that totally
invariant positive symmetric states are additive with respect to three
distance-like entanglement measures.  By means of the Majorana
representation the search for the maximally entangled symmetric states
can be understood as a spherical optimisation problem, and because of
this, \sect{extremal_point} reviews two classical point distribution
problems, \toths problem and Thomson's problem, and puts them in
contrast to the \quo{Majorana problem}.  This is followed in
\sect{analytic} by the derivation of several analytical results which
connect the coefficients of symmetric states to their Majorana
representation.  In particular, it will be seen that the Majorana
representation of states with real coefficients exhibits a reflective
symmetry, and that particularly strong restrictions are imposed on the
Majorana representation of positive states.
\Theoref{theo_general_maj_rep} presents a generalisation of the
Majorana representation which is useful to simplify the analysis of
many symmetric states.

The contents of \sect{two_three} were published in \cite{Aulbach10},
and most of the results presented in \sect{extremal_point} and
\ref{analytic} were published in \cite{Aulbach10,Aulbach10lncs}.

\subsubsection*{Chapter 4: Maximally Entangled Symmetric States}

In \chap{solutions} the conjectured maximally entangled symmetric
quantum states of up to $12$ qubits in terms of the geometric measure
of entanglement are derived by a combination of numerical and
analytical methods.  First, the methodology employed for the search is
outlined in \sect{methodology}, and then a comprehensive discussion of
all the solutions, accompanied with visualisations, is given in
\sect{maxent_results}.  Along the way the obtained solutions are
compared to those of the classical point distributions of \toth and
Thomson.  In \sect{discussion} the results obtained are summarised and
interpreted from various points of view, such as entanglement scaling,
positive versus general states, operational implications and
distribution patterns in the Majorana representation.

Parts of \sect{methodology} and \ref{discussion} were published in
\cite{Aulbach10}, and the majority of the results presented in
\sect{maxent_results} were published in
\cite{Aulbach10,Aulbach10lncs}.

\subsubsection*{Chapter 5: Classification of Symmetric Entanglement}

While the preceding chapter focused on the quantitative
characterisation of the entanglement of symmetric states,
\chap{classification} shifts the focus towards qualitative aspects.
Three different entanglement classification schemes, namely \ac{LOCC},
\ac{SLOCC} and the Degeneracy Configuration, are reviewed for
symmetric states in \sect{overview_entclass}. It is found that the
\mob transformations of complex analysis, reviewed in \sect{mobius},
accurately describe \ac{SLOCC} transformations between symmetric
states, and that they provide a straightforward visualisation of the
innate \ac{SLOCC} freedoms.  The insights gained from this
relationship motivate the subsequent sections.  In
\sect{representative} representative states with simple Majorana
representations are derived for all symmetric \ac{SLOCC} classes of up
to 5 qubits, and in \sect{families} the results gathered for the 4
qubit case are put into relation to the concept of Entanglement
Families introduced in \cite{Verstraete02}.  In
\sect{determiningslocc} examples are given how known properties of the
\mob transformations can be of practical value to determine whether
two symmetric states are \ac{SLOCC}-equivalent or not, and in
\sect{symm_inv} a connection is made between \ac{SLOCC} invariants of
4 qubit symmetric states and areas on the Majorana sphere.  Finally,
\sect{global_ent} compares the maximally entangled symmetric states in
terms of the geometric measure with the extremal states of so-called
\quo{global entanglement measures}, such as those that detect
\quo{genuine} $n$-party entanglement.

The results presented in this chapter have not been published yet, but
most of the contents of \sect{overview_entclass} through
\sect{determiningslocc} can be found in the preprint \cite{Aulbach11}.

\subsubsection*{Chapter 6: Links and Connections}

In \chap{connections} several smaller findings are outlined.  First,
in \sect{anticoherent_queens} our results about maximally entangled
symmetric $n$ qubit states are compared to two different concepts of
\quo{maximally non-classical} spin-$\frac{n}{2}$ states, namely the
\quo{anticoherent} spin states \cite{Zimba06} and the \quo{queens of
  quantum} \cite{Giraud10}.  In \sect{dualpoly} a quantum analogue to
the concept of the Platonic duals from classical geometry is
unearthed, and in \sect{LMG} the ground states of the \ac{LMG} model
\cite{Lipkin65,Lipkin65b,Lipkin65c}, a spin model, are discussed and
investigated in light of the Majorana representation.

The topic of \sect{anticoherent_queens} was briefly touched on in
\cite{Aulbach10} and presented in detail in \cite{Aulbach10lncs}.  The
results of \sect{dualpoly} were published in \cite{Aulbach10lncs}.

\subsubsection*{Chapter 7: Conclusions}

The thesis concludes with \chap{conclusion}. First a summary of the
main results obtained in the previous chapters is given in
\sect{summary_results}.  This is followed in \sect{outlook} by an
outlook on some open questions, as well as new ideas or research
directions that are worthy of being tracked further.

\cleardoublepage

\chapter{Geometric Measure of Entanglement}
\label{geometric_measure}

\begin{quotation}
  The first research chapter starts with an introduction to the
  geometric measure of entanglement, an entanglement measure
  particularly suited for multipartite states.  The properties of this
  measure are analysed for a variety of systems, starting with
  arbitrary finite-dimensional multipartite systems, and then becoming
  more specific by considering $n$ qubit systems, positive states, and
  symmetric states.

  Among the results found is the observation that in general the
  maximally entangled states are expected to have a large number of
  closest product states, and that positive states are less entangled
  than non-positive states. A new proof with the advantage of a
  straightforward geometric interpretation is found for the upper
  bound on maximal symmetric $n$ qubit entanglement, and arguments are
  brought forward that symmetric quantum states cannot be used as
  resources for \acf{MBQC}, even in the setting of approximate
  \ac{MBQC}.
\end{quotation}

\section{Introduction and motivation}\label{gm_def}

The \acf{GM} is an entanglement measure which satisfies all the
desired properties of an entanglement monotone \cite{Wei03}.  It was
initially proposed for pure bipartite states by Shimony
\cite{Shimony95}, and was subsequently generalised by Barnum \etal
\cite{Barnum01} as well as Wei \etal \cite{Wei03}.  Unlike many other
entanglement measures, the \ac{GM} explicitly accommodates
multipartite systems. Such a holistic characterisation of many-body
entanglement instead of considering bipartite splits of the system
(e.g. by means of the concurrence of reduced density matrices) will be
particularly valuable for the analysis of symmetric states where no
part of the system is distinguished from any other.

Furthermore, many other entanglement measures, such as the relative
entropy of entanglement \cite{Plenio01,Vedral97,Vedral98}, are
notoriously difficult to compute in the multipartite setting even for
pure states, in part because of the absence of the Schmidt
decomposition.  In contrast to this, the \ac{GM} allows for a
comparatively easy calculation, because the variational problem runs
only over pure product states.  It will be seen that for symmetric
states the computational complexity is further reduced.

The \ac{GM} has found applications in several fields, including signal
processing, particularly in the fields of multi-way data analysis,
high order statistics and \ac{ICA}, where it is known under the name
\emph{rank one approximation to high order tensors}
\cite{Lathauwer00,Zhang01,Kofidis02,Wang09,Ni07,Silva08}.  In the area
of quantum phase transitions the \ac{GM} has been used to analyse the
\acf{LMG} model \cite{Orus08} as well as other spin models
\cite{Orus08b,Wei05,Nakata09}.  The survival of entanglement in
thermal states was studied with the \ac{GM} \cite{Markham08}, and in
quantum information theory the measure has been employed to derive the
generalised Schmidt decomposition of Carteret \etal \cite{Carteret00}
and for the study of entanglement witnesses \cite{Wei03,Hayashi08}.
On top of this, the measure has a variety of operational
interpretations, including the usability of initial states for
Grover's algorithm \cite{Biham02,Shimoni04}, additivity of channel
capacities \cite{Werner02} and classification of states as resources
for \ac{MBQC} \cite{Gross09,Mora10,Nest07}.  In state discrimination
under \ac{LOCC} the role of entanglement in blocking the ability to
access information locally is strictly monotonic -- the higher the
geometric entanglement, the harder it is to access information locally
\cite{Hayashi06}. The reverse does not hold, i.e. less entanglement
does not necessarily make discrimination easier.

The \ac{GM} is a distance-like entanglement measure, which means that
it assesses the entanglement in terms of the \quo{remoteness} of the
given state from the set of separable states.  In the case of the
\ac{GM} this remoteness is expressed by the maximal overlap of a given
pure multipartite state $\ket{\psi}$ with all pure product states
\cite{Shimony95,Wei03,Barnum01}, which can also be defined as the
geodesic distance with respect to the Fubini-Study metric
\cite{Brody01}.  Here we present the \ac{GM} in the inverse
logarithmic form\footnote{There are different definitions of the
  geometric measure in the scientific literature, with the two most
  common ones being $\EG ( \ket{\psi} ) = 1 -
  \abs{\bracket{\psi}{\Lambda}}^2$, as defined in
  \protect\cite{Wei03}, and $\Eg ( \ket{\psi} ) = - \log_2
  \abs{\bracket{\psi}{\Lambda}}^2$, introduced in
  \protect\cite{Wei04}.  With the exception of
  \protect\sect{resources_for_mbqc}, where $\EG$ is more useful for
  comparison with the literature, we will use $\Eg$ throughout this
  thesis.}, because this allows for an easier comparison with related
entanglement measures and because it has stronger operational
implications e.g. for channel capacity additivity \cite{Werner02} or
the (strong) additivity \cite{Wei03,Zhu10,Chen11}.
\begin{equation}\label{geo_def}
  \Eg (\ket{\psi} ) = \min_{\ket{\lambda} \in
    \mathh_{\text{SEP}} } - \log_2 
  \abs{ \bracket{\psi}{\lambda} }^2 = - \log_2 
  \abs{ \bracket{\psi}{\Lambda} }^2 \ens .
\end{equation}
This entanglement measure satisfies Axioms 1 to 4 introduced in
\sect{entanglement_measures}, and additionally the values of $\Eg$ are
strictly positive for all entangled states. Although not additive in
general, it is known that for some classes of states this measure is
additive or even strongly additive.  The definition of the \ac{GM} can
be viewed as an optimisation problem in the sense that one looks for
the best approximation of an entangled state $\ket{\psi}$ by a product
state $\ket{\lambda}$, i.e. a state with zero entanglement.  The
product state which has maximal overlap with $\ket{\psi}$ is denoted
by $\ket{\Lambda} \in \mathh_{\text{SEP}}$, and will be referred to as
the \textbf{\ac{CPS}}.  It should be noted that a given $\ket{\psi}$
can have more than one \ac{CPS}.  Indeed, it will follow from
\theoref{numberofcps} that some states are likely to have a large
number of distinct \acp{CPS}.

For bipartite systems the optimisation problem \eqref{geo_def} is
trivial if the given state $\ket{\psi}$ is provided in its Schmidt
decomposition \eqref{schmidt_decomp}, because $\ket{00}$ is a \ac{CPS}
\cite{Carteret00,Hilling10}, yielding the geometric entanglement $\Eg
( \ket{\psi} ) = - \log_2 \alpha_{0}^{2}$. For the maximally entangled
two qudit states \eqref{max_ent_states} this gives $\Eg ( \ket{\Psi} )
= \log_2 d$.

Although defined for pure states, the \ac{GM} can be extended to mixed
states by means of a convex roof construction \cite{Plenio07},
\begin{equation}\label{geo_mixed}
  \Eg ( \rho ) = \min_{ \{ p_{i}, \ket{\psi_{i}} \} } \sum_{i} p_{i}
  \Eg \left( \ket{\psi_{i}} \right) \ens ,
\end{equation}
over all decompositions of $\rho$ into pure states $\rho = \sum_{i}
p_{i} \pure{\psi_{i}}$.  This minor deficiency of the \ac{GM} -- the
absence of a generic definition for mixed states -- does not need to
concern us, because we will focus on the entanglement of pure
states\footnote{Pure states usually carry more entanglement than mixed
  states, and it is believed that the maximally entangled states can
  be found among pure states.  At least for the subset of symmetric
  states the search for the maximally entangled state in terms of the
  \ac{GM} can be restricted to pure states, because the maximally
  entangled symmetric state is pure \protect\cite{Martin10}.}.

Due to its compactness, the pure Hilbert space of a finite-dimensional
system (e.g. $n$ qudits) always contains at least one maximally
entangled state $\ket{\Psi}$ with respect to the \ac{GM}, and to each
such state relates at least one \ac{CPS}.  The task of determining
maximal entanglement can therefore be formulated as a max-min problem,
with the two extrema not necessarily being unambiguous:
\begin{equation}\label{geo_max_min}
  \begin{split}
    \Eg^{\text{max}}& = \max_{\ket{\psi} \in \mathh}
    \min_{\ket{\lambda} \in \mathh_{\text{SEP}} }
    - \log_2  \abs{\bracket{\psi}{\lambda}}^2 \\
    {}& = \max_{\ket{\psi} \in \mathh} - \log_2
    \abs{\bracket{\psi}{\Lambda ( \psi )}}^2 = - \log_2
    \abs{\bracket{\Psi}{\Lambda ( \Psi ) }}^2 \ens .
  \end{split}
\end{equation}
Werner \etal \cite{Werner02} have defined the function $G(\ket{\psi} )
= \max\limits_{\ket{\lambda} \in \mathh_{\text{SEP}}}
\abs{\bracket{\psi}{\lambda}} = \abs{\bracket{\psi}{\Lambda}}$ as the
\textbf{injective tensor norm}, a quantity that is known as the
maximal probability of success in Grover's search algorithm
\cite{Grover96}, and which has been used to define an operational
entanglement measure, the Groverian entanglement\footnote{The
  Groverian measure is in fact identical to $\EG = 1 -
  \abs{\bracket{\psi}{\Lambda}}^2$, up to a square operation.}
\cite{Biham02,Shimoni04}.  Note that $G^2$ is simply the fidelity
between the states $\ket{\psi}$ and $\ket{\Lambda}$, so $\Eg$ can be
viewed as the negative logarithm of a fidelity
\cite{Uhlmann76,Jozsa94,Nielsen}.  Because of the relationship $\Eg =
- \log_2 G^2$, and because $f(x) = - \log x^2$ is a strictly monotonic
function, the task of finding the maximally entangled state is
equivalent to solving the min-max problem
\begin{equation}\label{minmax}
  \min_{\ket{\psi} \in \mathh} G(\ket{\psi}) =
  \min_{\ket{\psi} \in \mathh}
  \max_{\ket{\lambda} \in \mathh_{\text{SEP}} }
  \abs{\bracket{\psi}{\lambda}} \ens .
\end{equation}

The geometric measure $\Eg$ has close links to other distance-like
entanglement measures, namely the relative entropy of entanglement
$E_{\text{R}}$ \cite{Vedral98,Vedral97} and the logarithmic robustness
of entanglement $E_{\text{Rob}} = \log_2 (1+ R)$, where $R$ is the
usual global robustness of entanglement \cite{Cavalcanti06,Vidal99}.
Between these measures the inequalities
\begin{equation}\label{meas_pure_ineq}
  \Eg ( \ket{\psi} ) \leq E_{\text{R}} ( \ket{\psi} ) \leq
  E_{\text{Rob}} ( \ket{\psi} )
\end{equation}
hold for all pure states
\cite{Wei04,Hayashi06,Hayashi08,Cavalcanti06}. These inequalities do
not hold for mixed states\footnote{A counterexample is the Smolin
  state \cite{Smolin01}, a bound entangled mixed positive symmetric
  state, which has $\Eg = 3$ \cite{Wei04b,Zhu10}, but $E_{\text{R}} =
  E_{\text{Rob}} = 1$ \cite{Murao01,Zhu10}.  Its von Neumann entropy
  is $S = 2$, yielding $\Egt = \Eg - S = 1$ \cite{Zhu10}.}, but a
generalisation is possible by defining $\Egt ( \rho ) := \Eg ( \rho )
- S ( \rho )$, where $S ( \rho ) = - \Trace ( \rho \log \rho )$ is the
von Neumann entropy, which is zero for all pure states:
\begin{equation}\label{meas_ineq}
  \Egt ( \rho ) \leq E_{\text{R}} ( \rho ) \leq
  E_{\text{Rob}} ( \rho ) \ens .
\end{equation}
For pure states the relationship \eqref{meas_pure_ineq} implies that
the \ac{GM} is a lower bound for both the relative entropy of
entanglement and the logarithmic robustness of entanglement.  For
stabiliser states (e.g. \ac{GHZ} state), Dicke states (e.g. W state),
\permantisymm basis states \cite{Hayashi06,Hayashi08,Markham07} and
symmetric states with \emph{totally invariant} \ac{MP} distributions
\cite{Markham11} (which will be discussed in
\sect{invariant_and_additivity}) the three distance-like entanglement
measures coincide:
\begin{equation}\label{meas_eq}
  \Eg = E_{\text{R}} = E_{\text{Rob}} \ens .
\end{equation}
This equivalence is intriguing because the three measures have
different interpretations.  As an entropic quantity, $E_{\text{R}}$
has information theoretic implications, while $E_{\text{Rob}}$
measures the resistance of entanglement against arbitrary noise.

Next we consider the geometric entanglement of the two paradigmatic
$n$ qubit states of \eq{ghz_w_def}, the \ac{GHZ} state and W
state. The set of their \acp{CPS} are
\begin{align}
  \ket{\Lambda_{\text{GHZ}}}& = \{ \ket{00 \ldots 00} ,
  \ket{11 \ldots 11} \} \ens , \label{ghz_w_cps_1} \\
  \ket{\Lambda_{\text{W}}}& = \big\{ \big(
  \tfrac{\sqrt{n-1}}{\sqrt{n}} \ket{0} + \E^{\I \varphi}
  \tfrac{1}{\sqrt{n}} \ket{1} \big)^{\otimes n} \: \vert \: \varphi
  \in [0, 2 \pi ) \big\} \ens , \label{ghz_w_cps_2}
\end{align}
From this it can be seen that the \ac{GHZ} state has two different
\acp{CPS}, while the W state has a one-parametric continuum of
\acp{CPS}.  The amount of geometric entanglement follows as
\begin{align}\label{ghz_w_ent}
  \Eg (\ket{\text{GHZ}_{n}})& = 1 \ens , \\
  \Eg (\ket{\text{W}_{n}})& = \log_2 \big( \tfra{n}{n-1} \big)^{n-1}
  \ens .
\end{align}
For the \ac{GHZ} state the amount of geometric entanglement is 1,
regardless of the number of qubits. On the other hand, the
entanglement of the W state goes asymptotically towards $\log_2 (e)$
as $n \to \infty$. For $n \geq 3$ the \ac{GHZ} state has less
geometric entanglement than the W state, a property not exhibited by
many other entanglement measures.

Next we will briefly review the known upper and lower bounds on the
maximal possible amount of geometric entanglement for $n$ qubit
states.  It should be kept in mind, however, that the maximally
entangled state and its amount of entanglement depends on the chosen
entanglement measure \cite{Plenio07}, and therefore different
entanglement measures may not only yield different values for the
maximal entanglement, but also different maximally entangled states.

For the general case of pure $n$ qubit states the upper bound $\Eg
(\ket{\psi}) \leq n-1$ on the geometric entanglement has been derived
in \cite{Jung08}.  Although no states of more than two qubits reach
this bound \cite{Jung08}, most $n$ qubit states come close.  For $n >
10$ qubits the inequality $\Eg > n - 2 \log_2 (n) - 3$ holds for
almost all states, something that makes the overwhelming majority of
states too entangled to be useful for \ac{MBQC} \cite{Gross09}.  A
similar result that holds for arbitrary dimensions of the parties was
derived by Zhu \etal \cite{Zhu10}, and for $n$ qubits their
Proposition 25 yields $\Eg > n - 2 \log_2 (n) - \log_2 (9 \ln 2)$.
Resources for \ac{MBQC} must be considerably less entangled than most
states (although this is by no means a sufficient criterion,
cf. Bremner \etal \cite{Bremner09}). For example, the entanglement of
2D cluster states consisting of $n$ qubits, a well-known \ac{MBQC}
resource, was found to be $\Eg = \tfra{n}{2}$ \cite{Markham07}.

\subsection{Symmetric  states}\label{gm_symmetric_states}

Here we will briefly review some known results about
permutation-symmetric states with respect to the \ac{GM}. Firstly, the
definition of the \ac{GM} \eqref{geo_def} suggests that the overlap of
a symmetric state $\psis$ with a product state will be maximal if the
product state is also symmetric.  This straightforward conjecture has
been actively investigated \cite{Wei04,Hayashi08}, but a proof is far
from trivial.  After some special cases were proven
\cite{Hayashi09,Wei10}, H\"{u}bener \etal \cite{Hubener09} were able
to give a proof for the general case of pure symmetric
states\footnote{One could ask whether this result also holds for
  translationally invariant states (which appear in spin models), but
  this is not the case. A trivial counterexample is the state
  $\ket{\psi} = \tfrac{1}{\sqrt{2}} \left( \ket{0101} + \ket{1010}
  \right)$, which is \ac{LU}-equivalent to the \ac{GHZ} state and
  which has the two non-symmetric closest product states $\ket{0101}$
  and $\ket{1010}$ \cite{Hubener09}.}.  They showed that for $n \geq
3$ qudits the \acp{CPS} of a pure symmetric state are
\emph{necessarily} symmetric, thus greatly reducing the complexity of
finding the \acp{CPS} and the entanglement of symmetric states. A
generalisation of this result to mixed symmetric states was recently
achieved by Zhu \etal \cite{Zhu10}.  Pure symmetric product states of
$n$ qubits can be written as $\ket{\Lambda^{\text{s}}} =
\ket{\sigma}^{\otimes n}$ with only one single-qubit state
$\ket{\sigma} \in \mbbc^{2}$.  Therefore every \ac{CPS}
$\ket{\Lambda^{\text{s}}} = \ket{\sigma}^{\otimes n}$ of a multi-qubit
symmetric state $\psis$ can be visualised on the Majorana sphere by
the Bloch vector of $\ket{\sigma}$, and in analogy to the \acp{MP} we
refer to $\ket{\sigma}$ as a \textbf{\ac{CPP}} of $\psis$.

For positive symmetric states, i.e. states that are symmetric as well
as positive, it is known that they have at least one \ac{CPS} that is
positive symmetric itself \cite{Hayashi09,Wei10}.  However, while each
\ac{CPS} of a positive symmetric state is necessarily symmetric for $n
\geq 3$ qudits \cite{Hubener09}, it need not be positive, and
counterexamples for this will appear in \chap{solutions}.

Upper and lower bounds on the maximal geometric entanglement of $n$
qubit states were already reviewed, with the observation that $\Eg$
scales linearly with $n$.  We will now look at the same question for
symmetric states, i.e. how does the entanglement of the maximally
entangled symmetric $n$ qubit state scale?

In order to derive a simple lower bound, consider the Dicke states
introduced in \eq{dicke_def}.  For a given Dicke state $\sym{n,k}$
with $0 \leq k \leq n$ it is known \cite{Wei03,Wei04} that any of the
states
\begin{equation}\label{dicke_cps}
  \ket{\Lambda} = \Big( \sqrt{ \tfrac{n-k}{n} } \ket{0} +
  \E^{\I \varphi}  \sqrt{ \tfrac{k}{n} } \ket{1} \Big)^{\otimes n}
  \ens ,
\end{equation}
with $\varphi \in [0, 2 \pi )$, is a \ac{CPS}.  With this the
geometric entanglement of $\sym{n,k}$ can be calculated to be
\begin{equation}\label{dicke_ent}
  \Eg ( \sym{{n,k}} ) = \log_2 \left( \frac{
      \big( \tfrac{n}{k} \big)^k \big( \tfrac{n}{n-k} \big)^{n-k}}
    {\binom{n}{k}} \right) \ens .
\end{equation}
From this formula it can be seen that the maximally entangled Dicke
state is $\sym{{n,\frac{n}{2}}}$ for even $n$ and the two equivalent
states $\sym{{n, \lfloor \frac{n}{2} \rfloor }}$ and $\sym{{n, \lceil
    \frac{n}{2} \rceil }}$ for odd $n$.  Using the Stirling
approximation $n ! \sim \sqrt{2 \pi n} (\frac{n}{e})^{n}$, the
asymptotic amount of entanglement of the maximally entangled $n$ qubit
Dicke state for large $n$ is found to be
\begin{equation}\label{dicke_ent_scaling}
  \Eg^{\text{Dicke}} \approx \log_2 \sqrt{\tfrac{n \pi}{2}} \ens .
\end{equation}
In general the maximally entangled symmetric state of $n$ qubits is a
superposition of Dicke states, so \eq{dicke_ent_scaling} is a lower
bound on the maximal symmetric entanglement.

An upper bound on the \ac{GM} for symmetric $n$ qubit states has been
derived from the separable decomposition of the identity on the
symmetric subspace (denoted $\one_{\text{Symm}}$), see
e.g. \cite{Renner},
\begin{equation}\label{sep_decomposition}
  \int\limits_{\mathcal{S}^{2}}
  (  \pure{\theta} )^{\otimes n}\omega(\theta)
  = \frac{1}{n+1}\one_{\text{Symm}} \ens ,
\end{equation}
where $\omega$ denotes the uniform probability measure over the unit
sphere $\mathcal{S}^2$ of normalised single qubit vectors.  It is easy
to see that $G (\ket{\psi})^2 = \max\limits_{\omega \in
  \mathh_{\text{SEP}} } \text{Tr} (\omega \pure{\psi}) \geq
\tfrac{1}{n+1}$. Hence, the entanglement of a symmetric $n$ qubit
state $\psis$ is bounded from above by
\begin{equation}\label{upper_bound_eq}
  \Eg ( \psis ) \leq \log_2 (n+1) \ens .
\end{equation}
From \eq{dicke_ent_scaling} and \eqref{upper_bound_eq} one can see
that the maximal symmetric entanglement scales logarithmically with
the number of qubits.  This is a qualitative departure from the linear
scaling behaviour observed in general $n$ qubit states.

\section{Results for general states}\label{gen_results}

\subsection{Closest product states of the maximally entangled
  state}\label{cps_number}

We will now show that for systems with arbitrary dimensions and an
arbitrary number of parties the maximally entangled states can be cast
as superpositions of their \acp{CPS}.  In other words, if $\ket{\Psi}$
is maximally entangled, then the span of its \acp{CPS} contains
$\ket{\Psi}$ itself.  Furthermore, $\ket{\Psi}$ has at least two
linearly independent \acp{CPS}.  These results are obtained without
any knowledge about which states are the maximally entangled ones, and
the set of \acp{CPS} itself does not form a vector space in general,
because linear combinations of product states do not need to be
product states themselves.  The idea of the proof is that for any
state not lying in the span of its \acp{CPS} it is possible to find an
explicit variation which increases the geometric entanglement of the
state.  The main ingredient of the proof is the multipartite Schmidt
decomposition of Carteret \etal \cite{Carteret00} which was already
introduced in \sect{multipartite}.
\begin{theorem}\label{numberofcps}
  Let $\ket{\Psi} \in \mathh = \mathh_{1} \otimes \cdots \otimes
  \mathh_{n}$ be a normalised pure state of an $n$-partite system with
  finite-dimensional subspaces $\dim ( \mathh_{i} ) = d_{i} \geq 2$,
  and let $\Lambda \subset \mathh$ be the set of \acp{CPS} of
  $\ket{\Psi}$.  If $\ket{\Psi}$ is maximally entangled with respect
  to the \ac{GM}, then $\ket{\Psi} \in \spa ( \Lambda )$ and there
  exist at least two linearly independent \acp{CPS}.
\end{theorem}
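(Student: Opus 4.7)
The plan is to prove both statements by a single variational argument: the first via an explicit perturbation that strictly decreases the maximal overlap $G(\ket{\Psi}) = |\bracket{\Psi}{\Lambda}|$ whenever $\ket{\Psi}$ fails to lie in $\spa(\Lambda)$, contradicting maximal entanglement; the second by noting that a one-dimensional $\spa(\Lambda)$ would force $\ket{\Psi}$ to be a pure product state.

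For the first claim I would assume for contradiction that $\ket{\Psi}\notin\spa(\Lambda)$ and decompose $\ket{\Psi} = \cos\alpha\ket{u} + \sin\alpha\ket{v}$, where $\ket{u}$ is the normalized projection of $\ket{\Psi}$ onto $\spa(\Lambda)$, $\ket{v}$ is a unit vector orthogonal to $\spa(\Lambda)$, and $\sin\alpha \neq 0$ (taking $\alpha \in (0,\pi/2)$ after absorbing phases into $\ket{u}$ and $\ket{v}$). I would then consider the rotated family of normalized pure states $\ket{\Phi(\theta)} = \cos(\alpha+\theta)\ket{u} + \sin(\alpha+\theta)\ket{v}$ with $\ket{\Phi(0)} = \ket{\Psi}$. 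For every $\ket{\Lambda_{i}}\in\Lambda$, the orthogonality $\bracket{\Lambda_{i}}{v}=0$ yields
\begin{equation*}
|\bracket{\Lambda_{i}}{\Phi(\theta)}| = \frac{|\cos(\alpha+\theta)|}{\cos\alpha}\,G(\ket{\Psi}),
\end{equation*}
which has strictly negative right-derivative $-\tan\alpha\cdot G(\ket{\Psi})$ at $\theta=0$, \emph{uniformly} across $\Lambda$.

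The main obstacle is to promote this pointwise decrease on $\Lambda$ to a strict decrease of $G(\ket{\Phi(\theta)}) = \max_{\ket{\lambda}\in\mathcal{P}}|\bracket{\lambda}{\Phi(\theta)}|$, where $\mathcal{P}$ denotes the compact set of pure product states. Here I would invoke a standard envelope-theorem argument: since $(\ket{\lambda},\theta)\mapsto|\bracket{\lambda}{\Phi(\theta)}|$ is jointly continuous and $\mathcal{P}$ is compact, any sequence of maximizers associated with $\theta_{n}\to 0^{+}$ must accumulate inside $\Lambda$; combined with the fact that the directional derivative at $\theta=0$ takes the same value $-\tan\alpha\cdot G(\ket{\Psi})<0$ for every accumulation point, this forces $\theta \mapsto G(\ket{\Phi(\theta)})$ to have negative right-derivative at the origin. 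Hence $G(\ket{\Phi(\theta)}) < G(\ket{\Psi})$ for small $\theta>0$, contradicting the minimality of $G$ at $\ket{\Psi}$. The uniformity of the derivative across $\Lambda$---obtained by choosing a rotation in the $\ket{u}$-$\ket{v}$ plane rather than a generic perturbation---is the technical feature that closes the envelope estimate and is the delicate point of the whole argument.

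For the second claim, part one already delivers $\ket{\Psi}\in\spa(\Lambda)$. If every CPS in $\Lambda$ were a scalar multiple of a single product state $\ket{\Lambda_{0}}$, then $\spa(\Lambda)=\mbbc\ket{\Lambda_{0}}$ would be one-dimensional, forcing $\ket{\Psi}\propto\ket{\Lambda_{0}}$ to be itself a pure product state with $\Eg(\ket{\Psi})=0$. Since the Hilbert space contains genuinely entangled states (for instance, a Bell-type vector on a two-dimensional subspace of $\mathh_{1}\otimes\mathh_{2}$, tensored with fixed states in the remaining parties), the maximal geometric entanglement is strictly positive, contradicting the maximality of $\ket{\Psi}$. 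Therefore $\Lambda$ must contain at least two linearly independent CPSs.
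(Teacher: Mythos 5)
Your proof is correct, and it takes a genuinely different route from the paper's for the main claim. The paper perturbs the state as $\ket{\psi(\epsilon)} = (1-\epsilon)\ket{\Psi} + \epsilon\ket{\xi}$ with $\ket{\xi}$ orthogonal to $\spa(\Lambda)$, and then controls the overlap near each \ac{CPS} by invoking the generalised Schmidt decomposition of Carteret \etal to choose, for each $\ket{\Lambda_{i}}$, a basis in which $\ket{\Lambda_{i}} = \ket{0}^{\otimes n}$ and the first-order coefficients $a_{0\ldots0k0\ldots0}$ vanish; a two-parameter expansion in $\epsilon$ and the product-state variation $\bmr{\delta}$ then shows the overlap drops to $(1-\epsilon)\abs{\bracket{\Psi}{\Lambda}}$ in a whole neighbourhood of each maximiser. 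You instead rotate within the two-plane spanned by the normalised projection $\ket{u}$ of $\ket{\Psi}$ onto $\spa(\Lambda)$ and its orthogonal complement $\ket{v}$, which keeps the family exactly normalised and gives the exact, maximiser-independent value $\abs{\bracket{\Lambda_{i}}{\Phi(\theta)}} = \abs{\cos(\alpha+\theta)}\,G(\ket{\Psi})/\cos\alpha$ at every \ac{CPS}; the Danskin-type envelope argument over the compact set of product states then converts the uniform negative directional derivative into a strict decrease of $G$. Your approach buys independence from the Carteret standard form and makes rigorous the step the paper handles only informally (that for small variations only neighbourhoods of the \emph{global} maxima matter, which the paper argues by appeal to \fig{variations_diagram}); the paper's approach buys explicit coordinates that it reuses later (e.g.\ in \theoref{coeff_theo} and \corref{numberofcpp}). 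The second claim is handled essentially identically in both arguments: a one-dimensional $\spa(\Lambda)$ would force $\ket{\Psi}$ to be a product state, and your added remark that the maximal entanglement is strictly positive because $d_{i}\geq 2$ guarantees entangled states exist is a worthwhile explicit touch.
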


\begin{proof}
  Let us assume that $\ket{\Psi}$ is maximally entangled, but
  $\ket{\Psi} \notin U := \spa ( \Lambda )$. This implies $U \neq
  \mathh$, and one can use the orthogonal complement $V := U^{\perp} =
  \{ \ket{v} \in \mathh : \bracket{v}{u} = 0 \,\, \forall \, u \in U
  \}$ of $U$, with $0 < \dim ( V ) < \dim ( \mathh )$, to write
  $\mathh$ as an internal direct sum of two complex vector spaces:
  $\mathh = U \oplus V$.  Because of $\ket{\Psi} \notin U$ there
  exists a $\ket{\zeta} \in V$ so that $\bracket{\Psi}{\zeta} \neq
  0$. We can then define the variation
  \begin{equation}\label{variation}
    \ket{\psi ( \epsilon )} := (1 - \epsilon) \ket{\Psi} +
    \epsilon \ket{\xi} \ens , \quad \text{with } \:
    \epsilon > 0 \ens \text{and } \: \ket{\xi} :=
    \tfrac{2}{\bracket{\Psi}{\zeta} +
      \bracket{\zeta}{\Psi} } \ket{\zeta} \ens .
  \end{equation}
  Obviously $\lim\limits_{\epsilon \to 0} \ket{\psi ( \epsilon )} =
  \ket{\Psi}$, and $\bracket{\psi ( \epsilon )}{\psi ( \epsilon )} = 1
  + \Order{( \epsilon^{2} )}$. In the following $\ket{\psi ( \epsilon
    )}$ can be considered to be normalised, because second order
  variations play no role in subsequent calculations and can thus be
  ignored.  Since $\ket{\xi} \propto \ket{\zeta}$ it follows that
  $\ket{\xi} \in V$ and thus $\bracket{\xi}{\Lambda_{i}} =0$ for all
  $i$.  Writing $f( \epsilon ) := \max\limits_{\ket{\lambda} \in
    \text{SEP}} \abs{\bracket{\psi ( \epsilon )}{\lambda}}$, we will
  show that $f( \epsilon ) < f (0)$ for sufficiently small, but
  nonzero $\epsilon$ and therefore $\ket{\Psi}$ cannot be maximally
  entangled.  Because we consider infinitesimal variations, it
  suffices to investigate $g_{\epsilon} ( \lambda ) :=
  \abs{\bracket{\psi ( \epsilon )}{\lambda}}$ near the global maxima
  $\ket{\lambda} = \ket{\Lambda_{i}}$ of $g_{0} ( \lambda )$.  It will
  turn out that the value of $g_{\epsilon}$ consistently decreases in
  the neighbourhood of each $\ket{\Lambda_{i}}$ as $\epsilon$ is
  turned on.  Note that the value of $g_{\epsilon} ( \lambda )$ may
  increase near its non-global maxima, but this is not of concern to
  us, because the variation can be chosen sufficiently small, as seen
  in \fig{variations_diagram}.
  
  \begin{figure}[ht]
    \centering
    \begin{overpic}[scale=.8]{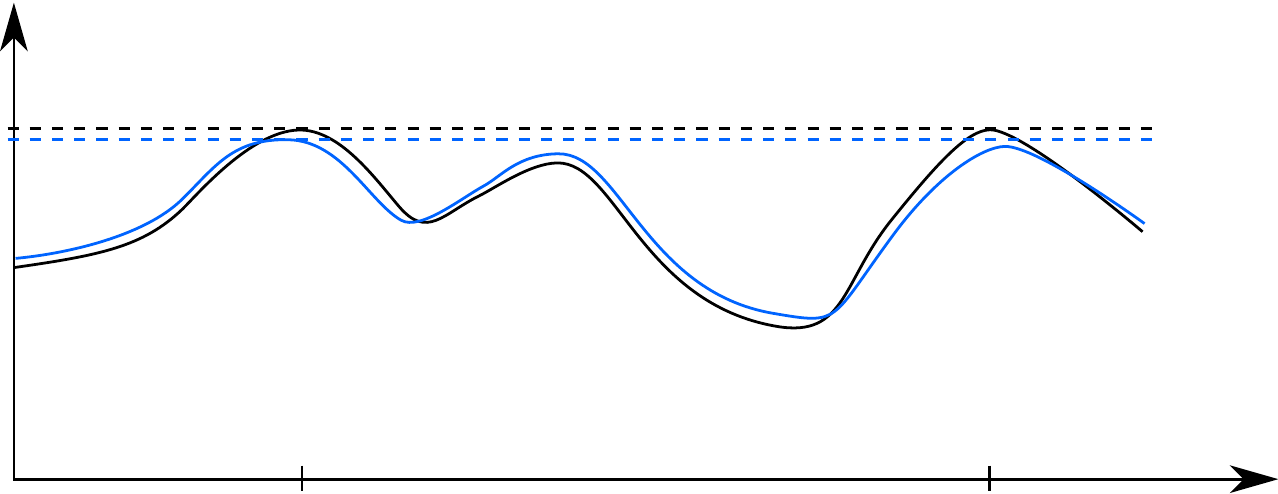}
      \put(102,0){$\ket{\lambda}$}
      \put(20,-2.7){$\ket{\Lambda_{1}}$}
      \put(74,-2.7){$\ket{\Lambda_{2}}$}
      \put(-9,37){$g_{\epsilon} ( \lambda )$}
      \put(-21,27){$\max\limits_{\ket{\lambda} \in \text{SEP}}
        g_{\epsilon} ( \lambda )$}
      \put(12,18){$\epsilon = 0$}
      \put(4,22){\blue{$\epsilon \neq 0$}}
    \end{overpic}
    \caption[Infinitesimal variations of the amplitude
    function]{\label{variations_diagram} Schematic representation of
      the change in $g_{\epsilon} ( \lambda )$ as $\epsilon$ is turned
      on.  For sufficiently small $\epsilon$ only the areas near the
      \acp{CPS} of $\ket{\Psi}$ need to be considered in order to
      determine the largest value of $g_{\epsilon} ( \lambda )$. The
      blue curve representing $\epsilon \neq 0$ attains only one
      global maximum, which lies in the vicinity of
      $\ket{\Lambda_{1}}$.}
  \end{figure}
  
  In the following we will choose an arbitrary $\ket{\Lambda_{i}}$ --
  denoted as $\ket{\Lambda}$ -- and show that $g_{\epsilon} ( \lambda
  ) = \abs{\bracket{\psi ( \epsilon )}{\lambda}}$ decreases near
  $\ket{\lambda} = \ket{\Lambda}$. This procedure can be performed for
  each $\ket{\Lambda_{i}}$, thus proving that $\ket{\psi ( \epsilon
    )}$ is more entangled than $\ket{\Psi}$. Note that even though the
  following calculations rely on a basis that depends on the chosen
  $\ket{\Lambda_{i}}$, the variation $\ket{\psi ( \epsilon )}$ of
  \eq{variation} is independent of any basis, and thus $\ket{\psi (
    \epsilon )}$ is the same for each $\ket{\Lambda_{i}}$.
  
  In the proof of Theorem 2 of \cite{Carteret00} the factorisable
  orthonormal basis was chosen in a way so that the state
  $\ket{\lambda} = \ket{00 \cdots 00}$ is a maximum of the overlap
  function $g( \lambda ) = \abs{\bracket{\psi}{\lambda}}$. Since the
  choice of this maximum is arbitrary, this means that there exists a
  basis so that $\ket{\Lambda} = \ket{00 \cdots 00}$ is a \ac{CPS},
  and that the coefficients $a_{ i_{1} , \ldots , i_{n} }$ of the
  state $\ket{\Psi}$ (cf. \eq{multipartite_state}) satisfy the
  conditions outlined in Theorem 2 of \cite{Carteret00}. In
  particular, $\bracket{\Psi}{\Lambda} = a_{00 \cdots 00}$, and the
  following special case of \eq{generalised_schmidt_zero} holds:
  \begin{equation}\label{gen_schmidt}
    a_{\underbrace{\scriptstyle 00
        \ldots 0k}_{j \text{ indices}} 0 \ldots 00} = 0 \qquad
    \forall \, 1 \leq j \leq n \ens \forall \, 1 \leq k
    \leq d_{j} - 1 \ens .
  \end{equation}
  Arbitrary variations of $\ket{\Lambda} = \ket{00 \cdots 00}$ can be
  defined as follows:
  \begin{equation}\label{variation_delta}
    \begin{split}
      \ket{\lambda ( \bmr{\delta} )}& = \ket{\delta^{1}} \otimes
      \ket{\delta^{2}} \otimes \ldots \otimes \ket{\delta^{n}}
      \ens , \quad \text{with} \\
      \ket{\delta^{j}}& = (1- \delta_{0}^{j}) \ket{0} + \delta_{1}^{j}
      \ket{1} + \ldots + \delta_{d_{j} -1}^{j}
      \ket{d_{j} - 1} \quad \forall \, j \ens .
    \end{split}
  \end{equation}
  Here the $\delta_{i}^{j}$ are small complex-valued variations that
  are independent from each other with the only restriction being the
  $n$ normalisation conditions $\bracket{\delta^{j}}{\delta^{j}} =
  1$. The variation $\ket{\lambda ( \bmr{\delta} )}$ remains a product
  state satisfying $\lim\limits_{\bmr{\delta} \to 0} \ket{\lambda (
    \bmr{\delta} )} = \ket{\Lambda}$ as well as $\bracket{\lambda (
    \bmr{\delta} )}{\lambda ( \bmr{\delta} )} = 1$.
  
  We will proceed to show that $\abs{\bracket{\psi ( \epsilon
      )}{\lambda ( \bmr{\delta} )}} < \abs{\bracket{\Psi}{\Lambda}}$
  in the entire neighbourhood of $\bmr{\delta} = \bmr{0}$ for small
  but nonzero values of $\epsilon$. For this purpose we can ignore any
  terms of order $\Order ( \epsilon^{2} )$, $\Order ( \bmr{\delta}^2
  )$, $\Order ( \epsilon \bmr{\delta} )$, and higher.  From
  \eq{variation}, \eqref{gen_schmidt}, \eqref{variation_delta} and
  $\bracket{\xi}{\Lambda} = \bracket{\xi}{00 \cdots 00} = 0$ it
  follows that
  \begin{subequations}\label{variation_equations}
    \begin{align}
      \bracket{\psi ( \epsilon )}{\lambda ( \bmr{\delta} )}& = (1
      - \epsilon ) \bracket{\Psi}{\lambda ( \bmr{\delta} )} +
      \epsilon \bracket{\xi}{\lambda ( \bmr{\delta} )} \label{var_eq1} \\
      {}& = (1- \epsilon ) \left[ a_{00 \ldots 00} - \left( \sum_{j=1}^{n}
          \sum_{i=1}^{d_{j} -1} \delta_{i}^{j}
          a_{\underbrace{\scriptstyle 00 \ldots 0i}_{j \text{ indices}}
            0 \ldots 00}
        \right) + \Order ( \bmr{\delta}^{2} ) \right] \nonumber \\
      {}& \quad + \epsilon \bra{\xi} \left[ (1 - \delta_{0}^{1} ) \ket{0}
        \otimes \ldots \otimes (1 - \delta_{0}^{n} ) \ket{0} \right]
      + \Order ( \epsilon \bmr{\delta} ) \label{var_eq2} \\
      {}& \approx (1- \epsilon ) a_{00 \ldots 00} + \epsilon
      \bracket{\xi}{00 \cdots 00} = (1- \epsilon )
      \bracket{\Psi}{\Lambda} \ens , \label{var_eq3}
    \end{align}
  \end{subequations}
  and therefore $\abs{\bracket{\psi ( \epsilon )}{\lambda (
      \bmr{\delta} )}} \approx (1 - \epsilon)
  \abs{\bracket{\Psi}{\Lambda}} < \abs{\bracket{\Psi}{\Lambda}}$.
  
  The existence of at least two linearly independent \acp{CPS} for the
  maximally entangled state $\ket{\Psi}$ immediately follows from the
  observation that the span of a single product state cannot contain
  entangled states.
\end{proof}
For the special case of qubit systems ($d_{1} = \ldots = d_{n} = 2$)
\eq{gen_schmidt} directly follows from \theoref{coeff_theo} (which
will be introduced in the following section) without the need to
invoke the generalised Schmidt decomposition of Carteret \etal
\cite{Carteret00}.  Furthermore, it is straightforward to adapt
\theoref{numberofcps} to the case of symmetric states. This will be
done in \corref{numberofcpp} in \sect{symm_results}.

Given a maximally entangled state, is the set of distinct \acp{CPS}
discrete or continuous?  And if it is continuous, can it be
parameterised in some way?  Tamaryan \etal
\cite{Tamaryan08,Tamaryan10} noticed that some highly entangled $n$
qubit W-type states have a continuous one-parametric range of closest
separable states, and for the W-states themselves this continuous
range was already given in \eq{ghz_w_cps_2}.  For three qubits
$\ket{\text{W}_{3}}$ is known to be the maximally entangled state
\cite{Chen10}, and with the parameterisation $\ket{\Lambda ( \varphi )
} = \big( \sqrt{\tfrac{2}{3}} \ket{0} + \E^{\I \varphi}
\sqrt{\tfrac{1}{3}} \ket{1} \big)^{\otimes 3}$ for its \acp{CPS}
\theoref{numberofcps} can be verified by considering the relation
$\ket{\text{W}_{3}} = \tfrac{3}{8} \big( \ket{\Lambda{(0)}} - \I
\ket{\Lambda{(\tfrac{\pi}{2})}} - \ket{\Lambda{( \pi )}} + \I
\ket{\Lambda{(\tfrac{3 \pi}{2})}} \big)$.  It remains interesting to
see whether continuous ranges of \acp{CPS} also exist for larger
number of particles. This question will be reviewed in
\sect{number_cpp} in light of results gained in later chapters.

Given a maximally entangled state $\ket{\Psi}$, how large is $U = \spa
( \Lambda )$?  \Theoref{numberofcps} only tells us that $\dim U \geq
2$.  For three qubits the maximally entangled state
$\ket{\text{W}_{3}}$ is symmetric, from which it follows that all its
\acp{CPS} must be symmetric \cite{Hubener09}, which implies $U \subset
\mathh_{\text{s}}$. Because the symmetric subspace $\mathh_{\text{s}}$
is strictly smaller than $\mathh$ (consider e.g. biseparable states),
$U$ is strictly smaller than $\mathh$.  Nevertheless, there is reason
to believe that $\dim U$ is in general high: For sufficiently large
$n$ qubit systems ($n > 10$) the maximal geometric entanglement scales
as $\Eg \approx n - \Order ( \log_2 (n) )$ or higher \cite{Gross09},
and the relationship $\Eg \leq \log_2 (r) + \Order (1)$ has been found
to hold with high probability for random states with tensor rank $r$
\cite{Bremner09}.  Although the maximally entangled state is by no
means a \quo{random state}, the property of the Schmidt measure $P =
\log_2 (r)$ being an entanglement measure \cite{Eisert01} makes it
reasonable to expect that maximally entangled states in terms of the
geometric measure have a tensor rank $r > 2^{n - \Order ( \log_2 (n)
  )}$.  Since $\ket{\Psi} \in \spa ( \Lambda )$, this means that every
expansion of $\ket{\Psi}$ in terms of linearly independent \acp{CPS}
consists of at least $r$ terms, which in turn implies the existence of
at least $r$ linearly independent \acp{CPS}.  Therefore we conjecture
that $\dim U > 2^{n - \Order ( \log_2 (n) )}$, which is close to $\dim
\mathh = 2^{n}$.

\subsection{Standard form of coefficients}\label{standard_form_coeff}

The generalised Schmidt decomposition of Carteret \etal
\cite{Carteret00} for multipartite states was already mentioned in the
introductory \sect{multipartite}.  Here I present a similar standard
form for the coefficients of $n$ qubit states that I derived in
collaboration with Seiji Miyashita and Mio Murao.  I was unaware of
the former work in \cite{Carteret00} while doing so, and there are
similarities between the two forms.  The following
\theoref{coeff_theo} can be understood as a special case of the
standard form in \cite{Carteret00} with weaker implications on the
coefficients. It is nevertheless interesting, because our proof is
different, and because we make the connection to the \ac{GM} more
explicit.

Consider an $n$ qubit state $\ket{\psi}$ written in the notation of
\eqref{multipartite_state}.  The state has at least one \ac{CPS}, and
by choosing the computational basis accordingly, we can set
$\ket{\Lambda} = \ket{00 \ldots 0}$ to be a \ac{CPS}.  The injective
tensor norm (which determines the amount of geometric entanglement) is
then $G(\ket{\psi}) = \abs{\bracket{\psi}{00 \ldots 0}} = \abs{a_{00
    \dotsc 0}}$, i.e. the amount of entanglement of $\ket{\psi}$ is
given by the first coefficient $a_{00 \dotsc 0}$. By means of the
global phase this coefficient can be taken to be positive.

\begin{theorem}\label{coeff_theo}
  For every pure $n$ qubit state $\ket{\psi}$ one can choose a
  computational basis with the notation of \eqref{multipartite_state}
  in which $\ket{\Lambda} = \ket{0}^{\otimes n}$ is a \ac{CPS}, the
  coefficient $a_{00 \dotsc 0}$ is positive, and the following
  conditions hold:
  \vspace{1mm}
  
  For 2 qubits:
  \begin{subequations}\label{2_qubit_coeff}
    \begin{gather}
      a_{10} = a_{01} = 0 \label{2qc_1} \ens , \\
      a_{00}^2 \geq \abs{a_{11}}^2 \ens . \label{2qc_2}
    \end{gather}
  \end{subequations}
  
  For 3 qubits:
  \begin{subequations}\label{3_qubit_coeff}
    \begin{gather}
      a_{100} = a_{010} = a_{001} = 0 \ens , \label{3qc_1} \\
      a_{000}^2 \geq \abs{a_{110}}^2 \ens , \label{3qc_2} \\
      a_{000}^2 \geq \abs{a_{101}}^2 \ens , \label{3qc_3} \\
      a_{000}^2 \geq \abs{a_{011}}^2 \ens , \label{3qc_4} \\
      a_{000}^3 - 2 \abs{ a_{110} a_{101} a_{011} }
      - a_{000} \big( \abs{a_{110}}^2
      + \abs{a_{101}}^2 + \abs{a_{011}}^2 \big) \geq 0 \ens .
      \label{3qc_5}
    \end{gather}
  \end{subequations}
  
  For $n$ qubits:
  \begin{subequations}\label{n_qubit_coeff}
    \begin{gather}
      a_{ \{ 1 \} } = 0 \ens , \label{nqc_1} \\
      a_{00 \ldots 0}^2 \geq \abs{a_{ \{ 11 \} }}^2
      \ens , \label{nqc_2} \\
      a_{00 \ldots 0}^3 - 2 \, \abs{ a_{ \{110\} } a_{ \{101\} }
        a_{ \{011\} } } -
      a_{00 \ldots 0} \big( a_{ \{110\} } + a_{ \{101\} }
      + a_{ \{011\} } \big) \geq 0 \ens , \label{nqc_3}
    \end{gather}
  \end{subequations}
  where $a_{ \{ 1 \} }$ stands for any of the $n$ coefficients $a_{10
    \ldots 0} \, , \, a_{01 \ldots 0} \, , \, \ldots \, , \, a_{0
    \ldots 01}$ ,
  
  \noindent $a_{ \{ 1 1 \} }$ stands for any of the $\binom{n}{2}$
  coefficients $ a_{110 \ldots 0} \, , \, a_{101 \ldots 0} \, , \,
  \ldots \, , \, a_{0 \ldots 011}$ ,
  
  \noindent and the tuple $\{ a_{ \{110\} } \, , \, a_{ \{101\} } \, ,
  \,a_{ \{011\} } \}$ can be any of the $\binom{n}{3}$ different
  tuples
  \begin{equation*}
    \begin{split}
      \{ a_{\bmr{110}0 \ldots 0}& \, , \, a_{\bmr{101}0 \ldots
        0} \, , \, a_{\bmr{011}0 \ldots 0} \} \ens , \\
      \{ a_{\bmr{11}0\bmr{0}0 \ldots 0}& \, , \,
      a_{\bmr{10}0\bmr{1}0 \ldots 0} \, , \,
      a_{\bmr{01}0\bmr{1}0 \ldots 0} \}
      \ens , \\
      {}& \vdots \\
      \{ a_{0 \ldots 0\mathbf{110}}& \, , \, a_{0 \ldots
        0\mathbf{101}} \, , \, a_{0 \ldots 0\mathbf{011}} \} \ens .
    \end{split}
  \end{equation*}
\end{theorem}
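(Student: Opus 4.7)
The plan is to view Theorem~\ref{coeff_theo} as a collection of necessary first- and second-order optimality conditions that follow from $\ket{0}^{\otimes n}$ being a global maximum of the fidelity $|\bracket{\lambda}{\psi}|$ on the product-state manifold. Since every $\ket{\psi}$ admits at least one \ac{CPS} by compactness, and local unitaries preserve both the set of \acp{CPS} and $G(\ket{\psi})$, I can rotate any chosen \ac{CPS} to $\ket{0}^{\otimes n}$, after which an overall global phase makes $a_{00\ldots 0}=G(\ket{\psi})$ real and positive. The remainder of the proof consists of extracting the stated conditions by varying $\ket{\lambda}$ locally around $\ket{0}^{\otimes n}$.

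I parametrise the neighbourhood as $\ket{\lambda(\bmr{\delta})} = \bigotimes_{j=1}^{n}\bigl(\sqrt{1-|\delta_j|^2}\,\ket{0}+\delta_j\,\ket{1}\bigr)$ with $\delta_j\in\mbbc$, and Taylor-expand $F(\bmr{\delta}) := \bracket{\lambda(\bmr{\delta})}{\psi}$. Denoting by $a_{\{1\}_j}$ the coefficient of $\ket{\psi}$ with a single $1$ in position $j$, to first order $F \approx a_{00\ldots 0} + \sum_{j} a_{\{1\}_j}\,\delta_j^{*}$. The Wirtinger stationarity conditions $\partial_{\delta_j^{*}}|F|^2\big|_{\bmr{\delta}=\bmr{0}} = a_{00\ldots 0}\,a_{\{1\}_j} = 0$, combined with $a_{00\ldots 0}>0$, force every such coefficient to vanish; this is \eqref{nqc_1} together with its specialisations \eqref{2qc_1} and \eqref{3qc_1}.

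For the inequalities I proceed to second order. Writing $\delta_j = \epsilon_j e^{i\varphi_j}$ with $\epsilon_j\geq 0$, setting $\theta_{jk} := \arg a_{\{11\}_{jk}}$, and using $a_{\{1\}}=0$, a direct expansion gives
\[
|F|^{2} = a_{00\ldots 0}^{2} - a_{00\ldots 0}\,Q(\bmr{\epsilon},\bmr{\varphi}) + O(|\bmr{\epsilon}|^{3}),
\]
\[
Q(\bmr{\epsilon},\bmr{\varphi}) := a_{00\ldots 0}\sum_{j}\epsilon_j^{2} - 2\sum_{j<k}|a_{\{11\}_{jk}}|\cos(\theta_{jk}-\varphi_j-\varphi_k)\,\epsilon_j\epsilon_k,
\]
and maximality of $\ket{0}^{\otimes n}$ forces $Q\geq 0$ for every $\bmr{\epsilon},\bmr{\varphi}$. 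Supporting $\bmr{\epsilon}$ on only two indices and choosing $\varphi_j+\varphi_k=\theta_{jk}$ (trivially solvable with two free phases) reduces $Q$ to a $2\times 2$ form whose non-negativity is exactly \eqref{nqc_2}, and hence \eqref{2qc_2} and \eqref{3qc_2}--\eqref{3qc_4}. Supporting $\bmr{\epsilon}$ on any three indices, the three equations $\varphi_j+\varphi_k=\theta_{jk}$ on that triple form an invertible $3\times 3$ linear system modulo $2\pi$ and can be solved simultaneously, saturating all cosines at $+1$. The resulting form $Q=\bmr{\epsilon}^{T}M\bmr{\epsilon}$ has
\[
M = \begin{pmatrix} a_{000} & -|a_{110}| & -|a_{101}| \\ -|a_{110}| & a_{000} & -|a_{011}| \\ -|a_{101}| & -|a_{011}| & a_{000} \end{pmatrix},
\]
and a short expansion shows $\det M$ equals the left-hand side of \eqref{3qc_5}. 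Because the off-diagonal signs make $M$ a Z-matrix, positivity on the non-negative orthant is equivalent to full positive semi-definiteness, and Sylvester's criterion then reduces $M\succeq 0$ to the three pairwise bounds together with $\det M\geq 0$; this yields \eqref{3qc_5} and, by restriction to any triple of the $n$ qubits, \eqref{nqc_3}.

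The delicate step is the cubic condition: it is not a priori obvious that the tightest second-order constraint on the 3-qubit block admits a phase-free form, but the simultaneous solvability of the three phase equations on any triple precisely ensures that the worst-case direction for $Q$ depends only on the moduli $|a_{\{11\}}|$, producing the clean determinantal inequality. Third-order contributions to $|F|^{2}$ from $a_{\{111\}}$-type coefficients enter at order $\epsilon^{3}$ and are subdominant in the neighbourhood of $\bmr{\delta}=\bmr{0}$, so they play no role in the second-order necessary conditions asserted by the theorem.
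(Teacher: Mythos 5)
Your proposal is correct and follows essentially the same route as the paper's proof: first-order stationarity of the overlap at $\ket{0}^{\otimes n}$ kills the single-excitation coefficients, and the second-order conditions---with the local phases chosen so that the off-diagonal entries of the Hessian reduce to the moduli $\abs{a_{\{11\}}}$---yield the $2\times 2$ and $3\times 3$ principal-minor inequalities, exactly as in the paper. The only substantive difference is presentational: you use Wirtinger calculus for the first-order step and make explicit the copositivity-equals-semidefiniteness argument for the Z-matrix $M$, a point the paper's direct appeal to negative semidefiniteness of the Hessian leaves implicit.
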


\begin{proof}
  The possibility of finding a computational basis in which
  $\ket{\Lambda} = \ket{0}^{\otimes n}$ is a \ac{CPS} and $a_{00
    \dotsc 0}$ is positive was already explained, so we only need to
  verify the conditions on the other coefficients.  For this we
  consider the first and second partial derivatives of the overlap
  function $g ( \lambda ) = \abs{\bracket{\psi}{\lambda}}$ around the
  point of the maximum $\ket{\Lambda} = \ket{0}^{\otimes n}$.
  
  We start with the 2 qubit case. Let $\ket{\psi} = ( a_{00} \ens
  a_{01} \ens a_{10} \ens a_{11})^{T}$ be the given state in an
  appropriate basis.  A general product state (up to a global phase)
  can be written as
  \begin{equation*}
    \ket{\lambda} =
    \begin{pmatrix}
      \sqrt{1 - b_1^2} \\
      b_1 \, \E^{\I \beta_1}
    \end{pmatrix}
    \otimes
    \begin{pmatrix}
      \sqrt{1 - b_2^2} \\
      b_2 \, \E^{\I \beta_2}
    \end{pmatrix}
    =
    \begin{pmatrix}
      \sqrt{1 - b_1^2}    \sqrt{1 - b_2^2} \\
      b_1 \sqrt{1 - b_2^2} \, \E^{\I \beta_1} \\
      b_2 \sqrt{1 - b_1^2} \, \E^{\I \beta_2} \\
      b_1 b_2 \, \E^{\I (\beta_1 + \beta_2 )}
    \end{pmatrix} \ens ,
  \end{equation*}
  with $b_1 , b_2 \in [0,1]$ and $\beta_1 , \beta_2 \in [0,2
  \pi)$. Expanding $b_1, b_2$ by a Taylor series around $\ket{\Lambda}
  = \ket{00}$ gives
  \begin{equation*}
    \ket{00} + \delta \ket{00} =
    \begin{pmatrix}
      1 - \frac{1}{2} ( \delta b_1^2 + \delta b_2^2 ) \\
      \delta b_1 \E^{\I \beta_1} \\
      \delta b_2 \E^{\I \beta_2} \\
      \delta b_1 \delta b_2 \E^{\I (\beta_1 + \beta_2 )}
    \end{pmatrix} \ens , \quad \text{and}
  \end{equation*}
  \begin{multline}
    \Abs{\bra{\psi} \big( \ket{00} + \delta \ket{00} \big) } =
    \big| a_{00} - \tfrac{1}{2} a_{00}(\delta b_1^2 + \delta b_2^2 ) \\
    + \cc{a}_{01} \delta b_1 \E^{\I \beta_1} + \cc{a}_{10} \delta b_2
    \E^{\I \beta_2} + \cc{a}_{11} \delta b_1 \delta b_2 \E^{\I
      (\beta_1 + \beta_2 )} \big| \ens .
    \label{variation_expr}
  \end{multline}
  $\abs{\bracket{\psi}{\lambda}}$ must have a maximum at
  $\ket{\Lambda} = \ket{00}$, so the first partial derivatives of
  \eq{variation_expr} with respect to $b_1$ and $b_2$ must be
  zero. This yields $a_{01} = a_{10} = 0$.  With the freely variable
  $\beta_1, \beta_2$ chosen s.t.  $\cc{a}_{11} \E^{\I (\beta_1 +
    \beta_2 )} \in \mbbr$, \eq{variation_expr} becomes
  \begin{equation*}
    a_{00} - \tfrac{1}{2} a_{00}(\delta b_1^2 + \delta b_2^2 ) +
    \abs{a_{11}} \delta b_1 \delta b_2 \ens .
  \end{equation*}
  The Hessian Matrix of the second partial derivatives with respect to
  $b_1$ and $b_2$ is then
  \begin{equation*}
    H =
    \begin{pmatrix}
      - a_{00}&     \abs{a_{11}} \\
      \abs{a_{11}}& - a_{00}
    \end{pmatrix} \ens .
  \end{equation*}
  At the maximum $\ket{\Lambda} = \ket{00}$ the Hessian Matrix must be
  negative semidefinite \cite{Horn}. This is equivalent to the
  conditions
  \begin{equation*}
    a_{00} > 0 \quad \text{and} \quad a_{00}^2 \geq \abs{a_{11}}^2 \ens .
  \end{equation*}
  Calculations for higher ($n \geq 3$) qubit numbers run
  analogously. For the general $n$ qubit case the Equations
  \eqref{nqc_1} are obtained by setting the first partial derivatives
  to zero.  The second partial derivatives give rise to an $n \times
  n$ Hessian Matrix
  \begin{equation*}
    H =
    \begin{pmatrix}
      - a_{00 \ldots 0} & \widetilde{a}_{110 \ldots 0} &
      \widetilde{a}_{101 \ldots 0} & \cdots & \widetilde{a}_{100 \ldots 1} \\
      \widetilde{a}_{110 \ldots 0} & - a_{00 \ldots 0} &
      \widetilde{a}_{011 \ldots 0} & \cdots & \widetilde{a}_{010 \ldots 1} \\
      \widetilde{a}_{101 \ldots 0} & \widetilde{a}_{011 \ldots 0} &
      - a_{00 \ldots 0} & \cdots & \widetilde{a}_{001 \ldots 1} \\
      \vdots  & \vdots  & \vdots & \ddots & \vdots  \\
      \widetilde{a}_{100 \ldots 1} & \widetilde{a}_{010 \ldots 1} &
      \widetilde{a}_{001 \ldots 1} & \cdots & - a_{00 \ldots 0}
    \end{pmatrix}
    \ens ,
  \end{equation*}
  with $\widetilde{a}_{110 \ldots 0} = \RE \left[ a_{110 \ldots 0}
    \E^{\I \beta_1 + \beta_2} \right]$, $\widetilde{a}_{101 \ldots 0}
  = \RE \left[ a_{101 \ldots 0} \E^{\I \beta_1 + \beta_3} \right]$,
  and so on. Considering only the $3 \times 3$ leading principal minor
  (the top left $3 \times 3$ submatrix) of $H$, we find that by
  suitably choosing the three variables $\beta_1, \beta_2, \beta_3$ as
  \begin{gather*}
    \beta_1 = \tfrac{1}{2} ( - \alpha_{110 \ldots 0} -
    \alpha_{101 \ldots 0} + \alpha_{011 \ldots 0} ) \ens ,
    \\
    \beta_2 = \tfrac{1}{2} ( - \alpha_{110 \ldots 0} +
    \alpha_{101 \ldots 0} - \alpha_{011 \ldots 0} ) \ens ,
    \\
    \beta_3 = \tfrac{1}{2} ( + \alpha_{110 \ldots 0} -
    \alpha_{101 \ldots 0} - \alpha_{011 \ldots 0} ) \ens ,
  \end{gather*}
  where ${\alpha}_{ijk \ldots}$ is the phase of $a_{ijk \ldots}$
  (i.e. $a_{ijk \ldots} = \abs{a_{ijk \ldots}} \E^{\I \alpha_{ijk
      \ldots}}$), we obtain
  \begin{equation}\label{33matrix}
    H_{3 \times 3} =
    \begin{pmatrix}
      - a_{00 \ldots 0} & \abs{{a}_{110 \ldots 0}} &
      \abs{{a}_{101 \ldots 0}} \\
      \abs{{a}_{110 \ldots 0}} & - a_{00 \ldots 0} &
      \abs{{a}_{011 \ldots 0}} \\
      \abs{{a}_{101 \ldots 0}} & \abs{{a}_{011 \ldots 0}} &
      - a_{00 \ldots 0}
    \end{pmatrix}
    \ens .
  \end{equation}
  The negative semidefinity of $H$ results in necessary conditions for
  all leading principal minors.  The $2 \times 2$ and $3 \times 3$
  leading principal minors can be taken from $H_{3 \times 3}$ of
  \eq{33matrix}, and they yield the first inequality in \eqref{nqc_2}
  and \eqref{nqc_3}, respectively.  Since $\ket{\Lambda} =
  \ket{0}^{\otimes n}$ is symmetric, the indices of the qubits are
  interchangeable, thus giving rise to all the permutations
  incorporated in \eqref{nqc_2} and \eqref{nqc_3}.
\end{proof}

For two qubits the conditions \eqref{2_qubit_coeff} directly lead to a
set of maximally entangled states $\ket{\phi} = \tfra{1}{\sqrt{2}}
\left( \ket{00} + \E^{\I \varphi} \ket{11} \right)$ with $a_{00} =
\tfrac{1}{\sqrt{2}}$, and hence $\Eg = 1$.  For three and more qubits,
however, it is not easy to locate the maximally entangled states.
This is because the function $g ( \lambda ) =
\abs{\bracket{\psi}{\lambda}}$ has in general several maxima, and
because the conditions of \eq{n_qubit_coeff} were derived only from
the property that $\ket{\Lambda} = \ket{0}^{\otimes n}$ is a
\emph{local} maximum.  Therefore, given a state $\ket{\psi}$ that
satisfies the conditions \eqref{n_qubit_coeff}, we cannot be sure that
$\ket{\Lambda} = \ket{0}^{\otimes n}$ is a \ac{CPS} (i.e. global
maximum).  For example, the arbitrarily weakly entangled state
$\sqrt{\epsilon} \ket{000} + \sqrt{1 - \epsilon} \ket{111}$, $\epsilon
\rightarrow 0$ satisfies the conditions \eqref{3_qubit_coeff}, because
$\ket{\lambda} = \ket{000}$ is a local maximum of $g ( \lambda )$,
even though the global maximum is $\ket{\Lambda} = \ket{111}$.  This
simple example shows that already the pure 3 qubit case exhibits a
much more diverse structure of the overlap function $g ( \lambda )$
than the bipartite one.

\Theoref{coeff_theo} provides necessary conditions for $\ket{\Lambda}
= \ket{0}^{\otimes n}$ being a \ac{CPS}, and can therefore be
considered as a special case of the generalised Schmidt decomposition
of Carteret \etal \cite{Carteret00}.  This can be most easily seen by
comparing \eq{nqc_1} and \eqref{nqc_2} to Theorem 1 of
\cite{Carteret00}.  One difference between the theorems is that we
make the connection to the \ac{CPS} and thus the \ac{GM} explicitly
clear. In \cite{Carteret00} the fact that $\ket{\Lambda} =
\ket{0}^{\otimes n}$ is a maximum of $g ( \lambda )$ is touched upon
only in the proof, and this maximum is not required to be global.

\section{Results for positive states}\label{pos_results}

Positive states are particularly easy to treat with respect to the
\ac{GM} due to the absence of complex phases in their coefficients.
Perhaps the most intriguing result in this respect is that all
positive states are strongly additive with respect to the
\ac{GM}\footnote{The geometric measure can be additive only in the
  logarithmic form $\Eg$ defined in \protect\eq{geo_def}. The
  alternative definition $\EG$ used in \protect\cite{Wei03} does not
  exhibit additivity properties.}, whereas almost all other states
lack this property, as shown by Zhu \etal \cite{Zhu10}.

Here we prove two results, namely that positive states have positive
\acp{CPS}, and that positive states generally have less entanglement
than non-positive states with the same weightings of their basis
states.  The proofs of these two findings are similar to each other.

\begin{lemma}\label{lem_pos_cps}
  Every pure state $\ket{\psi}$ of a finite-dimensional system that is
  positive with respect to some computational basis has at least one
  positive \ac{CPS} in that basis.
\end{lemma}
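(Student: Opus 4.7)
The plan is to start from an arbitrary \ac{CPS} $\ket{\Lambda}$ of $\ket{\psi}$ and show that the product state $\ket{\Lambda'}$ obtained by replacing every coefficient of every local factor with its modulus is itself a \ac{CPS}, thus producing a positive one. The key tool is the triangle inequality applied to the expansion of $\bracket{\psi}{\Lambda}$, which works precisely because the coefficients of $\ket{\psi}$ are non-negative in the chosen basis.

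Concretely, I would fix the computational basis $\{ \ket{i_{1} \cdots i_{n}} \}$ in which $\ket{\psi} = \sum a_{i_{1} \cdots i_{n}} \ket{i_{1} \cdots i_{n}}$ with $a_{i_{1} \cdots i_{n}} \geq 0$. Let $\ket{\Lambda} = \bigotimes_{k=1}^{n} \ket{\lambda_{k}}$ be any \ac{CPS}, with local factors $\ket{\lambda_{k}} = \sum_{i} c_{k}^{(i)} \ket{i}$, and define the entrywise-modulus states $\ket{\lambda_{k}'} := \sum_{i} \abs{c_{k}^{(i)}} \ket{i}$ together with $\ket{\Lambda'} := \bigotimes_{k} \ket{\lambda_{k}'}$. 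Each $\ket{\lambda_{k}'}$ remains normalised because $\sum_{i} \abs{c_{k}^{(i)}}^{2} = 1$ only depends on the moduli, so $\ket{\Lambda'}$ is a legitimate pure product state, and it is manifestly positive in the chosen basis. The triangle inequality combined with $a_{i_{1} \cdots i_{n}} \geq 0$ then yields
\[
\abs{\bracket{\psi}{\Lambda}} = \Abs{\sum_{i_{1}, \ldots, i_{n}} a_{i_{1} \cdots i_{n}} \, c_{1}^{(i_{1})} \cdots c_{n}^{(i_{n})}} \leq \sum_{i_{1}, \ldots, i_{n}} a_{i_{1} \cdots i_{n}} \abs{c_{1}^{(i_{1})}} \cdots \abs{c_{n}^{(i_{n})}} = \bracket{\psi}{\Lambda'} ,
\]
and the right-hand side is a non-negative real number, hence equal to $\abs{\bracket{\psi}{\Lambda'}}$. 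Since $\ket{\Lambda}$ was assumed to maximise the overlap over all product states and $\ket{\Lambda'}$ is also a product state, $\ket{\Lambda'}$ must attain the same maximum and is therefore a positive \ac{CPS}.

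There is no serious obstacle: the argument is essentially a one-step application of the triangle inequality. The only subtlety worth emphasising is that the positivity of the coefficients of $\ket{\psi}$ is truly essential — if some $a_{i_{1} \cdots i_{n}}$ carried a nontrivial phase, then replacing the local coefficients by their moduli could destroy phase cancellations and \emph{decrease} rather than increase the overlap. This also clarifies why the statement is intrinsically basis-dependent: the existence of a positive \ac{CPS} is guaranteed only relative to a basis with respect to which $\ket{\psi}$ itself is positive, a caveat that will become visible in explicit examples treated in later chapters.
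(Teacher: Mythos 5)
Your proposal is correct and follows essentially the same route as the paper's proof: take an arbitrary \ac{CPS}, replace each local factor's coefficients by their moduli, and use the triangle inequality together with $a_{\bmr{i}} \geq 0$ to show the entrywise-modulus product state achieves at least the same overlap. The construction, the normalisation check, and the key inequality are all identical to the argument given in the text.
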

\begin{proof}
  Picking any computational basis in which the coefficients of
  $\ket{\psi}$ are all positive, we denote the orthonormal basis of
  subsystem $j$ with $\{ \ket{i_{j}} \}$, $i_{j} = 0, \ldots , d_{j} -
  1$, and can write the state as $\ket{\psi} = \sum_{\bmr{i}}
  a_{\bmr{i}} \ket{i_1} \cdots \ket{i_n}$, with $\bmr{i} = ( i_1 ,
  \dots , i_n)$ and $a_{\bmr{i}} \geq 0$ for all $\bmr{i}$. We pick
  one \ac{CPS} of $\ket{\psi}$ and write it as $\ket{\Lambda} =
  \bigotimes_{j} \ket{\sigma_{j}}$, where $\ket{\sigma_{j}} =
  \sum_{i_j} b^{j}_{i_j} \ket{i_j}$ (with $b^{j}_{i_j} \in \mbbc$) is
  the state of subsystem $j$. Now define a new normalised product
  state as $\ket{\Lambda^{+}} = \bigotimes_{j} \ket{\sigma_{j}^{+}}$,
  where $\ket{\sigma_{j}^{+}} = \sum_{i_j} \Abs{b^{j}_{i_j}}
  \ket{i_j}$.  Because of $\abs{\bracket{\psi}{\Lambda^{+}}} =
  \sum_{\bmr{i}} a_{\bmr{i}} \prod_{j} \Abs{b^{j}_{i_j}} \geq
  \Abs{\sum_{\bmr{i}} a_{\bmr{i}} \prod_{j} b^{j}_{i_j}} =
  \abs{\bracket{\psi}{\Lambda}}$, the positive state
  $\ket{\Lambda^{+}}$ is also a \ac{CPS} of $\ket{\psi}$.
\end{proof}

This result, which I published together with Damian Markham and Mio
Murao in \cite{Aulbach10}, was independently found by Zhu \etal
\cite{Zhu10}.  \Lemref{lem_pos_cps} asserts that positive states have
at least one positive \ac{CPS}, but the existence of non-positive
\acp{CPS} is not ruled out.  Most positive states have only one
\ac{CPS} (which is necessarily positive), but it is not difficult to
find examples of positive states with non-positive \acp{CPS}.  For
example, one of the \acp{CPS} of the Bell state $\ket{\psi} = \sym{1}
= \tfrac{1}{\sqrt{2}} \left (\ket{01} + \ket{10} \right)$ is
$\ket{\Lambda} = \tfrac{1}{2} \left( \ket{0} + \I \ket{1}
\right)^{\otimes 2}$, and more examples will appear in
\chap{solutions}.

A statement analogous to \lemref{lem_pos_cps} does not hold for real
states, i.e. there exist real states that have no real \ac{CPS}.  A
trivial example are the rotated $n$ qubit \ac{GHZ} states $\rotxs (
\tfra{\pi}{2} ) \ket{\text{GHZ}_{n}}$ for which it follows from
\eq{ghz_w_cps_1} and \eq{x_rotationmatrix} that they only have the two
\acp{CPS} $\ket{\Lambda_{1}} = \tfrac{1}{\sqrt{2^n}} \left( \ket{0} +
  \I \ket{1} \right)^{\otimes n}$ and $\ket{\Lambda_{2}} =
\tfrac{1}{\sqrt{2^n}} \left( \ket{0} - \I \ket{1} \right)^{\otimes
  n}$.

The next theorem asserts that the amount of geometric entanglement of
multipartite states of any dimension is in general higher for
\textbf{phased} states, i.e. states whose coefficients are not
restricted to positive values in a given computational basis.  For
this purpose we define the corresponding positive state
$\ket{\psi^{+}} = \sum_{i} \abs{a_{i}} \ket{i}$ of a given state
$\ket{\psi} = \sum_{i} a_{i} \ket{i}$ to be the state that is obtained
from $\ket{\psi}$ by removing the complex phases from all
coefficients, and we call $\ket{\psi^{+}}$ the corresponding
\textbf{non-phased}\footnote{The term \quo{non-phased} has been chosen
  over \quo{dephased} to avoid confusion with the physical process of
  phase coherence loss.}  state. Note that $\ket{\psi^{+}}$
automatically inherits the normalisation of $\ket{\psi}$, and that
$\ket{\psi^{+}}$ subtly depends on the basis in which $\ket{\psi}$ is
represented.

\begin{theorem}\label{theo_positive_entanglement}
  Every pure state $\ket{\psi}$ of a finite-dimensional system
  contains at least the same amount of geometric entanglement as the
  corresponding non-phased state $\ket{\psi^{+}}$, i.e.  $\Eg (
  \ket{\psi} ) \geq \Eg ( \ket{\psi^{+}} )$.
\end{theorem}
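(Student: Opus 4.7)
The plan is to mirror closely the strategy used in \lemref{lem_pos_cps}, exploiting the triangle inequality at the level of the overlap $|\bracket{\psi}{\Lambda}|$ and then translating back through the monotonicity of $-\log_2 x^2$. Since $\Eg(\ket{\psi}) = -\log_2 G(\ket{\psi})^2$ with $G(\ket{\psi}) = \max_{\ket{\lambda} \in \mathh_{\text{SEP}}} |\bracket{\psi}{\lambda}|$, the inequality $\Eg(\ket{\psi}) \geq \Eg(\ket{\psi^{+}})$ is equivalent to $G(\ket{\psi^{+}}) \geq G(\ket{\psi})$. So I would reduce the theorem to producing, from any given \ac{CPS} $\ket{\Lambda}$ of $\ket{\psi}$, a product state $\ket{\Lambda^{+}}$ whose overlap with $\ket{\psi^{+}}$ is at least $|\bracket{\psi}{\Lambda}|$.

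First I would fix the computational basis with respect to which $\ket{\psi^{+}}$ is defined, and expand $\ket{\psi} = \sum_{\bmr{i}} a_{\bmr{i}} \ket{i_{1}} \cdots \ket{i_{n}}$ so that $\ket{\psi^{+}} = \sum_{\bmr{i}} |a_{\bmr{i}}| \ket{i_{1}} \cdots \ket{i_{n}}$. Given a \ac{CPS} $\ket{\Lambda} = \bigotimes_{j} \ket{\sigma_{j}}$ of $\ket{\psi}$ with $\ket{\sigma_{j}} = \sum_{i_{j}} b^{j}_{i_{j}} \ket{i_{j}}$, I would define the associated non-phased product state $\ket{\Lambda^{+}} = \bigotimes_{j} \ket{\sigma_{j}^{+}}$ with $\ket{\sigma_{j}^{+}} = \sum_{i_{j}} |b^{j}_{i_{j}}| \ket{i_{j}}$. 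Normalisation of the individual $\ket{\sigma_{j}^{+}}$ is automatic since $\sum_{i_{j}} |b^{j}_{i_{j}}|^{2} = 1$, so $\ket{\Lambda^{+}}$ is a legitimate normalised pure product state.

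The key step is then the chain of inequalities
\begin{equation*}
  |\bracket{\psi^{+}}{\Lambda^{+}}| = \sum_{\bmr{i}} |a_{\bmr{i}}| \prod_{j} |b^{j}_{i_{j}}| \geq \Abs{\sum_{\bmr{i}} a_{\bmr{i}} \prod_{j} b^{j}_{i_{j}}} = |\bracket{\psi}{\Lambda}| \ens ,
\end{equation*}
where the middle step is the triangle inequality applied to the sum of complex numbers $a_{\bmr{i}} \prod_{j} b^{j}_{i_{j}}$. Choosing $\ket{\Lambda}$ to be an actual \ac{CPS} of $\ket{\psi}$ gives $G(\ket{\psi^{+}}) \geq |\bracket{\psi^{+}}{\Lambda^{+}}| \geq |\bracket{\psi}{\Lambda}| = G(\ket{\psi})$, and applying the strictly decreasing map $x \mapsto -\log_{2} x^{2}$ yields $\Eg(\ket{\psi}) \geq \Eg(\ket{\psi^{+}})$.

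There is no real obstacle here; the argument is essentially a one-line triangle inequality combined with the observation that the construction $\ket{\Lambda} \mapsto \ket{\Lambda^{+}}$ preserves the product structure and normalisation. The only point that deserves a brief comment in the write-up is that $\ket{\Lambda^{+}}$ need not itself be a \ac{CPS} of $\ket{\psi^{+}}$ (the true \ac{CPS} of $\ket{\psi^{+}}$ can only do better), and that the result, like \lemref{lem_pos_cps}, is basis-dependent since the notion of \quo{non-phased} refers to a fixed computational basis.
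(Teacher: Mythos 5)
Your proposal is correct and follows essentially the same route as the paper's own proof: lifting the \ac{CPS} $\ket{\Lambda}$ to its non-phased counterpart $\ket{\Lambda^{+}}$ and applying the triangle inequality to $\sum_{\bmr{i}} a_{\bmr{i}} \prod_{j} b^{j}_{i_{j}}$. The additional remarks on basis dependence and on $\ket{\Lambda^{+}}$ not necessarily being a \ac{CPS} of $\ket{\psi^{+}}$ are accurate but not needed for the argument.
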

\begin{proof}
  Using the same notation as in the proof of \lemref{lem_pos_cps}, we
  write the given state as $\ket{\psi} = \sum_{\bmr{i}} a_{\bmr{i}}
  \ket{i_1} \cdots \ket{i_n}$, with $\bmr{i} = ( i_1 , \dots , i_n)$,
  and the corresponding non-phased state as $\ket{\psi^{+}} =
  \sum_{\bmr{i}} \abs{a_{\bmr{i}}} \ket{i_1} \cdots \ket{i_n}$.  We
  take one of the \acp{CPS} of $\ket{\psi}$ and denote it as
  $\ket{\Lambda} = \bigotimes_{j} \ket{\sigma_{j}}$, where
  $\ket{\sigma_{j}} = \sum_{i_j} b^{j}_{i_j} \ket{i_j}$ is the state
  of subsystem $j$.  The corresponding non-phased state of
  $\ket{\Lambda}$ is a normalised product state with positive
  coefficients $\ket{\Lambda^{+}} = \bigotimes_{j}
  \ket{\sigma_{j}^{+}}$, with $\ket{\sigma_{j}^{+}} = \sum_{i_j}
  \Abs{b^{j}_{i_j}} \ket{i_j}$.  Using the following inequality
  \begin{equation*}
    \abs{\bracket{\psi^{+}}{\Lambda^{+}}} =
    \sum_{\bmr{i}} \abs{ a_{\bmr{i}} }
    \prod_{j} \Abs{ b^{j}_{i_j} } \geq
    \bigg| \sum_{\bmr{i}} a_{\bmr{i}} \prod_{j} b^{j}_{i_j} \bigg|
    = \abs{\bracket{\psi}{\Lambda}} \ens ,
  \end{equation*}
  it follows that $\Eg ( \ket{\psi} ) = - \log_2
  \abs{\bracket{\psi}{\Lambda}}^2 \geq - \log_2
  \abs{\bracket{\psi^{+}}{\Lambda^{+}}}^2 \geq \Eg ( \ket{\psi^{+}}
  )$.
\end{proof}

It should be noted that -- unlike \lemref{lem_pos_cps} --
\theoref{theo_positive_entanglement} is valid for arbitrary choices of
the computational basis, and to each such basis relates a non-phased
state $\ket{\psi^{+}}$.  These non-phased states are in general all
different from each other and thus carry different amounts of
entanglement, but their unifying feature is that they do not carry
more entanglement than $\ket{\psi}$.

\Theoref{theo_positive_entanglement} has not been published yet, but
it has been cited in the form of a private communication with Dagmar
Bru\ss{} in \cite{Chiuri10}. There they propose the experimental
creation and detection of \quo{phased Dicke states} \cite{Krammer09}
by means of hyperentangled photons \cite{Barreiro05}, and from
\theoref{theo_positive_entanglement} it is clear that these states
carry at least the same amount of geometric entanglement as regular
Dicke states.  It is reasonable to expect that for sufficiently large
systems the phased states are much higher entangled than their
non-phased counterparts, and indeed Zhu \etal \cite{Zhu10} provide all
the ingredients necessary to prove the following corollary:

\begin{corollary}\label{max_pos_ent}
  Every positive $n$ qubit state $\ket{\psi}$ has entanglement $\Eg
  (\ket{\psi}) \leq \frac{n}{2}$, and this bound is strict for even
  $n$.
\end{corollary}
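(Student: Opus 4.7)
The plan is to combine the positivity-preservation results of \Lemref{lem_pos_cps} with a bipartite Schmidt decomposition argument in the spirit of Zhu~\etal\cite{Zhu10}, and then to exhibit an explicit positive state saturating the bound. First I would invoke \Lemref{lem_pos_cps} to restrict the maximisation $G(\psis)^2 = \max_{\ket{\Lambda}} \abs{\bracket{\psi}{\Lambda}}^2$ to positive product states $\ket{\Lambda} = \bigotimes_j (c_j \ket{0} + s_j \ket{1})$, $c_j, s_j \geq 0$. All contributions to the overlap are then non-negative, which will be essential for the forthcoming Perron-Frobenius argument.

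For the upper bound, assume $n$ even and split the qubits into two halves $A$ and $B$ of size $n/2$. Writing $\ket{\psi} = \sum_{\bmr{i},\bmr{j}} M_{\bmr{i},\bmr{j}} \ket{\bmr{i}}_A \ket{\bmr{j}}_B$ we obtain a $2^{n/2}\times 2^{n/2}$ matrix $M$ with entries $a_{\bmr{i},\bmr{j}} \geq 0$. Since $\Trace \rho_A = \| M \|_F^2 = 1$ and $\rho_A = M M^T$ has rank at most $2^{n/2}$, the largest eigenvalue satisfies $\lambda_{\max}(\rho_A) \geq 2^{-n/2}$. By the Perron-Frobenius theorem applied to the non-negative matrix $M$, the top singular vectors $\ket{u}_A$ and $\ket{v}_B$ can be taken with non-negative components, hence are themselves positive states on $n/2$ qubits. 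This produces a bipartite (but not yet fully separable) product $\ket{u}_A \otimes \ket{v}_B$ with $\abs{\bracket{\psi}{u,v}}^2 = \sigma_1^2 \geq 2^{-n/2}$. The harder third step, carried out by Zhu~\etal, is to refine $\ket{u}_A$ and $\ket{v}_B$ into a \emph{fully} separable product $\ket{\alpha}_A \otimes \ket{\beta}_B$ while keeping the overlap squared at least $2^{-n/2}$. The essential leverage is that for any positive product $\ket{\alpha}, \ket{\beta}$ the bilinear form $\alpha^T M \beta$ remains non-negative and admits a lower bound in terms of $\sigma_1$; the details of this refinement are the main technical obstacle and should be imported from \cite{Zhu10} rather than re-derived.

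For strictness when $n$ is even I would exhibit the explicit positive state
\begin{equation*}
  \ket{\psi_\star} = 2^{-n/4} \sum_{\bmr{k} \in \{0,1\}^{n/2}} \ket{\bmr{k}}_A \ket{\bmr{k}}_B \ens ,
\end{equation*}
a maximally entangled state across the balanced bipartition $A|B$. A direct calculation gives $\abs{\bracket{\psi_\star}{+^{\otimes n}}}^2 = 2^{-n/2}$, establishing $\Eg(\psi_\star) \leq n/2$. Conversely, for any fully separable $\ket{\Lambda} = \ket{\alpha}_A \otimes \ket{\beta}_B$ one finds $\abs{\bracket{\psi_\star}{\Lambda}}^2 = 2^{-n/2} \abs{\sum_{\bmr{k}} \alpha_{\bmr{k}} \beta_{\bmr{k}}}^2 \leq 2^{-n/2}$ by Cauchy-Schwarz, so $\Eg(\psi_\star) \geq n/2$. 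Combining both bounds gives $\Eg(\psi_\star) = n/2$, confirming that the bound is attained.
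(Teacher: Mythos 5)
Your saturation argument for even $n$ is correct, and your state $\ket{\psi_\star}$ is simply the paper's own example of $\tfrac{n}{2}$ Bell pairs, $(\ket{00}+\ket{11})^{\otimes n/2}/2^{n/4}$, with the qubits relabelled; the Cauchy--Schwarz computation pinning $\Eg(\ket{\psi_\star}) = \tfrac{n}{2}$ is fine.

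The upper bound, however, has a genuine gap, and it is not one you can close by \quo{importing} a step from \cite{Zhu10}. Your Schmidt/Perron--Frobenius argument establishes $\max_{\ket{u}_A \otimes \ket{v}_B} \abs{\braket{u, v | \psi}}^2 = \sigma_1^2 \geq 2^{-n/2}$, i.e.\ a lower bound on the overlap with \emph{bipartite} product states across the single cut $A|B$. The geometric measure maximises over \emph{fully} separable states, which form a strictly smaller subset, so all this yields is $G(\ket{\psi})^2 \leq \sigma_1^2$ --- an inequality pointing the wrong way for your conclusion. The deferred \quo{refinement} of $\ket{u}_A \otimes \ket{v}_B$ into a fully separable state whose overlap squared is still at least $2^{-n/2}$ is therefore not a technical detail: it \emph{is} the statement $\Eg \leq \tfrac{n}{2}$, and nothing in your first two steps contributes towards it (a naive recursion into the two halves loses a factor at every level and ends far below $2^{-n/2}$). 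Moreover, this is not how Zhu \etal argue, so the step cannot be imported in the form you need. Their route --- and the one the paper's proof takes in two lines --- is: Theorem~5 of \cite{Zhu10} shows that every positive state is strongly additive under $\Eg$, and Proposition~23 of the same paper shows that every additive $n$ qubit state obeys $\Eg \leq \tfrac{n}{2}$; the latter is proved by a tensor-copy argument involving $\ket{\psi} \otimes \ket{\psi^*}$ and maximally entangled states, not by a single-copy bipartite Schmidt refinement. Finally, the corollary asserts the bound for all $n$, whereas your argument is set up only for even $n$; the additivity route has no such restriction.
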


\begin{proof}
  According to Theorem 5 of \cite{Zhu10}, every positive $n$ qubit
  state is strongly additive with respect to \ac{GM}.  On the other
  hand, from Proposition 23 of the same paper it follows that $\Eg
  \leq \frac{n}{2}$ holds for all strongly additive $n$ qubit
  states. A trivial example of a positive $n$ qubit state with $\Eg =
  \frac{n}{2}$ for even $n$ is $\ket{\psi} = \left( \ket{00} +
    \ket{11} \right)^{\otimes \frac{n}{2}}$, i.e. $\frac{n}{2}$ Bell
  pairs.
\end{proof}

As outlined in \sect{gm_def}, the known bounds on the maximal
entanglement for general $n$ qubit states $\ket{\psi}$ are $n - 2
\log_2 (n) - \Order (1) \leq \Eg ( \ket{\psi} ) \leq n-1$, and most
states surpass the lower bound \cite{Jung08,Gross09,Zhu10}.  This
implies that the overwhelming majority of states is significantly
higher entangled than the corresponding non-phased states.

Does a similar result hold for real states?  Surprisingly, the
restriction to real coefficients has only little impact on the maximal
entanglement, as outlined in Proposition 25 of \cite{Zhu10}.  More
specifically, for real states the lower bound found for \ac{GM} is
only 1 ebit less than for general states, regardless of the number of
parties and their dimensions.  The property of real states being
closer in spirit to general states than to positive states can also be
seen from that fact that antisymmetric basis states are much higher
entangled than symmetric basis states, even though the two differ only
by the phase factors that induce a sign change under odd permutations
of the parties \cite{Hayashi08}.

\section{Results for symmetric states}\label{symm_results}

\subsection{Closest product states of the maximally entangled state}\label{cps_symm}

As mentioned in \sect{cps_number}, it is possible to derive a
symmetrised version of \theoref{numberofcps}, where the maximally
entangled state among all symmetric multi-qudit states is
considered\footnote{For Lemma 4 of \protect\cite{Aulbach10} I
  previously provided an explicit proof for the existence of at least
  two \acp{CPP} for maximally entangled symmetric $n$ qubit states.
  This proof will not be reiterated here, because
  \protect\corref{numberofcpp} is a strictly stronger result.}.

\begin{corollary}\label{numberofcpp}
  Let $\Psis \in \mathh_{\text{s}} \subset \mathh = \mathh_{0} \otimes
  \cdots \otimes \mathh_{0} $ be a normalised pure symmetric state of
  an $n$-partite ($n \geq 3$) qudit system with $\dim (\mathh_{0}) =
  d$.  The set of \acp{CPS} of $\Psis$ is denoted by $\Lambda \subset
  \mathh_{\text{s}}$ and the set of \acp{CPP} is denoted by $\Sigma
  \subset \mathh_{0}$.  If $\Psis$ is maximally entangled with respect
  to the \ac{GM}, then $\Psis \in \spa ( \Lambda )$ and there exist at
  least two linearly independent \acp{CPS}.
\end{corollary}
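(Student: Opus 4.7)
The plan is to adapt the proof of \theoref{numberofcps} to the symmetric setting, leveraging the result of H\"{u}bener \etal that for $n \geq 3$ qudits the \acp{CPS} of a pure symmetric state are themselves symmetric. This means that every $\ket{\Lambda_i} \in \Lambda$ can be written as $\ket{\sigma_i}^{\otimes n}$ for some single-qudit state $\ket{\sigma_i} \in \mathh_{0}$, so the max-min problem defining $\Eg ( \Psis )$ runs only over symmetric product states. As in the original proof, the existence of at least two linearly independent \acp{CPS} is then immediate from $\Psis \in \spa ( \Lambda )$ together with the fact that an entangled state cannot lie in the span of a single product state.

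To establish $\Psis \in \spa ( \Lambda )$, I would argue by contradiction. Assume $\Psis$ is maximally entangled within $\mathh_{\text{s}}$ but $\Psis \notin U := \spa ( \Lambda )$. Since $U \subsetneq \mathh_{\text{s}}$, one can pick a vector $\ket{\zeta} \in \mathh_{\text{s}} \cap U^{\perp}$ with $\bracket{\Psi^{\text{s}}}{\zeta} \neq 0$, and define the symmetric variation
\begin{equation*}
  \ket{\psi^{\text{s}} ( \epsilon )} := (1 - \epsilon ) \Psis +
  \epsilon \ket{\xi} \ens , \qquad \ket{\xi} :=
  \tfrac{2}{\bracket{\Psi^{\text{s}}}{\zeta} +
    \bracket{\zeta}{\Psi^{\text{s}}}} \ket{\zeta} \ens ,
\end{equation*}
which again satisfies $\bracket{\psi^{\text{s}} ( \epsilon )}{\psi^{\text{s}} ( \epsilon )} = 1 + \Order ( \epsilon^{2} )$ and $\bracket{\xi}{\Lambda_i} = 0$ for every $\ket{\Lambda_i} \in \Lambda$. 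The crucial point is that $\ket{\xi}$ lies in $\mathh_{\text{s}}$, so $\ket{\psi^{\text{s}} ( \epsilon )}$ is a legitimate competitor symmetric state, and the restricted overlap function $g_{\epsilon} ( \sigma ) := \abs{ \bra{\psi^{\text{s}} ( \epsilon )} \sigma \rangle^{\otimes n} }$ is continuous in $\ket{\sigma}$ over the compact set of normalised single-qudit states.

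For each $\ket{\Lambda_{i}} = \ket{\sigma_{i}}^{\otimes n}$, I would then choose a local basis in which $\ket{\sigma_{i}} = \ket{0}$ and parameterise nearby symmetric product states as $\ket{\delta}^{\otimes n}$ with $\ket{\delta} = (1 - \delta_{0}) \ket{0} + \sum_{k=1}^{d-1} \delta_{k} \ket{k}$, expanding to first order in $\bmr{\delta}$ and $\epsilon$. The local-maximum conditions satisfied by $\ket{\Lambda_{i}}$ at $\epsilon = 0$ force the linear terms in $\bmr{\delta}$ coming from $\bracket{\Psi^{\text{s}}}{\delta^{\otimes n}}$ to vanish, while the $\Order ( \epsilon \bmr{\delta} )$ terms are negligible and the $\epsilon$ contribution $\bracket{\xi}{0 \cdots 0} = 0$ by construction. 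Hence $\abs{\bracket{\psi^{\text{s}} ( \epsilon )}{\delta^{\otimes n}}} \approx (1 - \epsilon) \abs{\bracket{\Psi^{\text{s}}}{\Lambda_{i}}}$ uniformly in a neighbourhood of each $\ket{\Lambda_{i}}$. By compactness and the fact that this holds at every global maximum $\ket{\Lambda_{i}}$, the new global maximum of $g_{\epsilon}$ for small $\epsilon > 0$ is strictly less than $g_{0}$, giving $G ( \ket{\psi^{\text{s}} ( \epsilon )} ) < G ( \Psis )$ and contradicting the maximality of $\Psis$.

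The main obstacle in the symmetric case is the reduced variational freedom: unlike in \theoref{numberofcps}, where each tensor factor $\ket{\delta^{j}}$ can be varied independently, here all $n$ factors are forced to vary together as $\ket{\delta}^{\otimes n}$. One must therefore verify that this single shared variation is still rich enough to kill the first-order terms in $\bracket{\Psi^{\text{s}}}{\delta^{\otimes n}}$ at $\ket{\Lambda_{i}}$. This, however, is precisely the statement that $\ket{\Lambda_{i}}$ is a local maximum of the restricted symmetric overlap function, which is guaranteed by the H\"{u}bener \etal result and the fact that $\ket{\Lambda_{i}}$ is a genuine \ac{CPS} of $\Psis$. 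Beyond this subtlety, the argument is structurally identical to that of \theoref{numberofcps}.
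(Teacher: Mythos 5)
Your proposal is correct and follows essentially the same route as the paper's proof: restrict to symmetric \acp{CPS} via the result of H\"{u}bener \etal so that \acp{CPS} and \acp{CPP} correspond one-to-one, perturb $\Psis$ by a symmetric vector orthogonal to all \acp{CPS}, and rerun the first-order analysis of \theoref{numberofcps} with the variation around each \ac{CPS} constrained to the symmetric form $\ket{\delta}^{\otimes n}$. The only divergence is your choice of $\ket{\zeta}$ as an arbitrary symmetric vector in $\spa (\Lambda)^{\perp} \cap \mathh_{\text{s}}$, whereas the paper takes a symmetric \emph{product} state $\ket{\kappa}^{\otimes n}$ with $\ket{\kappa}$ orthogonal to $\spa ( \Sigma ) \subset \mathh_{0}$; your choice is if anything cleaner, since the existence of such a $\ket{\zeta}$ with $\bracket{\Psi^{\text{s}}}{\zeta} \neq 0$ follows immediately from the contradiction hypothesis $\Psis \notin \spa ( \Lambda )$, while it still guarantees $\bracket{\xi}{\Lambda_{i}} = 0$ for every \ac{CPS} and keeps the varied state symmetric.
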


\begin{proof}
  One can restrict to fully symmetric \acp{CPS}, because symmetric
  states of three or more qudits have no nonsymmetric \acp{CPS}
  \cite{Hubener09}. In particular, this means that there exists a
  one-to-one correspondence between the \acp{CPS} and \acp{CPP},
  i.e. if $\ket{\sigma_{i}} \in \Sigma$ then
  $\ket{\Lambda_{i}^{\text{s}}} = \ket{\sigma_{i}}^{\otimes n} \in
  \Lambda$, and vice versa.  The proof is performed analogously to the
  one of \theoref{numberofcps}, so we focus only on aspects that are
  not immediately clear from comparison.  Firstly, the restriction to
  symmetric states $\Psis \in \mathh_{\text{s}}$ does not prevent one
  from using the generalised Schmidt decomposition \cite{Carteret00},
  because the symmetric basis states $\sym{n, \bmr{k}}$ can be
  trivially expanded with the computational basis states $\ket{ i_{1}
    i_{2} \cdots i_{n} }$, and the same is done with the symmetric
  \acp{CPS} $\ket{\Lambda_{i}^{\text{s}}}= \ket{\sigma_{i}}^{\otimes
    n}$.

  Secondly, the variation $\ket{\psi^{\text{s}} ( \epsilon )}$ of
  $\Psis$ according to \eq{variation} is performed with a symmetric
  product state $\ket{\zeta^{\text{s}}} = \ket{\kappa}^{\otimes n} \in
  \mathh_{\text{s}}$ with $\ket{\kappa} \in V \subset \mathh_{0}$,
  where $V$ is the orthogonal complement of $U := \spa ( \Sigma ) \neq
  \mathh_{0}$.  Equivalently, the variation $\ket{\lambda^{\text{s}} (
    \bmr{\delta} ) }$ of $\ket{\Lambda^{\text{s}}} = \ket{0}^{\otimes
    n}$ should be fully symmetric: $\ket{\lambda^{\text{s}} (
    \bmr{\delta} )} = \ket{ \bmr{\delta} }^{\otimes n}$, with $\ket{
    \bmr{\delta} } = (1- \delta_{0}) \ket{0} + \delta_{1} \ket{1} +
  \ldots + \delta_{d -1} \ket{d - 1}$.
\end{proof}

\Corref{numberofcpp} provides a useful necessary condition for
checking whether candidates for maximal symmetric $n$ qubit
entanglement (which will be studied in \chap{solutions}) are indeed
maximally entangled, namely that the states must lie in the span of
their \acp{CPS}.

\subsection{Upper bound on symmetric entanglement}\label{upper_bound}

An upper bound on the maximal symmetric entanglement was already
introduced with \eq{upper_bound_eq}.  Here I present an alternative
proof for this bound with the advantage of having an intuitive
geometric meaning.  The same proof as mine which I published in
\cite{Aulbach10} was independently found by Martin \etal
\cite{Martin10}.  The known lower bounds on maximal symmetric
entanglement will be reviewed later in \sect{entanglement_scaling}.
\begin{theorem}\label{const_integral}
  For every symmetric $n$ qubit state $\psis$ the following equality
  holds:
  \begin{equation}\label{intvolume}
    \int\limits_{0}^{2 \pi} \int\limits_{0}^{\pi}
    \abs{\bracket{\psi^{\text{s}}}{\lambda(\theta , \varphi) }}^2
    \sin \theta \, \emph{d} \theta \emph{d} \varphi =
    \frac{4 \pi}{n+1} \ens ,
  \end{equation}
  where $\ket{\lambda(\theta , \varphi)} = \left( \co_{\theta} \ket{0}
    + \E^{\I \varphi} \si_{\theta} \ket{1} \right)^{\otimes n}$.
\end{theorem}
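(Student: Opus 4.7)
The plan is to recognise \eqref{intvolume} as a direct reformulation of the separable resolution of the identity on the symmetric subspace, \eqref{sep_decomposition}, which was already cited earlier in the section. First I would rewrite the integrand as $\abs{\bracket{\psi^{\text{s}}}{\lambda(\theta,\varphi)}}^{2} = \bra{\psi^{\text{s}}} \big( \pure{\lambda(\theta,\varphi)} \big) \ket{\psi^{\text{s}}}$ and pull $\psis$ out of the integral, so that the left-hand side becomes $\bra{\psi^{\text{s}}} M \ket{\psi^{\text{s}}}$ with $M := \int_{0}^{2\pi}\!\int_{0}^{\pi} \pure{\lambda(\theta,\varphi)} \, \sin\theta \, \D\theta \, \D\varphi$. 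Since the uniform probability measure $\omega$ on $\mathcal{S}^{2}$ satisfies $\D\omega = \tfra{1}{4\pi} \sin\theta \, \D\theta \, \D\varphi$, equation \eqref{sep_decomposition} gives $M = \tfra{4\pi}{n+1} \one_{\text{Symm}}$. The claim then follows from $\bra{\psi^{\text{s}}} \one_{\text{Symm}} \ket{\psi^{\text{s}}} = \bracket{\psi^{\text{s}}}{\psi^{\text{s}}} = 1$, since $\psis$ lies in the symmetric subspace and is normalised.

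If one prefers a self-contained derivation not invoking \eqref{sep_decomposition}, the identity can be checked by direct computation. Expanding $\psis = \sum_{k=0}^{n} a_{k} \sym{n,k}$ and the spin coherent state as $\ket{\lambda(\theta,\varphi)} = \sum_{k=0}^{n} \binom{n}{k}^{1/2} \co_{\theta}^{\,n-k} \si_{\theta}^{\,k} \, \E^{\I k \varphi} \sym{n,k}$, one obtains
\begin{equation*}
\abs{\bracket{\psi^{\text{s}}}{\lambda(\theta,\varphi)}}^{2} = \sum_{k,k'=0}^{n} \cc{a}_{k} a_{k'} \binom{n}{k}^{1/2} \binom{n}{k'}^{1/2} \co_{\theta}^{\,2n-k-k'} \si_{\theta}^{\,k+k'} \, \E^{\I (k-k') \varphi} \ens .
\end{equation*}
Integrating over $\varphi$ kills all off-diagonal terms via $\int_{0}^{2\pi} \E^{\I(k-k')\varphi} \D\varphi = 2\pi \delta_{kk'}$, and the remaining $\theta$-integrals reduce via the substitution $u = \theta/2$ to standard Beta integrals, yielding $\int_{0}^{\pi} \co_{\theta}^{\,2(n-k)} \si_{\theta}^{\,2k} \sin\theta \, \D\theta = \tfra{2 \, k!(n-k)!}{(n+1)!}$. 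Collecting the factors, the normalisation $\sum_{k} \abs{a_{k}}^{2} = 1$ produces precisely $\tfra{4\pi}{n+1}$.

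There is no real obstacle: once \eqref{sep_decomposition} is granted, the proof is essentially a one-line evaluation; the direct route involves only elementary trigonometric integrals. The conceptually interesting point is the geometric interpretation that will be exploited in subsequent sections, namely that the squared overlap function $\abs{\bracket{\psi^{\text{s}}}{\lambda(\theta,\varphi)}}^{2}$, viewed as a real function on the Majorana sphere, has a state-independent integral. This immediately implies the upper bound $\Eg(\psis) \leq \log_{2}(n+1)$ of \eqref{upper_bound_eq}, since the maximum of a non-negative spherical function cannot be smaller than its average, forcing $G(\psis)^{2} \geq \tfra{1}{n+1}$.
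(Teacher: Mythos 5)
Your proposal is correct, and in fact it contains the paper's proof as its second half: the thesis proves \theoref{const_integral} exactly by the direct computation you describe --- expanding $\psis$ in the Dicke basis, using $\int_{0}^{2\pi}\E^{\I m\varphi}\,\D\varphi = 0$ for $m \neq 0$ to kill the off-diagonal terms, and evaluating the remaining $\theta$-integral as a Beta function to get $\binom{n}{k}\tfrac{\Gamma(k+1)\Gamma(n-k+1)}{\Gamma(n+2)} = \tfrac{1}{n+1}$ (see \eq{calc_mean}). Your first, ``one-line'' route via the separable resolution of the identity \eqref{sep_decomposition} is also valid, but note that it is precisely the \emph{pre-existing} derivation of the bound $\Eg(\psis) \leq \log_2(n+1)$ that the theorem is explicitly introduced as an alternative to: the text before \eq{upper_bound_eq} derives the bound from \eqref{sep_decomposition}, and the remark after the proof acknowledges that the constant integral is ultimately a consequence of Schur's Lemma. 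So the two routes buy different things: the operator identity gives the result instantly and generalises immediately to qudits (yielding $\Eg \leq \log_2\binom{n+d-1}{n}$), while the explicit coordinate computation --- the one the paper actually carries out --- is what supports the geometric reading of $\abs{\bracket{\psi^{\text{s}}}{\lambda(\theta,\varphi)}}^2$ as a spherical function of state-independent average, which is then exploited via \corref{cor_volume} and the spherical volume function in later chapters. Your closing observation that the maximum of the spherical function cannot fall below its mean, forcing $G^2 \geq \tfrac{1}{n+1}$, is exactly how the paper extracts the upper bound from the theorem.
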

\begin{proof}
  A symmetric $n$ qubit state can be written as $\psis = \sum_{k =
    0}^{n} a_k \E^{\I \alpha_k} \sym{k}$, with $a_k \in \mbbr$,
  $\alpha_k \in [0,2 \pi)$ and the normalisation condition $\sum_{k}
  a_k^2 = 1$.  Writing the \ac{CPS} as $\ket{\lambda} =
  \ket{\sigma}^{\otimes n}$ with $\ket{\sigma} = \co_{\theta} \ket{0}
  + \E^{\I \varphi} \si_{\theta} \ket{1}$, we obtain
\begin{equation}\label{scalarproduct_expr}
  \bracket{\psi^{\text{s}}}{\lambda} = \sum_{k = 0}^{n}
  \E^{\I ( k \varphi - \alpha_{k} )} a_k \co_{\theta}^{n-k}
  \si_{\theta}^{k} \sqrt{{\binom{n}{k}}} \ens .
\end{equation}
Using the set of qubit unit vectors $\mathcal{S}^{2}$ and the uniform
measure over the unit sphere $d \mathcal{B}$, the squared norm of
\eq{scalarproduct_expr} can be integrated over the unit sphere:
\begin{equation}
  \int\limits_{\ket{\sigma} \in \mathcal{S}^{2}}
  \abs{\bracket{\psi^{\text{s}}}{\lambda}}^2 d \mathcal{B} =
  \int\limits_{0}^{2 \pi} \int\limits_{0}^{\pi}
  \abs{\bracket{\psi^{\text{s}}}{\lambda (\theta , \varphi ) }}^2
  \sin \theta \, \D \theta \D \varphi \ens .
\end{equation}
Taking into account that $\int_{0}^{2 \pi} \E^{\I m \varphi} \D
\varphi = 0$ for any integer $m \neq 0$, one obtains
\begin{subequations}\label{calc_mean}
  \begin{align}
    \int\limits_{0}^{2 \pi}& \int\limits_{0}^{\pi}
    \left[ \: \sum_{k = 0}^{n} a_k^2 \co_{\theta}^{2(n-k)}
      \si_{\theta}^{2k} {\binom{n}{k}} \right]
    \sin \theta \, \D \theta \D \varphi
    \label{calc_mean1} \\
    {}& = 2 \pi \sum_{k = 0}^{n}
    a_k^2 {\binom{n}{k}} \int\limits_{0}^{\pi}
    \co_{\theta}^{2(n-k)}
    \si_{\theta}^{2k}
    \sin \theta \, \D \theta
    \label{calc_mean2} \\
    {}& = 4 \pi \sum_{k = 0}^{n} a_k^2 {\binom{n}{k}}
    \frac{\Gamma (k+1) \Gamma (n-k+1)}{\Gamma (n+2)}
    = 4 \pi \sum_{k = 0}^{n} a_k^2 \frac{1}{n+1} =
    \frac{4 \pi}{n+1} \ens .
    \label{calc_mean3}
  \end{align}
\end{subequations}
The equivalence of \eq{calc_mean2} and \eqref{calc_mean3} follows from
the different definitions of the Beta function \cite{Abramowitz}.
\end{proof}

Since the mean value of $\abs{\bracket{\psi^{\text{s}}}{\lambda
    (\theta , \varphi ) }}^2$ over the Bloch sphere is $\tfrac{4
  \pi}{n+1}$, it follows that $G^{2} ( \psis ) =
\abs{\bracket{\psi^{\text{s}}}{\Lambda^{\text{s}}}}^2 =
\max\limits_{\ket{\lambda} \in \mathh_{\text{SEP}} }
\abs{\bracket{\psi^{\text{s}}}{\lambda}}^2$ must be at least
$\tfrac{1}{n+1}$. This leads to the upper bound $\Eg ( \psis ) \leq
\log_2 (n+1)$ for any symmetric $n$ qubit state.

The integral in \eq{intvolume} is the same for all symmetric $n$ qubit
states, and in \sect{preliminaries} it will be seen that this allows
for an intuitive visualisation of the geometric entanglement of
symmetric states by means of spherical distributions with constant
volume in $\mbbrr$.  From a mathematical perspective, the constant
integral is a consequence of Schur's Lemma which implies that the
uniform mixture of the product states $\pure{\lambda} = (
\pure{\sigma} )^{\otimes n}$ equals the identity $\one$, up to the
prefactor $\binom{n+d-1}{n}^{-1}$, where $d$ is the dimension of the
subsystems (see e.g. \cite{Renner,Perelomov}).  In particular, this
implies that \theoref{const_integral} can be readily generalised to
the qudit case, yielding the upper bound $\Eg ( \psis ) \leq \log
\binom{n+d-1}{n}$ for any symmetric $n$ qudit state.

\subsection{Measurement-based quantum computation}
\label{resources_for_mbqc}

One of the leading schemes for the physical implementation of quantum
computing is \textbf{\acf{MBQC}}, also known as one-way quantum
computation. Before the start of the computation an entangled resource
state, for example a two-dimensional cluster state \cite{Briegel01},
is prepared. A proper choice of subsequent single qubit measurements
then allows one to deterministically create any desired state on the
unmeasured qubits, as long as the initially prepared state is
sufficiently large and a \textbf{universal resource} state.  A family
of $n$ qubit states $\Psi = \{ \ket{\psi_{n}} \}_{n}$ is said to be a
universal resource if any given state can be prepared
deterministically and exactly by \ac{MBQC} from a state
$\ket{\psi_{n}}$ with sufficiently large $n$ \cite{Nest07}.  The
resource character of the initial state is evident from the fact that
only local operations are performed, and the consequently irreversible
reduction of entanglement explains the term \quo{one-way quantum
  computation}.

In order for states to be useful for \ac{MBQC} their entanglement must
come in the \quo{right dose}.  On the one hand, universal resources
for \ac{MBQC} must be maximally entangled in a certain sense
\cite{Mora10,Nest07}, and if the entanglement of a set of \ac{MBQC}
resource states scales anything below logarithmically with the number
of parties, it cannot be an efficient resource for deterministic
universal \ac{MBQC} \cite{Nest07}.  On the other hand, somewhat
surprisingly, if the entanglement is too large, it is also not a good
resource for \ac{MBQC}: If the geometric entanglement of an $n$ qubit
system scales as $n - \Order{( \log_2 n )}$, then a computation
performed with such a resource can be simulated efficiently on a
classical computer \cite{Gross09}.  Indeed, most quantum states are
too entangled for being computationally universal \cite{Gross09},
although the right amount of geometric entanglement is by no means a
sufficient condition \cite{Bremner09}.

It was seen that the maximal geometric entanglement of symmetric
states scales much slower than that of general states, namely
logarithmically rather than linearly.  One could therefore ask whether
symmetric states are useful for \ac{MBQC}, because they are not too
entangled.  Unfortunately, \eq{upper_bound_eq} implies that symmetric
states scale at most logarithmically, whereas the entanglement of
exact, deterministic \ac{MBQC} resources must scale
faster-than-logarithmically \cite{Nest07}. Somewhat weaker
requirements are imposed upon approximate, stochastic \ac{MBQC}
resources \cite{Mora10}, although this generally leads only to a small
extension of the class of suitable resources in the vicinity of exact,
deterministic resources (e.g. 2D cluster states with holes).  One can
therefore expect that symmetric states cannot be used even for
approximate, stochastic \ac{MBQC}.

To underline this conjecture, we will show that Dicke states with a
fixed number of excitations cannot be useful for
$\epsilon$-approximate, deterministic \ac{MBQC} \cite{Mora10}.  To be
consistent with the notation used in \cite{Mora10}, we temporarily
switch to the alternative definition of the geometric measure $\EG
(\ket{\psi}) = 1 - \abs{\bracket{\Lambda_{\psi}}{\psi}}^2$ for the
duration of this section.  Roughly speaking, $\epsilon$-approximate
universal resource states can be converted into any other state by
\ac{LOCC} with an inaccuracy of at most $\epsilon$.  The
$\epsilon$-version of the \ac{GM} is defined as \cite{Mora10}
\begin{equation}\label{epsilon_entanglement_definition}
  \EG^{\epsilon} (\rho) = \min \{ \EG (\sigma)
  \, \vert \, D(\rho,\sigma) \leq \epsilon \} \ens ,
\end{equation}
where $D$ is a distance that is \quo{strictly related to the
  fidelity}, meaning that for any two states $\rho$ and $\sigma$,
$D(\rho,\sigma) \leq \epsilon \Rightarrow F(\rho,\sigma) \geq 1 - \eta
(\epsilon)$, where $0 \leq \eta (\epsilon) \leq 1$ is a strictly
monotonically increasing function with $\eta (0) = 0$.
$\EG^{\epsilon} (\rho)$ can be understood as the guaranteed
entanglement obtained from a preparation of $\rho$ with inaccuracy
$\epsilon$. One possible choice of $D$ is the trace distance, which
for pure states reads $D_{\text{t}} (\ket{\psi},\ket{\phi}) = \sqrt{1
  - \abs{\bracket{\psi}{\phi}}^2 } = \sqrt{1 - F}$, where $F$ is the
fidelity. In this case one can choose $\eta(\epsilon) = \epsilon^{2}$.

As shown in Example 1 of \cite{Mora10}, the family of W states
$\Psi_{\text{W}} = \{ \ket{\text{W}}_{n} \}_{n}$, with
$\ket{\text{W}_{n}} \equiv \sym{{n,1}}$, is not an
$\epsilon$-approximate universal resource for $\eta(\epsilon) \lesssim
0.001$. In the following we generalise this result to families of
Dicke states $\Psi_{{\text{S}}_{k}} = \{ \sym{{n,k}} \}_{n}$ with an
arbitrary, but fixed number of excitations $k \in \mbbn$.

\begin{theorem}\label{mbqc_example}
  For any fixed $k \in \mbbn$ the family of Dicke states
  $\Psi_{{\text{S}}_{k}} = \{ \ket{S_{n,k} } \}_{n}$ cannot be an
  $\epsilon$-approximate universal \ac{MBQC} resource for
  $\eta(\epsilon) \lesssim 0.001 \, k^{-3/2}$.
\end{theorem}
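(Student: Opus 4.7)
\textbf{The plan} is to adapt Example 1 of \cite{Mora10}, which rules out the W-state family $\Psi_{\text{W}} = \{\ket{S_{n,1}}\}_n$ as an $\epsilon$-approximate universal resource for $\eta(\epsilon)\lesssim 0.001$, to Dicke states with arbitrary but fixed $k$. Since $\ket{\text{W}_n}\equiv\ket{S_{n,1}}$, the obstruction should generalize, and the job is to track how the admissibility threshold for $\eta(\epsilon)$ deteriorates with $k$. The whole argument is consistent with the wider picture painted in this section: the upper bound \eqref{upper_bound_eq} forces any symmetric family to be at most logarithmically entangled, which is already known to be insufficient for exact deterministic \ac{MBQC}; the task here is to show that fixing the excitation number $k$ makes the situation far worse, because the entanglement does not even grow logarithmically in $n$.

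\textbf{Step 1: uniform boundedness of $E_G(\ket{S_{n,k}})$ in $n$.} Start from the explicit \ac{CPS} \eqref{dicke_cps} and the closed form \eqref{dicke_ent} for the geometric entanglement, re-expressed in the $\EG = 1 - G^2$ convention used in this section:
\begin{equation*}
  G^2(\ket{S_{n,k}}) = \binom{n}{k}\left(\tfrac{k}{n}\right)^{k}\left(\tfrac{n-k}{n}\right)^{n-k}.
\end{equation*}
For fixed $k$ and $n\to\infty$, $\binom{n}{k}\sim n^{k}/k!$ and $(1-k/n)^{n-k}\to e^{-k}$, so $G^2(\ket{S_{n,k}})\to k^{k}e^{-k}/k!$, which by Stirling behaves as $(2\pi k)^{-1/2}$ for large $k$. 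Hence $\EG(\ket{S_{n,k}})\leq 1 - k^{k}e^{-k}/k!$ uniformly in $n$, so the family $\Psi_{{\text{S}}_{k}}$ has entanglement bounded by a constant that depends only on $k$ and creeps towards $1$ only as $k^{-1/2}$.

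\textbf{Step 2: pass to the $\epsilon$-version and invoke Mora \etal} Since the minimization in \eqref{epsilon_entanglement_definition} permits the choice $\sigma=\rho$, one immediately has $\EG^{\epsilon}(\ket{S_{n,k}})\leq \EG(\ket{S_{n,k}})\leq 1 - k^{k}e^{-k}/k!$ uniformly in $n$. This is the Dicke-state analogue of the bounded-entanglement input used in Example~1 of \cite{Mora10}. Feeding it into the universality criterion of \cite{Mora10}, which for any $\epsilon$-approximate universal family of $n$-qubit states requires $\EG^{\epsilon}$ to overtake a target-dimension threshold that depends on $\eta(\epsilon)$, produces a necessary condition of the form $\eta(\epsilon)\gtrsim k^{k}e^{-k}/k!$. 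A final Stirling expansion of this expression, together with the additional $1/k$ factor that enters from the quantitative fidelity-to-target-dimension conversion in their framework, combines with the $k=1$ numerical constant $0.001$ to give the announced bound $\eta(\epsilon)\lesssim 0.001\,k^{-3/2}$.

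\textbf{The main obstacle} is not the entanglement computation itself (which is routine given \eqref{dicke_cps}--\eqref{dicke_ent}), but the careful propagation of the numerical constants through Mora \etal's machinery. The $k^{-1/2}$ Stirling factor emerges cleanly from Step~1, but the extra $k^{-1}$ needed to upgrade the exponent to $k^{-3/2}$ must be extracted from the precise form of their universality inequality, and the prefactor $0.001$ must reduce to the one quoted in their Example~1 when one specializes to $k=1$. Once that bookkeeping is executed and the bound is verified to agree with the W-state result at $k=1$, the theorem follows.
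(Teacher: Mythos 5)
Your overall strategy is the same as the paper's: compute the asymptotic geometric entanglement of the fixed-$k$ Dicke family via Stirling, observe that it stays bounded away from $1$ uniformly in $n$, and then invoke the universality criterion of Mora \etal Step~1 is correct and matches the paper, which uses the limit value $\EG(\Psi_{\text{S}_k}) = 1 - k^k e^{-k}/k!$.

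The gap is in Step~2, where you mis-state the form of the universality criterion and consequently the mechanism that produces the exponent $-3/2$. The necessary condition extracted from Proposition~3 and Theorem~1 of \cite{Mora10} is
\begin{equation*}
  \EG ( \Psi_{{\text{S}}_{k}} ) > 1 - 4 \eta^{\frac{1}{3}} + 3.4 \eta^{\frac{2}{3}} \ens ,
\end{equation*}
not a condition of the form $\eta \gtrsim k^k e^{-k}/k!$. Writing $\delta := 1 - \EG(\Psi_{\text{S}_k}) = k^k e^{-k}/k! \approx (2\pi k)^{-1/2}$, the criterion reads $4\eta^{1/3} - 3.4\eta^{2/3} > \delta$, and for small $\eta$ the leading term gives the violation regime $\eta \lesssim (\delta/4)^3 \approx \tfrac{1}{64}(2\pi k)^{-3/2} \approx 0.001\, k^{-3/2}$. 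So the $k^{-3/2}$ arises because the $k^{-1/2}$ entanglement deficit is \emph{cubed} by the $\eta^{1/3}$ dependence of the criterion; there is no separate ``$1/k$ factor from the fidelity-to-target-dimension conversion'' to be hunted down. Your own consistency check at $k=1$ exposes the problem: your intermediate condition would give $\eta \gtrsim e^{-1} \approx 0.37$, nearly three orders of magnitude away from the $0.001$ of Example~1 of \cite{Mora10}, whereas $(e^{-1}/4)^3 \approx 0.00078$ reproduces it. Once you replace your posited condition with the actual cubic inequality and invert it, the proof closes exactly as in the paper.
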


\begin{proof}
  Using \eq{dicke_ent} and the Stirling approximation for high $n$,
  the asymptotic geometric entanglement of the family
  $\Psi_{{\text{S}}_{k}}$ is found to be
  \begin{equation*}
    \EG ( \Psi_{{\text{S}}_{k}} ) = 1 - \frac{k^k}{\E^{k} k!} \ens .
  \end{equation*}
  Specifically, the amount of geometric entanglement remains finite
  for arbitrary values of $n$, allowing us to apply Proposition 3 and
  Theorem 1 of \cite{Mora10} to show that the necessary condition for
  $\epsilon$-approximate deterministic universality,
  \begin{equation*}
    \EG ( \Psi_{{\text{S}}_{k}} ) > 1-4 \eta^{\frac{1}{3}} +
    3.4 \eta^{\frac{2}{3}} \ens ,
  \end{equation*}
  is violated for $\eta(\epsilon) \lesssim 0.001 \, k^{-\frac{3}{2}}$.
\end{proof}

Of course, many other quantum information tasks are not restricted by
the requirements of \ac{MBQC}-universality, and thus highly entangled
symmetric states can be valuable resources for tasks such as the
leader election problem \cite{Dhondt06} or \ac{LOCC} discrimination
\cite{Hayashi06}.

It should be noted that the uselessness of $n$ qubit symmetric states
for quantum computation can also be inferred from the observation that
the tensor rank of such states scales only polynomially\footnote{The
  $(n+1)$-dimensional space of $n$ qubit symmetric states can be
  spanned by the continuous set of spin coherent states
  \protect\cite{Perelomov}, i.e.  $\mathh_{\text{s}} = \spa \{
  \ket{\sigma}^{\otimes n} , \ket{\sigma} \in \mbbc^{2} \}$. This
  implies that symmetric $n$ qubit states have a tensor rank of at
  most $n+1$.}.  Superpositions of a polynomial number of product
states can be simulated efficiently classically, because these states
and all subsequent states arising during the computation have an
efficient classical description \cite{Bremner09}, due to the tensor
rank being an entanglement monotone.

\cleardoublepage

\chapter{Majorana Representation and Geometric Entanglement}
\label{majorana_representation}

\begin{quotation}
  In this chapter the Majorana representation of symmetric states will
  be employed to investigate symmetric $n$ qubit states with respect
  to the geometric measure of entanglement. Starting with a discussion
  of different visualisation techniques and a review of two and three
  qubit symmetric states, we move on to the concepts of totally
  invariant states and spherical optimisation problems. This is
  followed by the derivation of a variety of analytical results about
  the Majorana representation. These results will be helpful for the
  study of the geometric entanglement of symmetric states in later
  chapters.
\end{quotation}

\section{Preliminaries}\label{preliminaries}

\subsection{Visualisation of symmetric states}\label{visualisation}

In \sect{majorana_definition_sect} the Majorana representation was
introduced for symmetric $n$ qubit states, and a visualisation by
means of the \acfp{MP} on the Majorana sphere was shown in
\fig{majorana_dicke}.  This approach is now extended to encompass
information about the geometric entanglement of states.

By means of its definition \eqref{geo_def} the geometric entanglement
of a state is determined by its \acfp{CPS}.  As mentioned in
\sect{gm_def}, for symmetric states there exists at least one
symmetric \ac{CPS} $\ket{\Lambda^{\text{s}}} = \ket{\sigma}^{\otimes
  n}$, and for $n \geq 3$ qudits all \acp{CPS} are symmetric
\cite{Hubener09}. Here we consider the case of $n$ qubit symmetric
states. The single-qubit states $\ket{\sigma}$ are then called
\acfp{CPP}, because they can be represented on the Majorana sphere by
their Bloch vectors.  In this way the set of \acp{CPP} can be visually
represented alongside the \acp{MP}. \Fig{ghz_w_picture} shows such
Majorana representations for the three qubit \ac{GHZ} and W state, two
states that are symmetric as well as positive, and whose entanglement
may be considered extremal (see e.g. \cite{Tamaryan09}).
\begin{figure}
  \begin{overpic}[scale=.45]{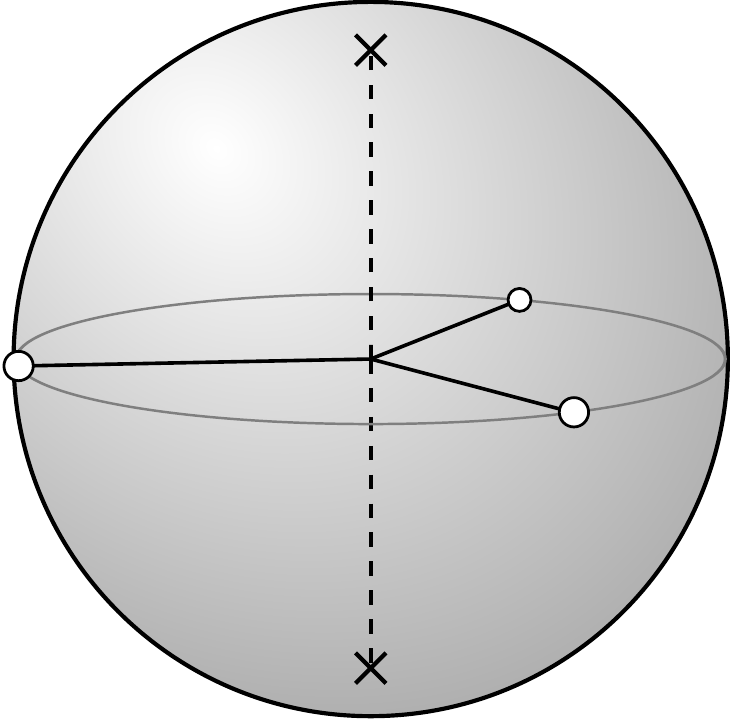}
    \put(-5,0){(a)}
    \put(-20,53){$\ket{\phi_{1}}$}
    \put(71,30){$\ket{\phi_{2}}$}
    \put(65,64){$\ket{\phi_{3}}$}
    \put(28,83){$\ket{\sigma_{1}}$}
    \put(28,10){$\ket{\sigma_{2}}$}
  \end{overpic}
  \begin{overpic}[scale=.22]{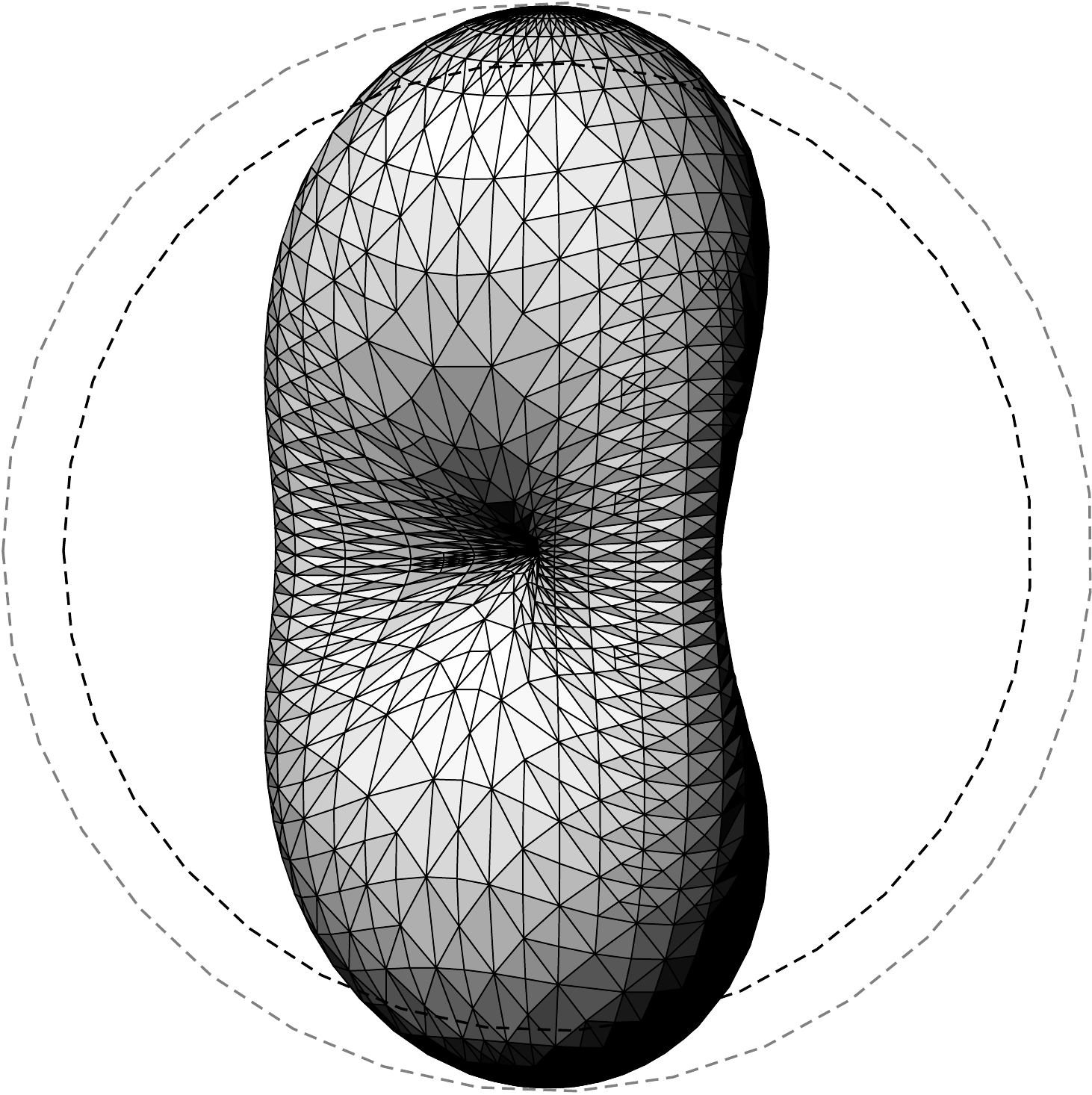}
    \put(-5,0){(b)}
  \end{overpic}
  \hfill
  \begin{overpic}[scale=.45]{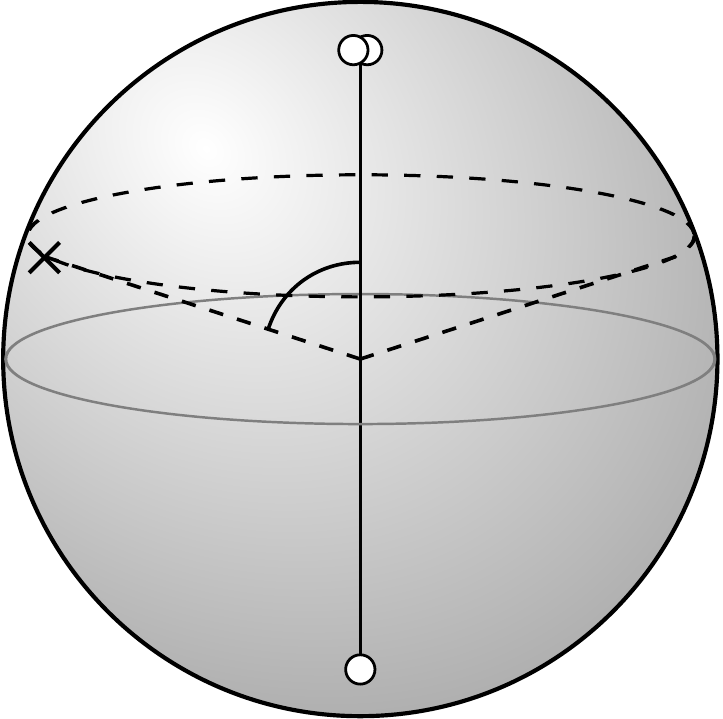}
    \put(-5,0){(c)}
    \put(27,83){$\ket{\phi_{1}}$}
    \put(54,83){$\ket{\phi_{2}}$}
    \put(53,10){$\ket{\phi_{3}}$}
    \put(-15,72){$\ket{\sigma_{1}}$}
    \put(43,53){$\theta$}
  \end{overpic}
  \begin{overpic}[scale=.22]{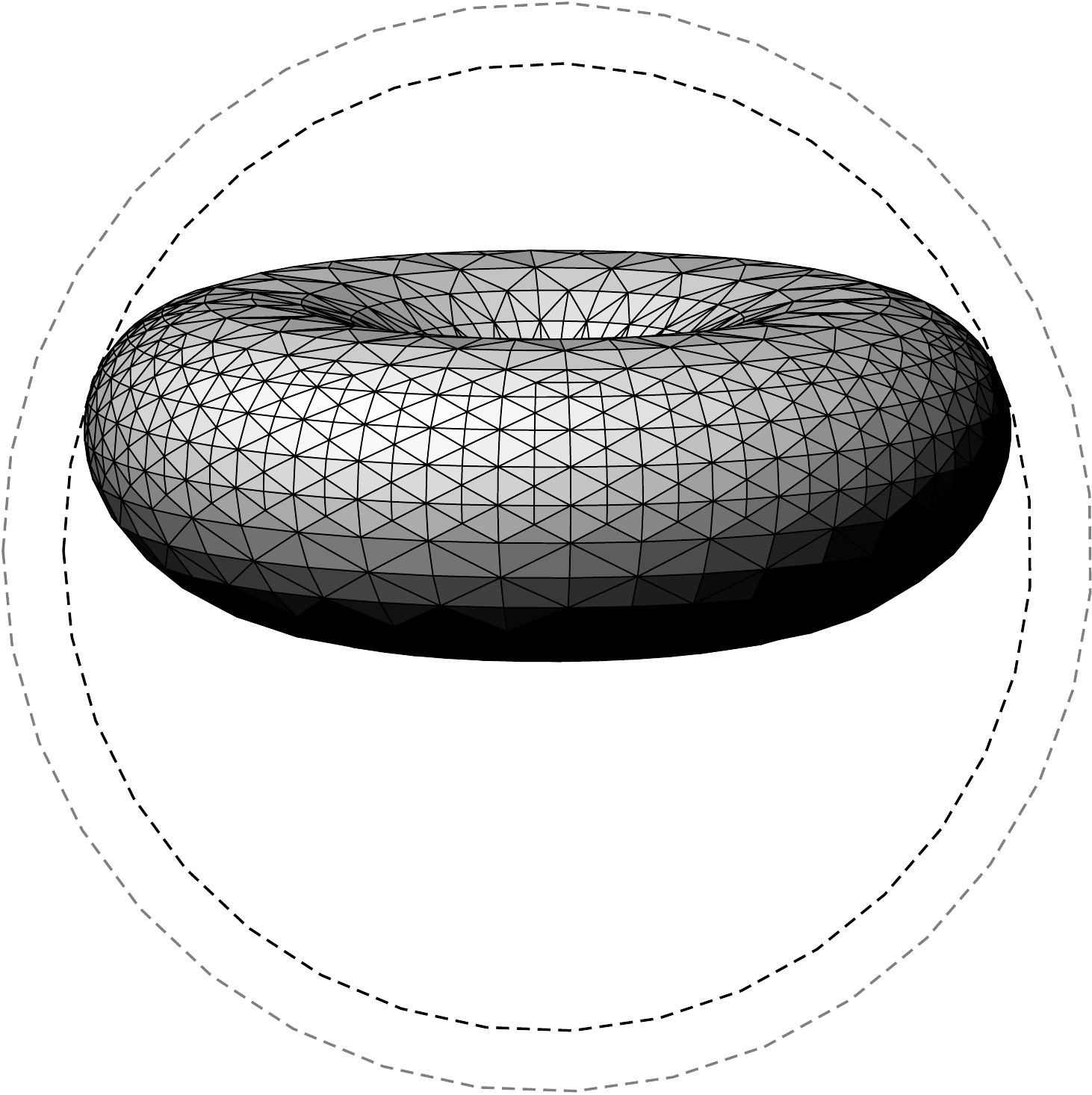}
    \put(-5,0){(d)}
  \end{overpic}
  \caption[Visualisations of the 3 qubit GHZ and W
  state]{\label{ghz_w_picture} Different visualisations of the three
    qubit \ac{GHZ} state and W state are shown.  The \acp{MP} (white
    circles) and \acp{CPP} (crosses) of $\ket{\text{GHZ}_{3}}$ are
    shown in (a), and those of $\ket{\text{W}_{3}}$ in (c).  Plots of
    the corresponding spherical amplitude functions $g^2 (\theta ,
    \varphi )$ are shown in (b) and (d). The global maxima and zeros
    of $g^2$ coincide with the \acp{CPP} and the antipodes of the
    \acp{MP}, respectively.  The maximal values of $g^2$ are indicated
    by circles with radii $G^2 ( \ket{\text{GHZ}_{3}} ) = \frac{1}{2}$
    and $G^2 ( \ket{\text{W}_{3}} ) = \frac{4}{9}$, respectively.}
\end{figure}
The state $\ket{\text{GHZ}_{3}} = \tfra{1}{\sqrt{2}} ( \ket{000} +
\ket{111} )$ has the \acp{MP}
\begin{equation}\label{ghz_mps}
  \ket{\phi_{1}} = \tfrac{1}{\sqrt{2}} \big( \ket{0} + \ket{1} \big)
  \ens ,
  \hspace{2.5mm}
  \ket{\phi_{2}} = \tfrac{1}{\sqrt{2}} \big( \ket{0} +
  \E^{\I \tfra{2 \pi}{3}} \ket{1} \big) \ens ,
  \hspace{2.5mm}
  \ket{\phi_{3}} = \tfrac{1}{\sqrt{2}} \big( \ket{0} +
  \E^{\I \tfra{4 \pi}{3}} \ket{1} \big) \ens ,
\end{equation}
and its two \acp{CPP} are
\begin{equation}\label{ghz_cpps}
  \ket{\sigma_{1}} = \ket{0}
  \ens , \quad
  \ket{\sigma_{2}} = \ket{1} \ens .
\end{equation}
From this it follows that $G^2 =
\abs{\bracket{\text{GHZ}_{3}}{\sigma}^{\otimes 3}}^2 = \frac{1}{2}$
and hence the geometric entanglement is $\Eg ( \ket{\text{GHZ}_{3}} )
= 1$.  \Fig{ghz_w_picture}(a) shows the distribution of \acp{MP} and
\acp{CPP} on the Majorana sphere.  The three \acp{MP} form an
equilateral triangle on the equator, and the two \acp{CPP} lie at the
north and south pole, respectively.  The general $n$ qubit \ac{GHZ}
state \eqref{ghz_w_def} is represented by $n$ equidistant \acp{MP} on
the equator, and the \acp{CPP} are the same as in \eqref{ghz_cpps}
\cite{Markham11}.

The W state $\ket{\text{W}_{3}} = \sym{3,1} = \tfra{1}{\sqrt{3}} (
\ket{001} + \ket{010} + \ket{100} )$ is a Dicke state with the
\acp{MP} $\ket{\phi_{1}} = \ket{\phi_{2}} = \ket{0}$ and
$\ket{\phi_{3}} = \ket{1}$, and due to the azimuthal symmetry the set
of \acp{CPP} is formed by the continuum $\ket{\sigma} =
\sqrt{\tfra{2}{3}} \ket{0} + \E^{\I \varphi} \sqrt{\tfra{1}{3}}
\ket{1}$, with $\varphi \in [0,2 \pi)$.  \Fig{ghz_w_picture}(c) shows
the \acp{MP} and the circle of \acp{CPP}, with the positive \ac{CPP}
denoted by a cross.  The entanglement is $\Eg ( \ket{\text{W}_{3}} ) =
\log_2 \left( \tfra{9}{4} \right) \approx 1.17$, which is higher than
that of the \ac{GHZ} state.  It was recently shown that in terms of
the \ac{GM}, the W state is the maximally entangled three qubit state
\cite{Chen10}, and therefore it is also the maximally entangled
symmetric one.

Mediated by the stereographic projection, the amplitude function $\psi
(z) : \mbbc \to \mbbc$ defined on the complex plane in \eq{majpoly}
corresponds to the function $f ( \theta , \varphi ) =
\bracket{\psi^{\text{s}}}{\sigma (\theta , \varphi ) }^{\otimes n}$
with $\ket{\sigma(\theta , \varphi)} = \co_{\theta} \ket{0} + \E^{\I
  \varphi} \si_{\theta} \ket{1}$ defined on the Majorana sphere.
Taking the absolute value of this function, we define a real-valued
function $g : \mathcal{S}^2 \to [0,1]$ as follows
\begin{equation}\label{spher_amp_funct}
  g ( \theta , \varphi ) = \abs{\bracket{\psi^{\text{s}}}{\sigma
      (\theta , \varphi ) }^{\otimes n}} \ens ,
\end{equation}
and call it the \textbf{spherical amplitude function}. This function
has already played an important role in \theoref{const_integral} for
the derivation of the upper bound on the maximal symmetric
entanglement.  Comparing \eq{spher_amp_funct} to the definition of the
\ac{GM} \eqref{geo_def}, it is seen that the global maxima of $g$ are
the \acp{CPP} of $\psis$.  Furthermore, from the definition of the
Majorana representation \eqref{majorana_definition} it is clear that
the zeros of $g$ are the antipodes\footnote{In mathematical terms, the
  antipode of a Bloch vector $\ket{\phi} = \co_{\theta} \ket{0} +
  \E^{\I \varphi} \si_{\theta} \ket{1}$ is the unique Bloch vector
  $\ket{\phi}^{\perp} = \si_{\theta} \ket{0} - \E^{\I \varphi}
  \co_{\theta} \ket{1}$ which is orthogonal to the first one:
  $\bracket{\phi}{\phi}^{\perp} = 0$.}, i.e. the diametrically
opposite points of the \acp{MP}. The plots of $g^2$ shown for the
\ac{GHZ} and W state in \fig{ghz_w_picture} demonstrate that the
spherical amplitude function allows for an intuitive and powerful
visualisation of the entire information about a symmetric state and
its geometric entanglement.  For a given $\psis$ it is often not easy
to calculate the \acp{MP} and \acp{CPP} analytically, but $g$ makes it
very easy to do so numerically. This makes the spherical amplitude
function a powerful tool for the numerical search for high and maximal
geometric entanglement.  The amount of entanglement present in a
symmetric state decreases with increasing values of the injective
tensor norm, which is simply the maximum value of the spherical
amplitude function: $G = \max_{\ket{\sigma}} g ( \theta , \varphi
)$. Circles indicating the value of $G^2$ are shown in
\fig{ghz_w_picture}, and they provide a visual representation of the
difference in entanglement between the \ac{GHZ} and W state.

As another example we present the \quo{tetrahedron state}, the
symmetric state of four qubits whose \acp{MP} are uniquely defined (up
to \ac{LU}) by the vertices of a regular tetrahedron inscribed in the
Majorana sphere.  In the orientation shown in
\fig{tetrahedron_visual}(a) the \acp{MP} are
\begin{equation}\label{tetrahedron_mps}
  \ket{\phi_{1}} = \ket{0} \ens , \quad
  \ket{\phi_{2,3,4}} = \sqrt{\tfrac{1}{3}} \ket{0} +
  \E^{\I \kappa} \sqrt{\tfrac{2}{3}} \ket{1} \ens ,
\end{equation}
with $\kappa = 0 , \frac{2 \pi}{3} , \frac{4 \pi}{3}$.
From this the tetrahedron state follows as
\begin{equation}\label{tetrahedron_state}
  \ket{\Psi_{4}} =
  \sqrt{\tfra{1}{3}} \sym{0} + \sqrt{\tfra{2}{3}} \sym{3} \ens ,
\end{equation}
and its entanglement is $\Eg (\ket{\Psi_{4}}) = \log_2 3 \approx
1.585$.  \Fig{tetrahedron_visual}(b) shows the spherical amplitude
function $g^2 ( \theta , \varphi )$ from which it is clear that there
exist four \acp{CPP}, which coincide with the locations of the
\acp{MP}.

\begin{figure}
  \centering
  \begin{overpic}[scale=.45]{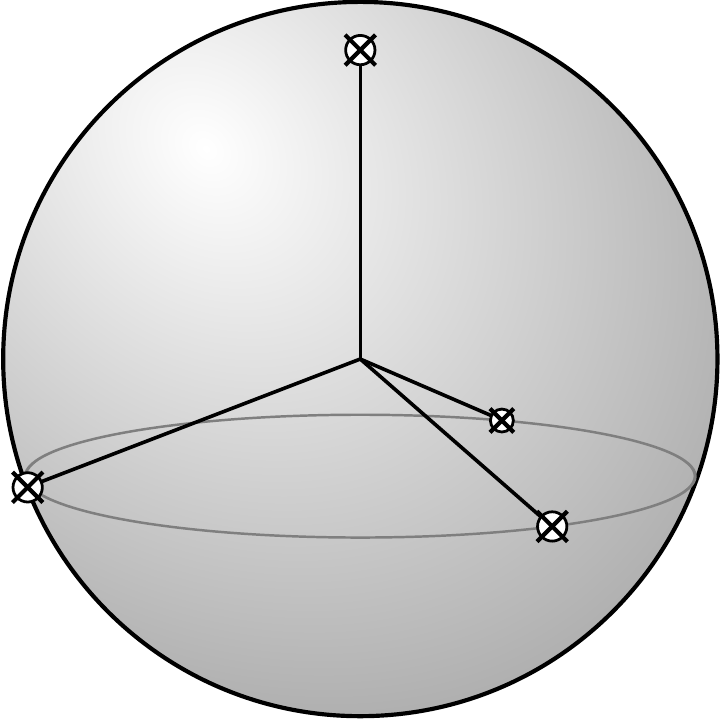}
    \put(-5,-2){(a)}
    \put(28,84){$\ket{\phi_{1}}$}
    \put(-14,19){$\ket{\phi_{2}}$}
    \put(63,14.5){$\ket{\phi_{3}}$}
    \put(68,48){$\ket{\phi_{4}}$}
  \end{overpic}
  \hspace{1.2mm}
  \begin{overpic}[scale=.4]{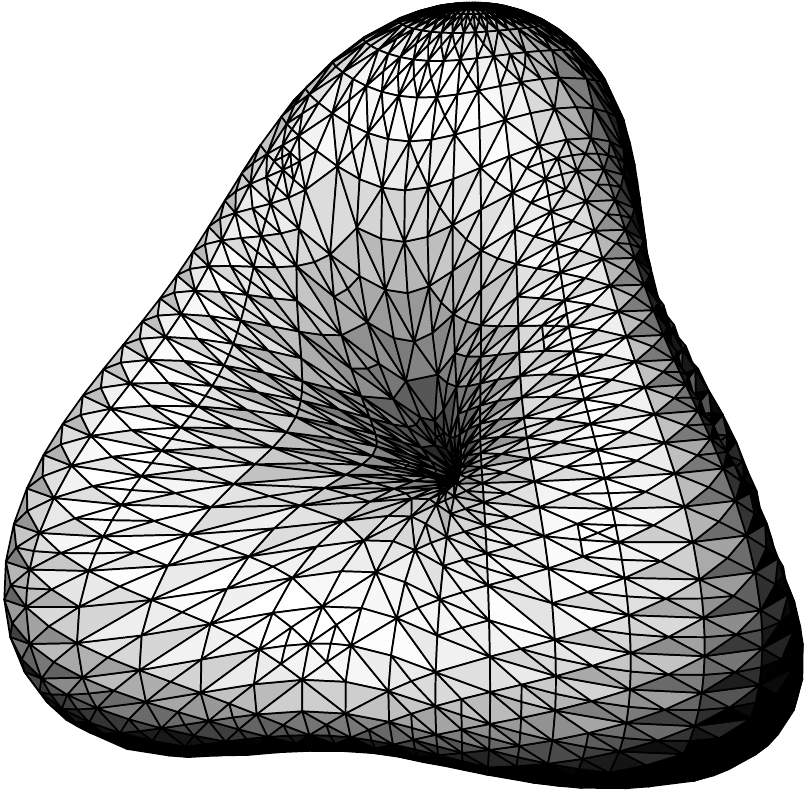}
    \put(-10,-2){(b)}
  \end{overpic}
  \hspace{1.2mm}
  \begin{overpic}[scale=.175]{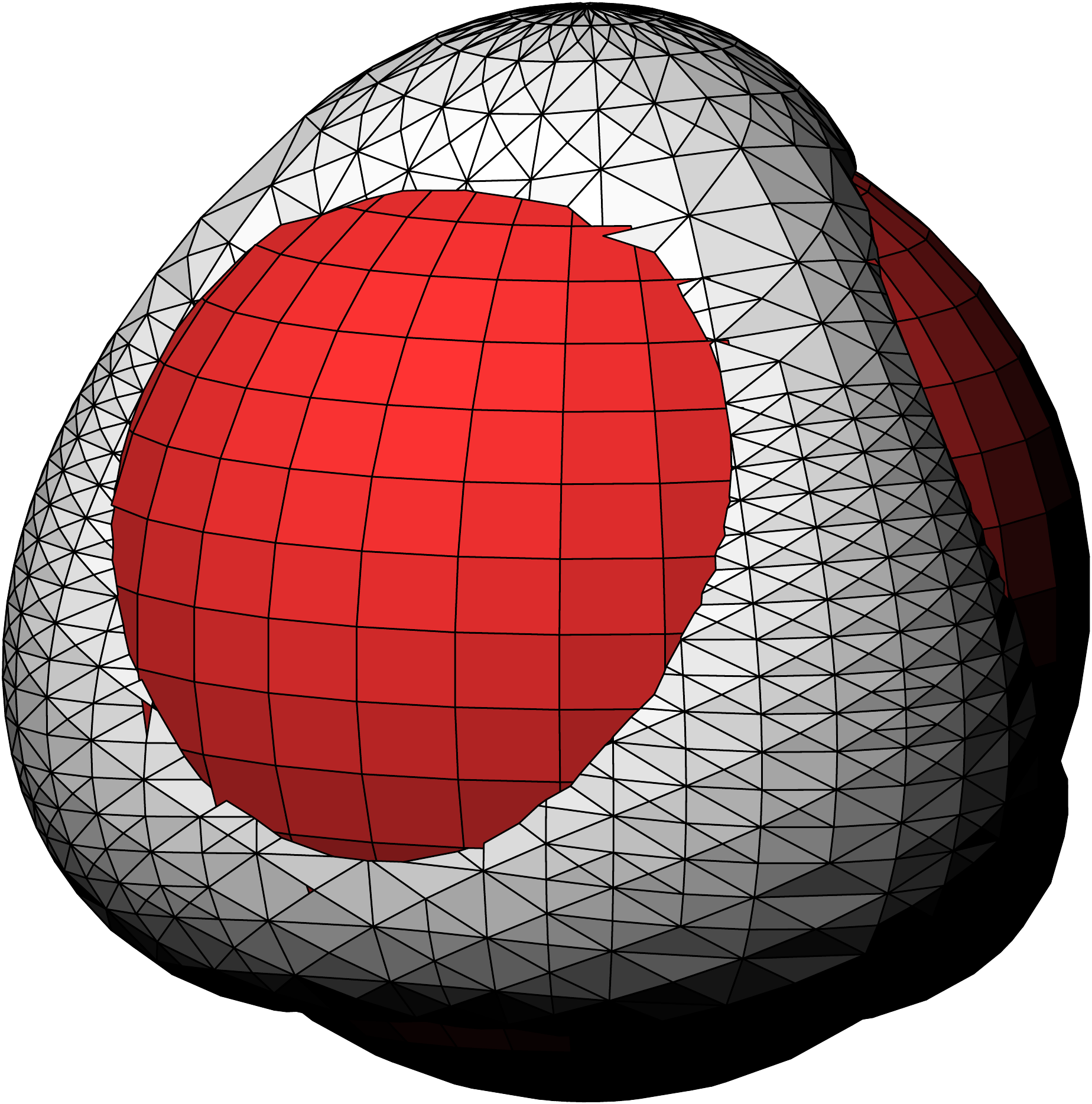}
    \put(-8,-2){(c)}
  \end{overpic}
  \hspace{1.2mm}
  \begin{overpic}[scale=.175]{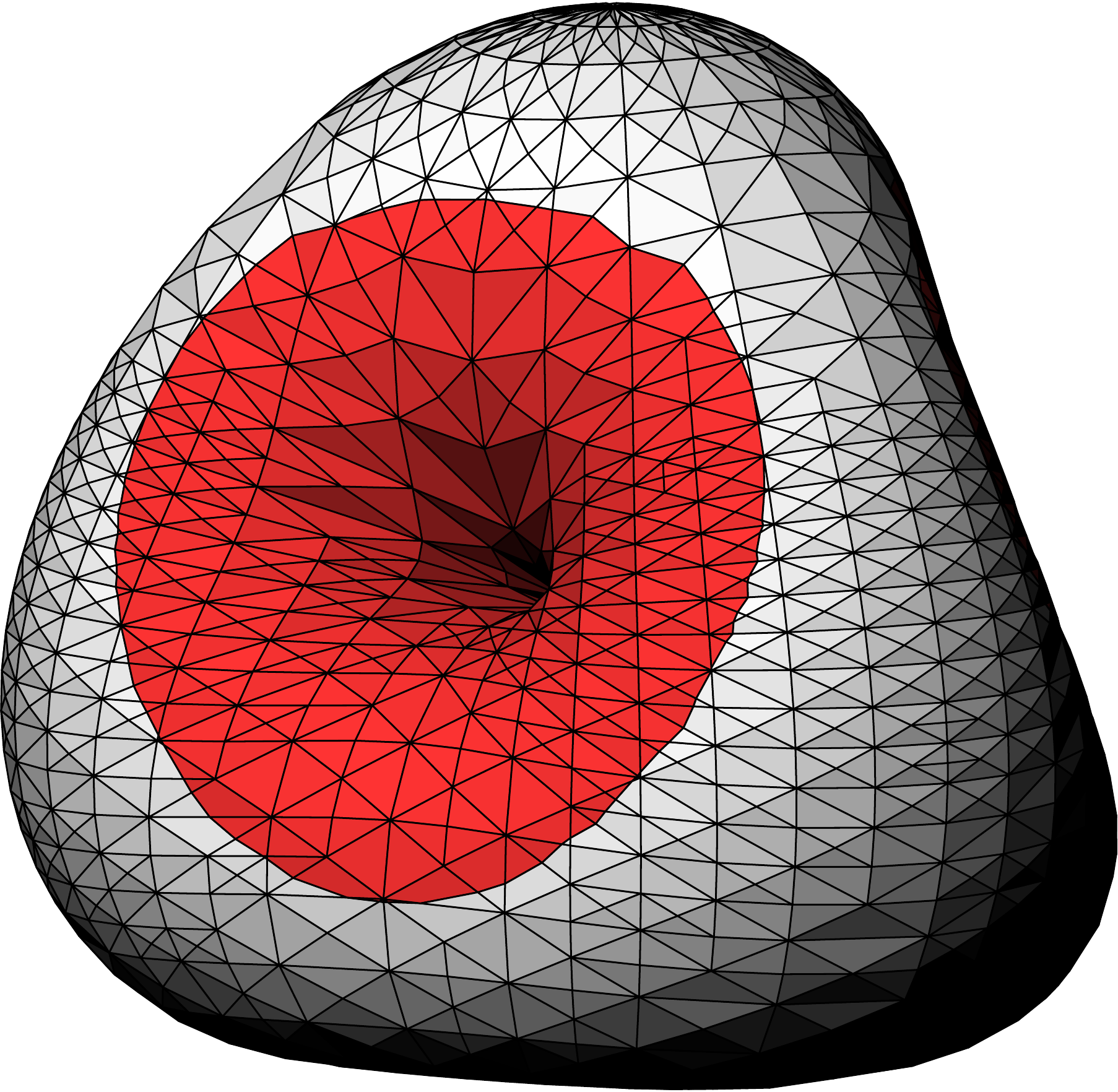}
    \put(-8,-2){(d)}
  \end{overpic}
  \caption[Visualisations of the 4 qubit tetrahedron
  state]{\label{tetrahedron_visual} Different visualisations of the
    \quo{tetrahedron state} of 4 qubits. The \acp{MP} and \acp{CPP}
    are shown in (a), with each vertex of the regular tetrahedron
    occupied by one \ac{MP} and one \ac{CPP}.  The spherical amplitude
    function $g^2 ( \theta , \varphi )$ is shown in (b), and the
    spherical volume function $g^{\frac{2}{3}} ( \theta , \varphi )$
    in (c) and (d).  The volume described by $g^{\frac{2}{3}}$ is $V =
    \frac{4 \pi}{15}$, and a red sphere with the same volume is
    inscribed in (c).  The area where the values of $g^{\frac{2}{3}}$
    are smaller than the radius $r = \sqrt[3]{1/5}$ of the sphere is
    coloured red in (d).  The largest value of $g^{\frac{2}{3}}$ is
    $G^{\frac{2}{3}} = \sqrt[3]{1/3}$.}
\end{figure}

Considering the form of the integral appearing in
\theoref{const_integral}, one could suspect that the three-dimensional
volume described by the values of the function $g^2 ( \theta , \varphi
)$ is the same for all $n$ qubit states.  This is however not the
case, because \eq{intvolume} does not describe a volume integration.
Fortunately, as stated by the following corollary, this can be easily
remedied by considering $g^{\frac{2}{3}} ( \theta , \varphi )$
instead.

\begin{corollary}\label{cor_volume}
  For every $n$ qubit symmetric state the three-dimensional volume
  bordered by the values of the function $g^{\frac{2}{3}} (\theta,
  \varphi)$ is $V = \frac{4 \pi}{3 (n+1)}$.
\end{corollary}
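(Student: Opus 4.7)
The plan is to view $g^{2/3}(\theta,\varphi)$ as a radial function over the unit sphere and compute the enclosed volume in spherical coordinates. Because the volume element in $\mathbb{R}^3$ is $r^2 \sin\theta \, dr \, d\theta \, d\varphi$, the volume bounded by the radial surface $r = g^{2/3}(\theta,\varphi)$ is
\begin{equation*}
  V = \int_{0}^{2\pi} \int_{0}^{\pi} \int_{0}^{g^{2/3}(\theta,\varphi)} r^{2} \sin\theta \, \D r \, \D \theta \, \D \varphi .
\end{equation*}

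Next I would evaluate the innermost integral, which gives $\tfrac{1}{3}\bigl[g^{2/3}(\theta,\varphi)\bigr]^{3} = \tfrac{1}{3} g^{2}(\theta,\varphi)$. The power $\tfrac{2}{3}$ has been chosen precisely so that after the $r^{2}\,\D r$ integration the integrand becomes $g^{2}(\theta,\varphi)\sin\theta$, which is exactly the integrand appearing in \theoref{const_integral}.

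The final step is to invoke \theoref{const_integral}, which states that the remaining surface integral $\int_{0}^{2\pi}\int_{0}^{\pi} g^{2}(\theta,\varphi)\sin\theta \, \D\theta \, \D\varphi$ equals $\tfrac{4\pi}{n+1}$ for every symmetric $n$ qubit state. Combining this with the factor $\tfrac{1}{3}$ from the radial integration yields $V = \tfrac{4\pi}{3(n+1)}$, as claimed.

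There is no real obstacle here — the whole content is the observation that raising $g^{2}$ to the power $\tfrac{1}{3}$ converts the constant surface integral of \theoref{const_integral} into a constant enclosed volume via the $r^{2}$ factor of the spherical volume element. This is the reason $g^{2/3}$ (rather than $g^{2}$ itself) is the natural object to visualise, giving the intuitive picture, used in later chapters, of a spherical distribution of fixed volume whose shape encodes the entanglement of $\psis$.
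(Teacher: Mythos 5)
Your proof is correct and is essentially the paper's own argument: both compute the volume as the spherical-coordinate integral $\int r^{2}\sin\theta\,\D r\,\D\theta\,\D\varphi$ out to the radius $R(\theta,\varphi)=g^{2/3}(\theta,\varphi)$, observe that the radial integration turns the integrand into $\tfrac{1}{3}g^{2}(\theta,\varphi)\sin\theta$, and then invoke \theoref{const_integral} to obtain $V=\tfrac{4\pi}{3(n+1)}$. No gaps.
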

\begin{proof}
  The volume of an object can be determined mathematically by
  integrating the constant function 1 over the interior:
  \begin{equation}\label{calc_volume}
    \begin{split}
      V& = \int\limits_{0}^{2 \pi} \int\limits_{0}^{\pi}
      \int\limits_{0}^{R( \theta , \varphi )}
      r^2 \sin \theta \, \D r \D \theta \D \varphi =
      \int\limits_{0}^{2 \pi} \int\limits_{0}^{\pi}
      \frac{R^{3} ( \theta, \varphi )}{3}
      \sin \theta \, \D \theta \D \varphi
      \\
      {}& = \frac{1}{3} \int\limits_{0}^{2 \pi} \int\limits_{0}^{\pi}
      g^{2} ( \theta, \varphi ) \sin \theta \, \D \theta \D \varphi
      \stackrel{\eqref{intvolume}}{=}
      \frac{4 \pi}{3 (n+1)} \ens ,
    \end{split}
  \end{equation}
  where $R( \theta , \varphi ) \equiv g^{\frac{2}{3}} ( \theta ,
  \varphi )$ are the radial values in spherical coordinates.
\end{proof}
This corollary implies that the object described by the contour of
$g^{\frac{2}{3}} ( \theta , \varphi )$ has the same volume as a sphere
with radius $r = \frac{1}{\sqrt[3]{n+1}}$.  We will call
$g^{\frac{2}{3}} ( \theta , \varphi )$ the \textbf{spherical volume
  function}, and it is depicted in \fig{tetrahedron_visual} for the
tetrahedron state.

Let us make the difference between the integrals of \eq{intvolume} and
\eqref{calc_volume} more explicit.  The integral appearing in
\eq{intvolume} is a two-dimensional spherical integral over the unit
sphere with $g^2 ( \theta , \varphi )$ as its integrand.  This
integral has the value $\frac{4 \pi}{n+1}$, and because the surface
area of the unit sphere is $4 \pi$, this means that the average value
of $g^2$ is $\frac{1}{n+1}$, implying $G^2 \geq \frac{1}{n+1}$ and
thus $\Eg ( \psis ) \leq \log_2 (n+1)$.  In contrast to this,
\eq{calc_volume} describes a three-dimensional volume integral over a
shape bordered by the values of the function $g^{\frac{2}{3}} ( \theta
, \varphi )$.  This integral has the value $\frac{4 \pi}{3 (n + 1)}$,
which is equal to the volume of a sphere with radius $r =
\frac{1}{\sqrt[3]{n+1}}$.  Because of this, the largest value of
$g^{\frac{2}{3}}$ satisfies $G^{\frac{2}{3}} \geq
\frac{1}{\sqrt[3]{n+1}}$, which results in the same upper bound on
$\Eg$ as derived from the spherical integral.

Because an exponent on $g ( \theta , \varphi )$ does not change the
qualitative properties of this function, it is mostly a matter of
taste with which power to work with.  The merit of the spherical
volume function $g^{\frac{2}{3}}$ is that (for fixed $n$) its volume
is the same for all $n$ qubit symmetric states.  This allows for a
clear operational interpretation as a constant volume which has to be
moulded as uniformly as possible to obtain the highest symmetric
entanglement.  Nevertheless, most of the plots of the spherical
amplitude function displayed in this thesis for $n$ qubit symmetric
states rely on $g^2$, because the shape of the volume is then more
pronounced for lower $n$, and thus easier to interpret (see e.g.
\fig{tetrahedron_visual}(b) and (d)).

\subsection{Two and three qubit symmetric states}\label{two_three}

As a first application of the Majorana representation we review the
well-studied cases of two and three qubits. Remarkably, for two qubits
the Schmidt decomposition \eqref{schmidt_decomp} allows one to cast
every pure state as a positive symmetric state of the form $\psis =
\alpha_{0} \ket{00} + \alpha_{1} \ket{11}$, with $\alpha_{0} \geq
\alpha_{1}$ and with $\ket{\Lambda} = \ket{00}$ being a \ac{CPP}.  The
\acp{MP} are then, up to normalisation, $\ket{\phi_{1}} =
\sqrt{\alpha_{0}} \ket{0} + \I \sqrt{\alpha_{1}} \ket{1}$ and
$\ket{\phi_{2}} = \sqrt{\alpha_{0}} \ket{0} - \I \sqrt{\alpha_{1}}
\ket{1}$, and the geometric entanglement is $\Eg = - \log_2
\alpha_{0}^{2}$. The spherical distance between the two \acp{MP} is
the only degree of freedom present in this Majorana
representation\footnote{The definition $\EG = 1 -
  \abs{\bracket{\psi}{\Lambda}}^2$ makes the relationship between the
  spherical distance and the entanglement explicitly clear: With an
  angle $2 \theta$ between the two \acp{MP}, it follows that $\EG =
  \sin^2 \theta$ \protect\cite{Wei03}.}. Coinciding \acp{MP}
correspond to separable states (here $\psis = \ket{00}$), and
antipodal \acp{MP} correspond to maximally entangled states (here the
Bell state $\psis = \tfra{1}{\sqrt{2}} ( \ket{00} + \ket{11} )$) with
$\Eg = 1$. Since the \ac{CPP} is fixed at $\ket{\Lambda} = \ket{00}$,
the geometric entanglement increases monotonically with the spherical
distance between the \acp{MP}.  This unambiguous characterisation by
means of a two-point-distance is a signature of the existence of a
total order for the entanglement of pure two qubit states.

\begin{figure}
  \centering
  \begin{minipage}{138mm}
    \hfill
    \begin{overpic}[scale=.134]{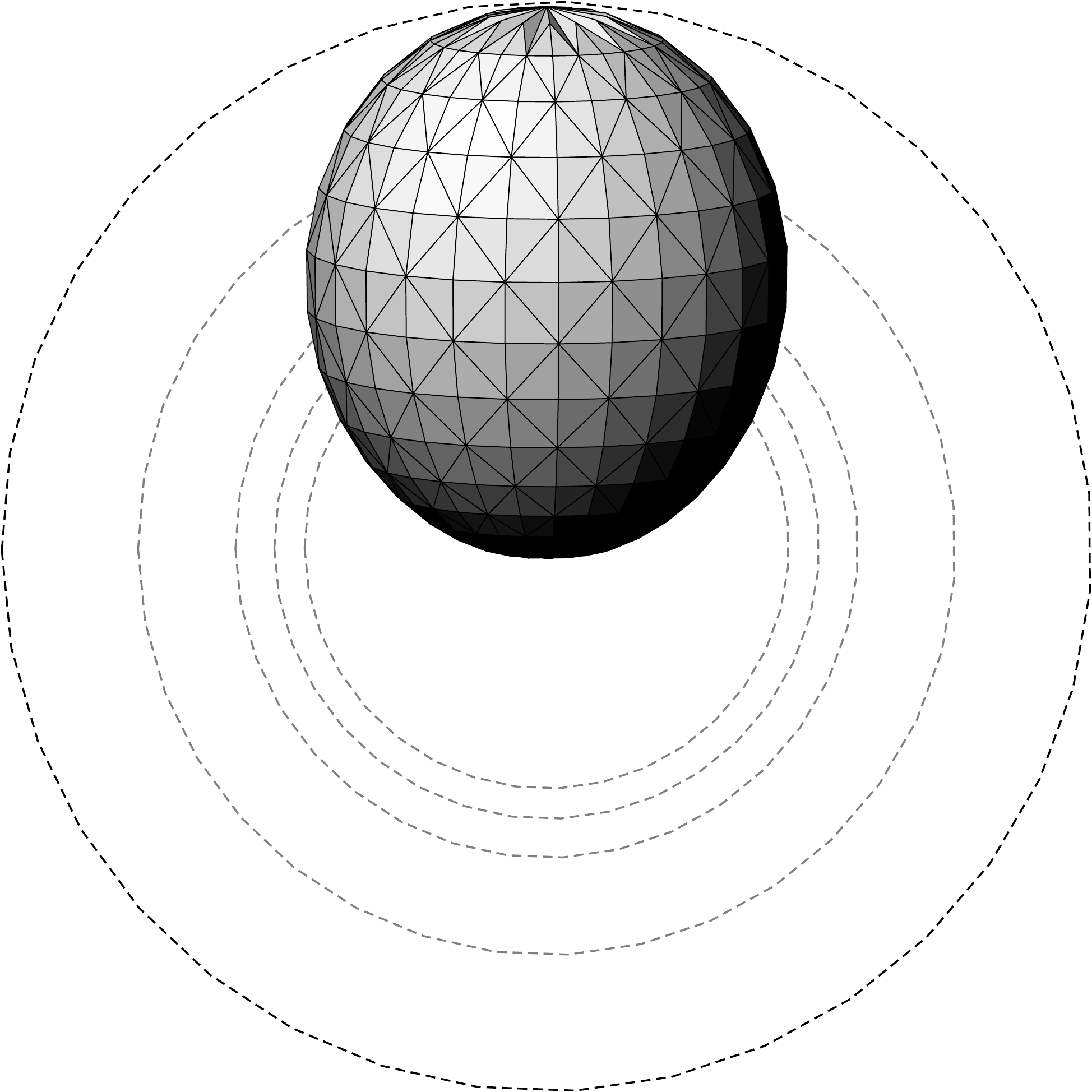}
    \end{overpic}
    \begin{overpic}[scale=.134]{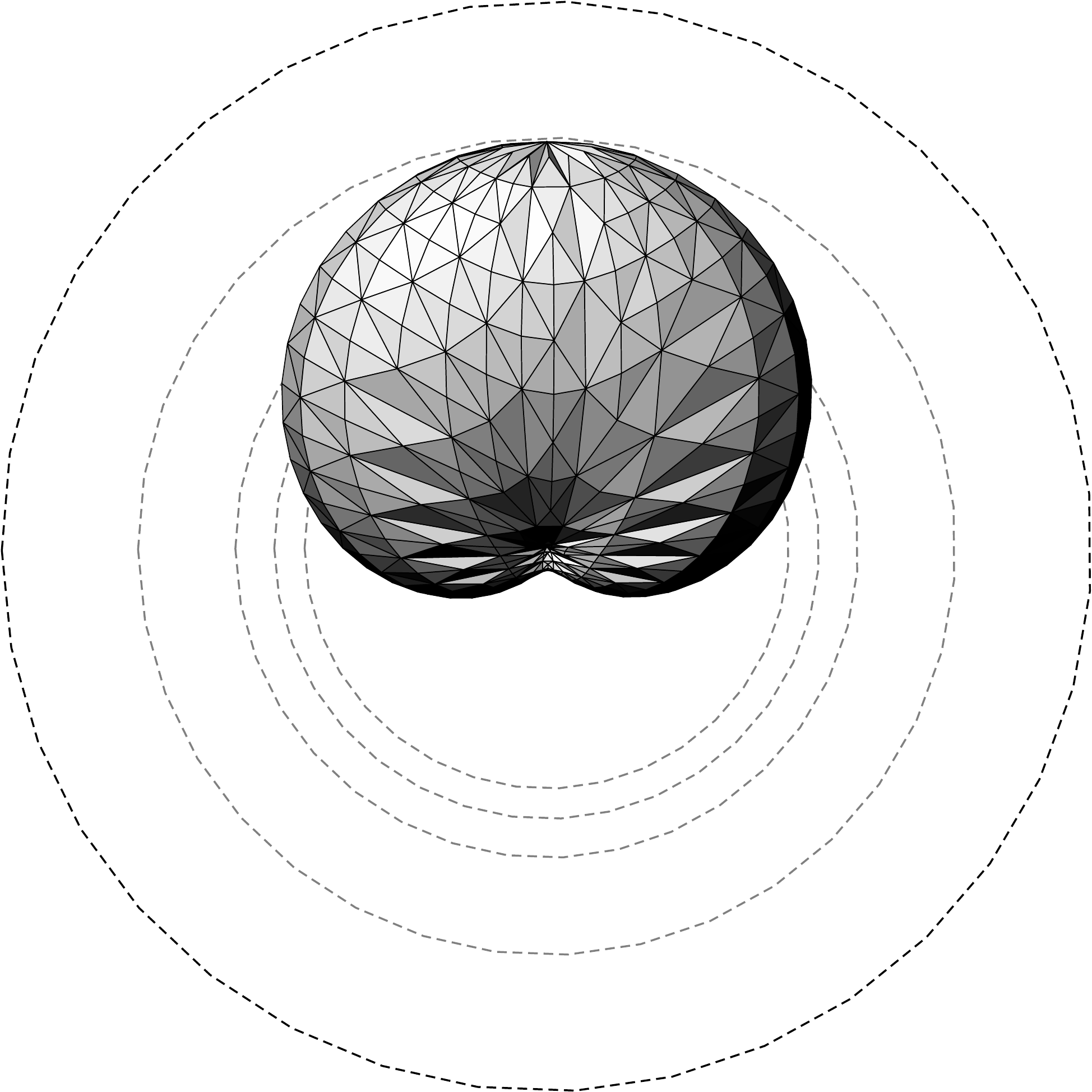}
    \end{overpic}
    \begin{overpic}[scale=.134]{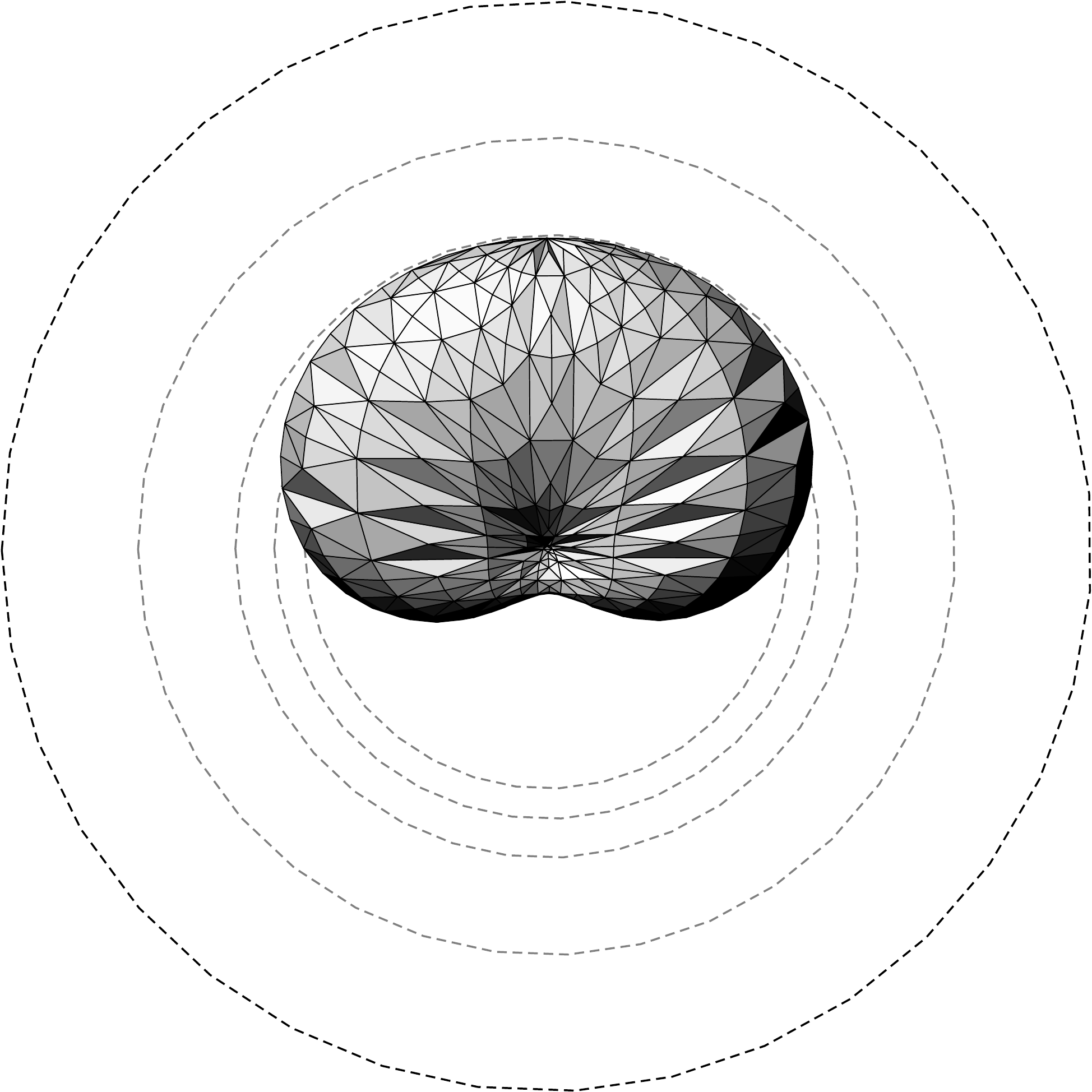}
    \end{overpic}
    \begin{overpic}[scale=.134]{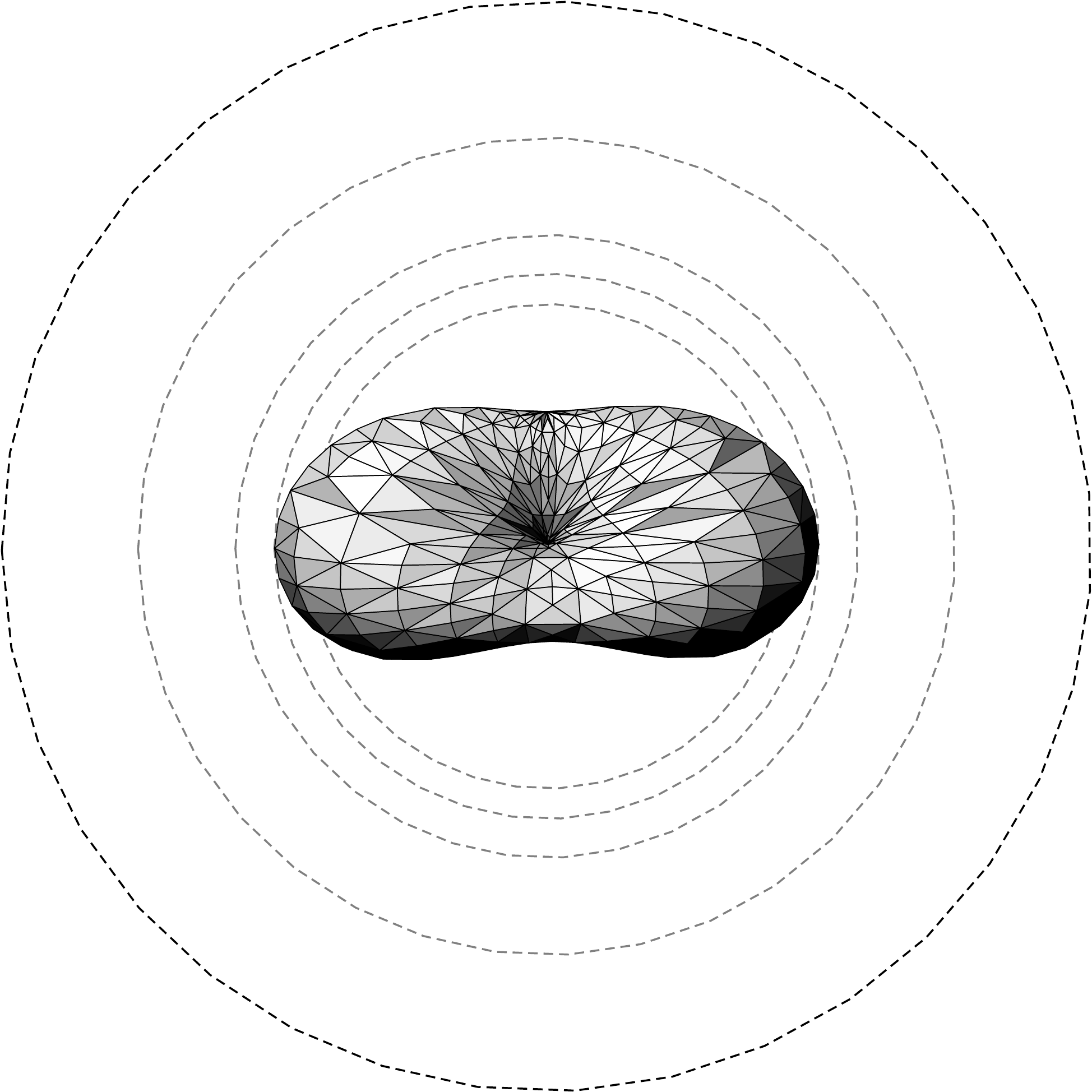}
    \end{overpic}
    \begin{overpic}[scale=.134]{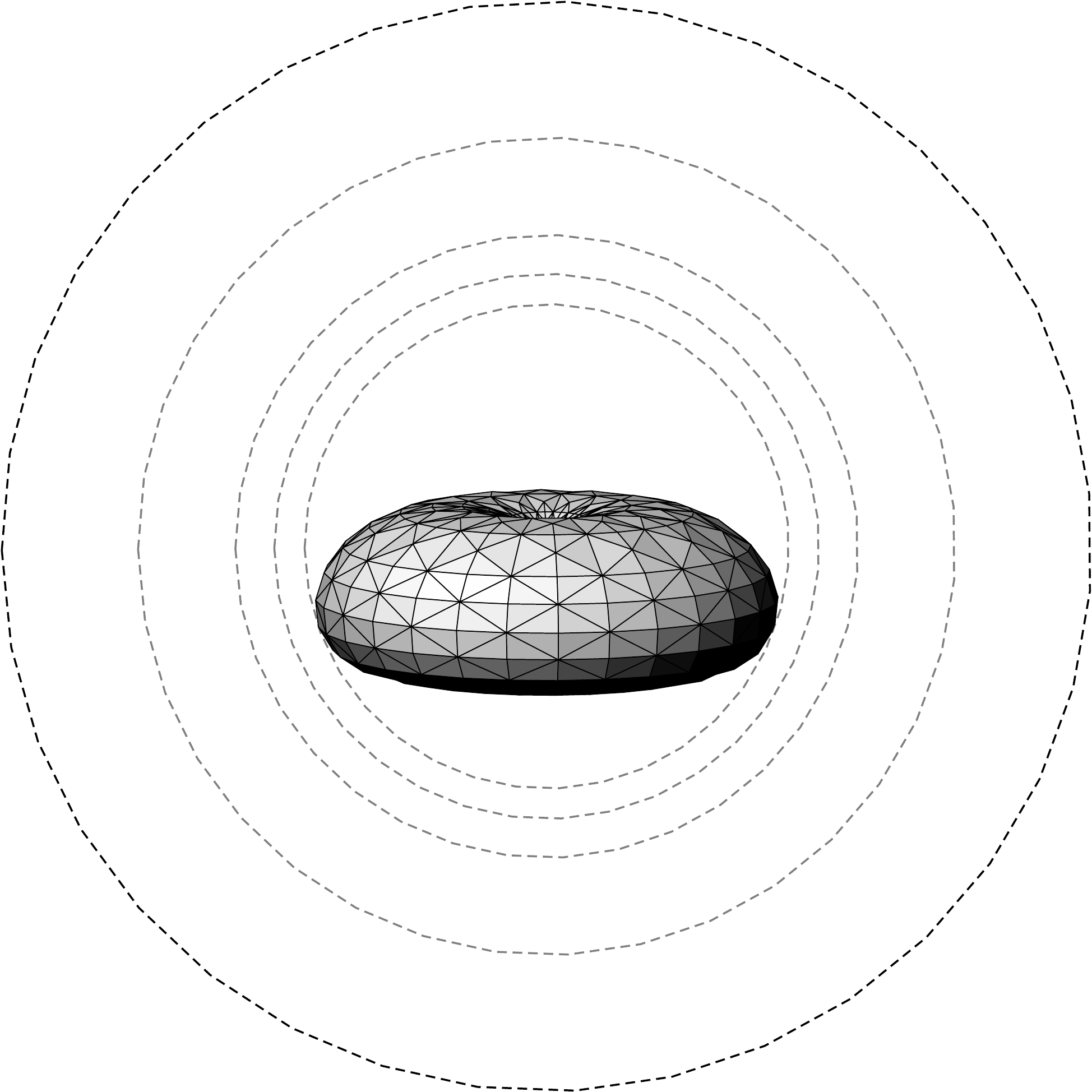}
    \end{overpic}
  \end{minipage}
  \begin{minipage}{138mm}
    \vspace{10mm}
    \hspace{2mm}
    \begin{overpic}[scale=.32]{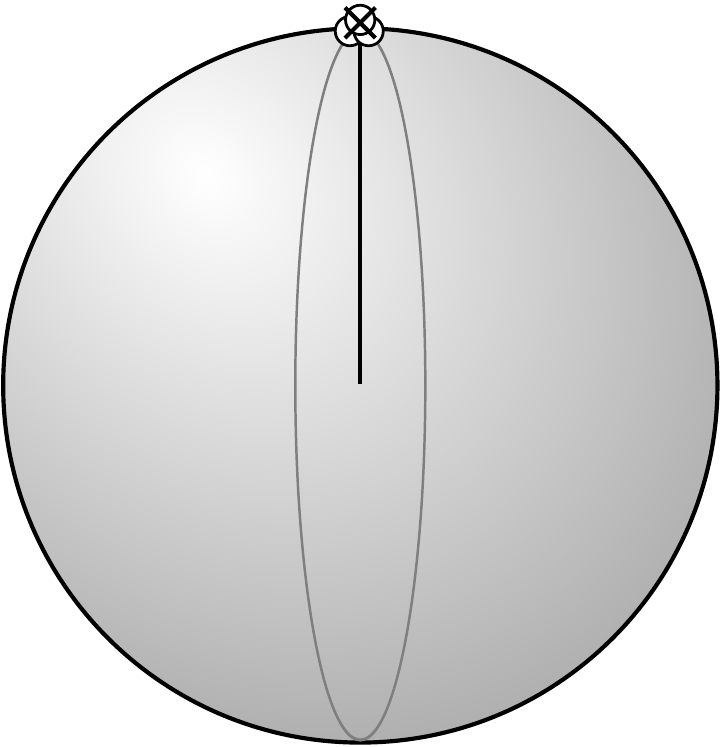}
      \put(-8,0){(a)}
      \put(35,113){$\sym{0}$}
    \end{overpic}
    \hspace{2mm}
    \begin{overpic}[scale=.32]{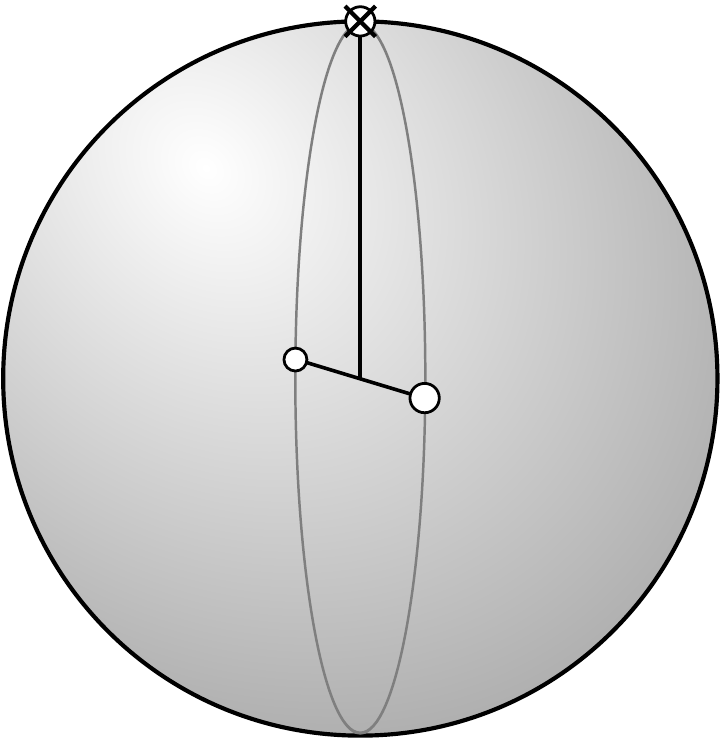}
      \put(-8,0){(b)}
      \put(2,113){$\sqrt{3} \sym{0} + \sym{2}$}
    \end{overpic}
    \hspace{2mm}
    \begin{overpic}[scale=.32]{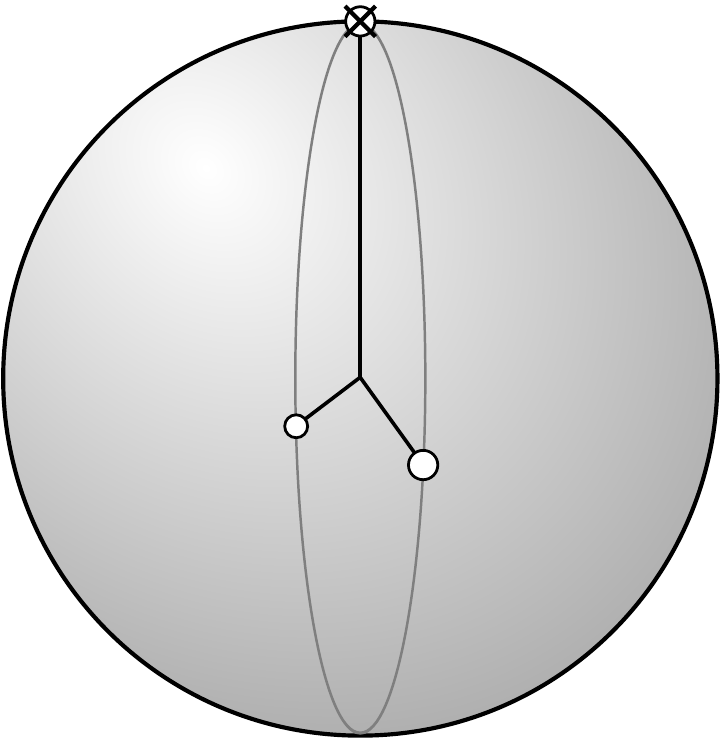}
      \put(-8,0){(c)}
      \put(1,113){$2 \sym{0} + \sqrt{3} \sym{2}$}
    \end{overpic}
    \hspace{2mm}
    \begin{overpic}[scale=.32]{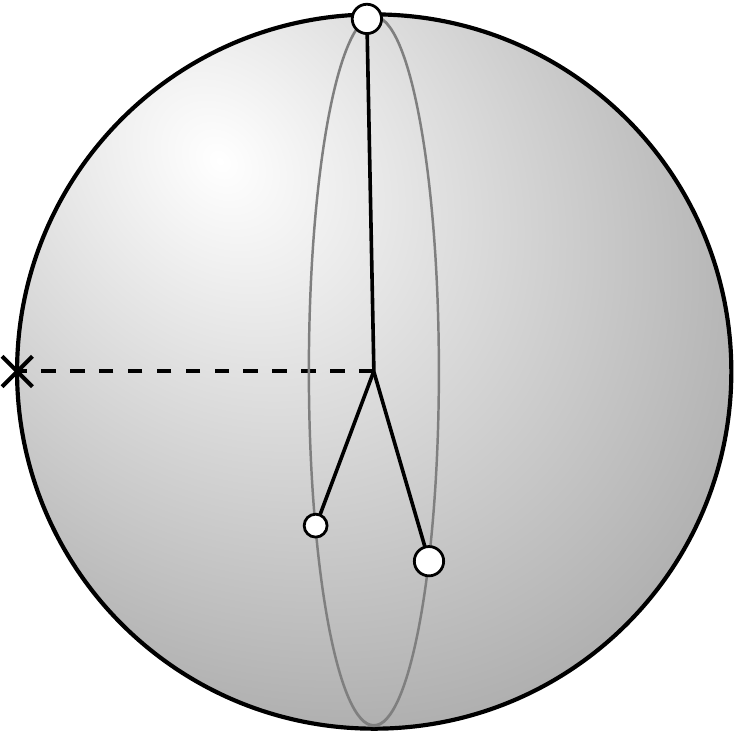}
      \put(-8,0){(d)}
      \put(7,113){$\sym{0} + \sqrt{3} \sym{2}$}
    \end{overpic}
    \hspace{2mm}
    \begin{overpic}[scale=.32]{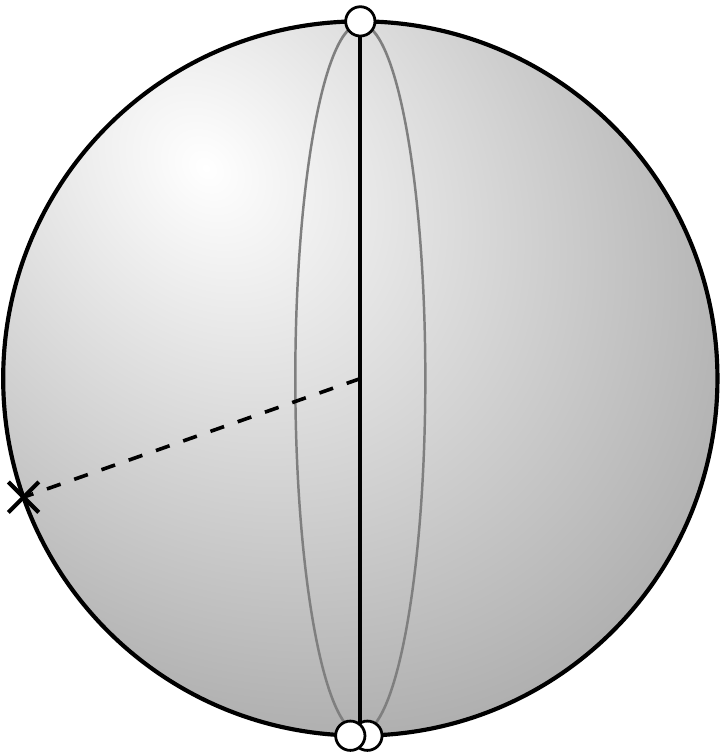}
      \put(-8,0){(e)}
      \put(36,113){$\sym{2}$}
    \end{overpic}
  \end{minipage}
  \begin{minipage}{138mm}
    \centering
    \vspace{10mm}
    \begin{overpic}[scale=1.05]{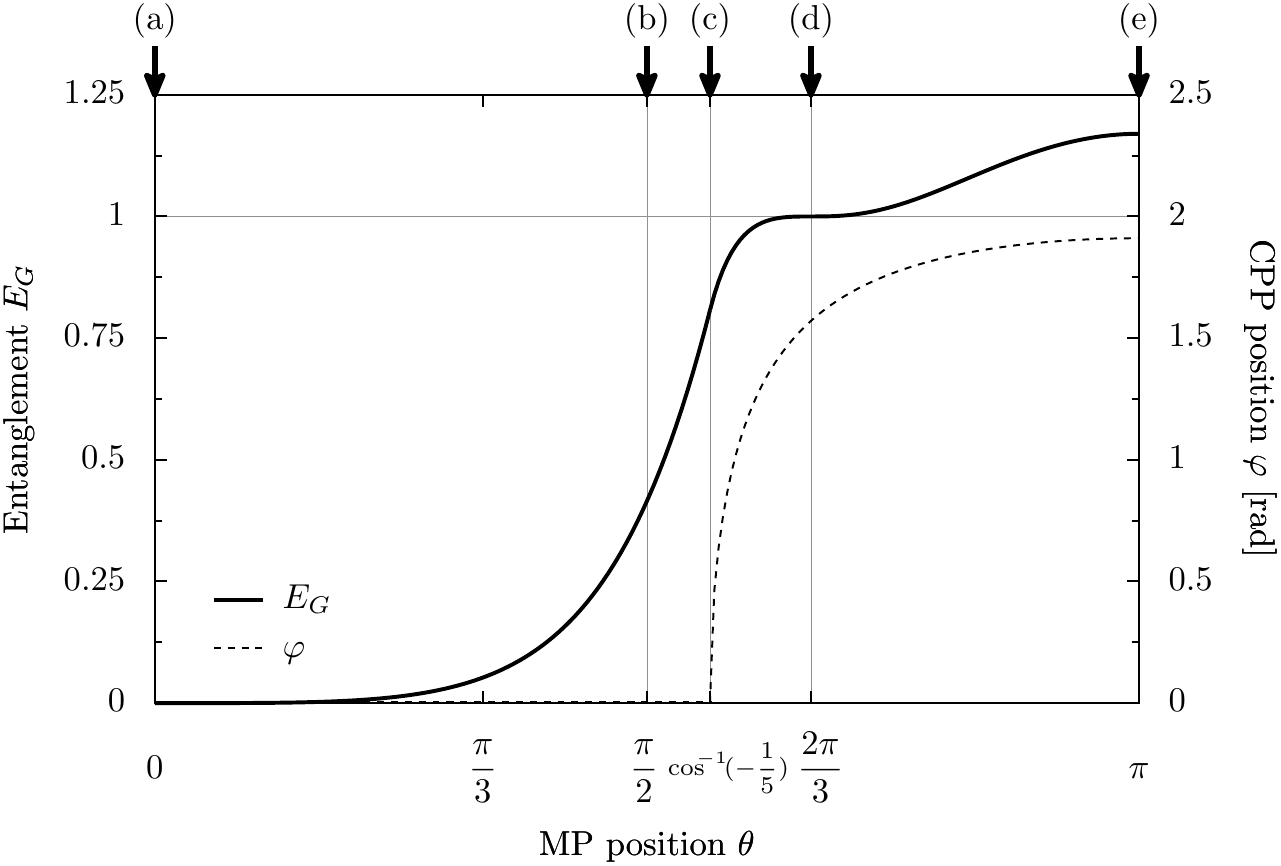}
    \end{overpic}
  \end{minipage}
  \caption[Geometric entanglement and CPPs of 3 qubit symmetric
  states]{\label{3_graph} The diagram shows how the values of $\Eg$
    and the location of the positive \ac{CPP} change as the \ac{MP}
    distribution of \protect\eqref{mp_3_form} is modified.  The
    \ac{CPP} remains on the north pole until the moving \acp{MP} have
    reached a latitude slightly below the equator, as seen in the
    Majorana representation (c). From that point onwards the \ac{CPP}
    rapidly moves southwards and reaches the equator at the \ac{GHZ}
    state (d). After this, the \ac{CPP} and $\Eg$ undergo only small
    changes until the W state (e) is reached.  Plots of the spherical
    amplitude function $g^2$ for the five marked states are shown on
    top of their Majorana representations. The values of $G^2$ are
    $G^2_{\text{a}} = 1$, $G^2_{\text{b}} = \tfra{3}{4}$,
    $G^2_{\text{c}} = \tfra{4}{7}$, $G^2_{\text{d}} = \frac{1}{2}$ and
    $G^2_{\text{e}} = \tfra{4}{9}$, and they give the radii of the
    dashed concentric circles.}
\end{figure}

In contrast to this, there is no unique way in geometry to measure the
\quo{distance} between three points on a sphere.  Similarly, no unique
entanglement measure and no total order exists for three qubit states.
Furthermore, unlike the two qubit case, a generic state of three
qubits cannot be cast positive or symmetric \cite{Dur00}.  We
therefore focus on a subset of three qubit symmetric states that
include highly entangled states.  For this consider the following
three \acp{MP}
\begin{equation}\label{mp_3_form}
  \ket{\phi_{1}} = \ket{0} \ens , \quad \ket{\phi_{2, 3}} =
  \co_{\theta} \ket{0} \pm \I \, \si_{\theta} \ket{1} \ens ,
\end{equation}
with the parametrisation $\theta \in [0,\pi]$.  Starting out with all
\acp{MP} on the north pole ($\theta = 0$), two of the \acp{MP} are
moved southwards as a complex conjugate pair until they reach the
south pole ($\theta = \pi$). The change of the \acp{CPP} and the
entanglement is studied as a function of $\theta$.  From
\eq{majorana_definition} it is found that the \acp{MP} give rise to
the state
\begin{equation}\label{3_mp_state1}
  \ket{\psi^{\text{s}} ( \theta )}  =
  \frac{\sqrt{3} \co_{\theta}^2 \sym{0} +
    \si_{\theta}^2 \sym{2} }
  {\sqrt{3 \co_{\theta}^4 + \si_{\theta}^4 }} \ens .
\end{equation}
This state is positive, so it suffices to find a positive \ac{CPP}.
Determining the absolute maximum of the spherical amplitude function
\eqref{spher_amp_funct} with the ansatz $\ket{\sigma} = \co_{\varphi}
\ket{0} + \si_{\varphi} \ket{1}$ is straightforward, yielding the
relationship between $\varphi$ and $\theta$:
\begin{equation}\label{3_mp_state2}
  \co_{\varphi}^2 =
  \frac{\si_{\theta}^2}{6 \si_{\theta}^2 - 3} \ens .
\end{equation}
The codomain of the left-hand side is $[0,1]$, but the right-hand side
lies outside this range for $\theta < \arccos (- \tfra{1}{5})$.  For
these values $\ket{\sigma} = \ket{0}$ is the only \ac{CPP}.
\Fig{3_graph} shows the change of $\varphi$ with $\theta$.  It is seen
that from $\theta = \arccos (- \tfra{1}{5})$ onwards the \ac{CPP}
abruptly leaves the north pole and moves towards the south pole along
the positive half-circle.  This behaviour can be explained with the
shape of the spherical amplitude function.  As seen in
\fig{3_graph}(c), the function $g^2$ is very flat around the north
pole which facilitates fast changes in the position of the global
maximum.  From \eq{3_mp_state1} and \eqref{3_mp_state2} the \ac{GM}
can be calculated and its graph is displayed in \fig{3_graph}. It is
found that $\Eg$ is monotonously increasing, which is in accordance
with the results of \cite{Tamaryan09}. Interestingly, the entanglement
reaches a saddle point at the \ac{GHZ} state ($\theta = \tfra{2
  \pi}{3}$) before it peaks at the W state ($\theta = \pi$).

\subsection{Totally invariant states and additivity}\label{invariant_and_additivity}

Quantum states that are the stationary points of an energy functional
regardless of the parameter values of the underlying system are called
\textbf{inert} states.  The inert states of spin-$j$ systems have been
fully characterised by their \acp{MP}: A state is inert \ac{iff} its
\ac{MP} distribution is invariant under a subgroup of the rotation
group $\text{SO}(3)$ acting on the Majorana sphere, but any small
variation of the \acp{MP} (excluding the joint rotations
\eqref{symmetric_lu}) results in a change of the symmetry group
\cite{Makela07}.  Because of the isomorphism between the states of a
spin-$j$ particle and the symmetric states of $2j$ qubits, this
definition can be extended to symmetric $n$ qubit states.  To avoid
confusion, the physically motivated term \quo{inert} is replaced with
\quo{\textbf{totally invariant}} \cite{Markham11}.  Regardless of the
underlying physical system, an \ac{MP} distribution is thus called
totally invariant if it is invariant under a subgroup of
$\text{SO}(3)$, but any small variation of the \acp{MP} changes the
subgroup.

Examples of totally invariant states are the \textbf{Platonic states},
which are defined as the quantum states whose \acp{MP} lie at the
vertices of the Platonic solids, the five highly symmetric convex
polyhedra whose edges, vertices and angles are all congruent.  The
tetrahedron state was already introduced as a four qubit symmetric
state in \sect{visualisation}. Treated as a state of a spin-$2$
system, the tetrahedron state represents an inert state.

Taking \ac{LU} equivalence into account, the subgroups of
$\text{SO}(3)$ and their symmetry implications can be listed as
follows:
\begin{itemize}
  \verycompactlist
\item special orthogonal group $\text{SO}(2)$: continuous $Z$-axis
  rotational symmetry
  
\item orthogonal group $\text{O}(2)$: continuous $Z$-axis rotational
  symmetry \& $X$-$Y\!$-plane symmetry
  
\item cyclic group $C_{m}$: discrete $Z$-axis rotational symmetry
  
\item dihedral group $D_{m}$: discrete $Z$-axis rotational symmetry \&
  $X$-$Y\!$-plane symmetry
  
\item tetrahedral group $T$: symmetry group of tetrahedron
  
\item octahedral group $O$: symmetry group of octahedron and cube
  
\item icosahedral group $Y$: symmetry group of icosahedron and
  dodecahedron
\end{itemize}
The continuous symmetries $\text{O}(2)$ and $\text{SO}(2)$ can be
fulfilled only by Dicke states.  The Dicke state $\sym{n,k}$ is a
totally invariant state of $\text{SO}(2)$ for all $n$ and $k$, with
the exception of $k = \frac{n}{2}$ for even $n$.  This is because the
equally balanced states $\sym{n,\frac{n}{2}}$ are the totally
invariant states of $\text{O}(2)$.  There are no totally invariant
states for the cyclic group $\text{C}_{m}$, but the remaining groups
$D_{m}$, $T$, $O$ and $Y$ all give rise to a multitude of totally
invariant states \cite{Makela07,Markham11}.

Markham \cite{Markham11} recently found that all totally invariant
symmetric $n$ qubit states satisfy \eq{meas_eq}, i.e. their amount of
entanglement is the same for the three distance-like entanglement
measures:
\begin{lemma}\label{invariant_equivalent}
  Let $\rho = \pure{\psi^{\text{s}}}$ be a totally invariant symmetric
  $n$ qubit pure state. Then
  \begin{equation}\label{invariant_equiv}
    \Eg ( \rho ) = E_{\text{R}} ( \rho ) =
    E_{\text{Rob}} ( \rho ) \ens .
  \end{equation}
\end{lemma}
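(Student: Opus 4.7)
The plan is to establish the missing inequality $E_{\text{Rob}}(\rho) \leq \Eg(\rho)$, because the reverse chain $\Eg \leq E_{\text{R}} \leq E_{\text{Rob}}$ already holds for all pure states by \eqref{meas_pure_ineq}. The strategy is to exhibit, for every totally invariant $\psis$, an explicit separable state witnessing the upper bound on the global robustness, and the total invariance is precisely what makes this construction work.

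Concretely, let $\ket{\Lambda^{\text{s}}} = \ket{\sigma}^{\otimes n}$ be any \ac{CPS} of $\psis$, so that $\abs{\bracket{\psi^{\text{s}}}{\sigma}^{\otimes n}}^{2} = G^{2} = 2^{-\Eg}$. Let $H \subseteq \text{SO}(3)$ be the symmetry subgroup realising the total invariance of the \ac{MP} distribution, and let $U_{h}$ denote the single-qubit rotation corresponding to $h \in H$. I would then introduce the twirled state
\begin{equation*}
  \omega \;=\; \int_{H} \bigl( U_{h} \pure{\sigma} U_{h}^{\dagger} \bigr)^{\otimes n} \, \D \mu (h) \ens ,
\end{equation*}
where $\mu$ is the normalised Haar measure on $H$. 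By construction $\omega$ is separable (a convex mixture of symmetric product states) and commutes with $U_{h}^{\otimes n}$ for all $h \in H$. Total invariance forces $U_{h}^{\otimes n} \psis = \E^{\I \alpha_{h}} \psis$, so every integrand in $\omega$ has overlap $G^{2}$ with $\pure{\psi^{\text{s}}}$, giving $\Trace \bigl( \omega \pure{\psi^{\text{s}}} \bigr) = G^{2}$.

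The key step is the operator inequality $\omega \geq G^{2} \pure{\psi^{\text{s}}}$. I would obtain this via Schur's lemma on the isotypic decomposition of $\mathh_{\text{s}}$ under the $H$-action: because $\omega$ is $H$-invariant and $\psis$ spans an $H$-invariant line, $\omega$ restricted to that line acts as multiplication by $G^{2}$, while the definition of $G^{2}$ as the global maximum of the spherical amplitude function \eqref{spher_amp_funct} rules out any off-diagonal coupling that would make $\omega$ exceed this value on $\psis$. Rearranging the resulting decomposition $\omega = G^{2} \pure{\psi^{\text{s}}} + (1 - G^{2}) \omega_{\perp}$ with $R = (1 - G^{2})/G^{2}$ gives
\begin{equation*}
  \frac{\pure{\psi^{\text{s}}} + R \, \omega_{\perp}}{1 + R} \;=\; \omega \;\in\; \mathh_{\text{SEP}} \ens ,
\end{equation*}
so by definition of the global robustness $E_{\text{Rob}} ( \pure{\psi^{\text{s}}} ) \leq \log_{2} (1 + R) = - \log_{2} G^{2} = \Eg ( \pure{\psi^{\text{s}}} )$, which closes \eqref{meas_pure_ineq} into a chain of equalities.

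The main obstacle I anticipate is the rigorous justification of $\omega \geq G^{2} \pure{\psi^{\text{s}}}$: the Schur argument requires either that $\psis$ occupies a unique $H$-invariant ray in $\mathh_{\text{s}}$, or a careful treatment of the trivial isotypic component when other $H$-invariant vectors are also present. A robust fallback is to avoid Schur altogether and argue variationally on the $H$-symmetric spherical amplitude function: any additional positive part of $\omega$ along $\psis$ would, by the maximality of $G^{2}$, force a contradiction with the identity $\Trace ( \omega \pure{\psi^{\text{s}}} ) = G^{2}$ obtained above.
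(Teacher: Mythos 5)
A preliminary remark: the thesis does not actually prove this lemma --- it is quoted verbatim from Markham \cite{Markham11} --- so there is no in-paper proof to compare against. Your proposal is in effect a reconstruction of the argument of that reference, and the route you choose (establish $E_{\text{Rob}} \leq \Eg$ by twirling a \ac{CPS} over the symmetry group to build a separable robustness witness, then close \eqref{meas_pure_ineq} into a chain of equalities) is the right one.

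However, the step you flag as the main obstacle is a genuine gap as written, and your fallback does not repair it. The danger in $\omega \geq G^{2} \pure{\psi^{\text{s}}}$ is not that $\omega$ carries too much weight on $\psis$ --- your computation correctly shows it carries exactly $G^{2}$ --- but that $\omega - G^{2}\pure{\psi^{\text{s}}}$ can fail to be positive semidefinite because of \emph{coherences} with other vectors carrying the same character. Concretely, if $\ket{\chi} \in \mathh_{\text{s}}$ is orthogonal to $\psis$ and satisfies $U_{h}^{\otimes n}\ket{\chi} = \E^{\I \alpha_{h}}\ket{\chi}$ with the \emph{same} phases $\alpha_{h}$, then the phases cancel in the average and $\bra{\psi^{\text{s}}}\omega\ket{\chi} = \bracket{\psi^{\text{s}}}{\sigma^{\otimes n}}\bracket{\sigma^{\otimes n}}{\chi}$, which is generically nonzero; an operator with vanishing diagonal entry at $\psis$ but nonvanishing off-diagonal entry there cannot be positive. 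A variational argument on the spherical amplitude function only controls the diagonal entry and therefore cannot exclude this.

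The gap can be closed, and the closing ingredient is precisely the \emph{totality} of the invariance, which your sketch never uses beyond $U_{h}^{\otimes n}\psis = \E^{\I \alpha_{h}}\psis$. Suppose the character $h \mapsto \E^{\I \alpha_{h}}$ occurred with multiplicity greater than one in $\mathh_{\text{s}}$, witnessed by such a $\ket{\chi}$. Then every state $\cos t \, \psis + \sin t \ket{\chi}$ is an eigenvector of all $U_{h}^{\otimes n}$ with the same eigenvalues, so its \ac{MP} distribution is invariant under $H$ for every $t$; since the \acp{MP} depend continuously on $t$, this produces arbitrarily small variations of the \acp{MP} that do not change the symmetry subgroup, contradicting the definition of total invariance in \sect{invariant_and_additivity}. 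Hence the character is multiplicity-free. Schur's lemma then makes the $H$-invariant operator $\omega$ block-diagonal with respect to the isotypic decomposition of $\mathh_{\text{s}}$, the block containing $\psis$ is the one-dimensional subspace spanned by $\psis$ with eigenvalue $G^{2}$, and $\omega = G^{2}\pure{\psi^{\text{s}}} + (1-G^{2})\, \omega_{\perp}$ with $\omega_{\perp} \geq 0$ supported orthogonally to $\psis$. From there your conclusion $E_{\text{Rob}}(\pure{\psi^{\text{s}}}) \leq \log_{2}(1+R) = -\log_{2} G^{2} = \Eg(\pure{\psi^{\text{s}}})$ is correct, and \eqref{meas_pure_ineq} finishes the proof.
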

Zhu \etal \cite{Zhu10} showed that positive states are strongly
additive under the \ac{GM}:
\begin{lemma}\label{positive_additive}
  Let $\rho$ be a positive state, pure or mixed. Then $\rho$ is
  strongly additive under $\Eg$, i.e. the following holds for all
  states $\sigma$:
  \begin{equation}\label{pos_additive}
    \Eg ( \rho \otimes \sigma ) =
    \Eg ( \rho ) + \Eg ( \sigma ) \ens .
  \end{equation}
\end{lemma}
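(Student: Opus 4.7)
The plan is to split strong additivity into the two inequalities $\Eg(\rho\otimes\sigma) \leq \Eg(\rho)+\Eg(\sigma)$ and $\Eg(\rho\otimes\sigma) \geq \Eg(\rho)+\Eg(\sigma)$, and to observe that only the second requires the positivity of $\rho$. Subadditivity holds in full generality: choose optimal convex-roof decompositions $\rho = \sum_i p_i \pure{\psi_i}$ and $\sigma = \sum_j q_j \pure{\phi_j}$, form the joint decomposition $\rho\otimes\sigma = \sum_{i,j} p_i q_j \pure{\psi_i\otimes\phi_j}$, and note that any product state on the composite $n{+}m$-partite system factors as $\ket{\lambda_A}\otimes\ket{\lambda_B}$, so the maximal overlap factorises and $\Eg(\ket{\psi}\otimes\ket{\phi}) = \Eg(\ket{\psi})+\Eg(\ket{\phi})$ for any pure tensor. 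Summing against the weights yields the subadditive bound.

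For super-additivity I would first dispose of the case $\rho = \pure{\psi}$ pure, which curiously does not even require positivity. Let $\rho\otimes\sigma = \sum_k r_k \pure{\Phi_k}$ be an optimal pure-state decomposition. Taking the partial trace over $B$ gives $\pure{\psi} = \sum_k r_k \Trace_B \pure{\Phi_k}$; since the left-hand side has rank one, each $\Trace_B \pure{\Phi_k}$ must be proportional to $\pure{\psi}$, which forces $\ket{\Phi_k} = \ket{\psi}\otimes\ket{\phi_k}$ for some $\ket{\phi_k}$. Consequently $\sigma = \sum_k r_k \pure{\phi_k}$ and
\begin{equation*}
  \Eg(\rho\otimes\sigma) = \sum_k r_k \Eg(\ket{\psi}\otimes\ket{\phi_k}) = \Eg(\rho) + \sum_k r_k \Eg(\ket{\phi_k}) \geq \Eg(\rho) + \Eg(\sigma) \ens ,
\end{equation*}
the last step being the convex-roof definition applied to $\sigma$.

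The truly delicate case, which I expect to be the main obstacle, is mixed positive $\rho$; the rank-one trick is unavailable, and an optimal decomposition of $\rho\otimes\sigma$ has no reason to split nicely across the two factors. Here the plan is to combine two tools already present in the excerpt: \lemref{lem_pos_cps}, which supplies a positive \ac{CPS} of $\rho$ in the basis where $\rho$ has non-negative entries, and the phase-removal argument of \theoref{theo_positive_entanglement}, which shows that stripping the phases off the coefficients of a pure state never increases $\Eg$. Starting from any optimal pure decomposition $\rho\otimes\sigma = \sum_k r_k \pure{\Phi_k}$, the idea is to exploit the positivity of $\rho$ to refine the decomposition so that each component has an $A$-marginal proportional to $\rho$ itself while its geometric entanglement is not raised; once every component satisfies this condition, the pure-case argument above applies locally and delivers the missing inequality.

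The technical heart of the proof is constructing this dephasing-compatible refinement and cleanly factoring the joint injective tensor norm $\max_{\ket{\lambda_A}\otimes\ket{\lambda_B}} \bra{\lambda_A}\rho\ket{\lambda_A}\bra{\lambda_B}\sigma\ket{\lambda_B} = G(\rho)^2 G(\sigma)^2$ against the convex-roof minimisation on the $\sigma$ side. I expect the positivity assumption to enter precisely in guaranteeing that every component of an optimal decomposition can be chosen with its $A$-marginal in the convex hull of positive pure states (which is the content \lemref{lem_pos_cps} is designed to enable), so that the overlap with the positive \ac{CPS} of $\rho$ is simultaneously realised on each term of the decomposition rather than only on the aggregate.
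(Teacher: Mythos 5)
First, note that the thesis does not prove this lemma at all: it is quoted verbatim from Theorem~5 of Zhu \etal \cite{Zhu10}, so there is no in-paper argument for your attempt to parallel, and your proposal must stand on its own. It does not, because it rests on a misreading of the tensor-product structure. Strong additivity is only a nontrivial property (and is only the property asserted in \cite{Zhu10} and used elsewhere in the thesis) if the composite system $\rho \otimes \sigma$ is partitioned so that party $j$ holds \emph{both} its share of $\rho$ and its share of $\sigma$; a product state of the composite is then $\bigotimes_j \ket{\lambda_j}$ with $\ket{\lambda_j} \in \mathh_{A_j} \otimes \mathh_{B_j}$ possibly \emph{entangled} between the two factors, and it does not split as $\ket{\lambda_A} \otimes \ket{\lambda_B}$. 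Your claim that \quo{the maximal overlap factorises}, and hence that $\Eg ( \ket{\psi} \otimes \ket{\phi} ) = \Eg ( \ket{\psi} ) + \Eg ( \ket{\phi} )$ for every pure tensor, is therefore false; only the inequality $\leq$ survives, obtained by restricting the maximisation to unentangled local states. The same error sinks your pure-state superadditivity argument: were it valid, it would show that \emph{every} pure state is strongly additive, contradicting the non-additivity of the \permantisymm basis states and the remark in \sect{invariant_and_additivity} that beyond a certain amount of entanglement states can never be strongly additive \cite{Zhu10}; it would also render the positivity hypothesis of the lemma vacuous. (Your partial-trace observation that every pure-state decomposition of $\pure{\psi} \otimes \sigma$ consists of vectors $\ket{\psi} \otimes \ket{\phi_k}$ is correct, but it does not help once the evaluation of $\Eg$ on those vectors is wrong.)

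The genuinely hard step --- showing that for a \emph{positive} $\ket{\psi}$ the optimal product state of $\ket{\psi} \otimes \ket{\phi}$ can be chosen with unentangled local factors, so that $G ( \ket{\psi} \otimes \ket{\phi} ) = G ( \ket{\psi} ) \, G ( \ket{\phi} )$ --- is exactly what your argument skips, and it is where positivity must enter: in \cite{Zhu10} the overlap maximisation is reduced to the largest eigenvalue of an entrywise non-negative operator and a Perron--Frobenius-type argument shows the optimiser can be taken unentangled. Your third part, intended to cover mixed positive $\rho$, is a plan rather than a proof (\quo{the idea is to\ldots}, \quo{I expect\ldots}) and in any case builds on the invalid factorisation of the injective tensor norm. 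As it stands, the proposal establishes only subadditivity, which holds for all states and makes no use of positivity.
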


In general the three measures $\Eg$, $E_{\text{R}}$ and
$E_{\text{Rob}}$ are not additive in the multipartite scenario
(cf. \cite{Werner02,Vollbrecht01}), and for the \ac{GM} it was shown
that beyond a certain amount of entanglement (which is present in
almost all states) states can never be strongly additive \cite{Zhu10}.
\Lemref{positive_additive} is not in conflict with this, since we
already argued at the end of \sect{pos_results} that positive states
are in general considerably less entangled than generic states.  Chen
\etal \cite{Chen11} found that symmetric states whose \acp{MP} are all
distributed within some half sphere are additive with respect to the
\ac{GM}.  In particular, this implies the additivity of all two and
three qubit symmetric states.  For larger $n$, however, it is clear
from the Majorana representation and the spherical amplitude function
that states with such an imbalance in their \ac{MP} distribution
cannot have much geometric entanglement.  We can therefore conclude
that \emph{additivity of states under the \ac{GM} is a signature of
  low entanglement.}

What can be said about the additivity of symmetric states under the
relative entropy of entanglement $E_{\text{R}}$ and the logarithmic
robustness of entanglement $E_{\text{Rob}}$?  We will combine
\lemref{invariant_equivalent} and \lemref{positive_additive} to show
that many symmetric states of interest are additive under
$E_{\text{R}}$ and $E_{\text{Rob}}$ in a sense of additivity that is
stronger than regular additivity, but weaker than strong additivity.
For this we will use the quantity $\Egt = \Eg - S$ which was
introduced in \sect{gm_def} and which coincides with $\Eg$ for pure
states.

\begin{theorem}\label{pos_inv}
  Let $\rho = \pure{\psi^{\text{s}}}$ be a pure symmetric $n$ qubit
  state that is positive and totally invariant.  Then $\rho$ is
  strongly additive under $\Eg$ and additive under $E_{\text{R}}$ and
  $E_{\text{Rob}}$. Furthermore, for arbitrary states $\sigma$ the
  following holds
  \begin{subequations}\label{sesquiadditivity}
    \begin{align}
      \Egt ( \sigma )& = E_{\text{R}} ( \sigma )&
      {}& \Longrightarrow&
      E_{\text{R}} ( \rho \otimes \sigma )& =
      E_{\text{R}} ( \rho ) + E_{\text{R}} ( \sigma )
      \ens , \label{sesquiadditivity1} \\
      \Egt ( \sigma )& = E_{\text{Rob}} ( \sigma )&
      {}& \Longrightarrow&
      E_{\text{Rob}} ( \rho \otimes \sigma )& =
      E_{\text{Rob}} ( \rho ) + E_{\text{Rob}} ( \sigma )
      \ens . \label{sesquiadditivity2}
    \end{align}
  \end{subequations}
\end{theorem}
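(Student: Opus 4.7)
The plan is to bootstrap the strong additivity of $\Eg$ for positive states (\lemref{positive_additive}) using the coincidence $\Eg(\rho) = E_{\text{R}}(\rho) = E_{\text{Rob}}(\rho)$ for totally invariant pure states (\lemref{invariant_equivalent}), interpolated through the chain $\Egt \leq E_{\text{R}} \leq E_{\text{Rob}}$ from \eq{meas_ineq}. Strong additivity of $\rho$ under $\Eg$ is immediate from \lemref{positive_additive}, since $\rho$ is positive by hypothesis; and $\Eg(\rho) = E_{\text{R}}(\rho) = E_{\text{Rob}}(\rho)$ is immediate from \lemref{invariant_equivalent}.

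For the sesquiadditivity, I would first compute $\Egt(\rho \otimes \sigma)$ explicitly. Purity of $\rho$ gives $S(\rho) = 0$, and additivity of the von Neumann entropy on tensor products yields $S(\rho \otimes \sigma) = S(\sigma)$. Combined with strong additivity of $\Eg$, this gives
\begin{equation*}
\Egt(\rho \otimes \sigma) = \Eg(\rho) + \Eg(\sigma) - S(\sigma) = \Eg(\rho) + \Egt(\sigma) = E_{\text{R}}(\rho) + \Egt(\sigma),
\end{equation*}
where the last equality uses \lemref{invariant_equivalent}. Applying \eq{meas_ineq} to $\rho \otimes \sigma$ together with the standard subadditivity of $E_{\text{R}}$ on tensor products then produces the sandwich
\begin{equation*}
E_{\text{R}}(\rho) + \Egt(\sigma) \;=\; \Egt(\rho \otimes \sigma) \;\leq\; E_{\text{R}}(\rho \otimes \sigma) \;\leq\; E_{\text{R}}(\rho) + E_{\text{R}}(\sigma).
\end{equation*}
Under the hypothesis $\Egt(\sigma) = E_{\text{R}}(\sigma)$ the outer terms coincide, forcing equality throughout and proving \eqref{sesquiadditivity1}. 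The identical argument with $E_{\text{Rob}}$ in place of $E_{\text{R}}$ (and its subadditivity) establishes \eqref{sesquiadditivity2}.

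To deduce plain additivity $E_{\text{R}}(\rho^{\otimes m}) = m\, E_{\text{R}}(\rho)$ (and the analogue for $E_{\text{Rob}}$) I would finally proceed by induction on $m$. In the inductive step the premise $\Egt(\rho^{\otimes m-1}) = E_{\text{R}}(\rho^{\otimes m-1})$ needed to apply \eqref{sesquiadditivity1} is obtained by combining purity of $\rho^{\otimes m-1}$ (which gives $\Egt = \Eg$), strong additivity of $\Eg$ (so $\Eg(\rho^{\otimes m-1}) = (m-1)\Eg(\rho) = (m-1)E_{\text{R}}(\rho)$), and the inductive hypothesis. The only mildly delicate point in the whole scheme is justifying the subadditivity of $E_{\text{R}}$ and $E_{\text{Rob}}$ on tensor products — both are standard (for relative entropy, tensor a pair of nearest separable states; for robustness, tensor the optimal mixing states) — so I do not anticipate a substantive obstacle beyond citing these facts. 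The conceptual heart of the proof is simply that $\rho$ saturates both inequalities in \eq{meas_ineq} while carrying the strong-additivity property of $\Eg$ across the tensor product.
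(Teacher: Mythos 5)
Your proposal is correct and follows essentially the same route as the paper's proof: strong additivity under $\Eg$ from \lemref{positive_additive}, and the same sandwich $\Eg(\rho)+\Egt(\sigma)=\Egt(\rho\otimes\sigma)\leq E_{\text{x}}(\rho\otimes\sigma)\leq E_{\text{x}}(\rho)+E_{\text{x}}(\sigma)=\Eg(\rho)+\Egt(\sigma)$ combining \lemref{positive_additive}, \eq{meas_ineq}, subadditivity and \lemref{invariant_equivalent}. Your only addition is spelling out the induction on $m$ for $E_{\text{R}}(\rho^{\otimes m})=mE_{\text{R}}(\rho)$, which the paper compresses into "set $\sigma:=\rho$"; your version is marginally more careful on that point but not substantively different.
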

\begin{proof}
  It is known \cite{Vedral98} or obvious that the three measures
  $\Eg$, $E_{\text{R}}$ and $E_{\text{Rob}}$ are subadditive, i.e. $E
  ( \rho \otimes \sigma ) \leq E ( \rho ) + E ( \sigma )$ for
  arbitrary $\rho$ and $\sigma$, and that the von Neumann entropy is
  strongly additive, i.e. $S ( \rho \otimes \sigma ) = S ( \rho ) + S
  ( \sigma )$ for arbitrary $\rho$ and $\sigma$.  Now let $\rho$ be a
  pure symmetric $n$ qubit state that is positive and totally
  invariant, and let $\sigma$ be an arbitrary state for which $\Egt (
  \sigma ) = E_{\text{x}} ( \sigma )$ holds, where $E_{\text{x}}$ can
  be either $E_{\text{R}}$ or $E_{\text{Rob}}$.  Then
  \begin{equation*}
    \Eg ( \rho ) + \Egt ( \sigma ) 
    \stackrel{\eqref{pos_additive}}{=}
    \Egt ( \rho \otimes \sigma )
    \stackrel{\eqref{meas_ineq}}{\leq}
    E_{\text{x}} ( \rho \otimes \sigma ) \leq
    E_{\text{x}} ( \rho ) + E_{\text{x}} ( \sigma )
    \stackrel{\eqref{invariant_equiv}}{=}
    \Eg ( \rho ) + \Egt ( \sigma ) \ens ,
  \end{equation*}
  which implies that $E_{\text{x}} ( \rho \otimes \sigma ) =
  E_{\text{x}} ( \rho ) + E_{\text{x}} ( \sigma )$.  The strong
  additivity of $\rho$ under $\Eg$ is clear from
  \lemref{positive_additive}, and the additivity of $\rho$ under
  $E_{\text{R}}$ or $E_{\text{Rob}}$ follows as a special case from
  the previous equation by setting $\sigma := \rho$.
\end{proof}
The main result of this theorem is that a considerable amount of
symmetric states is additive under $E_{\text{R}}$ and
$E_{\text{Rob}}$.  Trivial examples are the Dicke states $\sym{n,k}$
and the \ac{GHZ} states $\frac{1}{\sqrt{2}} ( \sym{n,0} + \sym{n,n}
)$, which are positive and totally invariant, thus satisfying the
conditions of \theoref{pos_inv}. In \chap{solutions} it will be seen
that for systems with a low number of qubits many highly or maximally
entangled symmetric states are positive as well as totally invariant.
This automatically results in the interesting property that these
states are additive and equivalent under the three distance-like
entanglement measures.

The strong additivity under $E_{\text{R}}$ and $E_{\text{Rob}}$ could
not be proven, but the statements \eqref{sesquiadditivity1} and
\eqref{sesquiadditivity2} represent a considerable extension of the
regular additivity.  The necessary condition for this is automatically
fulfilled by states that fulfil \eq{meas_eq}, in particular stabiliser
states, Dicke states, \permantisymm basis states
\cite{Hayashi06,Hayashi08,Markham07} and totally invariant symmetric
states \cite{Markham11}.  In this way we were able to extend the set
of symmetric states that is known to be additive under $E_{\text{R}}$
and $E_{\text{Rob}}$.  It still remains an open question, however,
whether arbitrary symmetric states are additive or even strongly
additive under the entanglement measures discussed here. This is one
of the open questions put forward in the conclusion of \cite{Zhu10},
and to date no counterexamples for the additivity of symmetric states
are known.

Finally, we remark that \theoref{pos_inv} could also have been
formulated by omitting the requirement of positivity, and instead
requiring that the \acp{MP} of $\rho$ are all confined to some
half-sphere on the Majorana sphere, including the bordering great
circle. This property would then guarantee the additivity of $\rho$
under \ac{GM} \cite{Chen11} required to prove all implications of the
theorem (except the strong additivity under \ac{GM}).  However, it is
clear that the only totally invariant symmetric states whose \acp{MP}
occupy at most half of the Majorana sphere are the Dicke states and
the \ac{GHZ} states.  Since these states are positive, they are
already accounted for in the given formulation of \theoref{pos_inv},
thus making the alternative formulation redundant.

\section{Extremal point distributions}\label{extremal_point}

For symmetric states the injective tensor norm appearing in the
definition \eqref{geo_def} of the \ac{GM} can be concisely expressed
in terms of the \acp{MP} and one \ac{CPS} $\ket{\Lambda} =
\ket{\sigma}^{\otimes n}$:
\begin{equation}\label{bloch_product}
  \abs{\bracket{\psi^{\text{s}}}{\Lambda}} =
  \frac{n!}{\sqrt{K}} \prod_{i=1}^{n} \,
  \abs{\bracket{\phi_{i}}{\sigma}} \ens .
\end{equation}
This is precisely the global maximum of the spherical amplitude
function \eqref{spher_amp_funct}.  Therefore, to determine the
\ac{CPP} of a given symmetric state, the absolute value of a product
of scalar products has to be maximised. From a geometrical point of
view, the factors $\bracket{\phi_{i}}{\sigma}$ are the angles between
the corresponding Bloch vectors on the Majorana sphere, and thus the
determination of the \ac{CPP} can be viewed as an optimisation problem
for a product of geometrical angles.

From a comparison with the min-max problem \eqref{minmax} of the
general case it is clear that the task of finding the maximally
entangled symmetric states can be concisely formulated as the
geometrical optimisation problem
\begin{equation}\label{maj_problem}
  \min_{ \{ \ket{\phi_{i}} \} } \frac{1}{\sqrt{K}}
  \left( \max_{ \ket{\sigma} } \prod_{i=1}^{n}  \,
    \abs{ \bracket{\phi_{i}}{\sigma}} \right) \ens .
\end{equation}
In other words, the maximum value of the spherical amplitude function
must be as small as possible.  This \textbf{Majorana problem} bears
all the properties of an optimisation problem on the surface of a
sphere in $\mbbrr$.  These kinds of problems deal with arrangements of
a finite number of points on a sphere so that an extremal property is
fulfilled \cite{Whyte52}.  There are infinite possibilities to define
such optimisation problems, but two particularly well-known problems
that have been extensively studied in the past are the following:

\textbf{\toths problem,} also known as Fejes' problem and Tammes'
problem, asks how $n$ points have to be distributed on the unit sphere
so that the minimum distance of all pairs of points becomes maximal
\cite{Whyte52}. This problem was first raised by the biologist Tammes
in 1930 while trying to explain the observed distribution of pores on
pollen grains \cite{Tammes30}. Recasting the $n$ points as unit
vectors $\bmr{r}_{i} \in \mbbrr$, the following cost function needs to
be maximised:
\begin{equation}
  f_{\text{\tothx}} ( \bmr{r}_{1} , \bmr{r}_{2} , \dots , \bmr{r}_{n} )
  = \min_{i < j} \, \abs{ \bmr{r}_{i} - \bmr{r}_{j} } \ens .
\end{equation}
The point configurations that solve this problem are called spherical
codes or sphere packings \cite{WeissteinSphere}.  The latter term
refers to the equivalent problem of placing $n$ identical spheres of
maximal possible radius around a central unit sphere, touching the
unit sphere at the points that solve \toths problem.

\textbf{Thomson's problem,} also known as Coulomb problem, asks how
$n$ point charges which are confined to the surface of a sphere can be
distributed so that the potential energy is minimised.  The charges
interact with each other only through Coulomb's inverse square
law. Devised by J. J. Thomson in 1904, this problem raises the
question about the stable patterns of up to 100 electrons on a
spherical surface \cite{Thomson04}.  Its cost function is given by the
Coulomb energy and needs to be minimised.
\begin{equation}\label{thomson_def}
  f_{\text{Thomson}} ( \bmr{r}_{1} , \bmr{r}_{2} , \dots ,
  \bmr{r}_{n} ) = \sum_{i < j} \, \frac{1}{\abs{\bmr{r}_{i} -
      \bmr{r}_{j}}} \ens .
\end{equation}
The original motivation for Thomson's problem was to determine the
stable electron distribution of atoms in the plum pudding model.
While this model has been superseded by modern quantum theory, there
is a wide array of novel applications for Thomson's problem or its
generalisation to similar interaction potentials.  Among these are
multi-electron bubbles in liquid $^4$He \cite{Albrecht87,Leiderer95},
surface ordering of liquid metal drops confined in Paul traps
\cite{Davis97}, the shell structure of spherical viruses
\cite{Marzec93}, \quo{colloidosomes} for encapsulating biochemically
active substances \cite{Dinsmore02}, fullerene patterns of carbon
atoms \cite{Kroto85} and the Abrikosov lattice of vortices in
superconducting metal shells \cite{Dodgson97}.

It should be noted that, to some extent, the definition of Thomson's
problem runs contrary to classical electrical theory, because
Earnshaw's theorem rules out the existence of stable equilibrium
configurations of a collection of discrete charges under the influence
of the electric force alone \cite{Aspden87}. For example, if one were
to place $n$ negative charges $-q$ around a central positive charge
$+nq$, then this configuration would quickly collapse instead of
assuming a solution of Thomson's problem.  This explains why the
definition of Thomson's problem requires the rather mathematical
assumption of the point charges being confined to the surface of a
sphere.  The existence of physical appearances of stable electron
patterns in liquid Helium \cite{Albrecht87,Leiderer95} can be readily
explained by the surface tension of the macroscopic drops which
exhibit a positive mirror charge on their surface, in conjunction with
the quantum-mechanical Pauli principle \cite{Albrecht87}.  The latter
prevents the electrons from falling back into the liquid Helium,
thereby turning them into a 2D electron gas described by a 1D
hydrogenic spectrum \cite{Platzman99}. In this sense, the macroscopic
system provides the electrons with a restriction to a spherical
surface, akin to the mathematical definition of Thomson's problem.

The definitions of \toths problem and Thomson's problem are clearly
different from each other, but they share the same solutions for $n =
2-6,12$. Leech \cite{Leech57} showed that for these numbers the
equilibrium distributions of Thomson's problem are invariant under
replacing Coulomb's $r^{-2}$ law by the limiting form $r^{-l}, l
\rightarrow \infty$, and this \quo{infinitely repulsive interaction}
gives rise to the solutions of T\'{o}th's problem.  Exact solutions to
T\'{o}th's problem are known for $n_{\text{To}} = 2-12,24$, and
therefore the exact solutions to Thomson's problem for $n_{\text{Th}}
= 2-6,12$ are automatically derived this way \cite{Erber91}. Exact
solutions to Thomson's problem are furthermore known for
$n_{\text{Th}} = 7,8$, but even for numbers as small as $9$ and $11$,
exact solutions remain elusive \cite{Whyte52}.  With the help of
numerics, however, putative and approximate solutions have been found
for a wide range of $n$ in both problems
\cite{Ashby86,Altschuler94,Sloane,Wales}.

The solutions to $n = 2, 3$ are trivial and given by the dipole and
equilateral triangle, respectively.  For $n = 4,6,8,12,20$ the
vertices of the highly symmetric Platonic solids -- the five regular
convex polyhedra whose edges, vertices and angles are all congruent --
are natural candidates, but, as seen in \fig{platonic}, they are the
actual solutions only for $n = 4,6,12$ \cite{Berezin85}.  For $n =
8,20$ the solutions are not Platonic solids and are different for the
two problems. The solutions for $n=4-12$ will be covered in more
detail alongside the Majorana problem in \chap{solutions}.

\begin{figure}
  \centering
  \begin{minipage}{135mm}
    \hfill
    \begin{overpic}[scale=1.27]{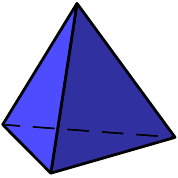}
      \put(-12,-3){(a)}
    \end{overpic}
    \hspace{0.5mm}
    \begin{overpic}[scale=1.26]{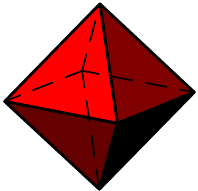}
      \put(-2,-3){(b)}
    \end{overpic}
    \hspace{1.5mm}
    \begin{overpic}[scale=1.15]{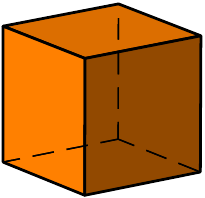}
      \put(-13,-3){(c)}
    \end{overpic}
    \hspace{1.5mm}
    \begin{overpic}[scale=1.15]{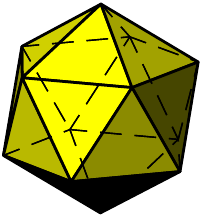}
      \put(-7,-3){(d)}
    \end{overpic}
    \hspace{1mm}
    \begin{overpic}[scale=1.21]{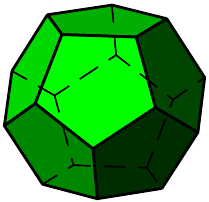}
      \put(-10,-3){(e)}
    \end{overpic}
  \end{minipage}
  \begin{minipage}{135mm}
    \centering
    \vspace{5mm}
    \begin{overpic}[scale=1.14]{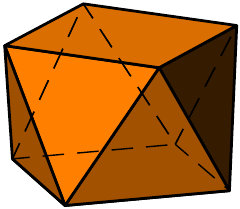}
      \put(-12,0){(f)}
    \end{overpic}
    \hspace{5mm}
    \begin{overpic}[scale=0.95]{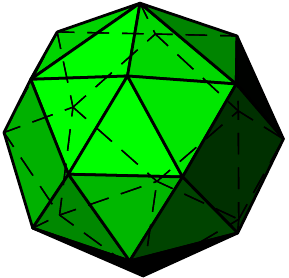}
      \put(-10,0){(g)}
    \end{overpic}
  \end{minipage}
  \caption[Platonic solids]{\label{platonic} Displayed from left to
    right in the top row are the five Platonic solids, the
    tetrahedron, octahedron, cube, icosahedron and dodecahedron. Their
    number of vertices is $n = 4,6,8,12$ and $20$.  The solutions to
    \toths and Thomson's problem are given by the Platonic solids only
    for $n=4,6$ and $12$.  For \toths problem the solutions for $n=8$
    and $n=20$ are shown in (f) and (g), respectively.  The polyhedron
    in (f) is the cubic antiprism, which is obtained from the regular
    cube by rotating one face by 45$^\circ$, followed by a slight
    compression along the direction perpendicular to the rotated face
    in order to return all edges to equal length.  The polyhedron
    shown in (g) consists of 30 triangles and 3 rhombuses.}
\end{figure}

The restriction of the points to the surface of the unit sphere, as
opposed to the interior of the sphere, is decisive for the solutions
of both problems.  In \toths problem it is clear that for larger $n$
the nearest-neighbour distances would be decreased by placing some
points inside the sphere.  For Thomson's problem this is not as
obvious, and several decades passed before it was realised that only
for $n < 12$ the electrons will all remain on the surface if given the
opportunity to occupy the interior of the sphere
\cite{Berezin85b,Berezin86}.

Both the classical problems and the Majorana problem are isotropic in
the sense that all directions in space are equal, and this makes it
reasonable to expect that the solutions exhibit certain symmetric
features.  For example, one could expect that the centre of mass of
the $n$ points always coincides with the sphere's middle point. This
is, however, not the case, as the solution to \toths problem for $n=7$
\cite{Erber91} or the solution to Thomson's problem for $n = 11$ shows
\cite{Erber91, Ashby86}. Furthermore, the solutions need not be
unique. For \toths problem, the first incident of this is $n = 5$
\cite{Melnyk77}, and for Thomson's problem at $n=15$ \cite{Erber91}
and $n = 16$ \cite{Ashby86}.  These aspects show that it is, in
general, hard to make statements about the form of the \quo{most
  spread out} point distributions on the sphere.  The Majorana problem
\eqref{maj_problem} is considered to be equally tricky, especially
because the normalisation factor $K$ depends on the \acp{MP}.

The Majorana problem shares a similarity with \toths problem in that
it is formulated as a min-max-problem, but a crucial difference is
that the positions of all $n$ points jointly influence the value of
the cost function.  In \toths problem the cost function only depends
on the smallest two-point distance, ignoring all other distances.  The
prefactor $K = K(\{ \ket{\phi_i} \}_{i = 1 \ldots n})$ depends on the
relative positions of the \acp{MP}, and from \eq{majorana_definition}
it is seen that $K$ increases with decreasing angles between the
individual Majorana points. Therefore, while the factor in brackets in
\eq{maj_problem} assumes small values for highly spread out \ac{MP}
distributions, the outer factor $\frac{1}{\sqrt{K}}$ will be
large. Conversely, when \acp{MP} move together, the factor in brackets
increases while the outer factor decreases.  This makes the solutions
of the Majorana problem highly nontrivial, and the solutions need not
be maximally spread out over the sphere in a conventional sense, as
the two coinciding \acp{MP} of the three qubit $\ket{\text{W}}$ state
demonstrate.

\section{Analytic results about MPs and CPPs}\label{analytic}

This section is mainly concerned with the interdependence between the
mathematical form of $n$ qubit symmetric states and their Majorana
representation.  For example, it is examined how the \acp{MP} and
\acp{CPP} are distributed for states whose coefficients are real,
positive or vanishing. In some of these cases the \acp{MP} and
\acp{CPP} form distinct patterns on the Majorana sphere that can be
described by symmetries.  In this context, care has to be taken as to
the meaning of the word \quo{symmetric}: Permutation-symmetric states
were introduced in \sect{symmetric_states}, and such states can be
visualised on the Majorana sphere.  Their \ac{MP} distributions may or
may not exhibit certain geometric symmetries in $\mbbrr$, such as
rotational and reflective symmetries.  For example, the \ac{GHZ}, W
and tetrahedron state of \fig{ghz_w_picture} and
\fig{tetrahedron_visual} all have a discrete or continuous rotational
symmetry around the $Z$-axis, as well as several reflective symmetries
along planes running through the origin of the sphere, e.g. the
$X$-$Z$-plane.

\subsection{Generalised Majorana representation}\label{general_maj_rep}

In the following a generalised version of the Majorana representation
\eqref{majorana_definition} will be derived which will prove helpful
e.g. for the analysis of real and positive states.  The property of a
symmetric state to be real or positive can often be inferred from its
\ac{MP} distribution.  As an example, the tetrahedron state
$\ket{\Psi_{4}} = \sqrt{\tfra{1}{3}} \sym{0} + \sqrt{\tfra{2}{3}}
\sym{3}$ is positive, even though its \acp{MP} are not all
positive. The first \ac{MP} $\ket{\alpha} := \ket{\phi_{1}} = \ket{0}$
is a positive qubit state, and a permutation of the remaining \acp{MP}
according to \eq{majorana_definition} yields a positive \ac{GHZ}-type
three qubit symmetric state $\ket{\beta} := \sum_{\text{perm}}
\ket{\phi_{2}} \ket{\phi_{3}} \ket{\phi_{4}} = \tfra{2}{\sqrt{3}}
\ket{000} + \tfra{4 \sqrt{2}}{\sqrt{3}} \ket{111}$. The tetrahedron
state can be reconstructed, up to normalisation, from all the
permutations of $\ket{\alpha}$ and $\ket{\beta}$ over the bipartitions
of the physical qubits into two subsets with one and three qubits,
respectively:
\begin{equation}
  \ket{\Psi_{4}} \, \propto \,
  \ket{\alpha}_{1} \ket{\beta}_{234} +
  \ket{\alpha}_{2} \ket{\beta}_{134} +
  \ket{\alpha}_{3} \ket{\beta}_{124} +
  \ket{\alpha}_{4} \ket{\beta}_{123} \ens .
\end{equation}
In the following this idea is formalised to arbitrary states and
arbitrary partitions.  It should be remembered that the \acp{MP}
representing a symmetric $n$ qubit state are abstract entities rather
than physical parts of the underlying system, and therefore partitions
of the set of \acp{MP} are fundamentally different from partitions of
the system's qubits.  Partitions of the \acp{MP} will be denoted by
$\mathcal{S} = \{ \mathcal{S}_{1}, \ldots , \mathcal{S}_{k} \}$ with
$\mathcal{S}_{i} = \{ \ket{\phi_{i_{1}}} , \ldots ,
\ket{\phi_{i_{m_{i}}}} \}$ for $i = 1 , \ldots , k$, and
$\sum_{i=1}^{k} m_{i} = n$.  Partitions of the physical qubits of the
system will be denoted by $\mathcal{P} = \{ \mathcal{P}_{1}, \ldots ,
\mathcal{P}_{l} \}$ with $\mathcal{P}_{i} = \{ i_{1} , \ldots ,
i_{r_{i}} \}$ for $i = 1 , \ldots , l$, and $\sum_{i=1}^{l} r_{i} =
n$.  The notation $\ket{\phi}_{i_{x}}$ is used to describe a single
qubit state of particle $i_{x}$, and $\ket{\psi}_{\mathcal{P}_{i}}$ is
used to describe an $r_{i}$-qubit state over the particles $i_{1} ,
i_{2} , \ldots , i_{r_{i}}$.
\begin{theorem}\label{theo_general_maj_rep}
  Let $\psis$ be a symmetric state of $n$ qubits with \acp{MP}
  $\ket{\phi_{1}}, \ldots ,\ket{\phi_{n}}$, and let $\mathcal{S} = \{
  \mathcal{S}_{1}, \ldots , \mathcal{S}_{k} \}$ be a partition of the
  \acp{MP}.  $\psis$ can then be written, up to a prefactor, as
  \begin{equation}\label{general_maj_rep_eq}
    \psis \propto \sum_{\text{partitions}}^{ \{
      \mathcal{P}_{1}^{j}, \ldots , \mathcal{P}_{k}^{j} \} }
    \ket{\psi_{\mathcal{S}_{1}}}_{\mathcal{P}_{1}^{j}} \otimes \cdots
    \otimes \ket{\psi_{\mathcal{S}_{k}}}_{\mathcal{P}_{k}^{j}} \ens ,
  \end{equation}
  where the $m_{i}$-qubit symmetric states
  $\ket{\psi_{\mathcal{S}_{i}}} := \sum_{\text{perm}}
  \ket{\phi_{i_{1}}} \cdots \ket{\phi_{i_{m_{i}}}}$ are composed from
  $\mathcal{S}_{i} = \{ \ket{\phi_{i_{1}}} , \ldots ,
  \ket{\phi_{i_{m_{i}}}} \}$ via the Majorana representation
  \eqref{majorana_definition}, and where the sum runs over all the
  partitions $\mathcal{P}^{j} = \{ \mathcal{P}_{1}^{j}, \ldots ,
  \mathcal{P}_{k}^{j} \}$ that satisfy $\abs{\mathcal{P}_{i}^{j}} =
  \abs{\mathcal{S}_{i}}$ for all $i$.
\end{theorem}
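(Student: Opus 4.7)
The plan is to prove the identity by reorganising the sum over permutations in the standard Majorana representation \eqref{majorana_definition} according to the given partition $\mathcal{S}$ of the Majorana points, and then recognising each grouped subsum as the Majorana representation of a smaller symmetric state.

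First, I would write $\psis$ in its defining form $\psis \propto \sum_{P \in S_n} \ket{\phi_{P(1)}} \otimes \cdots \otimes \ket{\phi_{P(n)}}$, absorbing the global phase $\E^{\I \delta}$ and the normalisation $K^{-1/2}$ into the prefactor tolerated by the theorem. Next, I would label the \acp{MP} so that $\mathcal{S}_i = \{\ket{\phi_{i_{1}}},\ldots,\ket{\phi_{i_{m_i}}}\}$ consists of $m_i$ designated indices among $\{1,\ldots,n\}$, with $\sum_i m_i = n$.

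The main combinatorial step is to observe that every permutation $P \in S_n$ induces, and is uniquely specified by, the following data: (i) a partition $\mathcal{P}^j = \{\mathcal{P}_1^j,\ldots,\mathcal{P}_k^j\}$ of the physical qubit sites $\{1,\ldots,n\}$ with $|\mathcal{P}_i^j| = m_i = |\mathcal{S}_i|$, determined by $\mathcal{P}_i^j := P^{-1}(\{i_1,\ldots,i_{m_i}\})$, together with (ii) an ordering of the \acp{MP} of $\mathcal{S}_i$ across the qubits in $\mathcal{P}_i^j$ for each $i$. Grouping the sum over $P \in S_n$ according to this data yields
\begin{equation*}
  \sum_{P \in S_n} \ket{\phi_{P(1)}} \otimes \cdots \otimes \ket{\phi_{P(n)}}
  = \sum_{\{\mathcal{P}_1^j,\ldots,\mathcal{P}_k^j\}} \;
  \bigotimes_{i=1}^{k} \Biggl( \sum_{\pi_i} \ket{\phi_{i_{\pi_i(1)}}} \otimes \cdots \otimes \ket{\phi_{i_{\pi_i(m_i)}}} \Biggr)_{\!\!\mathcal{P}_i^j},
\end{equation*}
where each inner sum ranges over $\pi_i \in S_{m_i}$ and is supported on the qubits of $\mathcal{P}_i^j$. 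By \eqref{majorana_definition} the inner sum is exactly $\sqrt{K_i}$ times $\ket{\psi_{\mathcal{S}_i}}$ (up to a global phase), and these per-block constants, together with the overall $K^{-1/2}$, combine into a single scalar prefactor of $\psis$. This yields \eqref{general_maj_rep_eq}.

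The only subtlety, and the main obstacle to watch out for, is that when two blocks $\mathcal{S}_i$ and $\mathcal{S}_{i'}$ have the same size $m_i = m_{i'}$, the decomposition of $P$ in terms of $(\mathcal{P}^j,\{\pi_i\})$ is still unique because the blocks $\mathcal{S}_i$ are labelled, so no over- or under-counting arises; one must however be careful not to quotient out an additional $S_k$-factor when writing the outer sum. A second subtlety is that some of the \acp{MP} in a given $\mathcal{S}_i$ may coincide, but this only rescales the prefactor of $\ket{\psi_{\mathcal{S}_i}}$ and does not affect the structural identity \eqref{general_maj_rep_eq}. Tracking the precise prefactor is unnecessary since the theorem asserts the decomposition only up to an overall constant, which makes the combinatorial bookkeeping routine.
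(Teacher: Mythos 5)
Your proposal is correct and follows essentially the same route as the paper: both regroup the sum over $S_n$ in the Majorana representation by the induced partition of the physical qubit sites into blocks of sizes $\abs{\mathcal{S}_i}$, with within-block permutations reassembling each $\ket{\psi_{\mathcal{S}_i}}$ up to a constant. The only cosmetic difference is that you treat general $k$ in one step via the bijection $P \leftrightarrow (\mathcal{P}^j, \{\pi_i\})$, whereas the paper proves the bipartition case explicitly (checking the count $n! = \binom{n}{m}\, m!\,(n-m)!$) and appeals to induction for the rest.
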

\begin{proof}
  For simplicity, we only consider a bipartition $\mathcal{S} = \{
  \mathcal{S}_{1}, \mathcal{S}_{2} \}$ of the \acp{MP}, with
  $\mathcal{S}_{1} = \{ \ket{\phi_{1}} , \ldots , \ket{\phi_{m}} \}$
  and $\mathcal{S}_{2} = \{ \ket{\phi_{m+1}} , \ldots , \ket{\phi_{n}}
  \}$.  The general case directly follows from this by mathematical
  induction.  The bipartitions of the system's qubits are denoted by
  $\mathcal{P}^{j} = \{ \mathcal{P}_{1}^{j} , \mathcal{P}_{2}^{j} \}$,
  $j = 1, \ldots , \tbinom{n}{m}$, with $\mathcal{P}_{1}^{j} = \{
  j_{1} , \ldots , j_{m} \}$ and $\mathcal{P}_{2}^{j} = \{ j_{m+1} ,
  \ldots , j_{n} \}$. Note that the subsystems in a product state can
  be shuffled, e.g. $\ket{\alpha}_{1} \ket{\beta}_{2} \ket{\gamma}_{3}
  \equiv \ket{\beta}_{2} \ket{\gamma}_{3} \ket{\alpha}_{1}$.
  \begin{align*}
    \psis& \propto \sum_{\text{perm}}^{ \{ 1, \ldots , n \} }
    \ket{\phi_{P(1)}}_{1} \cdots \ket{\phi_{P(n)}}_{n} =
    \sum_{\text{perm}}^{ \{ 1, \ldots , n \} }
    \ket{\phi_{1}}_{P(1)} \cdots \ket{\phi_{n}}_{P(n)} \\
    {}& = \sum_{\text{perm}}^{ \{ 1, \ldots , n \} } \Big(
    \ket{\phi_{1}}_{P(1)} \cdots \ket{\phi_{m}}_{P(m)} \Big) \Big(
    \ket{\phi_{m+1}}_{P(m+1)} \cdots
    \ket{\phi_{n}}_{P(n)} \Big) \\
    {}& = \sum_{\text{bipartitions}}^{ \{ \mathcal{P}_{1}^{j} , \,
      \mathcal{P}_{2}^{j} \} } \Bigg( \sum_{\text{perm}}^{ \{
      j_{1}, \ldots , j_{m} \} } \ket{\phi_{1}}_{P(j_{1})}
    \cdots \ket{\phi_{m}}_{P(j_{m})} \Bigg) \Bigg(
    \sum_{\text{perm}}^{ \{ j_{m+1}, \ldots , j_{n} \} }
    \ket{\phi_{m+1}}_{P(j_{m+1})} \cdots
    \ket{\phi_{n}}_{P(j_{n})} \Bigg) \\
    {}& = \sum_{\text{bipartitions}}^{ \{ \mathcal{P}_{1}^{j} , \,
      \mathcal{P}_{2}^{j} \} }
    \ket{\psi_{\mathcal{S}_{1}}}_{\mathcal{P}_{1}^{j}}
    \ket{\psi_{\mathcal{S}_{2}}}_{\mathcal{P}_{2}^{j}} \ens .
  \end{align*}
  From the identity $n! = \binom{n}{m} m!  (n-m)!$ it can be verified
  that the second and third line contain the same number of summands.
\end{proof}

\Eq{general_maj_rep_eq} can be understood as a \textbf{generalised
  Majorana representation} for arbitrary partitions, which contains
the regular Majorana representation as the special case $\mathcal{S} =
\{ \mathcal{S}_{1}, \ldots , \mathcal{S}_{n} \}$, with
$\mathcal{S}_{i} = \{ \ket{\phi_{i}} \}$ for all $i$.

The following corollary asserts that the number of \acp{MP} lying on
either pole of the Majorana sphere is immediately given by the
smallest and largest nonvanishing coefficient of a symmetric state.

\begin{corollary}\label{mp_poles}
  Let $\psis = \sum\limits_{m=0}^{n} a_{m} \sym{m}$ be a symmetric
  state of $n$ qubits.
  \begin{itemize}
  \item $l = \, \min \{ m \vert \, a_{m} \neq 0 \}$ \quad $\,\,
    \Longleftrightarrow$ \quad $l$ \acp{MP} lie on the south pole
    $\ket{1}$.
    
  \item $k = \max \{ m \vert \, a_{m} \neq 0 \}$ \quad
    $\Longleftrightarrow$ \quad $n-k$ \acp{MP} lie on the north pole
    $\ket{0}$.
  \end{itemize}
\end{corollary}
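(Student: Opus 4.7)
The plan is to reduce both statements to facts about the Majorana polynomial $\psi(z) = \sum_{k=0}^{n} (-1)^{k} \binom{n}{k}^{1/2} a_{k} z^{k}$ introduced in \sect{stereo_proj}. Recall from that section that the \acp{MP} correspond bijectively to the $n$ roots of $\psi(z)$ via the (inverse) stereographic projection, with the convention that the north pole $\ket{0}$ is identified with the \quo{point at infinity} and the south pole $\ket{1}$ with $z=0$. Thus proving each equivalence amounts to identifying how many roots of $\psi$ sit at these two distinguished points of $\cext$.

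For the south pole claim, note that $z=0$ is a root of $\psi(z)$ of multiplicity exactly $l$ \ac{iff} $a_{0} = a_{1} = \cdots = a_{l-1} = 0$ and $a_{l} \neq 0$, which is precisely the statement $l = \min\{m \, \vert \, a_{m} \neq 0\}$. Since multiplicity of $z=0$ as a Majorana root equals the number of \acp{MP} at $\ket{1}$, this gives the equivalence. For the north pole claim, the degree of $\psi$ equals $k = \max\{m \, \vert \, a_{m} \neq 0\}$, so $\psi$ has $k$ finite roots and, in the factorization $A \prod (z - z_{k})$ of \eq{majpoly} which truncates its product at $n-r$ when $r$ \acp{MP} lie at $\ket{0}$, the missing $n - k$ roots are precisely those pushed to infinity, i.e.\ the \acp{MP} located at the north pole.

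As a sanity check and alternative route, the same result can be obtained directly from the generalised Majorana representation of \theoref{theo_general_maj_rep}: given a bipartition $\mathcal{S} = \{\mathcal{S}_{1}, \mathcal{S}_{2}\}$ of the \acp{MP} with $\mathcal{S}_{1}$ consisting of all $r$ \acp{MP} at $\ket{1}$, one has $\ket{\psi_{\mathcal{S}_{1}}} \propto \ket{1}^{\otimes r}$, so \eq{general_maj_rep_eq} expresses $\psis$ as a linear combination of tensors whose first factor is supported entirely on the $\sym{r,r}$ component of the $r$-qubit symmetric space. Expanding in Dicke states of the full system, this forces $a_{m} = 0$ for all $m < r$, and since none of the remaining \acp{MP} in $\mathcal{S}_{2}$ equals $\ket{1}$, the state $\ket{\psi_{\mathcal{S}_{2}}}$ has a nonzero $\sym{n-r, 0}$ component, making $a_{r} \neq 0$. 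The argument for $\ket{0}$ is the mirror image, replacing $\ket{1}^{\otimes r}$ by $\ket{0}^{\otimes (n-k)}$ and $\sym{r,r}$ by $\sym{n-k, 0}$.

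The only delicate point is justifying both directions of the \quo{$\Longleftrightarrow$}. The direction (pole counts $\Rightarrow$ coefficient condition) is immediate from either argument above. The converse (coefficient condition $\Rightarrow$ pole counts) relies on the uniqueness of the Majorana representation up to a global phase, which guarantees that the multiset of \acp{MP} is unambiguously determined by $\{a_{m}\}$; combined with the polynomial-root accounting, this yields the exact counts $l$ and $n-k$. I expect no substantial obstacle beyond being careful with this uniqueness argument and with the stereographic convention used for projecting the root at infinity to the north pole.
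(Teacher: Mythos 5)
Your proof is correct, and your primary route is genuinely different from the one the thesis uses. The thesis proves this corollary as an application of \theoref{theo_general_maj_rep}: assuming $l$ \acp{MP} sit at $\ket{1}$, it factors $\psis \propto \sum_{\text{partitions}} \ket{1}^{\otimes l}\ket{\varphi}$ with $\ket{\varphi}$ an $(n-l)$-qubit symmetric state all of whose \acp{MP} have nonvanishing $\ket{0}$-components, reads off that the $\sym{0}$-coefficient of $\ket{\varphi}$ is nonzero and hence $\min \{ m \vert \, a_{m} \neq 0 \} = l$, and then gets the converse by contraposition ($r \neq l$ \acp{MP} on the pole would force the minimum to be $r \neq l$); the north-pole statement is handled symmetrically. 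Your main argument instead works entirely with the Majorana polynomial of \eq{majpoly}: since the binomial prefactors never vanish, the multiplicity of the root $z=0$ is identically $\min \{ m \vert \, a_{m} \neq 0 \}$ and the degree is $\max \{ m \vert \, a_{m} \neq 0 \}$, and the stereographic convention of \sect{stereo_proj} translates these into \ac{MP} counts at $\ket{1}$ and $\ket{0}$ respectively. This is arguably cleaner: both directions of the equivalence come for free because root multiplicity and polynomial degree are identities in the coefficients, so no separate contrapositive step is needed --- the only external input is the bijection \eqref{isomrophisms} between \acp{MP} and Majorana roots. What the thesis's route buys instead is independence from the polynomial picture (it stays within the tensor-product formalism of \eq{majorana_definition}) and a demonstration of the generalised Majorana representation in action; your \quo{sanity check} via \theoref{theo_general_maj_rep} is essentially that proof, so you have in effect supplied both.
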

\begin{proof}
  Assume that $l$ \acp{MP} of $\psis$ lie on the south pole.  From
  \theoref{theo_general_maj_rep} it follows that one can write $\psis
  \propto \sum_{\text{partitions}} \ket{1}^{\otimes \, l}
  \ket{\varphi}$, where $\ket{\varphi} = \sum_{i=0}^{n-l} b_{i}
  \sym{i}$ is an ($n-l$)-qubit symmetric state.  Since the \acp{MP} of
  $\ket{\varphi}$ all have nonvanishing $\ket{0}$-components, it
  follows that $b_{0} \neq 0$, and therefore $\min \{ m \vert \, a_{m}
  \neq 0 \} = l$.  The converse statement\footnote{In terms of logic
    the statement \protect\quo{$l = \min \{ m \vert a_{m} \neq 0 \} \:
      \Longrightarrow \: l \text{ MPs are } \ket{1}$} is equivalent to
    the statement \quo{$r \neq l \text{ MPs are } \ket{1} \:
      \Longrightarrow \: l \neq \min \{ m \vert a_{m} \neq 0 \} $}.}
  follows by assuming that the number of \acp{MP} lying on the south
  pole is $r \neq l$, leading to $l \neq r = \min \{ m \vert \, a_{m}
  \neq 0 \}$.
  
  The statement about \acp{MP} on the north pole follows by the same
  arguments.
\end{proof}

This corollary is easily verified by examples such as
$\ket{\text{GHZ}_{3}} = \tfra{1}{\sqrt{2}} ( \sym{0} + \sym{3} )$
which has no \acp{MP} on the poles, or $\ket{\text{W}_{3}} = \sym{1}$
which has two \acp{MP} on the north pole and one \ac{MP} on the south
pole.

Rotational symmetries appear frequently in the Majorana
representations of symmetric states, and by means of the
\ac{LU}-equivalence mediated by the symmetric unitary operations
$U^{\text{s}} = U \otimes \cdots \otimes U$ in \eq{symmetric_lu} and
\eqref{lu_rotation}, it suffices to investigate only rotations around
the $Z$-axis of the Majorana sphere.  These are of a particularly
simple mathematical form, with the single-qubit rotation $\rotz$ of
\eqref{z_rotationmatrix} generalising to $Z$-axis rotations of a
symmetric $n$ qubit state as $\rotzs := \text{R}_{\text{z}}^{\otimes
  n}$.  The effect of $\rotzs$ on $\psis = \sum_{k = 0}^{n} a_k
\sym{k}$ is then
\begin{equation}\label{rot_z}
  \rotzs (\varphi ) \psis = \sum_{k = 0}^{n}
  a_k \E^{ \I k \varphi } \sym{k} \ens .
\end{equation}
$\psis$ is rotationally symmetric around the $Z$-axis \ac{iff} $\rotzs
(\varphi ) \psis \propto \psis$ for some $0 < \varphi < 2 \pi$.  In
the case of a discrete rotational symmetry the possible rotational
angles are (up to multiples) restricted to $\varphi = \tfra{2
  \pi}{m}$, with $m \in \mbbn$, $1<m \leq n$.  From \eq{rot_z} it is
then easy to determine the necessary and sufficient conditions for a
rotational $Z$-axis symmetry of the \acp{MP}.

\begin{lemma}\label{rot_symm}
  The \acp{MP} of a symmetric $n$ qubit state $\psis$ have a
  rotational $Z$-axis symmetry with rotational angle $\varphi =
  \tfra{2 \pi}{m}$ ($1 < m \leq n$) \ac{iff}
  \begin{equation}\label{rot_cond}
    \exists \, 0 \leq l < m : \quad
    \psis = \sum_{j=0}^{\lfloor \frac{n-l}{m} \rfloor}
    a_{l + j m} \sym{l + j m} \ens .
  \end{equation}
\end{lemma}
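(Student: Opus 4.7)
The plan is to translate the geometric statement about MPs into an algebraic condition on the coefficients of $\psis$ in the Dicke basis, and then read off the stated form directly.

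First I would establish the bridge between MP symmetry and the state itself. By \eqref{symmetric_lu}--\eqref{lu_rotation}, applying the symmetric unitary $\rotzs(\varphi)$ jointly rotates every MP around the $Z$-axis by angle $\varphi$ on the Majorana sphere. Since the Majorana representation is unique up to a global phase, the rotated MP multiset coincides with the original one \ac{iff} $\rotzs(\varphi)\psis$ and $\psis$ describe the same state, i.e.
\[
    \rotzs(\varphi)\psis = \E^{\I \alpha} \psis
\]
for some $\alpha \in [0, 2\pi)$. This reduces the geometric symmetry condition to a purely algebraic one.

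Next I would invoke \eqref{rot_z}, which expands the left-hand side as $\sum_{k=0}^{n} a_k \E^{\I k \varphi}\sym{k}$. Equating this with $\E^{\I \alpha} \sum_{k=0}^{n} a_k \sym{k}$ and using the orthonormality of the Dicke states $\sym{k}$ yields, coefficient by coefficient,
\[
    a_k \bigl( \E^{\I k \varphi} - \E^{\I \alpha} \bigr) = 0
    \qquad \forall\, 0 \leq k \leq n \ens .
\]
Thus every $k$ for which $a_k \neq 0$ must satisfy $\E^{\I k \varphi} = \E^{\I \alpha}$.

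Finally I would specialise to $\varphi = \tfra{2\pi}{m}$. The condition $\E^{\I 2\pi k/m} = \E^{\I \alpha}$ forces $\alpha$ to equal $\tfra{2 \pi l}{m}$ for a unique integer $l$ with $0 \leq l < m$, and forces every $k$ with $a_k \neq 0$ to satisfy $k \equiv l \pmod{m}$. Writing such $k$ as $k = l + j m$ with $0 \leq j \leq \lfloor \tfra{n-l}{m} \rfloor$ yields exactly the expansion \eqref{rot_cond}. Conversely, any state of the form \eqref{rot_cond} clearly satisfies $\rotzs(\tfra{2\pi}{m})\psis = \E^{\I 2\pi l/m} \psis$, completing the equivalence. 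There is no real obstacle here; the only care needed is the initial identification of MP rotational invariance with the proportionality $\rotzs(\varphi)\psis \propto \psis$, which is a direct consequence of the uniqueness (up to global phase) of the Majorana representation.
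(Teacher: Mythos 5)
Your proposal is correct and follows essentially the same route as the paper's own proof: both reduce the geometric condition to $\rotzs(\varphi)\psis \propto \psis$, expand via \eqref{rot_z}, and conclude from orthonormality of the Dicke basis that $\E^{\I k \varphi}$ must take a common value on all indices with $a_k \neq 0$, which pins those indices to a single residue class $l$ modulo $m$. The only cosmetic difference is that you make explicit the appeal to uniqueness of the Majorana representation and the coefficient-wise identity $a_k(\E^{\I k\varphi} - \E^{\I\alpha}) = 0$, which the paper leaves implicit.
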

\begin{proof}
  Assume that $\psis$ can be written in the above form. Then
  \begin{multline*}
    \rotzs \big( \tfra{2 \pi}{m} \big) \psis =
    \sum_{j} a_{l + j m}
    \exp \left( \tfra{\I 2 \pi}{m} (l + j m) \right)
    \sym{l + j m} \\
    = \sum_{j} a_{l + j m}
    \exp \left( \tfra{\I 2 \pi l}{m} \right) \sym{l + j m} =
    \E^{\I \delta} \psis \ens ,
    \vspace{-1em}
  \end{multline*}
  with $\delta = \tfra{2 \pi l}{m}$, and therefore $\psis$ is
  rotationally symmetric around the $Z$-axis.
  
  Conversely, if $\psis = \sum_{k=0}^{n} a_{k} \sym{k}$ is
  rotationally symmetric, then $\rotzs ( \tfra{2 \pi}{m}) \psis =
  \sum_{k=0}^{n} a_{k} \exp \big( \tfra{\I 2 \pi k}{m} \big) \sym{k} =
  \E^{\I \delta} \psis$ for some $\delta \in \mbbr$. For this to hold,
  the value of $\exp \big( \frac{\I 2 \pi k}{m} \big)$ must be the
  same for all $k$ with $a_{k} \neq 0$, and because of this, the $k$
  can be cast as $k_{j} = l + j m$ with integers $0 \leq l < m$ and
  $0<j< \lfloor \frac{n-l}{m} \rfloor$.
\end{proof}

In other words, a sufficient number of coefficients need to vanish,
and the spacings between nonvanishing coefficients must be multiples
of $m$. For example, a symmetric state of the form $\psis = a_3
\sym{3} + a_7 \sym{7} + a_{15} \sym{15}$ is rotationally symmetric
with $\varphi = \tfra{\pi}{2}$, because the spacings between
nonvanishing coefficients are multiples of $4$.

We remark that \lemref{rot_symm} could also have been proved with the
generalised Majorana representation of
\theoref{theo_general_maj_rep}. The idea for this is that the discrete
rotational symmetry around the $Z$-axis necessitates that the \acp{MP}
that do not lie on the poles must be equidistantly spaced along
horizontal circles. Each such circle of \acp{MP} then represents a
\ac{GHZ}-type state $\ket{\psi_{\mathcal{S}_{i}}} = \alpha_{i} \ket{00
  \dots 0} + \beta_{i} \ket{11 \dots 1}$. Combining all these
$\ket{\psi_{\mathcal{S}_{i}}}$ via \eq{general_maj_rep_eq} then gives
rise to a state of the form \eqref{rot_cond}.

\subsection{Real symmetric states}\label{real_symm_states}

For symmetric states with real coefficients the following result is
immediately clear from \eq{general_maj_rep_eq}.
\begin{corollary}\label{real_gen_maj}
  If the $\ket{\psi_{\mathcal{S}_{i}}}$, $i = 1 , \ldots , k$ of the
  generalised Majorana representation \eqref{general_maj_rep_eq} are
  all real, then $\psis$ is also real.
\end{corollary}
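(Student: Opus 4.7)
The plan is to chain together two elementary observations: reality is preserved under tensor products and under finite sums of states expressed in a common computational basis, and the generalised Majorana representation of \theoref{theo_general_maj_rep} expresses $\psis$ as precisely such a sum up to a scalar prefactor.

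First I would fix the computational basis $\{ \ket{0}, \ket{1} \}^{\otimes n}$ of the underlying $n$-qubit system and recall that a state is called real precisely when all its coefficients in this basis are real. The assumption of the corollary is that, in this basis, each of the $k$ states $\ket{\psi_{\mathcal{S}_{i}}}$ (which lives on $m_{i}$ of the qubits) has real coefficients. For any bipartition of the summation we then note that the tensor product $\ket{\psi_{\mathcal{S}_{1}}}_{\mathcal{P}_{1}^{j}} \otimes \cdots \otimes \ket{\psi_{\mathcal{S}_{k}}}_{\mathcal{P}_{k}^{j}}$ is itself a real state of the full $n$-qubit system, since expanding each factor in its local computational basis and distributing the tensor product yields coefficients that are products of real numbers.

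Next I would invoke \eqref{general_maj_rep_eq} from \theoref{theo_general_maj_rep} to write $\psis \propto \sum_{j} \ket{\psi_{\mathcal{S}_{1}}}_{\mathcal{P}_{1}^{j}} \otimes \cdots \otimes \ket{\psi_{\mathcal{S}_{k}}}_{\mathcal{P}_{k}^{j}}$. Each summand is real by the previous paragraph, and a finite sum of states with real coefficients is again a state with real coefficients in the same basis. Hence the right-hand side defines a (generally unnormalised) real vector $\ket{R}$ in $\mathcal{H}$, and by the theorem $\psis = c \ket{R}$ for some nonzero complex scalar $c$. Since $\psis$ is only defined up to a global phase, we are free to choose this phase so that $c = 1 / \lVert \ket{R} \rVert > 0$ is a positive real number, and with that choice $\psis$ has real coefficients in the computational basis.

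The proof is essentially routine: there is no real obstacle beyond being careful that the prefactor in \eqref{general_maj_rep_eq} is a scalar and that reality is a basis-dependent notion, so one must work consistently in the same computational basis on both sides. The only point deserving explicit mention is that the global-phase freedom inherent in specifying a pure quantum state is what allows us to turn the proportionality into an equality with a positive real coefficient; without exploiting this freedom the corollary would have to be stated as \quo{$\psis$ is real up to a global phase}, which is of course the intended meaning.
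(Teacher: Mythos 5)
Your proposal is correct and is essentially the argument the paper has in mind: the paper states the corollary as ``immediately clear'' from \eq{general_maj_rep_eq}, the implicit reasoning being exactly that each summand is a tensor product of real states and hence real, that a finite sum of real states is real, and that the overall prefactor can be absorbed into the global-phase freedom. Your spelled-out version, including the remark that reality is basis-dependent and that the proportionality constant must be handled via the global phase, fills in precisely those implicit steps and nothing more.
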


Next it is shown that the \acp{MP} and \acp{CPP} of real states
exhibit a reflection symmetry with respect to the $X$-$Z$-plane which
cuts the Majorana sphere in half.  In mathematical terms, the
reflection of a Bloch vector $\ket{\phi} = \co_{\theta} \ket{0} +
\E^{\I \varphi} \si_{\theta} \ket{1}$ along the $X$-$Z$-plane is the
complex conjugate vector $\cc{\ket{\phi}} = \co_{\theta} \ket{0} +
\E^{- \I \varphi} \si_{\theta} \ket{1}$.

\begin{lemma}\label{maj_real}
  Let $\psis$ be a symmetric state of $n$ qubits.  $\psis$ is real
  \ac{iff} all its \acp{MP} are reflective symmetric with respect to
  the $X$-$Z$-plane of the Majorana sphere.
\end{lemma}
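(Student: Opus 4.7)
The plan is to route the statement through the Majorana polynomial $\psi(z)$ introduced in \eq{majpoly} and the stereographic projection reviewed in \sect{stereo_proj}. The starting observation is that the reflection through the $X$-$Z$-plane of the Majorana sphere acts on a Bloch vector $\co_{\theta} \ket{0} + \E^{\I \varphi} \si_{\theta} \ket{1}$ by sending $\varphi \mapsto -\varphi$ while leaving $\theta$ fixed. Under the parametrisation $z = \E^{\I \varphi} \tan \tfra{\theta}{2}$ employed by the Majorana polynomial this is precisely complex conjugation $z \mapsto \bar z$. Hence the \acp{MP} of $\psis$ form a multiset that is invariant under $X$-$Z$-reflection \ac{iff} the multiset of Majorana roots $\{ z_{1}, \ldots , z_{n} \}$ is invariant under complex conjugation.

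Next I would invoke the elementary fact from complex analysis that a monic polynomial $p(z) = \prod_{k}(z - z_{k})$ has the property $\{z_{k}\} = \{\bar z_{k}\}$ (as multisets) \ac{iff} all its coefficients are real (equivalently, $p$ factors into real linear and real quadratic factors over $\mbbr$). Applying this to the factorisation $\psi(z) = A \prod_{k=1}^{n}(z - z_{k})$ of \eq{majpoly}, reflective symmetry of the \acp{MP} is equivalent to the coefficients of $\psi(z)$ all being real multiples of a common complex number $A$. Since the coefficient of $z^{k}$ in $\psi(z)$ is $(-1)^{k} \binom{n}{k}^{1/2} a_{k}$ and the factor $(-1)^{k} \binom{n}{k}^{1/2}$ is real, this is the same as saying that every $a_{k}$ is a real multiple of $A$. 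Absorbing $A$ into the unphysical global phase of $\psis$ yields precisely the condition that $\psis$ is real in the Dicke basis, and conversely every real $\psis$ has real $a_{k}$'s and hence yields a $\psi(z)$ with real coefficients, forcing $\{z_{k}\} = \{\bar z_{k}\}$.

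The only bookkeeping obstacle is to account for \acp{MP} sitting at the north pole, which correspond to the \quo{point at infinity} and contribute no finite Majorana root. By \corref{mp_poles}, if $r$ \acp{MP} lie at $\ket{0}$ then $a_{n} = a_{n-1} = \cdots = a_{n-r+1} = 0$, so the sum in $\psi(z)$ truncates at degree $n-r$ and the $n-r$ finite roots correspond bijectively to the remaining $n-r$ \acp{MP}. Because the north pole is a fixed point of the $X$-$Z$-reflection, any number of \acp{MP} stacked there is automatically reflective symmetric and does not enter the polynomial argument, so the reduction above goes through verbatim. This is really the only delicate point; the substantive content of the lemma is just the reality of polynomial coefficients under conjugation of roots, transported through the stereographic dictionary established in \sect{stereo_proj}.
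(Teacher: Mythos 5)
Your proof is correct, but it takes a genuinely different route from the one in the thesis. There, the forward direction is argued directly from the uniqueness of the Majorana representation ($\psis = \cc{\psis}$ forces the multiset of \acp{MP} to equal its own complex conjugate), and the backward direction is constructive: the \acp{MP} are partitioned into real ones and complex conjugate pairs, each pair is assembled into a manifestly real two-qubit state $\propto \co_{\theta_{i}}^{2} \sym{0} + \sqrt{2} \co_{\theta_{i}} \si_{\theta_{i}} \cos \varphi_{i} \sym{1} + \si_{\theta_{i}}^{2} \sym{2}$, and \corref{real_gen_maj} (hence \theoref{theo_general_maj_rep}) is invoked to conclude that the composite state is real. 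You instead transport the whole statement through the Majorana polynomial and reduce both directions at once to the classical fact that a polynomial has real coefficients \ac{iff} its root multiset is closed under conjugation; this is arguably cleaner and more elementary, since it needs neither the generalised Majorana representation nor a case split, and your treatment of the north-pole degeneracy (a fixed point of the reflection, accounted for by the degree truncation of $\psi(z)$, consistent with \corref{mp_poles}) is exactly the right bookkeeping. What the thesis's route buys in exchange is a reusable construction: the same decomposition into conjugate pairs is what drives the subsequent analysis of positive states and their \ac{MP} circles in \sect{pos_symm_states}, whereas your argument is self-contained but terminal. One shared caveat, not a gap: both arguments really establish reality \emph{up to the unphysical global phase} (in your case the phase of the leading coefficient $A$, in the thesis's case the unspecified prefactor in \eq{general_maj_rep_eq}), which is the intended reading of the lemma.
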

\begin{proof}
  ($\Rightarrow$) Let $\psis$ be a real state.  Then $\psis =
  \cc{\psis}$, and since Majorana representations are unique up to a
  global phase, $\psis$ has the same \acp{MP} as $\cc{\psis}$.
  Therefore the complex conjugate $\cc{\ket{\phi_{i}}}$ of each
  \ac{MP} $\ket{\phi_{i}}$ is also an \ac{MP}.

  ($\Leftarrow$) Let the \acp{MP} of $\psis$ be symmetric with respect
  to the $X$-$Z$-plane. Then for every non-real \ac{MP}
  $\ket{\phi_{i}} = \co_{\theta_{i}} \ket{0} + \E^{\I \varphi_{i}}
  \si_{\theta_{i}} \ket{1}$ its complex conjugate
  $\cc{\ket{\phi_{i}}}$ is also an \ac{MP}.  Define a partition
  $\mathcal{S} = \{ \mathcal{S}_{1}, \ldots , \mathcal{S}_{k} \}$ of
  the \acp{MP} where $\mathcal{S}_{1}$ contains all the real \acp{MP}
  and the remaining $\mathcal{S}_{i}$ each contain a complex conjugate
  pair of \acp{MP}: $\mathcal{S}_{i} = \{ \ket{\phi_{i}} ,
  \cc{\ket{\phi_{i}}} \}$.  The two qubit states
  $\ket{\psi_{\mathcal{S}_{i}}} = \ket{\phi_{i}} \cc{\ket{\phi_{i}}} +
  \cc{\ket{\phi_{i}}} \ket{\phi_{i}} \propto \co_{\theta_{i}}^{2}
  \sym{0} + \sqrt{2} \co_{\theta_{i}} \si_{\theta_{i}} \cos
  \varphi_{i} \sym{1} + \si_{\theta_{i}}^{2} \sym{2}$ are all real,
  and from \corref{real_gen_maj} it follows that $\psis$ is real.
\end{proof}

\begin{corollary}\label{cpp_real}
  Let $\psis$ be a symmetric state of $n$ qubits. If $\psis$ is real,
  then all its \acp{CPP} are reflective symmetric with respect to the
  $X$-$Z$-plane of the Majorana sphere.
\end{corollary}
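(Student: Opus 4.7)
The plan is to reduce the symmetry statement about the CPPs to an equivariance statement about the spherical amplitude function $g(\theta,\varphi) = \abs{\bracket{\psi^{\text{s}}}{\sigma(\theta,\varphi)}^{\otimes n}}$, whose global maxima are precisely the CPPs. The reflection through the $X$-$Z$-plane corresponds at the single-qubit level to complex conjugation in the computational basis, that is, $\ket{\sigma(\theta,\varphi)} \mapsto \cc{\ket{\sigma(\theta,\varphi)}} = \ket{\sigma(\theta,-\varphi)}$. So it suffices to show that $g(\theta,-\varphi) = g(\theta,\varphi)$ whenever $\psis$ is real.

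First I would use the identity $\bracket{\cc{\alpha}}{\cc{\beta}} = \overline{\bracket{\alpha}{\beta}}$, which follows directly from expanding both sides in the computational basis. Applying this to the tensor product $\ket{\sigma}^{\otimes n}$ gives
\begin{equation*}
  \bracket{\cc{\psi^{\text{s}}}}{\cc{\sigma}}^{\otimes n}
  = \overline{\bracket{\psi^{\text{s}}}{\sigma}^{\otimes n}} \ens .
\end{equation*}
Since $\psis$ is real, $\cc{\psis} = \psis$, and therefore
\begin{equation*}
  \abs{\bracket{\psi^{\text{s}}}{\cc{\sigma}}^{\otimes n}}
  = \abs{\bracket{\psi^{\text{s}}}{\sigma}^{\otimes n}} \ens .
\end{equation*}
In terms of the spherical amplitude function this reads $g(\theta,-\varphi) = g(\theta,\varphi)$, i.e. $g$ is invariant under reflection through the $X$-$Z$-plane.

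Finally, the CPPs are by definition the Bloch vectors $\ket{\sigma}$ on which $g$ attains its global maximum. The invariance just established maps the set of global maxima to itself, so for every CPP $\ket{\sigma}$ its $X$-$Z$-plane reflection $\cc{\ket{\sigma}}$ is also a CPP. This is exactly the claim. I do not foresee any serious obstacle; the only subtlety worth flagging is the basis-dependence of the statement, since both the notion of \emph{real} coefficients and the identification of the $X$-$Z$-plane reference directions $\ket{0},\ket{1}$ on the Majorana sphere rely on the same fixed computational basis. Alternatively, one can bypass the computation above by invoking Lemma~\ref{maj_real} together with the generalised Majorana representation of Theorem~\ref{theo_general_maj_rep} to partition the \acp{MP} into real singletons and conjugate pairs, but the direct route through $g$ is more concise and makes the $\varphi \to -\varphi$ symmetry manifest.
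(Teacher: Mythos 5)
Your proof is correct, and it takes a genuinely (if mildly) different route from the paper's. The paper's own proof is a one-line appeal to \lemref{maj_real}: it conjugates the \ac{MP}-product form \eqref{maj_problem} of the overlap, $\prod_{i} \abs{\bracket{\phi_{i}}{\sigma}}$, and uses the fact that for a real state the multiset of \acp{MP} is closed under complex conjugation, so the set of maximisers $\ket{\sigma}$ must be closed under conjugation as well. You instead conjugate the overlap $\bracket{\psi^{\text{s}}}{\sigma (\theta , \varphi)}^{\otimes n}$ directly and use only the reality of the coefficients of $\psis$ together with the elementary identity $\bracket{\cc{\alpha}}{\cc{\beta}} = \overline{\bracket{\alpha}{\beta}}$. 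Both arguments exploit the same mechanism --- complex conjugation is an antiunitary symmetry of the optimisation problem when $\psis$ is real --- but yours is self-contained: it does not presuppose the reflection symmetry of the \acp{MP} established in \lemref{maj_real}, and it makes the invariance $g(\theta , -\varphi) = g(\theta , \varphi)$ of the spherical amplitude function explicit, which is in fact slightly stronger than the statement about \acp{CPP} alone (it applies equally to all local maxima and zeros of $g$). The paper's version buys brevity given that \lemref{maj_real} has already been proven; yours buys independence from the \ac{MP} machinery. Your closing remark on the shared basis dependence of \emph{real} and of the $X$-$Z$-plane is correct and consistent with the paper's conventions.
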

\begin{proof}
  Consider the complex conjugate of the optimisation problem
  \eqref{maj_problem}.  It follows from \lemref{maj_real} that for any
  \ac{CPP} $\ket{\sigma}$ the complex conjugate $\cc{\ket{\sigma}}$ is
  also a \ac{CPP}.
\end{proof}

\subsection{Positive symmetric states}\label{pos_symm_states}

Particularly strong results can be obtained for the Majorana
representations of symmetric states with positive coefficients.
First, we restate \corref{real_gen_maj} for the positive case:

\begin{corollary}\label{pos_gen_maj}
  If the $\ket{\psi_{\mathcal{S}_{i}}}$, $i = 1 , \ldots , k$ of the
  generalised Majorana representation \eqref{general_maj_rep_eq} are
  all positive, then $\psis$ is also positive.
\end{corollary}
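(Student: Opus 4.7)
The plan is to mirror the reasoning behind \corref{real_gen_maj}, replacing \quo{real} with \quo{non-negative real} throughout and verifying that no step can spoil positivity. First I would expand each $\ket{\psi_{\mathcal{S}_i}}$ in the computational basis of its $m_i$ abstract qubits; by hypothesis, every such coefficient is non-negative. In any single summand on the right-hand side of \eq{general_maj_rep_eq}, the tensor product $\ket{\psi_{\mathcal{S}_1}}_{\mathcal{P}_1^j} \otimes \cdots \otimes \ket{\psi_{\mathcal{S}_k}}_{\mathcal{P}_k^j}$ thus has coefficients in the computational basis of the $n$ physical qubits that are products of non-negative reals, hence non-negative. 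Summing over all bipartitions $\mathcal{P}^j$ only accumulates non-negative reals, so the entire right-hand side has non-negative coefficients in the $n$-qubit computational basis.

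Second, I would check that the suppressed proportionality constant in \eq{general_maj_rep_eq} is itself a positive real number. Inspecting the proof of \theoref{theo_general_maj_rep}, the \quo{$\propto$} collapses to equality once the Majorana normalisation factors from \eq{majorana_definition} are restored, together with an integer overcounting factor (essentially $\binom{n}{m}m!(n-m)!$ and its multi-index generalisation). Both ingredients are manifestly positive, so rescaling to the normalised $\psis$ cannot flip any sign.

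Finally, I would translate computational-basis positivity back into positivity in the Dicke basis, which is the form in which positivity of symmetric states is defined in \sect{multipartite}. Because each Dicke state $\sym{n,k}$ is a positive linear combination of computational basis vectors of a fixed Hamming weight, and different Hamming-weight sectors are linearly independent, a symmetric state has non-negative Dicke coefficients \ac{iff} all of its computational-basis coefficients are non-negative. Combining this with the previous two steps yields the corollary.

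I do not foresee any real obstacle: the argument is essentially that of \corref{real_gen_maj} restricted to $\mbbrp$ instead of $\mbbr$. The only point that requires genuine care, and that I would make explicit rather than absorb into \quo{$\propto$}, is the positivity of the omitted prefactor; once that is settled, non-negativity propagates trivially through products and sums.
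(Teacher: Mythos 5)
Your argument is correct and is essentially the paper's own (the paper states \corref{pos_gen_maj} as an immediate restatement of \corref{real_gen_maj}, whose justification is exactly the observation that the right-hand side of \eq{general_maj_rep_eq} is a sum of tensor products of vectors with non-negative coefficients). Your extra care about the suppressed prefactor and the passage between the computational and Dicke bases only makes explicit what the paper absorbs into \quo{immediately clear}, so there is nothing to add.
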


In the remainder of this section it will be shown that the Majorana
representations of positive symmetric states are of two basic types.
The first type exhibits a rotational $Z$-axis symmetry which forces
the \acp{MP} and \acp{CPP} into predictable and easily analysable
patterns on the Majorana sphere.  In particular, an upper bound for
the number of \acp{CPP} of such states can be derived.  The other type
of Majorana representation does not exhibit a $Z$-axis symmetry, and
its \acp{CPP} are all restricted to the half-circle of positive Bloch
vectors.

The only states with a continuous rotational $Z$-axis symmetry are the
Dicke states, i.e. the states whose \acp{MP} all lie on the poles.
This trivial case will not be considered in the following, and instead
it is assumed that at least one \ac{MP} does not lie on a
pole. Rotational $Z$-axis symmetries must then be discrete, with a
minimal rotational angle $\varphi = \tfra{2 \pi}{m}$, $m \in \mbbn$
and $1<m \leq n$.  From this symmetry and from \lemref{maj_real} the
allowed distribution patterns of the \acp{MP} can be fully
characterised: All \acp{MP} that do not lie on the poles must be
equidistantly spread along horizontal circles with neighbouring
spherical distances of $\varphi = \tfra{2 \pi}{m}$. The $m$ \acp{MP}
of such a circle represent an $m$ qubit \ac{GHZ}-type state
$\ket{\psi_{\mathcal{S}} } = \alpha \sym{0} + \beta \sym{m}$ by means
of the Majorana representation \eqref{majorana_definition}.

\begin{figure}
  \centering
  \begin{overpic}[scale=.65]{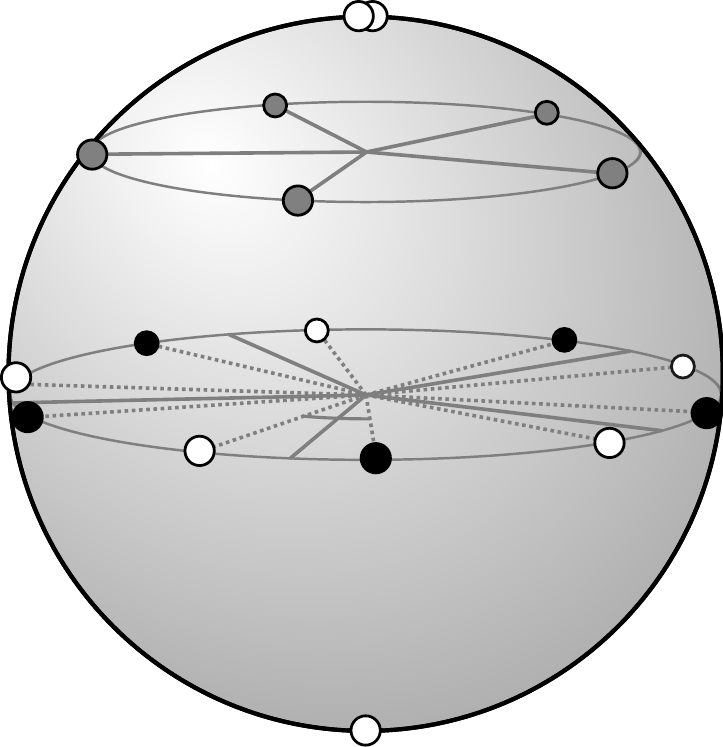}
    \put(-4,86){{$\ket{\psi_{\mathcal{S}}^{+}}$}}
    \put(-20,36){{$\ket{\psi_{\mathcal{S}}^{+ \vartheta}}$}}
    \put(-21,52){{$\ket{\psi_{\mathcal{S}}^{- \vartheta}}$}}
    \put(33,75.3){{\small $\varphi$}}
    \put(36.3,39.7){{\small $\vartheta$}}
    \put(43.8,39.2){{\small $\vartheta$}}
  \end{overpic}
  \caption[MPs of a real symmetric state with $Z$-axis rotational
  symmetry]{\label{symm_example} An exemplary \ac{MP} distribution of
    a positive symmetric 18 qubit state with a $Z$-axis rotational
    symmetry is shown. The minimal rotational angle is $\varphi =
    \tfra{2 \pi}{5}$. Two \acp{MP} lie on the north pole, one on the
    south pole, five on a single basic circle and 10 on two
    intertwined circles.  The circles of \acp{MP} which give rise to
    the five qubit \ac{GHZ}-type states
    $\ket{\psi_{\mathcal{S}}^{+}}$, $\ket{\psi_{\mathcal{S}}^{+
        \vartheta}}$, and $\ket{\psi_{\mathcal{S}}^{- \vartheta}}$ are
    coloured gray, black and white, respectively.}
\end{figure}

If $\ket{\psi_{\mathcal{S}} }$ is real (denoted as
$\ket{\psi_{\mathcal{S}}^{\pm} }$), then \lemref{maj_real} implies
that the complex conjugate of each \ac{MP} is also an \ac{MP} of that
circle.  A horizontal circle of \acp{MP} with this property is shown
in \fig{symm_example}.

If $\ket{\psi_{\mathcal{S}} }$ is not real, then \lemref{maj_real}
implies that for some \acp{MP} the complex conjugate is not part of
this circle. For the composite state $\psis$ to be real, this
necessitates that the \acp{MP} of the \quo{complex conjugate circle}
$\cc{\ket{\psi_{\mathcal{S}} }}$ are also part of $\psis$.  As shown
in \fig{symm_example}, this gives rise to two intertwined horizontal
circles $\ket{\psi_{\mathcal{S}}^{+ \vartheta} } = \alpha \sym{0} +
\E^{\I m \vartheta} \beta \sym{m}$ and $\ket{\psi_{\mathcal{S}}^{-
    \vartheta} } = \alpha \sym{0} + \E^{- \I m \vartheta} \beta
\sym{m}$, where the azimuthal angle of each \ac{MP} is shifted by an
angle $\pm \vartheta$ from the position it would occupy for the
corresponding non-phased state $\ket{\psi_{\mathcal{S}}^{+} } = \alpha
\sym{0} + \beta \sym{m}$ with $\alpha , \beta \geq 0$.

All horizontal circles of \acp{MP} present in a real symmetric state
with $Z$-axis rotational symmetry can be decomposed into these two
principal types, and from \eq{general_maj_rep_eq} it is clear that the
resulting state $\psis$ is real, rotationally symmetric and that the
degrees of freedom present in the horizontal circles of \acp{MP}
manifest themselves in the freedoms of the nonvanishing coefficients
of $\psis$.  The additional requirement of positivity for $\psis$
merely restricts the basic type of \ac{MP} circle to positive states
$\ket{\psi_{\mathcal{S}}^{+} }$, and the intertwined type to those
with an angle $\vartheta \leq \frac{\pi}{2 m}$.

The following lemma asserts strong restrictions on the possible
locations of the \acp{CPP} of positive symmetric states with or
without $Z$-axis rotational symmetries.

\begin{lemma}\label{pos_symm_cpp}
  Let $\psis$ be a positive symmetric $n$ qubit state, excluding the
  Dicke states.
  
  \begin{enumerate}
  \item[(a)] If $\psis$ has a $Z$-axis rotational symmetry with
    minimal rotational angle $\varphi = \tfra{2 \pi}{m}$, then all its
    \acp{CPP} $\ket{\sigma (\theta, \varphi)} = \co_{\theta} \ket{0} +
    \E^{\I \varphi} \si_{\theta} \ket{1}$ are restricted to the $m$
    azimuthal angles $\varphi_{r} = \tfra{2 \pi r}{m}$ with $r \in
    \mbbz$.  Furthermore, if $\ket{\sigma (\theta, \varphi_{r} )}$ is
    a \ac{CPP} for some $r$, then it is also a \ac{CPP} for all other
    values of $r$.
    
  \item[(b)] If $\psis$ has no $Z$-axis rotational symmetry, then all
    its \acp{CPP} are positive.
  \end{enumerate}
\end{lemma}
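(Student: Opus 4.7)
The plan is to isolate the $\varphi$-dependence of the spherical amplitude function $g(\theta, \varphi) = \abs{\bracket{\psi^{\text{s}}}{\sigma(\theta, \varphi)}^{\otimes n}}$ by using positivity of the coefficients. Writing $\psis = \sum_{k=0}^{n} a_{k} \sym{k}$ with $a_{k} \geq 0$ and expanding the inner product gives
\begin{equation*}
  g(\theta, \varphi) = \Big| \sum_{k \in K} \mu_{k}(\theta)
  \, \E^{- \I k \varphi} \Big| \ens , \qquad
  \mu_{k}(\theta) := a_{k} \sqrt{\tbinom{n}{k}}
  \co_{\theta}^{n-k} \si_{\theta}^{k} \geq 0 \ens ,
\end{equation*}
where $K := \{ k : a_{k} > 0 \}$. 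Since all $\mu_{k}(\theta)$ are non-negative, the triangle inequality yields $g(\theta, \varphi) \leq \sum_{k \in K} \mu_{k}(\theta)$, with equality \ac{iff} the factors $\E^{- \I k \varphi}$ take a common value on $K$, i.e., $\E^{- \I (k - k') \varphi} = 1$ for all $k, k' \in K$. Setting $d := \gcd \{ k - k' : k, k' \in K \}$, this alignment condition reduces to $\E^{- \I d \varphi} = 1$. The proof of both parts is then a matter of relating $d$ to the rotational symmetry of $\psis$ via \lemref{rot_symm}.

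For part (a), \lemref{rot_symm} turns the minimal rotational symmetry angle $2 \pi / m$ into the statement $K \subseteq l + m \mbbz$ for some $0 \leq l < m$, so $m$ divides every difference $k - k'$ and hence $m \mid d$. Conversely, $K \subseteq k_{0} + d \mbbz$ for any fixed $k_{0} \in K$ yields, again via \lemref{rot_symm}, a rotational symmetry of angle $2 \pi / d$, and minimality of $2 \pi / m$ forces $d \leq m$. Hence $d = m$, so any CPP $\ket{\sigma(\theta_{0}, \varphi_{0})}$ must satisfy $\varphi_{0} = 2 \pi r / m$ for some $r \in \mbbz$. For each such $\varphi_{r}$ the value $g(\theta_{0}, \varphi_{r})$ equals $\sum_{k \in K} \mu_{k}(\theta_{0})$ (the global phase $\E^{- \I l \varphi_{r}}$ drops out under the modulus), so if one $\ket{\sigma(\theta_{0}, \varphi_{r})}$ is a CPP then all $m$ of them are.

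For part (b), the absence of any $Z$-axis rotational symmetry means, by the contrapositive of \lemref{rot_symm}, that no $m \geq 2$ satisfies $K \subseteq l + m \mbbz$, which together with $\abs{K} \geq 2$ (forced by the non-Dicke assumption) is equivalent to $d = 1$. The alignment condition then collapses to $\E^{- \I \varphi_{0}} = 1$, so $\varphi_{0} \equiv 0 \pmod{2 \pi}$ and the CPP is positive.

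The main obstacle, which is really more a bookkeeping issue than a genuine difficulty, is to verify that the alignment condition is actually enforced at the optimum. This requires at least two indices $k \in K$ with $\mu_{k}(\theta_{0}) > 0$ at the optimising latitude $\theta_{0}$; the exceptional cases $\theta_{0} \in \{ 0 , \pi \}$ collapse the CPP to $\ket{0}$ or $\ket{1}$ and are automatically consistent with both conclusions, since these poles are positive Bloch vectors and lie trivially at every azimuthal angle $2 \pi r / m$.
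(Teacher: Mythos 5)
Your proof is correct and follows essentially the same route as the paper's: compare $g$ at an arbitrary azimuthal angle with the positive product state at the same latitude via the triangle inequality, extract the phase-alignment condition on the nonvanishing coefficient indices, and convert it into a constraint on the azimuthal angle using \lemref{rot_symm}. Your formulation in terms of the gcd $d$ of the index differences is in fact slightly tidier than the paper's, which asserts that some pair of nonvanishing indices differs by exactly $m$ --- a claim that need not hold literally, although the gcd of the differences does equal $m$ and suffices for the same conclusion.
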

\begin{proof}
  The proof runs similar to the one of \lemref{lem_pos_cps}, where the
  existence of at least one positive \ac{CPP} was established.  We use
  the notation $\psis = \sum_{k} a_{k} \sym{k}$ with $a_{k} \geq 0$,
  and $\ket{\lambda} = \ket{\sigma}^{\otimes n}$.
  
  Case (a): Consider a non-positive \ac{CPP} $\ket{\sigma} =
  \co_{\theta} \ket{0} + \E^{\I \kappa} \si_{\theta} \ket{1}$ with
  $\kappa = \tfra{2 \pi s}{m}$, $s \in \mbbr$, and define
  $\ket{\lambda^{+}} := \ket{\sigma^{+}}^{\otimes n} = \left(
    \co_{\theta} \ket{0} + \si_{\theta} \ket{1} \right)^{\otimes
    n}$. Then
  \begin{equation*}
    \abs{\bracket{\psi^{\text{s}}}{\lambda}} =  \bigg| \sum_k \E^{\I k
      \kappa} a_{k} \co_{\theta}^{n-k} \si_{\theta}^{k}
    \sqrt{\tbinom{n}{k}} \bigg| \leq \sum_k a_{k} \co_{\theta}^{n-k}
    \si_{\theta}^{k} \sqrt{\tbinom{n}{k}} =
    \abs{\bracket{\psi^{\text{s}}}{\lambda^{+}}} \ens .
  \end{equation*}
  If this inequality is strict, then $\ket{\sigma}$ is not a \ac{CPP}.
  Since this would contradict the initial assumption, we must have an
  equality. Then for any two indices $k_i$ and $k_j$ of nonvanishing
  coefficients $a_{k_i}$ and $a_{k_j}$ the following must hold:
  $\E^{\I k_{i} \kappa} = \E^{\I k_{j} \kappa}$.  This can be
  reformulated as $k_{i} \, s \bmod m = k_{j} \, s$, or equivalently
  \begin{equation}\label{pos_symm_cpp_cond1}
    ( k_{i} - k_{j} ) \, s \bmod m = 0 \ens .
  \end{equation}
  Because $\varphi = \tfra{2 \pi}{m}$ is the minimal rotational angle,
  $m$ is the largest integer that satisfies \eq{rot_cond}, and thus
  there exist $k_i$ and $k_j$ with $a_{k_i}, a_{k_j} \neq 0$ s.t. $k_i
  - k_j = m$.  From this and from \eq{pos_symm_cpp_cond1} it follows
  that $s \in \mbbz$.  Therefore all \acp{CPP} are of the form
  $\ket{\sigma (\theta, \varphi_{r} )}$ with $r \in \mbbz$, and if
  $\ket{\sigma (\theta, \varphi_{r} )}$ is a \ac{CPP} for some $r$,
  then it is also a \ac{CPP} for all other $r \in \mbbz$.
  
  Case (b): Considering a \ac{CPP} $\ket{\sigma} = \co_{\theta}
  \ket{0} + \E^{\I \rho} \si_{\theta} \ket{1}$ with $\rho = 2 \pi r$
  and $r \in \mbbr$, we need to show that $r \in \mbbz$.  Defining
  $\ket{\sigma^{+}} = \co_{\theta} \ket{0} + \si_{\theta} \ket{1}$,
  and using the same line of argumentation as above, the equation
  $\E^{\I k_{i} \rho} = \E^{\I k_{j} \rho}$ must hold for any pair of
  nonvanishing $a_{k_i}$ and $a_{k_j}$.  This is equivalent to
  \begin{equation}\label{pos_symm_cpp_cond2}
    ( k_{i} - k_{j} ) \, r \bmod 1 = 0 \ens ,
  \end{equation}
  or $( k_{i} - k_{j} ) \, r \in \mbbz$, in particular $r \in
  \mathbb{Q}$. If there exist indices $k_{i}$ and $k_{j}$ of
  nonvanishing coefficients s.t. $k_{i} - k_{j} = 1$, then $r \in
  \mbbz$, as desired.  Otherwise consider $r = \tfra{x}{y}$ with
  coprime $x,y \in \mbbn$, $x < y$.  Because $\psis$ is not
  rotationally symmetric, the negation of \lemref{rot_symm} yields
  that, for any two $k_{i}$ and $k_{j}$ ($a_{k_{i}}, a_{k_{j}} \neq
  0$) with $k_{i} - k_{j} = \alpha > 1$, there must exist a different
  pair $k_{p}$ and $k_{q}$ ($a_{k_p}, a_{k_q} \neq 0$) with $k_{p} -
  k_{q} = \beta > 1$ s.t.  $\alpha$ is not a multiple of $\beta$ and
  vice versa.  From $r = \tfra{x}{y}$ and \eq{pos_symm_cpp_cond2}, it
  follows that $y = \alpha$ as well as $y = \beta$. This is a
  contradiction, so $r \in \mbbz$.
\end{proof}

A trivial consequence of the previous theorem is the following
statement that considerably simplifies the determination of the
geometric entanglement.
\begin{corollary}\label{positive_cpp}
  Every positive symmetric state has at least one positive symmetric
  \ac{CPP}.
\end{corollary}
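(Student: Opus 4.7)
The plan is to obtain the corollary as an almost immediate consequence of \lemref{pos_symm_cpp}, with a separate but trivial argument for the Dicke states that were excluded from its hypothesis. Since every symmetric state of $n\geq 2$ qubits has at least one \ac{CPP} (by compactness of the set of single-qubit pure states and continuity of the overlap function), it suffices to locate a positive one.

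First I would dispense with the Dicke states $\sym{n,k}$, which are positive symmetric states and which have the explicit one-parameter family of \acp{CPP} given in \eq{dicke_cps}; setting $\varphi = 0$ produces the positive \ac{CPP} $\sqrt{(n-k)/n}\,\ket{0} + \sqrt{k/n}\,\ket{1}$. For the remaining positive symmetric state $\psis$, I split into the two cases of \lemref{pos_symm_cpp}. If $\psis$ has no $Z$-axis rotational symmetry, then part (b) of that lemma asserts that \emph{every} \ac{CPP} is positive, so any \ac{CPP} already satisfies the claim.

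If $\psis$ does have a $Z$-axis rotational symmetry with minimal angle $\varphi = 2\pi/m$, then part (a) of \lemref{pos_symm_cpp} guarantees that its \acp{CPP} lie at the $m$ azimuthal angles $\varphi_r = 2\pi r/m$, and -- crucially -- that whenever $\ket{\sigma(\theta,\varphi_r)}$ is a \ac{CPP} for one such $r$, it is a \ac{CPP} for every $r\in\mbbz$. Taking $r=0$ then yields the \ac{CPP} $\ket{\sigma(\theta,0)} = \co_\theta\ket{0} + \si_\theta\ket{1}$, which is positive. This completes the proof in all cases.

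The only non-routine ingredient is \lemref{pos_symm_cpp} itself, which does the real work; the corollary simply packages the positive-\ac{CPP} existence that is implicit in both of its cases. There is no substantive obstacle to overcome here, since the Dicke case is explicit and the non-Dicke case is a direct reading of the preceding lemma -- the main thing to be careful about is to verify that the excluded Dicke states are indeed covered by the closed-form \acp{CPP} in \eq{dicke_cps}, which they are.
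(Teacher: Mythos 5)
Your proposal is correct and takes essentially the same route as the paper, which presents this corollary as a direct consequence of \lemref{pos_symm_cpp} (case (a) with $r=0$ giving a positive \ac{CPP}, case (b) making all \acp{CPP} positive). Your explicit handling of the excluded Dicke states via \eq{dicke_cps} is a sensible bit of added care that the paper leaves implicit.
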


With the confinement of the \acp{CPP} to certain azimuthal angles, as
described by \lemref{pos_symm_cpp}, it is possible to make the
following statements about the number and locations of the \acp{CPP}.
\begin{theorem}\label{cpp_number_locations}
  The Majorana representation of every positive symmetric state
  $\psis$ of $n$ qubits, excluding the Dicke states, belongs to one of
  the following three mutually exclusive classes.
  
  \begin{enumerate}
  \item[(a)] $\psis$ has a $Z$-axis rotational symmetry, with only the
    two poles as possible \acp{CPP}.
    
  \item[(b)] $\psis$ has a $Z$-axis rotational symmetry, and at least
    one \ac{CPP} is non-positive.
    
  \item[(c)] $\psis$ has no $Z$-axis rotational symmetry, and all
    \acp{CPP} are positive.
  \end{enumerate}
  Regarding the states from class (b) and (c), the following
  assertions can be made about the number of \acp{CPP} for $n \geq 3$:
  \begin{enumerate}
  \item[(b)] If both poles are occupied by at least one \ac{MP} each,
    then there are at most $2n-4$ \acp{CPP}, else there are at most
    $n$ \acp{CPP}.

  \item[(c)] There are at most $\lceil \tfra{n+2}{2} \rceil$
    \acp{CPP}.
  \end{enumerate}
\end{theorem}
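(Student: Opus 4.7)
The classification of the three classes follows essentially directly from \lemref{pos_symm_cpp}. If $\psis$ possesses a $Z$-axis rotational symmetry with minimal angle $2 \pi / m$, part (a) of that lemma constrains all \acp{CPP} to azimuthal angles $\varphi_{r} = 2 \pi r / m$ and forces them to come in full orbits of size $m$ under the rotation (unless they sit at a pole, where the azimuth is undefined). I would then split this symmetric scenario into case (a), where every \ac{CPP} lies at a pole (trivially positive), and case (b), where at least one non-polar \ac{CPP} is present; such a \ac{CPP} automatically has $m-1$ non-positive siblings in its orbit, since the only orbit element with $\varphi = 0$ is the one at $r=0$. The remaining possibility is case (c), in which $\psis$ has no $Z$-axis rotational symmetry at all, and part (b) of \lemref{pos_symm_cpp} immediately yields that every \ac{CPP} is positive. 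Mutual exclusivity and exhaustiveness are then clear by the logical dichotomy of having or not having such a rotational symmetry.

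To bound the number of \acp{CPP}, my strategy is to restrict the spherical amplitude function to the positive meridian $\varphi = 0$ and count its local maxima, using rotational symmetry in case (b) to recover the full sphere via the multiplicative factor $m$. Parameterising by $u = \tan \tfra{\theta}{2} \in [0, \infty]$ and writing $q(u) := \sum_{k} a_{k} \sqrt{\binom{n}{k}} \, u^{k}$, I have $g^{2}(\theta, 0) = q(u)^{2} / (1+u^{2})^{n}$. A direct differentiation gives that the interior critical points in $u \in (0, \infty)$ are zeros of $R(u) := (1+u^{2}) q'(u) - n u \, q(u)$; zeros of $q$ itself contribute only minima, since $q$ has non-negative coefficients. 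A crucial observation is that the $u^{n+1}$ contributions of $(1+u^{2}) q'$ and $n u q$ cancel, so $\deg R \leq n$, whence $R$ has at most $n$ positive real roots. Because local maxima and minima of $g^{2}$ strictly alternate along the meridian, the number of local maxima is bounded by $\lceil (N+2)/2 \rceil$ where $N$ is the number of interior critical points; this directly yields the $\lceil (n+2)/2 \rceil$ bound claimed for case (c).

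For case (b) the same polynomial analysis is applied, but the $Z_m$-rotational symmetry endows $q$ with the extra structure $q(u) = u^{l} \, \widetilde{p}(u^{m})$ with $l = k_{\min}$ and $\deg \widetilde{p} = (k_{\max} - l)/m$. Substituting into $R$ and factoring out $u^{l-1}$ leaves a polynomial whose degree is sharply controlled by $m$ and by $D = k_{\max} - k_{\min}$. When $m$ is even, the remaining powers of $u$ are all even and a substitution $w = u^{2}$ halves the effective degree; when $m$ is odd, a more delicate Descartes-type sign count is used. The pole case analysis now enters: if both poles carry \acp{MP} then $k_{\min} \geq 1$ and $k_{\max} \leq n-1$, so $g$ vanishes at both endpoints of the meridian and strict alternation forces both endpoint extrema to be minima, tightening the max count; if at most one pole carries an \ac{MP} then the other pole contributes a free endpoint maximum plus up to two polar \acp{CPP}. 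Multiplying the meridian count by $m$ for the non-polar orbits and adding the (at most two) polar \acp{CPP} yields the advertised $\leq 2n - 4$ and $\leq n$ bounds respectively.

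The main obstacle I expect will be pinning down the precise constant $2n-4$ in the both-poles-\ac{MP} branch of case (b); the naive estimate from $\deg R \leq n$ alone produces only $m \lceil (n+1)/2 \rceil$, which is much weaker. The tight bound requires simultaneously exploiting the reduction of effective polynomial degree by the $u \mapsto u^m$ substitution, the constraint $m \leq k_{\max} - k_{\min} \leq n - 2$, and the forced minima at both poles. As a sanity check I would verify the bounds on canonical benchmark states: the tetrahedron state $\ket{\Psi_{4}}$ of \eq{tetrahedron_state} sits in case (b) with only the north pole carrying an \ac{MP} and saturates the bound $n = 4$, while the octahedron state saturates $2n - 4 = 8$ for $n = 6$; finding or ruling out an example saturating the case (c) bound would similarly validate the polynomial count there.
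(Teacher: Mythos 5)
Your classification into classes (a)--(c) and your count for class (c) are essentially the paper's own argument: both reduce to \lemref{pos_symm_cpp}, restrict to the positive meridian, substitute $x=\tan\tfrac{\theta}{2}$, note that the top-degree terms of $(1+x^{2})q'(x)$ and $nxq(x)$ cancel so the critical-point polynomial has degree at most $n$, and convert at most $n$ interior critical points into at most $\lceil\tfrac{n+2}{2}\rceil$ local maxima by alternation with the two endpoints. (Your aside about zeros of $q$ is moot: a positive state has $q(x)>0$ for all $x>0$.) This part is correct.

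The gap is in class (b), which carries the theorem's main quantitative content, and you have flagged it yourself without closing it. The bound $2n-4$ cannot come from $\deg R\le n$; it comes from the fact that the $Z$-axis symmetry leaves the critical-point polynomial with only $\Order(p)$ nonzero monomials, where $p$ is the number of horizontal circles of \acp{MP} and $k+l+pm=n$ counts the \acp{MP} on the two poles. The paper writes $f(\theta)=\co_{\theta}^{k}\si_{\theta}^{l}\sum_{i=0}^{p}a_{i}\co_{\theta}^{(p-i)m}\si_{\theta}^{im}$ with $a_{i}>0$, differentiates, and applies Descartes' rule to a polynomial with at most $2p+2$ nonzero coefficients whose signs alternate; this gives at most $2p+1$ interior extrema when $m>2$ and both poles are occupied (the case $m=2$ degenerates and must be treated separately, giving only about $p+1$ extrema). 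With $k,l\ge 1$ both endpoints are zeros of $f$, hence minima, so at most $p+1$ interior critical points are maxima; each non-polar maximum spawns an orbit of $m$ \acp{CPP}, and the arithmetic $m(p+1)=pm+m\le(n-2)+(n-2)=2n-4$ (using $pm=n-k-l\le n-2$ and $m\le pm$) closes the argument, with a parallel count giving $n$ when a pole is free of \acp{MP}. Your proposal states that the tight bound "requires" the $u\mapsto u^{m}$ reduction and a "more delicate Descartes-type sign count" but never produces the sign pattern, the $m=2$ versus $m>2$ dichotomy, or the final inequality, so the $2n-4$ and $n$ bounds remain unproven as written.
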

\begin{proof}
  We start with the first part of the theorem. Case (c) has already
  been shown in \lemref{pos_symm_cpp}, so we only need to consider
  states $\psis$ with a $Z$-axis rotational symmetry.  If all
  \acp{CPP} are either $\ket{0}$ or $\ket{1}$, then we have case (a),
  otherwise there is at least one \ac{CPP} $\ket{\sigma}$ which does
  not lie on a pole.  If this $\ket{\sigma}$ is non-positive, we have
  case (b), and if $\ket{\sigma}$ is positive, \lemref{pos_symm_cpp}
  states the existence of another, non-positive \ac{CPP}, thus again
  resulting in case (b). This concludes the first part of the proof.

  Now consider the second part of the theorem.  We start with case
  (b), i.e. positive states that have a $Z$-axis rotational symmetry
  with minimal rotational angle $\varphi = \tfra{2 \pi}{m}$, and $1<m
  \leq n$.  According to \eq{bloch_product}, any \ac{CPP}
  $\ket{\sigma}$ maximises the function $\prod_{i=1}^{n}
  \abs{\bracket{\sigma}{\phi_{i}}}$, and from \corref{positive_cpp} it
  follows that there must be at least one positive \ac{CPP}
  $\ket{\sigma} = \co_{\theta} \ket{0} + \si_{\theta} \ket{1}$. First
  we derive the maximum possible number of positive \acp{CPP}, from
  which an upper bound for the total number of \acp{CPP} follows from
  \lemref{pos_symm_cpp}.

  For an \ac{MP} distribution with $k$ \acp{MP} on the north pole, $l$
  \acp{MP} on the south pole and the remaining $n-k-l$ \acp{MP} on
  horizontal circles, the function whose absolute value has to be
  maximised is
  \begin{equation*}
    f(\theta) = \bracket{\sigma}{0}^{k} \bracket{\sigma}{1}^{l}
    \prod_{r} h_{1} (\theta_{r}) \prod_{s} h_2
    (\vartheta_{s} , \theta_{s}) \ens ,
  \end{equation*}
  where $h_1 (\theta_{r}) = \prod_{i=1}^{m} \bracket{\sigma}{\phi_{i}
    ( \theta_{r} )}$ contains the factors contributed by a single
  circle with $m$ \acp{MP} at inclination $\theta_{r}$, and $h_2
  (\vartheta_{s}, \theta_{s}) = \prod_{i=1}^{m}
  \bracket{\sigma}{\phi_{i} ( + \vartheta_{s} , \theta_{s} )}
  \bracket{\sigma}{\phi_{i} ( - \vartheta_{s} , \theta_{s} )}$
  represents the factors contributed by two circles intertwined at
  azimuthal angles $\pm \vartheta_{s}$ with $2m$ \acp{MP}, and
  inclination $\theta_{s}$.  A simple calculation yields
  \begin{equation*}
    h_1 (\theta_{r}) = \co_{\theta}^{m} \co_{\theta_r}^{m} +
    \si_{\theta}^{m} \si_{\theta_r}^{m} \ens , \quad
    h_2 (\vartheta_{s} , \theta_{s}) =
    \co_{\theta}^{2m} \co_{\theta_s}^{2m} + 2 \cos
    (m \vartheta_{s}) \co_{\theta}^{m} \si_{\theta}^{m}
    \co_{\theta_s}^{m} \si_{\theta_s}^{m} + \si_{\theta}^{2m}
    \si_{\theta_s}^{2m} \ens .
  \end{equation*}
  Thus $f$ can be written in the form
  \begin{equation*}
    f(\theta) = \co_{\theta}^{k} \si_{\theta}^{l} \sum_{i=0}^{p}
    a_i  \co_{\theta}^{(p-i)m} \si_{\theta}^{im} = \sum_{i=0}^{p}
    a_i  \co_{\theta}^{k+(p-i)m} \si_{\theta}^{l+im} \ens ,
  \end{equation*}
  where the $a_i$ are positive-valued coefficients, and $p$ is the
  number of basic circles ($k+l+pm=n$).  The number of zeros of
  $f'(\theta)$ in $\theta \in (0, \pi)$ gives a bound on the number of
  positive \acp{CPP}. The form of $f'(\theta)$ is qualitatively
  different for $m=2$ and $m>2$. With the substitution $x = \tan
  \tfra{\theta}{2}$ the equation $f'(\theta) = 0$ for $m=2$ becomes
  \begin{gather*}
    a_{0} l + \left( \sum_{i=1}^{p} b_i x^{2i} \right) -
    a_{p} k x^{2p+2} = 0 \ens , \\
    \text{with} \quad b_{i} = a_{i} (l+2i) - a_{i-1} (k+2(p-i)+2) \in
    \mbbr \ens .
  \end{gather*}
  This is a real polynomial in $x$, with the first and last
  coefficient vanishing if no \acp{MP} lie on the south pole ($l=0$)
  and north pole ($k=0$), respectively.  Descartes' rule of signs
  states that the number of positive roots of a real polynomial is at
  most the number of sign differences between consecutive nonzero
  coefficients, ordered by descending variable exponent. From this and
  the fact that the codomain of $x$ is $\mbbr^{+}$, a calculation
  yields that for $m=2$ there are at most $p-1$, $p$ or $p+1$ extrema
  of $f(\theta)$ lying in $\theta \in (0,\pi)$, depending on whether
  $k$ and $l$ are zero or not.
  
  For $m>2$, we obtain the analogous result
  \begin{gather*}
    a_{0} l + \Bigg( \sum_{i=1}^{p} - c_i x^{im-(m-2)} +
    d_i x^{im} \Bigg) - a_{p} k x^{pm+2} = 0 \ens , \\
    \text{with} \quad c_{i} = a_{i-1} (k+(p-i)m+m) \in \mbbr^{+} \ens
    , \quad \text{and} \quad d_{i} = a_{i} (l+im) \in \mbbr^{+} \ens .
  \end{gather*}
  From Descartes' rule of signs it is found that there exist $2p-1$,
  $2p$ or $2p+1$ extrema of $f(\theta)$ in $\theta \in (0, \pi)$,
  depending on whether $k$ and $l$ are zero or not.
  
  With these results the maximum number of global maxima of
  $f(\theta)$ can be determined.  Case differentiations have to be
  performed with regard to $m=2$ or $m>2$, whether $k$ and $l$ are
  zero or not and whether $p$ is even or odd.  The non-positive
  \acp{CPP} are obtained by considering the rotational $Z$-axis
  symmetry. For any positive \ac{CPP} not lying on a pole, there are
  $m-1$ other, non-positive \acp{CPP} lying at the same inclination
  (cf.  \lemref{pos_symm_cpp}).  For $m=2$, the maximum possible
  number of \acp{CPP} is $\tfrac{n}{2} + 1$ ($n$ even) or
  $\tfrac{n+1}{2}$ ($n$ odd). This is significantly less than in the
  general case $m>2$ where a lengthy calculation yields $2n-4$ as the
  maximum number of \acp{CPP}.  Interestingly, this bound decreases to
  $n$ if at least one of the two poles is free of \acp{MP}.
  
  Now consider case (c), i.e. states with no $Z$-axis rotational
  symmetry. All \acp{MP} of a positive state must either lie on the
  positive half circle or form complex conjugate pairs
  (cf. \lemref{maj_real} and \theoref{theo_general_maj_rep}). From
  this the optimisation function follows as
  \begin{equation*}
    f(\theta) = \sum_{i=0}^{n} a_i  \co_{\theta}^{n-i}
    \si_{\theta}^{i} \ens ,
  \end{equation*}
  with real $a_i$. Calculating $f'(\theta)$ yields the condition for
  the extrema:
  \begin{equation*}
    a_{1} + \left( \sum_{i=1}^{n-1} b_i x^{i} \right)
    - a_{n-1} x^{n} = 0 \ens , \quad \text{with} \quad
    b_{i} = a_{i+1} (i+1) - a_{i-1} (n-i+1) \ens .
  \end{equation*}
  The maximum number of \acp{CPP} again follows from Descartes'
  rule. All \acp{CPP} are now restricted to the positive half circle
  (which includes the poles), yielding at most $\tfrac{n+3}{2}$
  \acp{CPP} for odd $n$ and $\tfrac{n+2}{2}$ for even $n$.
\end{proof}

\cleardoublepage

\chapter{Maximally Entangled Symmetric States}
\label{solutions}

\begin{quotation}
  In this chapter we present the candidates for maximal symmetric
  entanglement of up to 12 qubits with respect to the geometric
  measure of entanglement. These solutions of the \quo{Majorana
    problem} were found by a combination of analytical and numerical
  methods, which are explained in the first part of this chapter.
  With the help of the Majorana representation the point distributions
  of the solutions can be compared to those of \toths and Thomson's
  classical optimisation problems on the sphere. The chapter concludes
  with a summary and discussion of the obtained results.
\end{quotation}

\section{Methodology}\label{methodology}

Exact solutions for \toths and Thomson's problem of distributing $n$
points over the surface of a sphere are known only for a few values of
$n$, with the highest one being $n=24$. Still, this compares
favourably to the Majorana problem \eqref{maj_problem} for which no
analytical solution beyond $n=3$ is known \cite{Chen10}. Due to the
complexity of an analytical treatment of this optimisation problem, it
makes sense to employ the help of numerics.  The combination of
analytical and numerical methods used for our search for the maximally
entangled symmetric states will be outlined in the following.

\subsection{Positive states}\label{pos_states}

For several reasons a particular emphasis has been placed on the
search for the maximally entangled symmetric state among positive
states.  Firstly, positive states are considerably easier to
investigate, because the number of parameters in the general form of
the state is reduced by half, and because the existence of at least
one positive \ac{CPP} (cf. \corref{positive_cpp}) simplifies the
determination of the geometric entanglement.  In particular, it is
sometimes possible to analytically determine the exact form of the
estimated maximally entangled state by requiring that the values of
the spherical amplitude function \eqref{spher_amp_funct} coincide at
two local maxima on the positive half-circle of the Majorana sphere,
thus fulfilling the necessary condition of at least two \acp{CPP} from
\corref{numberofcpp}.  This strategy will be employed for determining
the exact form of some states discussed in this chapter.
Additionally, the positive case is easier to investigate because
\lemref{pos_symm_cpp} and \theoref{cpp_number_locations} restrict the
number and locations of \acp{CPP}. To determine the locations of all
\acp{CPP}, it suffices to find the positive \acp{CPP}, because all
other \acp{CPP} follow from \lemref{pos_symm_cpp}.  In the case of
\quo{Platonic states} with positive coefficients it will be seen that
the \acp{CPP} follow without any calculations from
\lemref{pos_symm_cpp} and the rotation groups alone.

It is reasonable to expect that the Majorana representations of highly
entangled positive symmetric states are rotationally symmetric around
the $Z$-axis, because otherwise the \acp{CPP} can only lie on the
positive half-circle of the Majorana sphere
(cf. \lemref{pos_symm_cpp}), which results in an imbalance of the
spherical amplitude function $g( \theta , \varphi )$.  Because of
\lemref{rot_symm}, rotationally symmetric states have a large number
of vanishing coefficients, which considerably reduces the complexity
of numerical searches for high and maximal entanglement.

Another argument is that -- recast as quantum states by means of the
Majorana representation -- many of the solutions to \toths and
Thomson's problem for lower $n$ are given by positive
states\footnote{One reason for this is that these point distributions
  often exhibit a rotational symmetry, thus leading to a low number of
  nonvanishing basis states (cf. \protect\lemref{rot_symm}), which in
  turns makes it more likely that the state can be cast without
  phases.}.  One could thus expect that in many cases the solutions to
the Majorana problem can be cast as positive states too.  On the other
hand, arguments were presented in \sect{pos_results} that for systems
with sufficiently many parties the entanglement of positive states is
considerably lower than that of general states, and a similar
behaviour is expected for the subset of symmetric states.

\begin{table}
  \centering
  \caption[Geometric entanglement of selected symmetric states]
  {\label{table4.1}
    Geometric entanglement of some $n$ qubit Dicke states and
    superpositions of two Dicke states. For the first four states
    the \acp{CPP} and $\Eg$ can be determined exactly,
    while for the latter four states precise values are known
    only in the asymptotic limit $n \to \infty$.
    The weight of the basis states in superpositions was
    chosen to yield maximal entanglement.
    With the exception of
    $\frac{1}{\sqrt{1+e}} \left( \sqrt{e} \sym{1} + \sym{n}
    \right)$ all states are totally invariant, and therefore
    they are additive under $\Eg$, $E_{\text{R}}$ and
    $E_{\text{Rob}}$, with the amount of entanglement
    coinciding for these three measures (cf.
    \protect\theoref{pos_inv}).}
  \begin{tabular}{c|ccc}
    \toprule
    $\psisn$ & positive \acp{CPP} $\ket{\sigma}$ &
    $\Eg \big( \psisn \big)$ &
    $\lim\limits_{n \to \infty} \Eg \big( \psisn \big)$ \\
    \midrule
    $\sym{0}$  & $\ket{0}$ & $0$ & $0$ \\
    $\sym{1}$ &
    $\sqrt{\frac{n-1}{n}} \ket{0} + \sqrt{\frac{1}{n}} \ket{1}$ &
    $\log_2 \left( \frac{n}{n-1} \right)^{n-1}$ &
    $\log_2 (e)$ \\[0.3em]
    $\sym{\frac{n}{2}}$ &
    $\frac{1}{\sqrt{2}} \left( \ket{0} + \ket{1} \right)$ &
    $\log_2 \Big( \frac{2^{n}}{\binom{n}{n/2}} \Big)$ &
    $\log_2 \sqrt{\frac{n \pi}{2}}$ \\
    $\frac{1}{\sqrt{2}} \left( \sym{0} + \sym{n} \right)$ &
    $\ket{0}$ and $\ket{1}$ & $1$ & $1$ \\
    \toprule
    $\psisn$ &
    \multicolumn{2}{c}{positive \acp{CPP}
      $\ket{\sigma}$ for $n \to \infty$} &
    $\lim\limits_{n \to \infty} \Eg \big( \psisn \big)$ \\
    \midrule
    $\frac{1}{\sqrt{1+e}} \left( \sqrt{e}
      \sym{1} + \sym{n} \right)$ &
    \multicolumn{2}{c}{$\sqrt{\frac{n-1}{n}} \ket{0} +
      \sqrt{\frac{1}{n}} \ket{1}$} &
    $\log_2 (1 + \E )$ \\[0.3em]
    $\frac{1}{\sqrt{2}} \left( \sym{1} + \sym{n-1} \right)$ &
    \multicolumn{2}{c}{$\sqrt{\frac{n-1}{n}} \ket{0} +
      \sqrt{\frac{1}{n}} \ket{1}$
      and $ \sqrt{\frac{1}{n}} \ket{0} +
      \sqrt{\frac{n-1}{n}} \ket{1}$} &
    $\log_2 (2 \E )$ \\[0.3em]
    $\frac{1}{\sqrt{2}} \big( \sym{\frac{n}{3}} +
    \sym{\frac{2n}{3}} \big)$ &
    \multicolumn{2}{c}{$\sqrt{\frac{2}{3}} \ket{0} + \sqrt{\frac{1}{3}} \ket{1}$
      and $\sqrt{\frac{1}{3}} \ket{0} + \sqrt{\frac{2}{3}} \ket{1}$} &
    $\log_2 \big( \frac{4}{3} \sqrt{n \pi} \big)$ \\[0.3em]
    $\frac{1}{\sqrt{2}} \big( \sym{\frac{n}{4}} +
    \sym{\frac{3n}{4}} \big)$ &
    \multicolumn{2}{c}{$\frac{\sqrt{3}}{2} \ket{0} + \frac{1}{2} \ket{1}$
      and $ \frac{1}{2} \ket{0} + \frac{\sqrt{3}}{2} \ket{1}$} &
    $\log_2 \sqrt{\frac{3 n \pi}{2}}$ \\
    \bottomrule
  \end{tabular}
\end{table}

Symmetric states with no more than two basis states can always be cast
positive, regardless of the number of parties.  Some exemplary
calculations were performed with such states, and the results are
shown in \tabref{table4.1}.  Listed are the entanglement of some $n$
qubit Dicke states and superpositions of two Dicke states, both for
fixed $n$ and in the asymptotic limit $n \to \infty$.  It can be seen
that states tend to have more geometric entanglement if they contain
Dicke states with a relatively balanced number of excitations.  For
example, the entanglement of the most balanced Dicke state
$\sym{{\frac{n}{2}}}$ scales as $\Eg = \Order ( \log_2 \sqrt{n} )$,
while $\ket{\text{GHZ}_{n}} = \tfra{1}{\sqrt{2}} \left( \sym{0} +
  \sym{n} \right)$ has an entanglement of only $\Eg = 1$, regardless
of $n$.

Finally, we outline how the most entangled positive symmetric states
were found.  Case differentiations were performed for all combinations
of vanishing and nonvanishing basis states, with the approximate value
of the maximal entanglement determined numerically in each case.  In
this fashion all the cases with significantly lower entanglement could
be ruled out. For the cases that exhibited high entanglement the
precise amount was calculated analytically (where possible) or
approximated numerically to a high precision. In all cases it was
found that the maximally entangled states can be expressed with a low
number of nonvanishing basis states.  This finding justifies the
omission of some cases with large numbers of nonvanishing basis states
in the numerical search of $n > 5$ qubits due to their complexity.
However, all possible states with a rotational symmetry around the
$Z$-axis were taken into account.  For these reasons we can be
confident that the most entangled states found this way are indeed the
maximally entangled positive symmetric ones.

\subsection{General states}\label{gen_states}

In the case of general symmetric states we do not have as many
analytic tools as for positive states, so the search is over a far
bigger set of possible states, and we can be less confident in our
results.  In particular, the candidates for maximal entanglement in
the general case should be treated with a certain amount of caution,
because states with yet more entanglement may exist.

A helpful result from a theoretical viewpoint is \corref{numberofcpp}
which provides the necessary conditions of at least two \acp{CPP} and
that the maximally entangled symmetric state must lie in the span of
its \acp{CPS}.  For all of our candidates for maximal symmetric
entanglement it was verified that these conditions are met.  From a
numerical viewpoint the known solutions to \toths and Thomson's
problem -- converted to symmetric $n$ qubit states via
\eq{majorana_definition} -- readily provide nontrivial lower bounds on
the maximal symmetric entanglement.  Martin \etal \cite{Martin10}
computed the geometric entanglement of these states for up to $n =
110$ and found that the solutions of Thomson's problem generally yield
higher entanglement than those of \toths problem.  Furthermore, they
found that the entanglement of Thomson's solutions scales as
$\Eg^{\text{Th}} \approx \log_2 \frac{(n+1)}{1.71}$, which is close to
the upper bound $\Eg \leq \log_2 (n+1)$ derived in \sect{upper_bound},
and thus leaves only a narrow corridor for the maximal values of
$\Eg$.  The classical solutions, particularly those for Thomson's
problem, are therefore a good starting point for an explicit search
for the maximally entangled symmetric state.  In some cases (for
$n=10$ in \sect{majorana_ten} and $n=11$ in \sect{majorana_eleven}) we
found the conjectured maximally entangled symmetric state by making
small modifications to the point distributions of the classical
solutions.  Another strategy to find highly entangled states is to
consider states with certain symmetry features in their \ac{MP}
distribution, such as rotational and reflective symmetries.

\section{Results}\label{maxent_results}

Before discussing the cases of 4 to 12 qubits as well as the 20 qubit
case, we remark that another study of highly and maximally entangled
symmetric states was independently performed by Martin \etal
\cite{Martin10}, and that their results are similar to ours. In their
paper they focused on using databases \cite{Sloane,Wales} with the
known numerical solutions of \toths and Thomson's problem to derive
the geometric entanglement of the corresponding symmetric states for
up to $n=110$ in a straightforward manner, and they found the
maximally entangled symmetric states for up to $n=6$ qubits.  In our
publication \cite{Aulbach10} we studied the cases of up to $n=12$
qubits in much more detail. In particular, we presented candidates for
maximal symmetric entanglement for each $n$, discussed the Majorana
representations of highly entangled states, and discovered that the
spherical volume function is a very useful tool for understanding the
distribution patterns of \acp{MP} and \acp{CPP}.  A summary of the
properties of the symmetric states investigated in this chapter can be
found in \tabref{table4.5}.

\subsection{Four qubits}\label{majorana_four}

The \acp{MP}, \acp{CPP} and the geometric entanglement of the
tetrahedron state were already discussed in \sect{visualisation}, with
different visualisations shown in \fig{tetrahedron_visual}.  For four
points both \toths and Thomson's problem are solved by the vertices of
the regular tetrahedron \cite{Whyte52}, and our numerical search for
the maximally entangled symmetric state returned this Platonic solid
too. Furthermore, the tetrahedron state $\ket{\Psi_{4}} =
\sqrt{\tfra{1}{3}} \sym{0} + \sqrt{\tfra{2}{3}} \sym{3}$ satisfies the
necessary conditions of \corref{numberofcpp}, because it can be
written as a linear combination of its \acp{CPS}: $\ket{\Psi_{4}} =
\tfrac{\sqrt{3}}{4} \big( \ket{\phi_1}^{\otimes 4} +
\ket{\phi_2}^{\otimes 4} + \ket{\phi_3}^{\otimes 4} +
\ket{\phi_4}^{\otimes 4} \big)$.  In the following we focus on the
various interesting properties of the tetrahedron state.

Firstly, we note that the tetrahedron state is totally invariant under
the tetrahedral symmetry group $T \subset \text{SO}(3)$, and because
$\ket{\Psi_{4}}$ is a positive state, \theoref{pos_inv} gives us the
equivalence and additivity of this state under $\Eg$, $E_{\text{R}}$
and $E_{\text{Rob}}$.  Further results can be derived from the
tetrahedral symmetry group: Even though $\text{R}_{T}^{\otimes 4}
\ket{\Psi_{4}} = \ket{\Psi_{4}}$ holds for all $\text{R}_{T} \in T$,
the individual \acp{MP} are not necessarily left invariant:
$\text{R}_{T} \ket{\phi_{i}}= \ket{\phi_{j}}$, with $i,j \in \{
1,2,3,4 \}$.  As seen in \fig{4_cpps}, this can be viewed as a
permutation of the \acp{MP}. Naturally, the $\text{R}_T$ can also be
viewed as rotations of the Majorana sphere along an axis running
though the sphere.  Because $\ket{\Psi_{4}}$ is positive,
\lemref{pos_symm_cpp} restricts the allowed locations for \acp{CPP} to
the three half-circles shown as blue lines in \fig{4_cpps}(a). These
lines rotate with the Majorana sphere under the action of
$\text{R}_T$, which allows us to apply \lemref{pos_symm_cpp} again in
the new orientation.  As shown in \fig{4_cpps}(b) and (c), two
successive rotations give rise to further restricting areas, coloured
green and red.  Any \ac{CPP} can only lie at the intersections of the
blue, green and red lines.  From \fig{4_cpps}(c) it can be seen that
these intersections are precisely the locations of the four \acp{MP},
thus providing us with all the \acp{CPP} of $\ket{\Psi_{4}}$ without
the need for any calculations.  This is remarkable, because the
determination of \acp{CPP} is usually a highly nontrivial task which
requires at least some analytical or numerical effort.

\begin{figure}
  \centering
  \begin{overpic}[scale=.45]{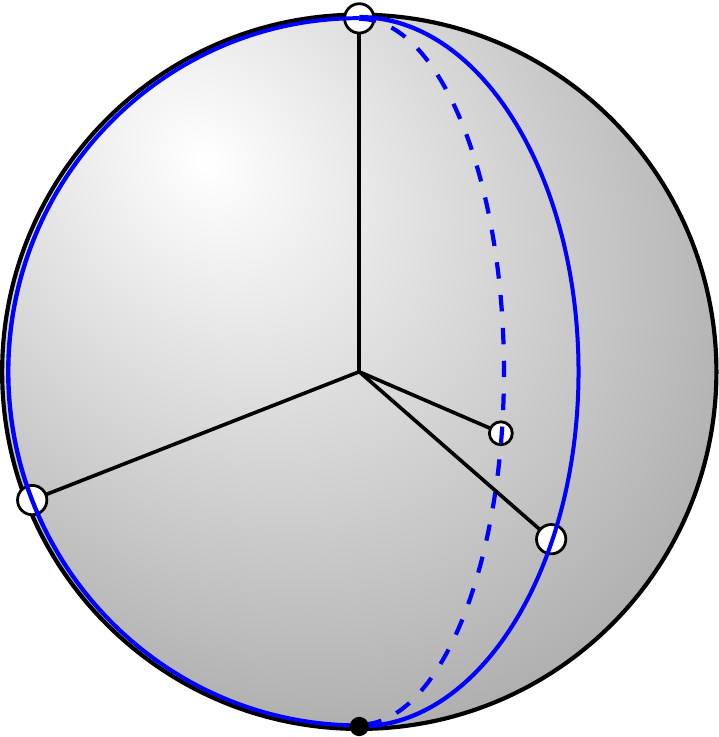}
    \put(-5,0){(a)}
    \put(43,103){$1$}
    \put(-7,23){$2$}
    \put(73,13){$3$}
    \put(59,49){$4$}
  \end{overpic}
  \hspace{7mm}
  \begin{overpic}[scale=.45]{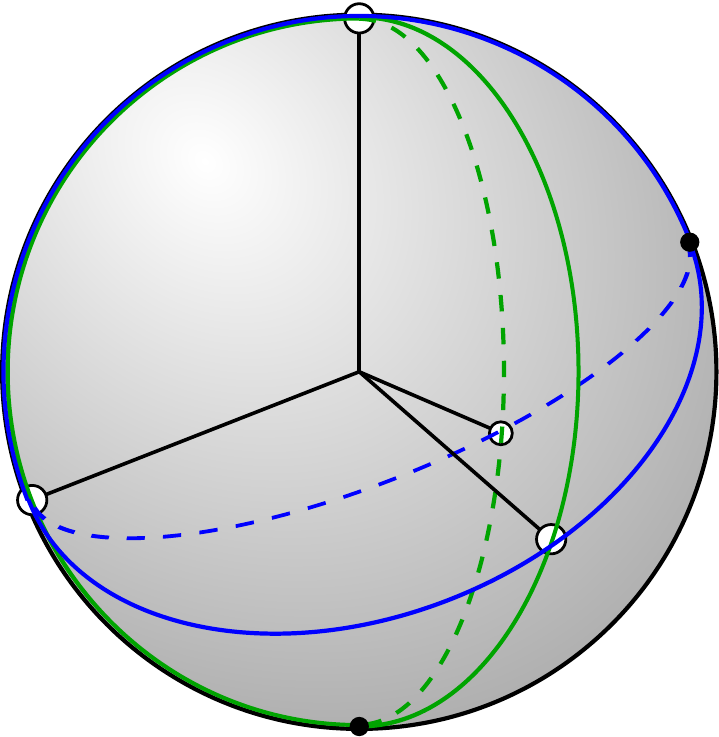}
    \put(-5,0){(b)}
    \put(43,103){$3$}
    \put(-7,23){$1$}
    \put(73,13){$2$}
    \put(59,49){$4$}
  \end{overpic}
  \hspace{7mm}
  \begin{overpic}[scale=.45]{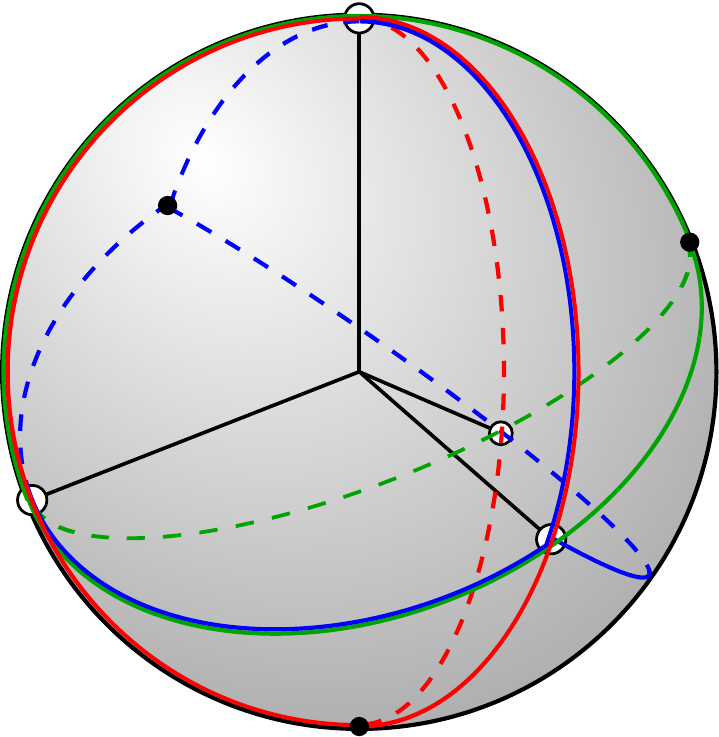}
    \put(-5,0){(c)}
    \put(43,103){$2$}
    \put(-7,23){$3$}
    \put(73,13){$1$}
    \put(59,49){$4$}
  \end{overpic}
  \caption[Determining the CPPs from the rotation
  group]{\label{4_cpps} The four \acp{CPP} of the tetrahedron state
    can be directly obtained from the tetrahedral symmetry group and
    from \protect\lemref{pos_symm_cpp}.  Rotations from $T \subset
    \text{SO}(3)$ amount to permutations of the \acp{MP} and thus
    provide additional restrictions to the allowed locations of the
    \acp{CPP}.  An $\text{R} (\frac{2 \pi}{3})$-rotation along the
    axis given by the Bloch vector of \ac{MP} $\ket{\phi_4}$ is
    performed twice between (a) and (c).  Any \ac{CPP} must lie at the
    intersections of the blue, green and red lines shown in (c),
    yielding the locations of the four \acp{MP}.}
\end{figure}

Apart from being the unique state with maximal geometric entanglement
among symmetric states, the special position of the tetrahedron state
in $\mathh = (\mbbc^2)^{\otimes 4}$ can be noticed in other ways.  As
a Platonic state, it is an optimal state for reference frame alignment
\cite{Kolenderski08}, and in terms of symmetric informationally
complete positive-operator-valued measures (SIC POVM)
\cite{Zauner11,Renes04} it was found that the tetrahedron state is the
unique state that can be generated in the setting of a two-dimensional
Hilbert space \cite{Renes04,Chen11}.  In \sect{global_ent} it will be
outlined that -- along with the four qubit cluster state and \ac{GHZ}
state -- the tetrahedron state is one of the three maximally entangled
four qubit states under a monotone that requires all $k$-tangles with
$k < 4$ to vanish \cite{Osterloh05,Dokovic09}. Under more stringent
requirements it is even the only state to be maximally entangled
\cite{Gisin98}.  Furthermore, witnesses for all types of 4 qubit
symmetric entanglement can be constructed from the tetrahedron state,
something that is not possible with other prominent states
\cite{Bastin11}.  Finally, through a private communication with Mio
Murao I became aware of the following bipartite decomposition of the
tetrahedron state:
\begin{equation}\label{tetra_bi}
  \begin{split}
    \ket{\Psi_{4}}& = \tfrac{1}{\sqrt{3}} \ket{00}\ket{00} +
    \tfrac{1}{\sqrt{6}} \left( \ket{01}\ket{11} + \ket{10}\ket{11} +
      \ket{11}\ket{01} + \ket{11}\ket{10} \right)
    \\
    {}& = \tfrac{1}{\sqrt{3}} \left( \ket{a}\ket{a} + \ket{b}\ket{c} +
      \ket{c}\ket{b} \right) \ens ,
  \end{split}
\end{equation}
with $\ket{a} = \ket{00}$, $\ket{b} = \frac{1}{\sqrt{2}} \left(
  \ket{01} + \ket{10} \right)$ and $\ket{c} = \ket{11}$ as
orthonormalised 2-qubit states. In this sense the tetrahedron state
contains maximally entangled bipartite qutrit (three-level) systems
along any split into two 2-qubit subsystems.  Viewed as such a
2-qutrit system, the entanglement of the state \eqref{tetra_bi} is
$\Eg = E_{\text{R}} = E_{\text{Rob}} = \log_2 d = \log_2 3$, and
because it is a pure bipartite state, it is additive under all three
measures.  Therefore the tetrahedron state retains its entanglement
and additivity properties when viewed as a 2-qutrit state instead of a
4-qubit state.  This bears resemblance to the 4 qubit cluster state
\cite{Briegel01}, which can be written in the form
\begin{equation}\label{4_qubit_cluster}
  \ket{\Psi^{c}_{4}} = \tfrac{1}{2} \big( \ket{0000} + \ket{0101} +
  \ket{1010} - \ket{1111} \big) \ens .
\end{equation}
This state has the entanglement $\Eg = E_{\text{R}} = E_{\text{Rob}} =
2$ and is additive under the three measures \cite{Markham07}. Taking
the bipartite cut along neighbouring qubits ($12-34$ or $14-23$)
clearly results in a maximally entangled bipartite state of two
four-level subsystems yielding the entanglement $\Eg = E_{\text{R}} =
E_{\text{Rob}} = \log_2 d = 2$. Unlike the tetrahedron state, however,
the cluster state is not symmetric, and indeed the bipartite cut along
diametrically opposite qubits ($13-24$) yields less entanglement:
\begin{equation}\label{4_qubit_cluster_diag_cut}
  \begin{split}
    \ket{\Psi^{c}_{4}}_{13-24} &= \tfrac{1}{2} \big( \ket{00}\ket{00}
    + \ket{00}\ket{11} + \ket{11}\ket{00} - \ket{11}\ket{11} \big) \\
    {}& = \tfrac{1}{\sqrt{2}} \big( \ket{a} \ket{\widetilde{a}} +
    \ket{b} \ket{\widetilde{b}} \big) \ens ,
  \end{split}
\end{equation}
with $\ket{a} = \ket{00}$, $\ket{b} = \ket{11}$, $\ket{\widetilde{a}}
= \tfrac{1}{\sqrt{2}} \left( \ket{00} + \ket{11} \right)$ and
$\ket{\widetilde{b}} = \tfrac{1}{\sqrt{2}} \left( \ket{00} - \ket{11}
\right)$. Under this bipartition only 1 ebit of entanglement is
obtained.

\subsection{Five qubits}\label{majorana_five}

For five points, the solution to Thomson's problem is given by three
of the charges lying on the vertices of an equatorial triangle and the
other two lying at the poles \cite{Ashby86,Marx70}.  Such a trigonal
bipyramid is also a solution to \toths problem, but it is not
unique\footnote{An example of another solution is the square pyramid
  obtained from the regular octahedron by removing one vertex, and a
  continuum of solutions is given by two fixed vertices on the poles
  and the other three vertices lying on the equatorial circle with
  spherical distances $\tfra{\pi}{2} \leq s_{\text{min}} \leq \tfra{2
    \pi}{3}$ between each pair.  In all of these configurations the
  minimum pairwise distance is $s_{\text{min}} = \tfra{\pi}{2}$.}
\cite{Toth43,Melnyk77,Schutte51}.  The corresponding quantum state,
the \quo{trigonal bipyramid state} $\ket{\psi_{5}} =
\tfra{1}{\sqrt{2}} ( \sym{1} + \sym{4} )$, has the \acp{MP}
\begin{equation}\label{5_tribi_maj}
  \ket{\phi_{1}} = \ket{0} \ens , \quad
  \ket{\phi_{2, 3, 4}} = \tfrac{1}{\sqrt{2}}
  ( \ket{0} + \E^{\I \kappa} \ket{1} ) \ens , \quad
  \ket{\phi_{5}} = \ket{1} \ens ,
\end{equation}
with $\kappa = 0, \tfra{2 \pi}{3}, \tfra{4 \pi}{3}$.  As seen in
\fig{bloch_5}(a) and (b), the state has three \acp{CPP} which coincide
with the equatorial \acp{MP}, yielding $\Eg (\ket{\psi_{5}}) = \log_2
\left( \tfra{16}{5} \right) \approx 1.678 \: 072$.  The trigonal
bipyramid state is positive and totally invariant under the dihedral
symmetry group $D_{m}$ for $ m=3$, which implies that $\ket{\psi_{5}}$
is equivalent and additive under $\Eg$, $E_{\text{R}}$ and
$E_{\text{Rob}}$.

\begin{figure}
  \centering
  \begin{overpic}[scale=.44]{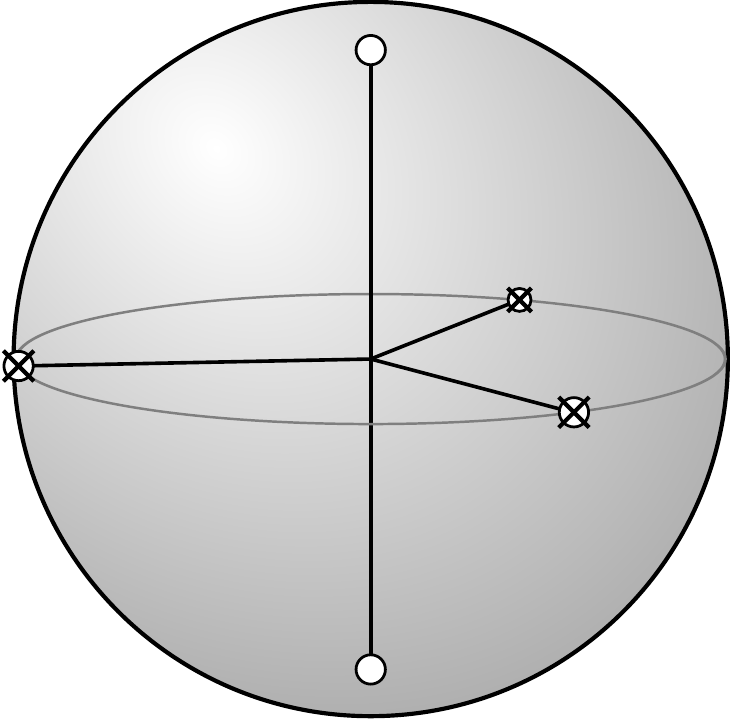}
    \put(-7,-2){(a)}
    \put(29,83){$\ket{\phi_{1}}$}
    \put(-19,50){$\ket{\phi_{2}}$}
    \put(74,29){$\ket{\phi_{3}}$}
    \put(67,64){$\ket{\phi_{4}}$}
    \put(29,10){$\ket{\phi_{5}}$}
  \end{overpic}
  \begin{overpic}[scale=.36]{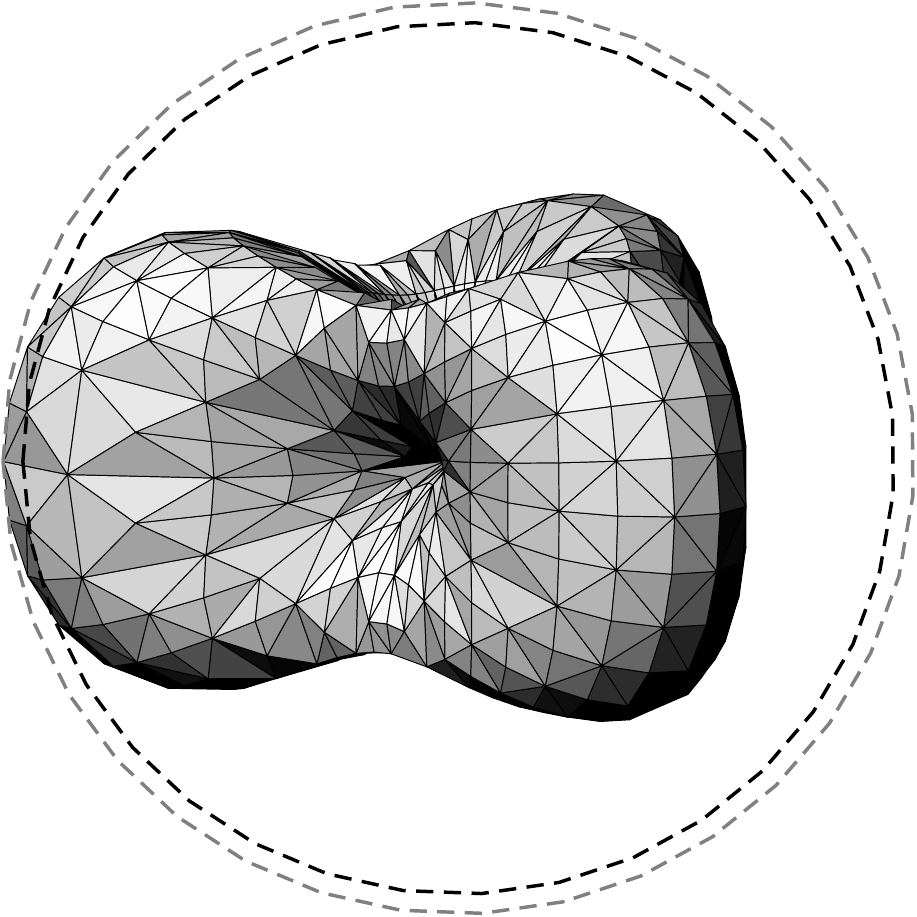}
    \put(-7,-2){(b)}
  \end{overpic}
  \hfill
  \begin{overpic}[scale=.44]{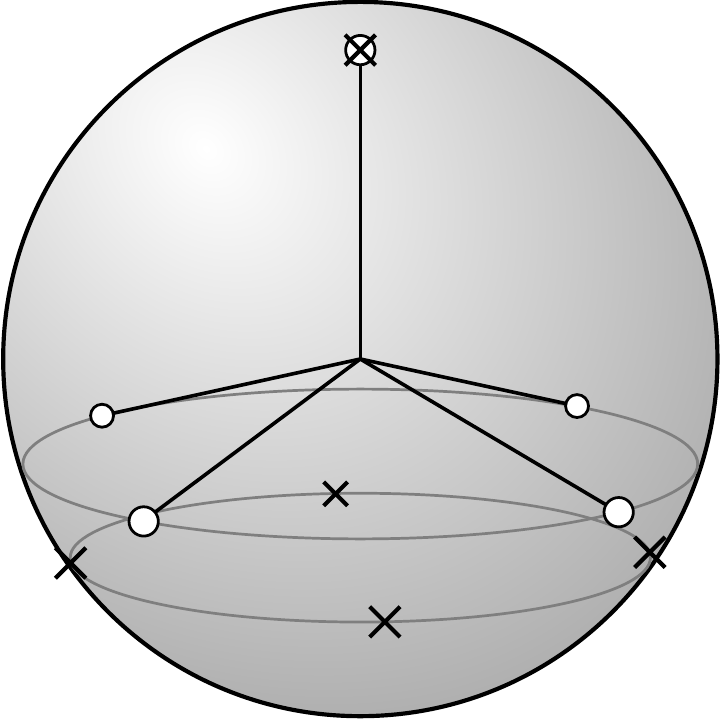}
    \put(-7,-2){(c)}
    \put(29,84){$\ket{\phi_{1}}$}
    \put(20,16){$\ket{\phi_{2}}$}
    \put(64,20){$\ket{\phi_{3}}$}
    \put(74,50){$\ket{\phi_{4}}$}
    \put(4,49){$\ket{\phi_{5}}$}
  \end{overpic}
  \begin{overpic}[scale=.36]{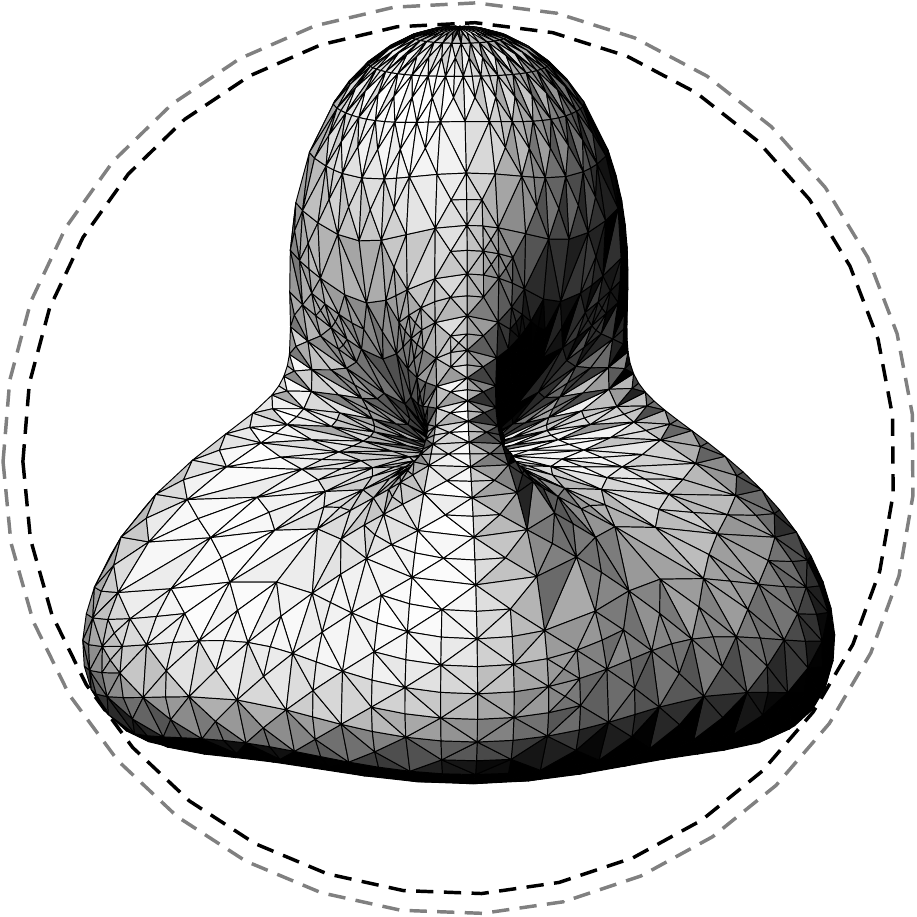}
    \put(-7,-2){(d)}
  \end{overpic}
  \caption[Majorana representation of 5 qubit states]{\label{bloch_5}
    For five qubits the Majorana representation and the spherical
    amplitude function $g^2 ( \theta , \varphi )$ of the
    \protect\quo{trigonal bipyramid state} $\ket{\psi_{5}}$ is shown
    in (a) and (b), respectively. The same visualisations are shown
    for the \protect\quo{square pyramid state} $\ket{\Psi_{5}}$ in (c)
    and (d), respectively. The dashed circles in (b) and (d) mark the
    maximum values of $g^2$, with the outer gray circle corresponding
    to the less entangled state $\ket{\psi_{5}}$ and the inner black
    circle to the more entangled state $\ket{\Psi_{5}}$.}
\end{figure}

However, a numerical search among symmetric five qubit states yields
states with higher entanglement.  The conjectured maximally entangled
state has the \ac{MP} distribution of the square pyramid\footnote{This
  square pyramid cannot be a solution of \protect\toths problem,
  because the spherical distance between some \acp{MP}
  (e.g. $\ket{\phi_{2}}$ and $\ket{\phi_{3}}$) is $s_{\text{min}} <
  \tfra{\pi}{2}$.}  shown in \fig{bloch_5}(c), which corresponds to
the analytic form
\begin{equation}\label{5_opt_form}
  \ket{\Psi_{5}} = \frac{\sym{0} + A \sym{4}}{\sqrt{1 + A^2}}
  \ens ,
\end{equation}
where the \acp{MP} are
\begin{equation}\label{5_opt_maj}
  \ket{\phi_{1}} = \ket{0} \ens , \quad
  \ket{\phi_{2, 3, 4, 5}} = \sqrt{\alpha} \ket{0} + \E^{\I \kappa}
  \sqrt{1 - \alpha} \ket{1} \ens ,
\end{equation}
with $\kappa = \tfra{\pi}{4}, \tfra{3 \pi}{4}, \tfra{5 \pi}{4},
\tfra{7 \pi}{4}$. The relationship between $A$ and $\alpha$ is
\begin{equation}\label{A_alpha_rel}
  A = \frac{(1 - \alpha)^2}{\sqrt{5} \alpha^2} \ens .
\end{equation}
The value of $A$ which maximises the entanglement of $\ket{\Psi_{5}}$
gives rise to a state with five \acp{CPP}, one on the north pole and
the other four lying on a horizontal circle below the plane with the
\acp{MP}, i.e.
\begin{equation}\label{5_opt_cpp}
  \ket{\sigma_{1}} = \ket{0} \ens , \quad
  \ket{\sigma_{2, 3, 4, 5}} = x_{\text{m}} \ket{0} +
  k \, y_{\text{m}} \ket{1} \ens ,
\end{equation}
with $x_{\text{m}}^2 + y_{\text{m}}^2 = 1$ and $k = 1, \I, -1, -\I$.
Approximate values of these quantities are:
\begin{equation}\label{5_opt_form_approx}
  A \approx 1.531 \: 538 \: 191 \ens ,
  \quad \alpha \approx 0.350 \: 806 \: 560 \ens ,
  \quad x_{\text{m}} \approx 0.466 \: 570 \: 328 \ens .
\end{equation}
The exact values can be determined analytically.  Since
$\ket{\Psi_{5}}$ is positive, it suffices to investigate the maxima of
the spherical amplitude function $g ( \theta , \varphi )$ along the
positive half-circle: $g ( \theta ) \equiv g ( \theta , 0 )$.  Using
the parameterisation $x: = \co_{\theta} \in [0,1]$, an analysis shows
that the global maximum of $g ( x )$ becomes minimal when the value
$g(1)$ at the local maximum $x=1$ equals the value $g(x_{\text{m}})$
at the non-trivial maximum $x_{\text{m}} \in (0,1)$.  With the ansatz
$g(1) = g(x_{\text{m}})$ it follows that $A = \frac{1 -
  x_{\text{m}}^5} {\sqrt{ 5 } x_{\text{m}} y_{\text{m}}^4 }$, and the
requirement $ \frac{\D g}{\D x} (x_{\text{m}}) = 0$ yields $4
x_{\text{m}}^5 - 5 x_{\text{m}}^2 + 1 = 0$.  A polynomial division by
the trivial root $x_{\text{m}} = 1$ reduces this quintic equation to a
quartic one:
\begin{equation}\label{5_quartic}
  4 x_{\text{m}}^4 + 4 x_{\text{m}}^3 + 4 x_{\text{m}}^2 -
  x_{\text{m}} - 1 = 0 \ens .
\end{equation}
The real root in the interval $[0,1]$ can be determined analytically
by a reduction to cubic equations and Cardano's Formula \cite{Tignol}:
\begin{equation}
  \begin{split}
    x_{\text{m}}& = \frac{1}{4} \left( \sqrt{8 z - 3} - 1 + \sqrt{
        \tfra{10}{\sqrt{8 z - 3}} - 2 - 8 z } \, \right) \ens , \\
    \text{with} \quad z& = \frac{1}{24} \left( \sqrt[3]{ 550 + 30
        \sqrt{345} } + \sqrt[3]{ 500 - 30 \sqrt{345} } \, \right) +
    \frac{1}{6} \ens .
  \end{split}
\end{equation}
This $x_{\text{m}}$ establishes the nontrivial positive \ac{CPP}
$\ket{\sigma_2} = x_{\text{m}} \ket{0} + \sqrt{1 - x_{\text{m}}^2}
\ket{1}$, and by inserting it into $A = \frac{1 - x_{\text{m}}^5}
{\sqrt{ 5 } x_{\text{m}} y_{\text{m}}^4 }
\stackrel{\eqref{5_quartic}}{=} \frac{\sqrt{5}}{4 x_{\text{m}} (1 -
  x_{\text{m}}^2)}$, it yields the explicit form of $\ket{\Psi_{5}}$.
The parameter $\alpha$ of the \acp{MP} follows by solving
\eq{A_alpha_rel}.  From the \ac{MP} distribution in \fig{bloch_5}(c)
it is clear that $\ket{\Psi_{5}}$ remains invariant under the cyclic
symmetry group $C_{m}$ with $m=4$. However, it is not totally
invariant, because the latitude of the horizontal circle of \acp{MP}
can be varied without changing the rotation group.

The amount of entanglement $\Eg (\ket{\Psi_{5}}) = \log_2 ( 1 + A^2 )
\approx 1.742 \: 269$ of the square pyramid state is considerably
higher than that of the trigonal bipyramid state.  Martin \etal
\cite{Martin10} independently found a square pyramid state as the
maximally entangled symmetric five qubit state, and we verified that
their state is the same as ours.

\subsection{Six qubits}\label{majorana_six}

The regular octahedron, a Platonic solid, is the unique solution to
\toths and Thomson's problem.  The corresponding \quo{octahedron
  state} $\ket{\Psi_{6}} = \tfra{1}{\sqrt{2}} ( \sym{1} + \sym{5} )$
was numerically found to solve the Majorana problem for six qubits
too.  In the orientation shown in \fig{bloch_6}(a) the \acp{MP} are
\begin{equation}\label{6_mp}
  \ket{\phi_{1}} = \ket{0} \ens , \quad
  \ket{\phi_{2}} = \ket{1} \ens , \quad
  \ket{\phi_{3, 4, 5, 6}} = \tfrac{1}{\sqrt{2}}
  \big( \ket{0} + \E^{\I \kappa}
  \ket{1} \big) \ens ,
\end{equation}
with $\kappa = \tfra{\pi}{4}, \tfra{3 \pi}{4}, \tfra{5 \pi}{4},
\tfra{7 \pi}{4}$.  The octahedron state has a positive computational
form and is totally invariant under the octahedral symmetry group $O
\subset \text{SO}(3)$, implying that it is equivalent and additive
under $\Eg$, $E_{\text{R}}$ and $E_{\text{Rob}}$.  Furthermore, the
rotational invariance enables us to determine the \acp{CPP} with
\lemref{pos_symm_cpp}. As seen in \fig{bloch_6}(d), only one rotation
suffices to determine the eight \acp{CPP} at the centre of each face
of the octahedron, forming a cube inside the Majorana sphere:
\begin{equation}\label{6_cpp}
  \ket{\sigma_{1, 2, 3, 4}} = \sqrt{\tfrac{\sqrt{3}+1}{2 \sqrt{3}}}
  \ket{0} + k \sqrt{\tfrac{\sqrt{3}-1}{2 \sqrt{3}}} \ket{1}
  \ens , \hspace{2mm} \ket{\sigma_{5, 6, 7, 8}} =
  \sqrt{\tfrac{\sqrt{3}-1}{2 \sqrt{3}}} \ket{0} +
  k \sqrt{\tfrac{\sqrt{3}+1}{2 \sqrt{3}}} \ket{1} \, ,
\end{equation}
with $k = 1,\I,-1,-\I$.  The geometric entanglement follows as $\Eg (
\ket{\Psi_{6}} ) = \log_2 \left( \tfra{9}{2} \right) \approx 2.169 \:
925$.
\begin{figure}
  \centering
  \begin{overpic}[scale=.45]{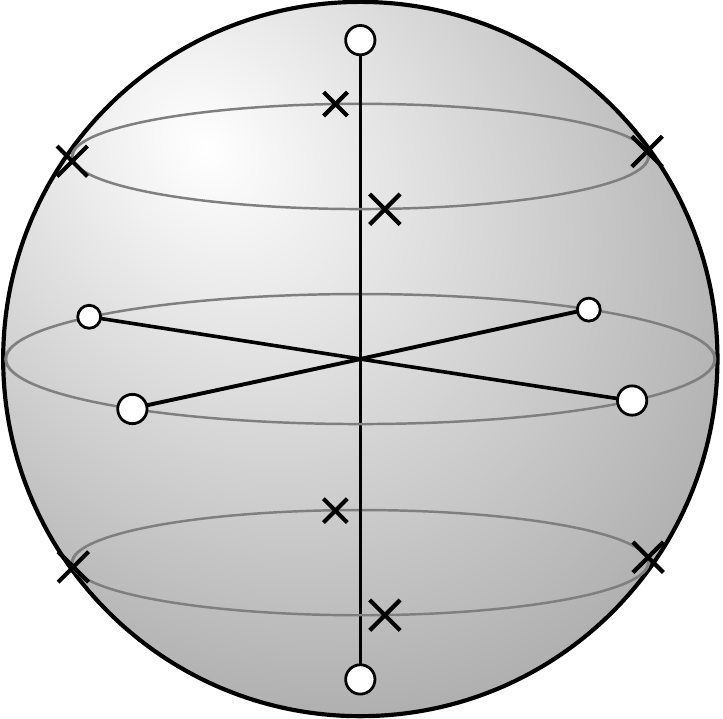}
    \put(-4,-2){(a)}
    \put(52,86){$\ket{\phi_{1}}$}
    \put(29,8){$\ket{\phi_{2}}$}
    \put(6,31){$\ket{\phi_{3}}$}
    \put(76,32){$\ket{\phi_{4}}$}
    \put(76,63){$\ket{\phi_{5}}$}
    \put(5,62){$\ket{\phi_{6}}$}
  \end{overpic}
  \hspace{1mm}
  \begin{overpic}[scale=.51]{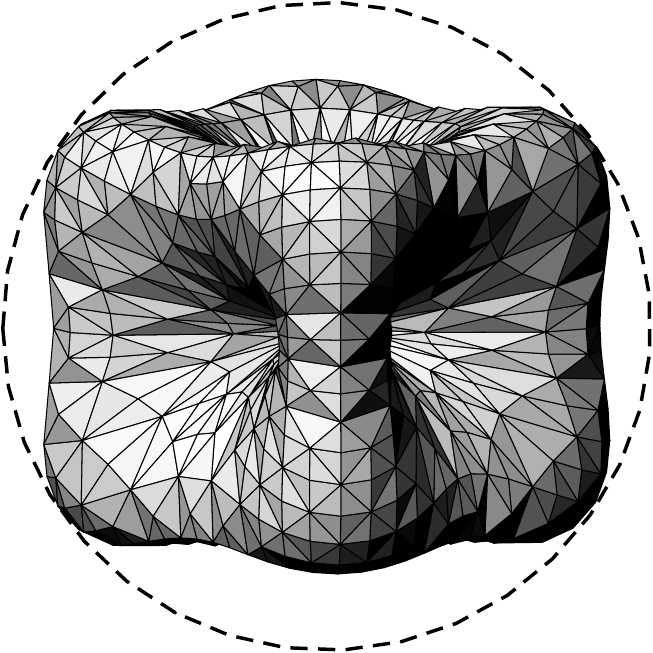}
    \put(-4,-2){(b)}
  \end{overpic}
  \hfill
  \begin{overpic}[scale=.42]{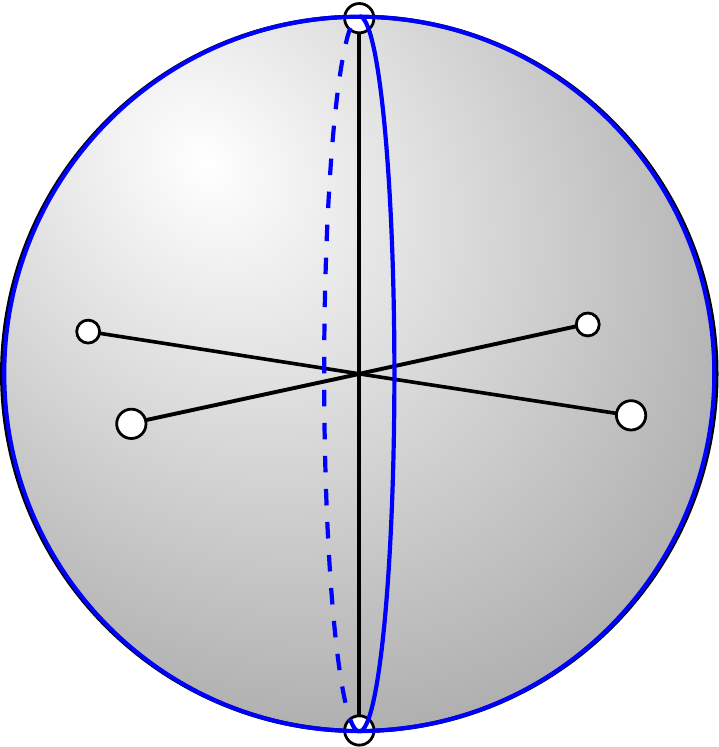}
    \put(-5,-2){(c)}
    \put(42,103){$1$}
    \put(42,-11){$2$}
    \put(13,28){$3$}
    \put(76,31){$4$}
    \put(75,63){$5$}
    \put(12,61){$6$}
  \end{overpic}
  \hspace{1mm}
  \begin{overpic}[scale=.42]{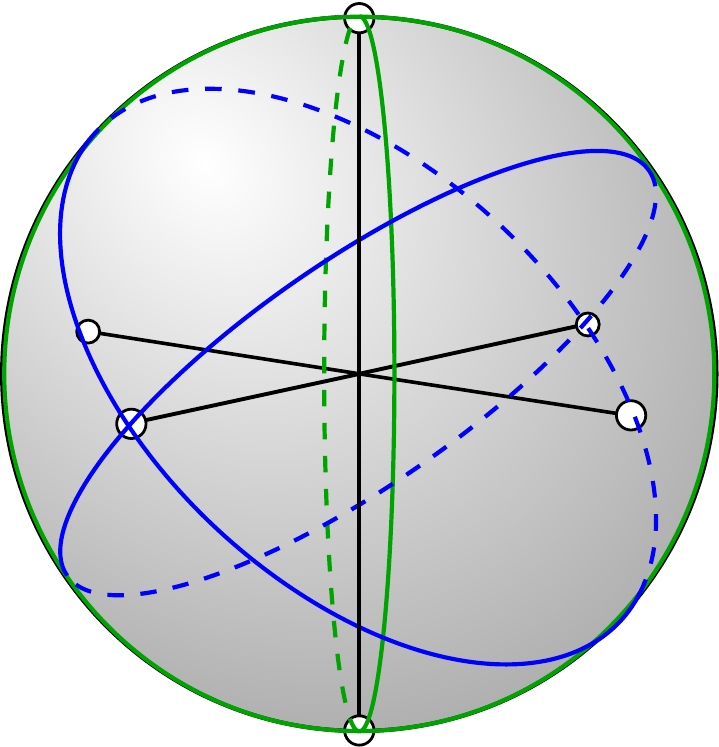}
    \put(-5,-2){(d)}
    \put(42,103){$5$}
    \put(42,-11){$3$}
    \put(13,28){$1$}
    \put(76,31){$4$}
    \put(75,63){$2$}
    \put(12,61){$6$}
  \end{overpic}
  \caption[Majorana representation of 6 qubit states]{\label{bloch_6}
    For six qubits the Majorana representation and the spherical
    amplitude function of the \protect\quo{octahedron state}
    $\ket{\Psi_{6}}$ are shown in (a) and (b), respectively.
    Analogous to the tetrahedron state, the \acp{CPP} follow directly
    from the octahedral symmetry group $O \subset \text{SO}(3)$.  Only
    one $\text{R} (\frac{\pi}{2})$-rotation along the axis spanned by
    the \acp{MP} $\ket{\phi_{4}}$ and $\ket{\phi_{6}}$ is necessary to
    unambiguously determine the eight \acp{CPP} at the intersections
    of the blue and green lines in (d).}
\end{figure}
In contrast to the tetrahedron state, where the \acp{MP} and \acp{CPP}
overlap, the \acp{CPP} of the octahedron state lie as far away from
the \acp{MP} as possible.  This is because the \acp{MP} of
$\ket{\Psi_{6}}$ form antipodal pairs, which leads to the spherical
amplitude function being zero at the location of each \ac{MP}.

\subsection{Seven qubits}\label{majorana_seven}

For seven points, the solutions to the two classical problems become
fundamentally different for the first time.  \toths problem is solved
by two triangles asymmetrically positioned about the equator and the
remaining point at the north pole \cite{Schutte51,Erber91}, or (1-3-3)
in the F{\"o}ppl notation \cite{Whyte52}.  Converting \toths solution
to Bloch vectors yields the \acp{MP}
\begin{equation}\label{7_maj_toth}
  \ket{\phi_{1}} = \ket{0} \ens , \quad
  \ket{\phi_{2, 3, 4}} = \co_{\theta} \ket{0} +
  \E^{\I \kappa} \si_{\theta} \ket{1} \ens , \quad
  \ket{\phi_{5, 6, 7}} = \co_{\vartheta} \ket{0} -
  \E^{\I \kappa} \si_{\vartheta} \ket{1} \ens ,
\end{equation}
with $\kappa = 0, \tfra{2 \pi}{3}, \tfra{4 \pi}{3}$, and their
inclinations are given by
\begin{equation}\label{7_maj_toth2}
  \co_{\theta} = \tfrac{1}{2} \csc ( \tfrac{2 \pi}{9} ) \ens , \quad
  \co_{\vartheta} = \sqrt{ \tfrac{1}{2} -
    \tfrac{\sqrt{3}}{6} \cot ( \tfrac{2 \pi}{9} ) } \ens .
\end{equation}
This non-positive state is of the form $\ket{\psi_{7}^{\text{\totx}}}
= \alpha \sym{0} - \beta \sym{3} - \gamma \sym{6}$, with the
approximate values for the coefficients being
\begin{equation}\label{7_coeff}
  \alpha \approx 0.295 \: 510 \ens , \quad
  \beta \approx 0.602 \: 458 \ens , \quad
  \gamma \approx 0.741 \: 430 \ens .
\end{equation}
The state is rotationally symmetric around the $Z$-axis and has three
\acp{CPP} $\ket{\sigma_{1, 2, 3}} = \co_{\phi} \ket{0} + \E^{\I
  \kappa} \si_{\phi} \ket{1}$, with $\kappa = 0, \tfra{2 \pi}{3},
\tfra{4 \pi}{3}$ and $\phi \approx 2.089 \: 603$, yielding $G^2
\approx 0.309 \: 326$ and $\Eg ( \ket{\psi_{7}^{\text{\totx}}} )
\approx 1.692 \: 798$.  \Fig{bloch_7} shows the Majorana
representation and the highly imbalanced spherical amplitude function
of $\ket{\psi_{7}^{\text{\totx}}}$.  The entanglement can be
considerably increased by varying the parameters \eqref{7_coeff},
which corresponds to changing the latitude of the two \ac{MP} circles
shown in \fig{bloch_7}(a). In this way a state with seven \acp{CPP}
and $\Eg \approx 2.146 \: 81$ can be obtained.

\begin{figure}
  \centering
  \begin{overpic}[scale=.44]{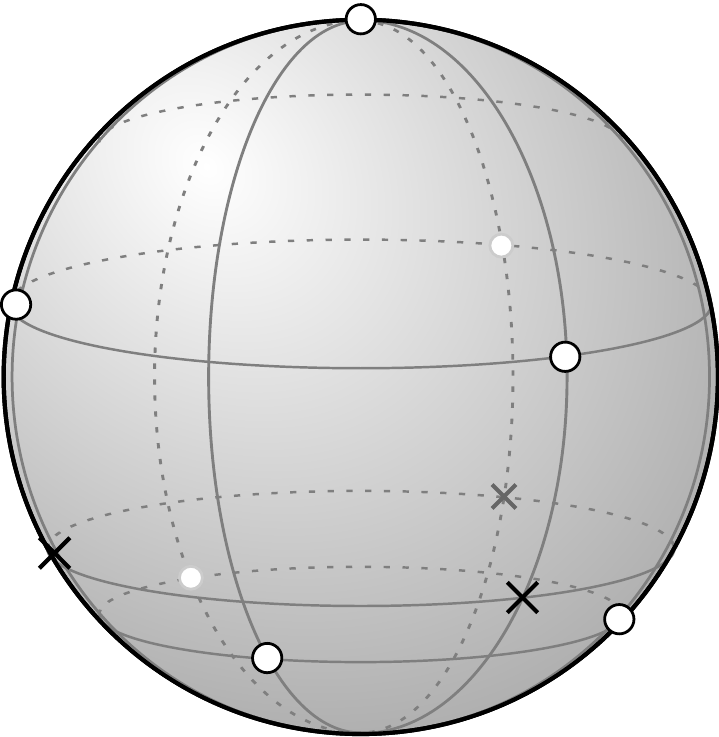}
    \put(-7,0){(a)}
  \end{overpic}
  \begin{overpic}[scale=.37]{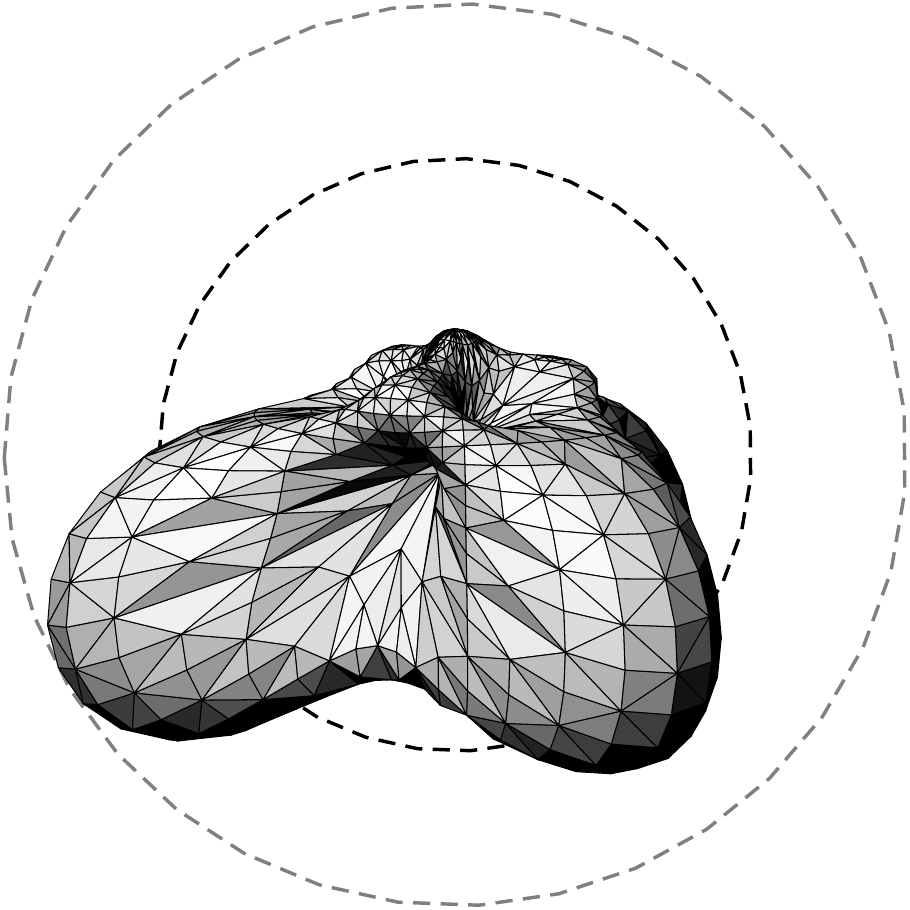}
    \put(-7,0){(b)}
  \end{overpic}
  \hfill
  \begin{overpic}[scale=.44]{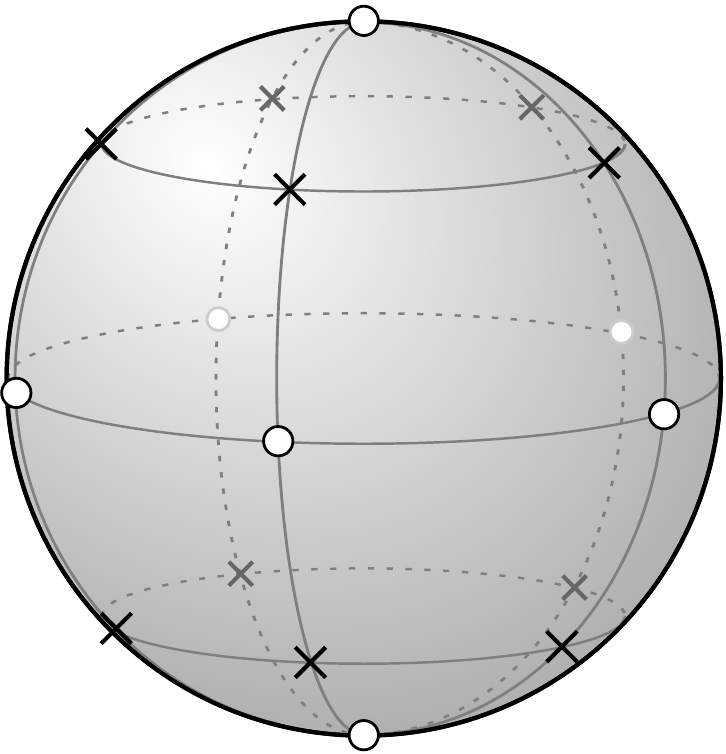}
    \put(-7,0){(c)}
  \end{overpic}
  \begin{overpic}[scale=.37]{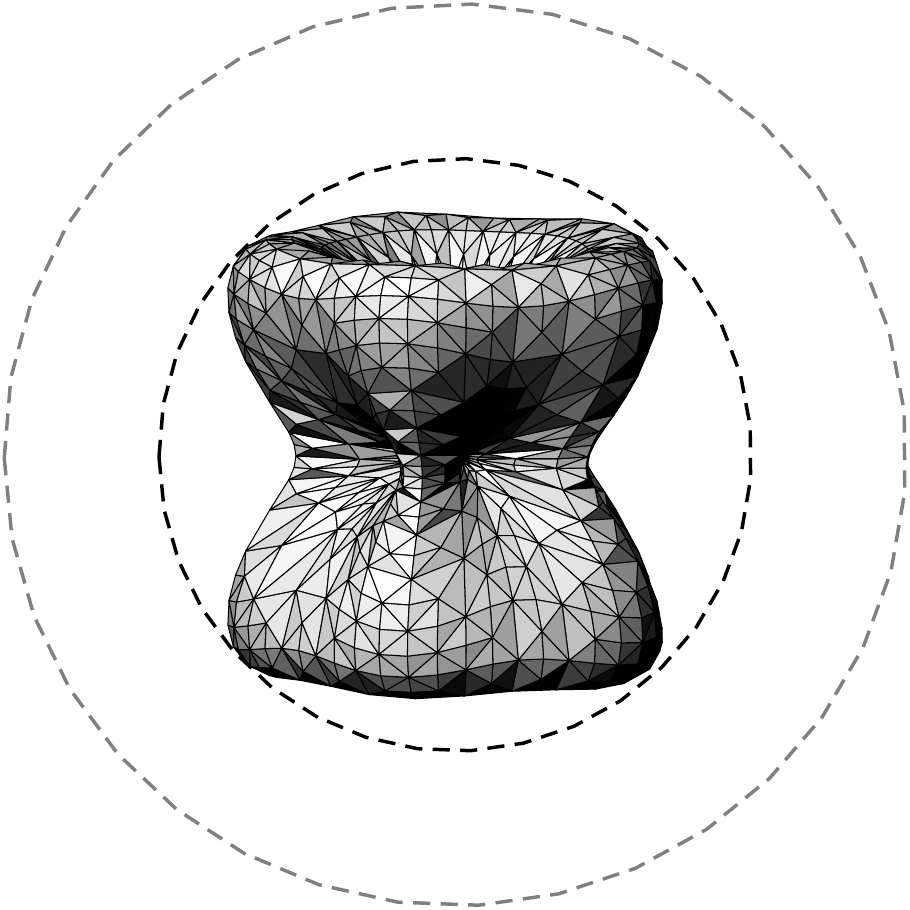}
    \put(-7,0){(d)}
  \end{overpic}
  \caption[Majorana representation of 7 qubit states]{\label{bloch_7}
    For seven qubits the Majorana representation and the spherical
    amplitude function $g^2 ( \theta , \varphi )$ of the solution of
    \protect\toths problem $\ket{\psi_{7}^{\text{\totx}}}$ is shown in
    (a) and (b), and for the \protect\quo{pentagonal dipyramid state}
    $\ket{\Psi_{7}}$ in (c) and (d), respectively. The outer and inner
    circle correspond to the maximum values of $g^2$ for
    $\ket{\psi_{7}^{\text{\totx}}}$ and $\ket{\Psi_{7}}$,
    respectively.}
\end{figure}

Thomson's problem is solved by the vertices of a pentagonal dipyramid
\cite{Ashby86,Marx70,Erber91}, where five points lie on an equatorial
pentagon and the other two on the poles.  The corresponding
\quo{pentagonal dipyramid state}, shown in \fig{bloch_7}, is
numerically found to be the solution to the Majorana problem, too.
The state is $\ket{\Psi_{7}} = \tfra{1}{\sqrt{2}} ( \sym{1} + \sym{6}
)$, and its \acp{MP} are
\begin{equation}\label{7_maj}
  \ket{\phi_{1}} = \ket{0} \ens , \quad
  \ket{\phi_{2, 3, 4, 5, 6}} = \tfrac{1}{\sqrt{2}} \big( \ket{0} +
  \E^{\I \kappa} \ket{1} \big) \ens , \quad
  \ket{\phi_{7}} = \ket{1} \ens ,
\end{equation}
with $\kappa = 0, \tfra{2 \pi}{5}, \tfra{4 \pi}{5}, \tfra{6 \pi}{5},
\tfra{8 \pi}{5}$. Despite this simple analytical form, the
determination of the \acp{CPP} is not trivial.  With the
parameterisation $x := \cos^2 \theta = (2 \co_{\theta}^2 - 1)^2$ the
positions of the ten \acp{CPP}
\begin{equation}\label{7_max_cpps}
  \ket{\sigma_{1,2,3,4,5}} = \co_{\theta} \ket{0} +
  \E^{\I \kappa}\si_{\theta} \ket{1} \ens , \quad
  \ket{\sigma_{6,7,8,9,10}} = \si_{\theta} \ket{0} +
  \E^{\I \kappa} \co_{\theta} \ket{1} \ens ,
\end{equation}
$\kappa = 0, \tfra{2 \pi}{5}, \tfra{4 \pi}{5}, \tfra{6 \pi}{5},
\tfra{8 \pi}{5}$, are given by the real root of the cubic equation
\begin{equation}\label{7_polynomial}
  49 x^3 + 165 x^2 - 205 x + 55 = 0
\end{equation}
in the interval $[0, \tfra{1}{2}]$. Approximate values are
$\co_{\theta} \approx 0.920 \: 574$ and $\si_{\theta} \approx 0.390 \:
567$, yielding $G^2 \approx 0.203 \: 247$ and $\Eg ( \ket{\Psi_{7}} )
\approx 2.298 \: 691 \: 396$.  Since $\ket{\Psi_{7}}$ is positive and
totally invariant under the dihedral symmetry group $D_{5}$, it
satisfies the requirements of \theoref{pos_inv}.

\subsection{Eight qubits}\label{majorana_eight}

The regular cube is a Platonic solid with eight vertices, and
therefore a natural candidate to study.  Its \ac{MP} locations can be
directly obtained from the \acp{CPP} \eqref{6_cpp} of the octahedron
state, which were found to form a cube, as seen in \fig{bloch_6}(a).
In the configuration shown in \fig{bloch_8_cube}(a) the \acp{MP} are

\begin{equation}\label{8_cube_mp}
  \ket{\phi_{1, 2, 3, 4}} = \co_{\theta} \ket{0} +
  \E^{\I \kappa} \si_{\theta} \ket{1} \ens , \quad
  \ket{\phi_{5, 6, 7, 8}} = \si_{\theta} \ket{0} +
  \E^{\I \kappa} \co_{\theta} \ket{1} \, ,
\end{equation}
with $\kappa = \tfra{\pi}{4}, \tfra{3 \pi}{4}, \tfra{5 \pi}{4},
\tfra{7 \pi}{4}$, and $\co_{\theta}^{2} = \tfrac{\sqrt{3}+1}{2
  \sqrt{3}}$, $\si_{\theta}^{2} = \tfrac{\sqrt{3}-1}{2 \sqrt{3}}$.
This gives rise to the cube state
\begin{equation}\label{8_cube}
  \ket{\psi_{8}^{\text{c}}} =
  \frac{1}{2 \sqrt{6}} \left( \sqrt{5} \sym{0} +
    \sqrt{14} \sym{4} + \sqrt{5} \sym{8} \right) \ens .
\end{equation}
This state is positive and totally invariant under the octahedral
symmetry group $O \in \text{SO}(3)$, thus meeting the prerequisites of
\theoref{pos_inv}.  Its \acp{CPP} can be obtained in the same manner
as for the tetrahedron and octahedron state by applying
\lemref{pos_symm_cpp}.  From \fig{bloch_8_cube}(c) it can be seen that
two rotations (e.g. $\rotx (\frac{\pi}{2})$ and $\roty
(\frac{\pi}{2})$) give rise to three areas, coloured blue, green and
red. The intersection of these three areas are six points that form
the vertices of a regular octahedron.  Thus the \acp{CPP} of the cube
state are identical to the \acp{MP} \eqref{6_mp} of the octahedron
state, up to an $\rotz ( \frac{\pi}{4} )$-rotation.  The entanglement
of the cube state follows as $\Eg ( \ket{\psi_{8}^{\text{c}}} ) =
\log_2 \left( \tfra{24}{5} \right) \approx 2.263 \: 034$.

\begin{figure}
  \centering
  \begin{overpic}[scale=0.5]{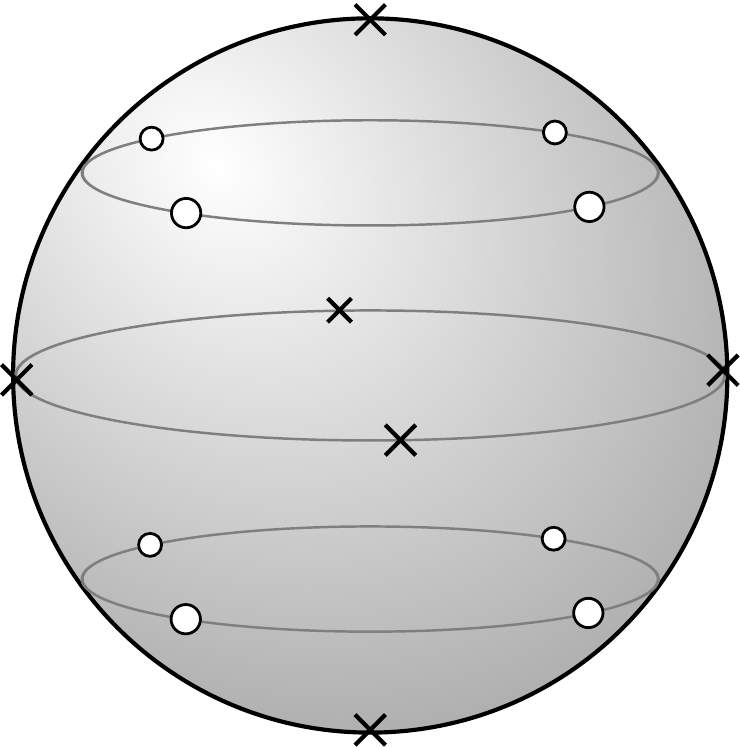}
    \put(-7.5,0){(a)}
  \end{overpic}
  \hspace{5mm}
  \begin{overpic}[scale=0.62]{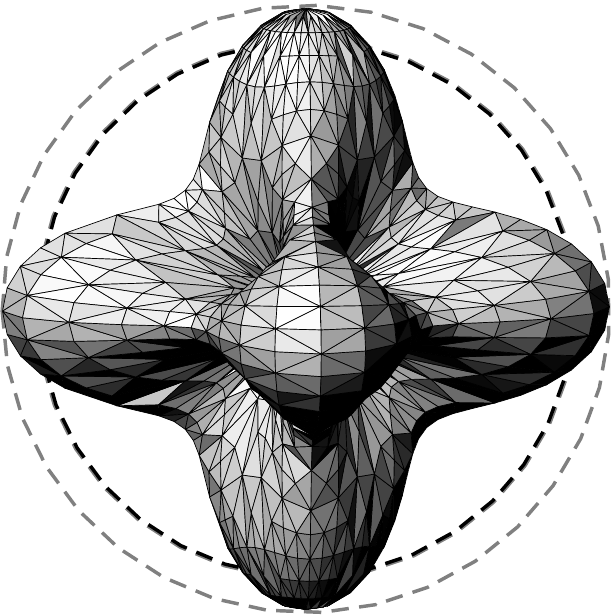}
    \put(-7.5,0){(b)}
  \end{overpic}
  \hspace{5mm}
  \begin{overpic}[scale=.5]{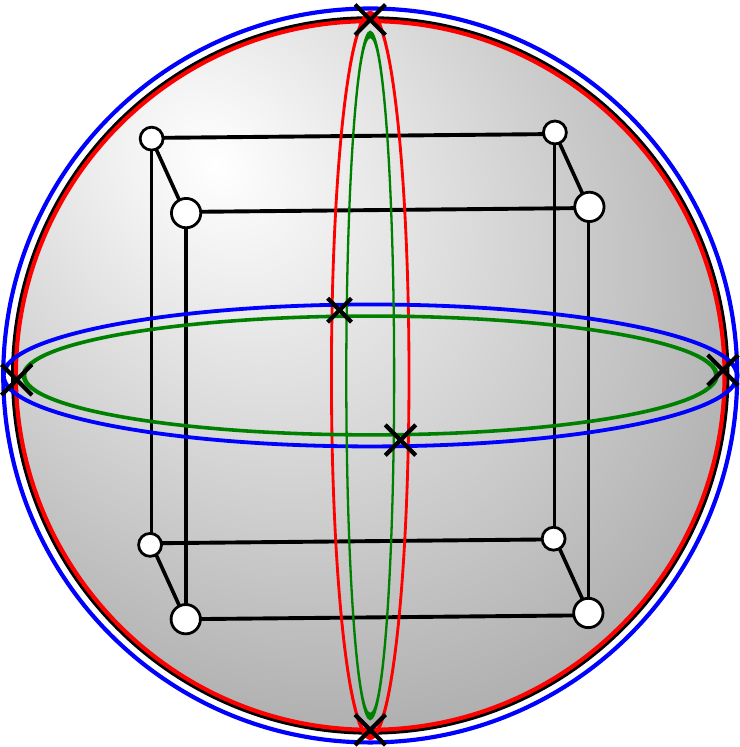}
    \put(-7.5,0){(c)}
  \end{overpic}
  \caption[Majorana representation of 8 qubit
  cube]{\label{bloch_8_cube} For eight qubits the Majorana
    representation and the spherical amplitude function $g^2 ( \theta
    , \varphi )$ of the cube state $\ket{\psi_{8}^{\text{c}}}$ are
    shown in (a) and (b), respectively. The \acp{CPP} can be directly
    determined from \protect\lemref{pos_symm_cpp} and the octahedral
    symmetry group $O \in \text{SO}(3)$ by performing an $\rotx
    (\frac{\pi}{2})$ and $\roty (\frac{\pi}{2})$ rotation, as shown in
    (c). The outer and inner circle in (b) correspond to the cube
    state $\ket{\psi_{8}^{\text{c}}}$ and the maximally entangled
    symmetric eight qubit state $\ket{\Psi_{8}}$, respectively.}
\end{figure}

A numerical search yields states that are considerably higher
entangled than the cube state.  The \quo{asymmetric pentagonal
  dipyramid state} shown in \fig{bloch_8}(a) is numerically found to
have the highest amount of entanglement. The exact analytic form of
this positive state is not known, but it can be numerically
approximated to high precision.  The form of the state is
$\ket{\Psi_{8}} = \alpha \sym{1} + \beta \sym{6}$, with approximate
values $\alpha \approx 0.671 \: 588 \: 032$ and $\beta \approx 0.740
\: 924 \: 770$, and the \acp{MP} are
\begin{equation}\label{8_maj_max}
  \ket{\phi_{1,2}} = \ket{0} \ens , \quad
  \ket{\phi_{3,4,5,6,7}} = \co_{\theta} \ket{0} +
  \E^{\I \kappa} \si_{\theta} \ket{1} \ens , \quad
  \ket{\phi_{8}} = \ket{1} \ens ,
\end{equation}
with $\kappa = 0, \tfra{2 \pi}{5}, \tfra{4 \pi}{5}, \tfra{6 \pi}{5},
\tfra{8 \pi}{5}$ and $\theta \approx 1.715 \: 218 \: 732$.  In
particular, there is a two-fold \ac{MP} degeneracy at the north pole,
similar to the W state of three qubits.  As seen in \fig{bloch_8}(a),
there are two rings with five \acp{CPP} each,
\begin{equation}\label{8_max_cpps}
  \ket{\sigma_{1,2,3,4,5}} = \co_{\vartheta} \ket{0} +
  \E^{\I \kappa}\si_{\vartheta} \ket{1} \ens , \quad
  \ket{\sigma_{6,7,8,9,10}} = \co_{\phi} \ket{0} +
  \E^{\I \kappa} \si_{\phi} \ket{1} \ens ,
\end{equation}
with $\kappa = 0, \tfra{2 \pi}{5}, \tfra{4 \pi}{5}, \tfra{6 \pi}{5},
\tfra{8 \pi}{5}$, $\co_{\vartheta} \approx 0.928 \: 479$ and
$\co_{\phi} \approx 0.525 \: 434$.  From this it follows $G^2 \approx
0.183 \: 619$ and $\Eg ( \ket{\Psi_{8}} ) \approx 2.445 \: 210$.

\begin{figure}
  \centering
  \begin{overpic}[scale=.44]{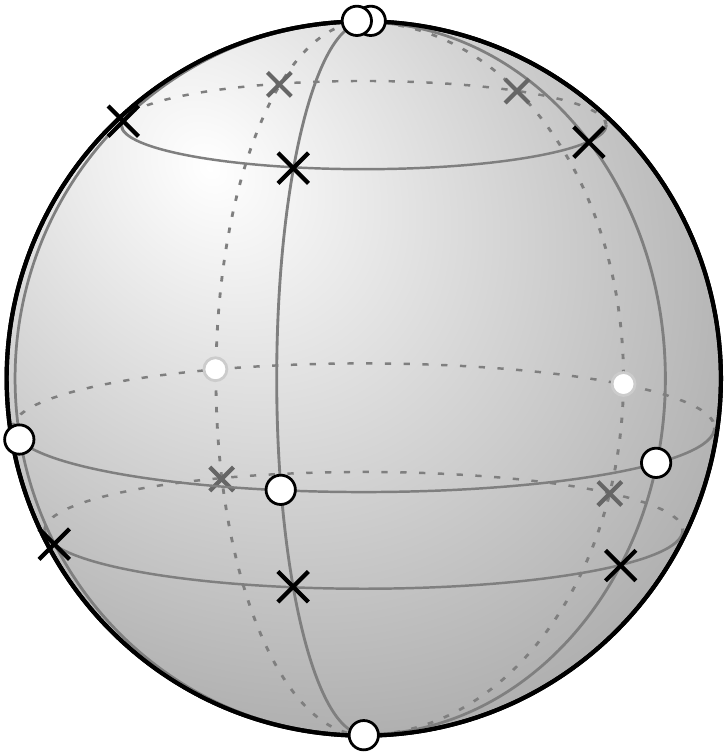}
    \put(-7,0){(a)}
  \end{overpic}
  \begin{overpic}[scale=0.56]{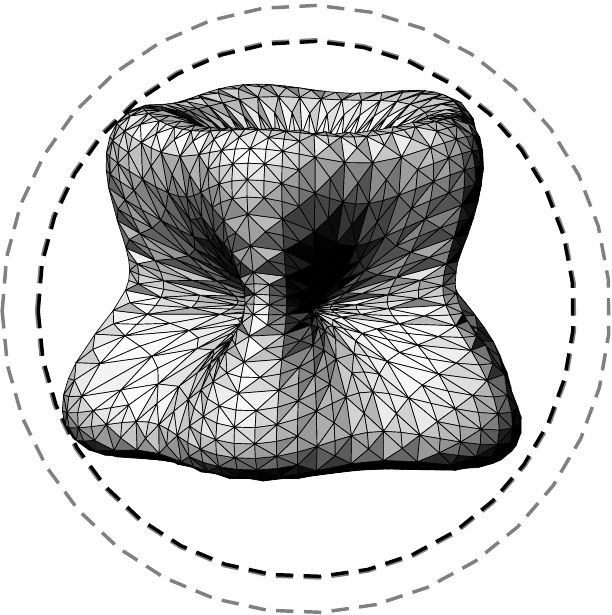}
    \put(-7,0){(b)}
  \end{overpic}
  \hfill
  \begin{overpic}[scale=.44]{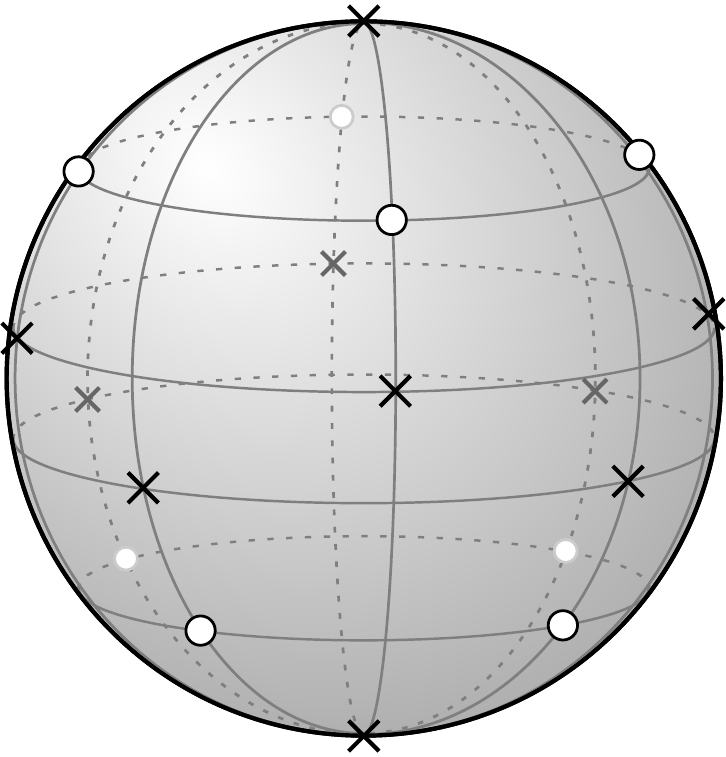}
    \put(-7,0){(c)}
  \end{overpic}
  \begin{overpic}[scale=0.56]{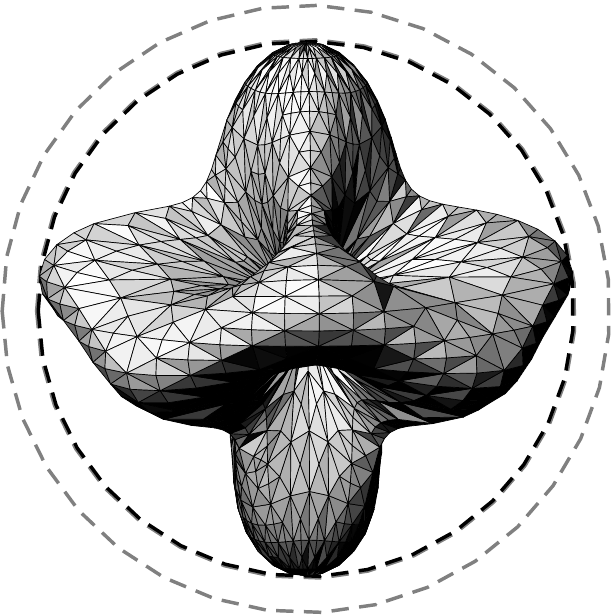}
    \put(-7,0){(d)}
  \end{overpic}
  \caption[Majorana representation of 8 qubit states]{\label{bloch_8}
    For eight qubits the \protect\quo{asymmetric pentagonal dipyramid
      state} $\ket{\Psi_{8}}$ shown in (a) and (b) is conjectured to
    be the maximally entangled state. A similarly highly entangled
    state is the optimal antiprism state $\ket{\Psi_{8}^{\text{a}}}$,
    shown in (c) and (d).  The outer and inner circles correspond to
    the cube state $\ket{\psi_{8}^{\text{c}}}$ and to
    $\ket{\Psi_{8}}$, respectively.}
\end{figure}

As mentioned in \sect{extremal_point}, the classical solutions are not
solved by the regular cube.  \toths problem for eight points is solved
by the cubic antiprism introduced and discussed in \fig{platonic}.
This antiprism is regular in the sense that all its sides have the
same length.  The solution to Thomson's problem is a slightly
different antiprism that is not regular and which can be obtained from
\toths antiprism by a slight expansion along the direction
perpendicular to the rotated face \cite{Marx70,Ashby86,Erber91}.  Cast
as symmetric states, antiprisms have the form
\begin{equation}\label{8_antiprism}
  \ket{\psi_{8}^{\text{a}}} = \frac{\sym{0} + A \sym{4} -
    \sym{8}}{\sqrt{2 + A^2}} \ens ,
\end{equation}
where the real parameter $A$ depends on the latitude of the two
\ac{MP} rings.  The \acp{MP} can be parameterised as
\begin{equation}\label{8_anti_maj}
  \begin{split}
    \ket{\phi_{1, 2, 3, 4}}& = \sqrt{a} \ket{0} + \E^{\I ( \kappa +
      \frac{\pi}{4})} \sqrt{1 - a} \ket{1} \ens , \\
    \ket{\phi_{5, 6, 7, 8}}& = \sqrt{1 - a} \ket{0} +
    \E^{\I \kappa } \sqrt{a} \ket{1} \ens ,
  \end{split}
\end{equation}
with $a \in [0,1]$, $\kappa = 0, \tfra{\pi}{2}, \pi, \tfra{3 \pi}{2}$,
and the maxima of the spherical amplitude function as
\begin{equation}\label{8_anti_cpp}
  \begin{split}
    \ket{\sigma_{1}}& = \ket{0} \ens , \quad
    \ket{\sigma_{2}} = \ket{1} \ens , \\
    \ket{\sigma_{3, 4, 5, 6}}& =
    x \ket{0} + \E^{\I \kappa} \sqrt{1 - x^2} \ket{1} \ens , \\
    \ket{\sigma_{7, 8, 9, 10}}& =
    \sqrt{1 - x^2} \ket{0} + \E^{\I (\kappa + \frac{\pi}{4})}
    x \ket{1} \ens ,
  \end{split}
\end{equation}
with $x \in [0,1]$, $\kappa = 0, \tfra{\pi}{2}, \pi, \tfra{3 \pi}{2}$.
The maximally entangled antiprism state $\ket{\Psi_{8}^{\text{a}}}$
can then be found by a calculation similar to the one performed for
the maximally entangled five qubit state.  From numerics it is clear
that $\ket{\Psi_{8}^{\text{a}}}$ has ten \acp{CPP}, one on each pole
and the others lying on two horizontal planes, see \fig{bloch_8}(c)
and (d).  It suffices to determine the latitude of the nontrivial
positive \ac{CPP}, so we use $g ( \theta ) \equiv g ( \theta , 0 )$
and the parameterisation $x: = \co_{\theta}$.  To turn all local
maxima into \acp{CPP} the value of $g(x)$ at $x = 1$ needs to equal
the value at the non-trivial maximum $x_{\text{m}} \in (0,1)$.  From
$g(1) = g(x_{\text{m}})$ it follows that $A = \frac{1 - x_{\text{m}}^8
  + y_{\text{m}}^8} {\sqrt{70} \, x_{\text{m}}^4 y_{\text{m}}^4 }$,
and from $ \frac{\D g}{\D x} (x_{\text{m}}) = 0$ it follows that
\begin{equation}\label{solve_8}
  x_{\text{m}}^6 - x_{\text{m}}^4 + 2 x_{\text{m}}^2 - 1 = 0 \ens .
\end{equation}
This amounts to solving a cubic equation, yielding the single real
root
\begin{equation}\label{solution_8}
  x_{\text{m}} = \sqrt{\frac{1}{3} \left( 1 + z -
      \frac{5}{z} \right) } \ens , \text{ with} \quad
  z = \sqrt[3]{\frac{11 + 3 \sqrt{69}}{2}} \ens .
\end{equation}
This $x_{\text{m}}$ establishes the locations of all nontrivial
\acp{CPP} and by inserting it into $A$, it yields the explicit form of
\eqref{8_antiprism}, as well as the entanglement $\Eg (
\ket{\Psi_{8}^{\text{a}}} ) = \log_2 (2 + A^2)$.  The latitude of the
\acp{MP} is found by solving a quartic equation that arises when
determining the \acp{MP} from the given form of the state: The value
of $a$ is given by the real root of $\sqrt{70} a^2 (1- a )^2 A - a^4 +
(1- a )^4 = 0$.  Approximate values of the quantities are:
\begin{equation}\label{8_anti_form_approx}
  x \approx 0.754 \: 878 \ens , \quad  A \approx 1.847 \: 592 \ens ,
  \quad a \approx 0.797 \: 565 \ens .
\end{equation}
The latitude of the upper \ac{MP} circle follows as $\theta \approx
0.933 \: 368 \: 783$, and the amount of entanglement is $\Eg (
\ket{\Psi_{8}^{\text{a}}} ) \approx 2.436 \: 587 \: 205$.  The optimal
antiprism state $\ket{\Psi_{8}^{\text{a}}}$ is thus only slightly less
entangled than the numerically determined maximally entangled
symmetric state of eight qubits.  Intriguingly, the maximally
entangled state $\ket{\Psi_{8}}$ is a positive state, whereas the
antiprism states cannot be cast with positive coefficients.

\begin{table}
  \begin{threeparttable}
    \caption[Comparison of selected eight qubit symmetric states]
    {\label{table4.2} Comparison of all the eight qubit symmetric
      states studied in \protect\sect{majorana_eight}.  For each state
      the latitude $\theta$ of the topmost circle of \acp{MP} as well
      as the geometric entanglement $\Eg$ is listed.  The entanglement
      of antiprism states decreases with increasing deviance of the
      \ac{MP} angle $\theta$ from that of the optimal antiprism state
      $\ket{\Psi_{8}^{\text{a}}}$.}
    \newcolumntype{R}{>{\centering\arraybackslash}X}
    \begin{tabularx}{\textwidth}{R|RR}
      \toprule
      State & \ac{MP} angle $\theta$ [rad] &
      Entanglement $\Eg$ \\
      \midrule
      Majorana solution  $\ket{\Psi_{8}}$ &
      $\approx 1.715 \: 218 \: 732$ & $\approx 2.445 \: 210 \: 159$ \\
      regular cube $\ket{\psi_{8}^{\text{c}}}$ &
      $\arccos \frac{1}{\sqrt{3}} \approx 0.955$ &
      $\log_2 \big( \tfra{24}{5} \big) \approx 2.263$ \\
      optimal antiprism $\ket{\Psi_{8}^{\text{a}}}$ &
      $\approx 0.933 \: 368 \: 783$\tnote{$\dagger$} &
      $\approx 2.436 \: 587 \: 205$\tnote{$\dagger$} \\[0.15em]
      Thomson antiprism $\ket{\psi_{8}^{\text{Th}}}$ &
      $\approx 0.975 \: 883 \: 252$ & $\approx 2.084 \: 181 \: 528$ \\
      \toth antiprism $\ket{\psi_{8}^{\text{\totx}}}$ &
      $\arctan \sqrt{2 \sqrt{2}} \approx 1.034$ &
      $\approx 1.711 \: 525 \: 327$\tnote{$\dagger$} \\
      \bottomrule
    \end{tabularx}
    \begin{tablenotes}
    \item [$\dagger$] {\footnotesize Closed-form analytic expressions
        are known, but not displayed due to their complicated form.}
    \end{tablenotes}
  \end{threeparttable}
\end{table}

The antiprism states that solve \toths and Thomson's problem each have
only the two \acp{CPP} $\ket{\sigma_{1}} = \ket{0}$ and
$\ket{\sigma_{2}} = \ket{1}$.  This imbalance of their spherical
amplitude functions is due to the two horizontal \ac{MP} circles being
closer to the equator than in the configuration seen in
\fig{bloch_8}(c).  As listed in \tabref{table4.2}, this leads to a
reduction of the geometric entanglement.  No analytic form is known
for the antiprism state $\ket{\psi_{8}^{\text{Th}}}$ which solves
Thomson's problem, but the latitude of its \acp{MP} can be numerically
determined by minimising a nonlinear function \cite{Marx70}, yielding
$\theta \approx 0.975 \: 883 \: 252$ and $\Eg (
\ket{\psi_{8}^{\text{Th}}} ) \approx 2.084 \: 181 \: 498$.  On the
other hand, the solution $\ket{\psi_{8}^{\text{\totx}}}$ of \toths
problem can be determined analytically from the known spherical
distance $s_{\text{min}} = \arccos \big( \frac{\sqrt{8}-1}{7} \big)$
between neighbouring pairs of points.  The latitude of the \ac{MP}
circle then follows as $\theta = \arctan \sqrt{2 \sqrt{2}}$, and the
analytical form of the state \eqref{8_antiprism} is given by $A =
\frac{1 - \tau^2}{\sqrt{70} \tau }$, with $\tau := \tan^4 (
\frac{\theta}{2} ) = \frac{1}{8} \big( \sqrt{1 + 2 \sqrt{2}} -1
\big)^{4}$.  The entanglement follows as $\Eg (
\ket{\psi_{8}^{\text{\totx}}} ) = \log_2 (2 + A^2) \approx 1.712$.

\subsection{Nine qubits}\label{majorana_nine}

For nine points, the solutions to \toths and Thomson's problem are
slightly different manifestations of a \quo{triaugmented triangular
  prism}. As shown in \fig{bloch_9}(a), three equilateral triangles
are positioned parallel but asymmetric to each other, with a
reflective symmetry along the $X$-$Y$-plane.  The \acp{MP} of this
configuration are
\begin{align}\label{9_maj_triaug}
  \ket{\phi_{1,2,3}}& = \co_{\theta} \ket{0} -
  \E^{\I \kappa} \si_{\theta} \ket{1} \ens , \nonumber \\
  \ket{\phi_{4,5,6}}& = \tfrac{1}{\sqrt{2}} \big( \ket{0} +
  \E^{\I \kappa} \ket{1} \big) \ens , \\
  \ket{\phi_{7,8,9}}& = \si_{\theta} \ket{0} -
  \E^{\I \kappa} \co_{\theta} \ket{1} \ens , \nonumber
\end{align}
with $\kappa = 0, \tfra{2 \pi}{3}, \tfra{4 \pi}{3}$.  This gives rise
to a real state
\begin{equation}\label{9_triaug}
  \ket{\psi_{9}} = \frac{\sym{0}  - A \big( \sym{3} + \sym{6} \big)
    + \sym{9}}{\sqrt{2 + 2 A^2}} \ens ,
\end{equation}
where the relationship between $A$ and the \acp{MP} is $A \tau
\sqrt{84} = - \tau^2 + \tau - 1$ with $\tau := \tan^3
(\frac{\theta}{2})$.  The single freedom of this configuration is the
inclination $\theta$ (or $\pi - \theta$) of the \acp{MP} that lie
outside the equator.

For all values of $A$ the spherical amplitude function of
$\ket{\psi_{9}}$ has local maxima at the three equatorial \acp{MP} and
at the poles.  From this it can be inferred that the most entangled
state of the form \eqref{9_triaug} is the one where these maxima yield
the same value.  The optimal state thus has the five \acp{CPP} shown
in \fig{bloch_9}(a), and a simple calculation yields $A = \frac{1 + 8
  \sqrt{2}}{2 \sqrt{21}}$ and $\Eg ( \ket{\psi_{9}} ) = \log_2
\frac{213 + 16 \sqrt{2}}{42} \approx 2.488$.  In contrast to this, the
configurations that solve the classical problems are not optimal. In
the solution to Thomson's problem the latitudes of the outer \acp{MP}
are closer to the equator than in \fig{bloch_9}(a), and even more so
in the solution to \toths problem. This induces an imbalance in the
spherical amplitude function, resulting in the two poles being the
only \acp{CPP}.  The geometric entanglement is $\Eg (
\ket{\psi_{9}^{\text{Th}}} ) \approx 2.434 \: 192 \: 780$ and $\Eg (
\ket{\psi_{9}^{\text{\totx}}} ) \approx 2.150 \: 714 \: 397$,
respectively.

\begin{figure}
  \centering
  \begin{overpic}[scale=.445]{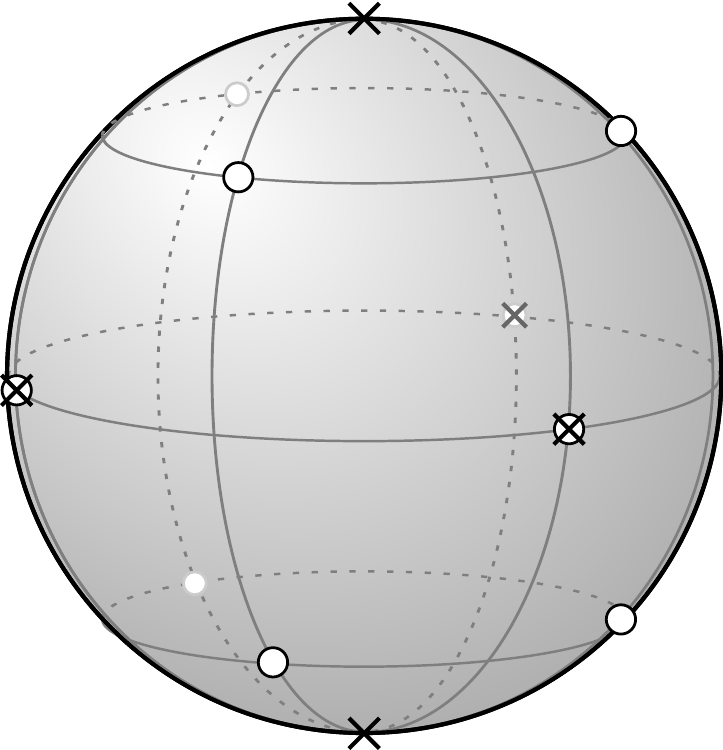}
    \put(-7,-2){(a)}
  \end{overpic}
  \begin{overpic}[scale=.645]{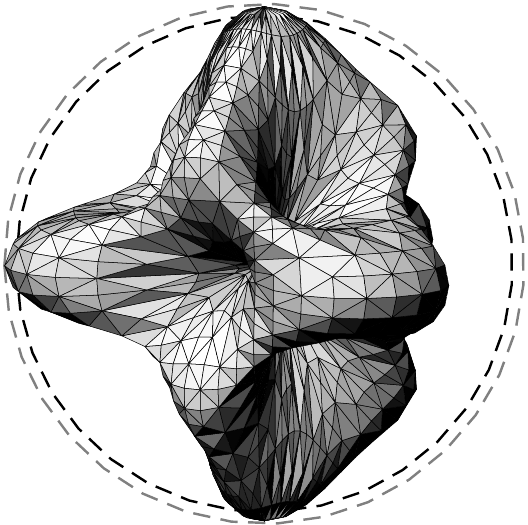}
    \put(-7,-2){(b)}
  \end{overpic}
  \hfill
  \begin{overpic}[scale=.445]{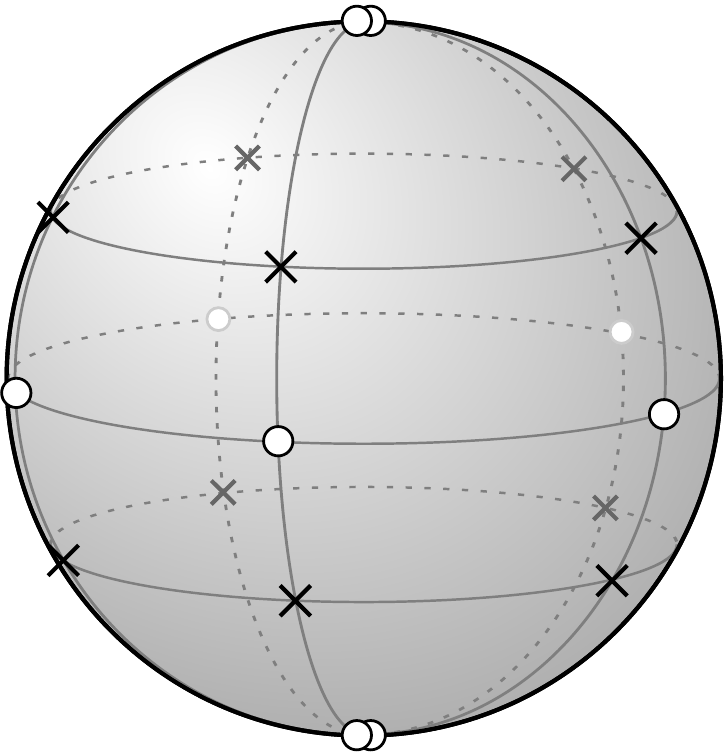}
    \put(-7,-2){(c)}
  \end{overpic}
  \begin{overpic}[scale=.645]{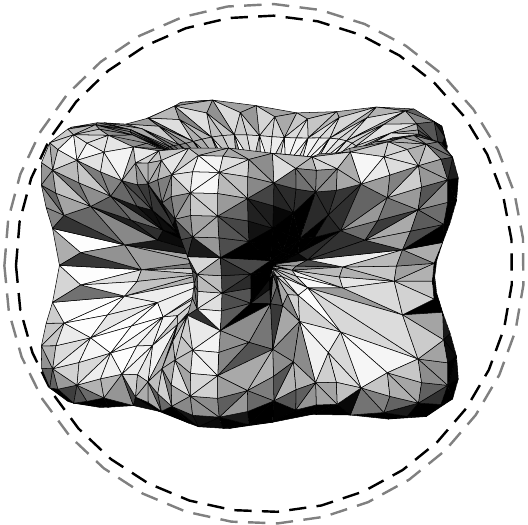}
    \put(-7,-2){(d)}
  \end{overpic}
  \caption[Majorana representation of 9 qubit states]{\label{bloch_9}
    For nine qubits the optimal \protect\quo{triaugmented triangular
      prism state} is shown in (a) and (b), and the
    \protect\quo{pentagonal dipyramid state} $\ket{\Psi_{9}}$ which is
    conjectured to be the maximally entangled symmetric nine qubit
    state is shown in (c) and (d).  The latter state has a two-fold
    \ac{MP} degeneracy on each pole.}
\end{figure}

The maximally entangled symmetric nine qubit state, however, does not
assume the form of a triaugmented triangular prism.  A numerical
search determines the state $\ket{\Psi_{9}} = \tfra{1}{\sqrt{2}} (
\sym{2} + \sym{7} )$, shown in \fig{bloch_9}(c), with the \acp{MP}
\begin{equation}\label{9_maj_maxent}
  \ket{\phi_{1,2}} = \ket{0} \ens , \quad
  \ket{\phi_{3,4,5,6,7}} = \tfrac{1}{\sqrt{2}} \big( \ket{0} +
  \E^{\I \kappa} \ket{1} \big) \ens , \quad
  \ket{\phi_{8,9}} = \ket{1} \ens ,
\end{equation}
and $\kappa = 0, \tfra{2 \pi}{5}, \tfra{4 \pi}{5}, \tfra{6 \pi}{5},
\tfra{8 \pi}{5}$.  This is a positive state with \ac{MP} degeneracies,
and the state is totally invariant under the dihedral symmetry group
$D_{5}$.  The \acp{CPP} lie on two circles
\begin{equation}\label{9_cpp_maxent}
  \ket{\sigma_{1,2,3,4,5}} = \co_{\theta} \ket{0} +
  \E^{\I \kappa} \si_{\theta} \ket{1} \ens , \quad
  \ket{\sigma_{6,7,8,9,10}} = \si_{\theta} \ket{0} +
  \E^{\I \kappa} \co_{\theta} \ket{1} \ens ,
\end{equation}
with $\kappa = 0, \tfra{2 \pi}{5}, \tfra{4 \pi}{5}, \tfra{6 \pi}{5},
\tfra{8 \pi}{5}$.  Unlike the \acp{MP}, the \acp{CPP} do not have a
simple analytical form.  They can however be determined in the same
way as done for the seven qubit case.  With the substitution $x :=
\cos^2 \theta$, the inclination follows from the real root of $81 x^3
+ 385 x^2 - 245 x + 35 = 0$ in the interval $[0,0.3]$.  Approximate
values are $\co_{\theta} \approx 0.860 \: 122$ and $\si_{\theta}
\approx 0.510 \: 087$, from which one obtains $\Eg ( \ket{\Psi_{9}} )
\approx 2.553 \: 960 \: 277$, which is a significantly higher amount
of entanglement than for the most entangled triaugmented triangular
prism state.

\subsection{Ten qubits}\label{majorana_ten}

The solution to \toths problem is an arrangement of the form (2-2-4-2)
in the F{\"o}ppl notation \cite{Melnyk77,Whyte52}, with only two
\acp{CPP}, and the numerically determined entanglement $\Eg (
\ket{\psi_{10}^{\text{\totx}}} ) \approx 1.958 \: 874 \: 344$ is
relatively low.

Thomson's problem is solved by a \quo{gyroelongated square bipyramid},
a polyhedron that arises from a cubic antiprism by placing square
pyramids on each of the two square surfaces\footnote{In a narrower
  sense, the gyroelongated square bipyramid is the unique polyhedron
  that arises from the regular antiprism (sides of equal length) by
  the requirement that all faces are equilateral triangles, which
  makes it one of the eight convex \emph{deltahedra}. This deltahedron
  does however not have a circumsphere that touches all its vertices,
  and therefore it does not directly translate to a spherical point
  distribution.}.  The \acp{MP}, shown in \fig{bloch_10}(a), have the
form
\begin{equation}\label{10_maj_gyro}
  \begin{split}
    \ket{\phi_{1}}& = \ket{0} \ens , \quad
    \ket{\phi_{2,3,4,5}} = \co_{\theta} \ket{0} +
    k \si_{\theta} \ket{1}  \ens , \\
    \ket{\phi_{10}}& = \ket{1} \ens , \quad
    \ket{\phi_{6,7,8,9}} = \si_{\theta} \ket{0} +
    k \E^{\I \frac{\pi}{4}} \co_{\theta} \ket{1}  \ens ,
  \end{split}
\end{equation}
with $k = 0, \I, -1, - \I$.  This gives rise to a real state
\begin{equation}\label{10_gyro}
  \ket{\psi_{10}} = \frac{\sym{1} + 
    A \sym{5} - \sym{9}}{\sqrt{2 + A^2}} \ens .
\end{equation}
The relationship between $A$ and the \acp{MP} is described by $A \tau
\sqrt{252} = 1 - \tau^2$ with $\tau := \tan^4 (\frac{\theta}{2})$.
The state $\ket{\psi_{10}}$ has eight \acp{CPP}
\begin{equation}\label{10_cpp_maxent}
  \ket{\sigma_{1,2,3,4}} = \co_{\vartheta} \ket{0} +
  k \si_{\vartheta} \ket{1} \ens , \quad
  \ket{\sigma_{5,6,7,8}} = \si_{\vartheta} \ket{0} +
  k \E^{\I \frac{\pi}{4}} \co_{\vartheta} \ket{1} \ens ,
\end{equation}
with $k = 0, \I, -1, - \I$, and where the latitude $\vartheta$ depends
on the precise form of \eq{10_gyro}.  An analytical treatment of
\eq{10_gyro} and \eqref{10_cpp_maxent} is very difficult, so we limit
ourselves to numerics.

The entanglement obtained from the point distribution of Thomson's
solution is $\Eg ( \ket{\psi_{10}^{\text{Th}}} ) \approx 2.731 \: 632
\: 770$, and a numerical analysis reveals that this state is very
close to the maximally entangled state of the form \eqref{10_gyro}.  A
small modification of the latitude of the \acp{MP} yields the extremal
entanglement $\Eg ( \ket{\Psi_{10}} ) \approx 2.737 \: 432 \: 003$ at
$\theta \approx 1.142 \: 46$, with the latitude of the \acp{CPP} given
by $\vartheta \approx 1.048$.  The state is shown in
\fig{bloch_10}(a), and it is proposed to be the maximally entangled 10
qubit symmetric state.

\begin{figure}
  \centering
  \begin{overpic}[scale=.445]{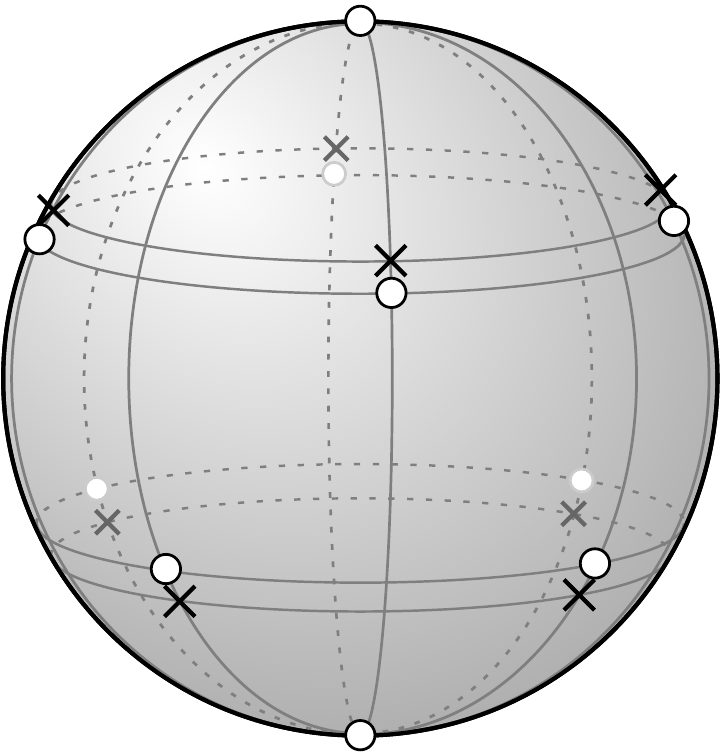}
    \put(-7.5,0){(a)}
  \end{overpic}
  \begin{overpic}[scale=.74]{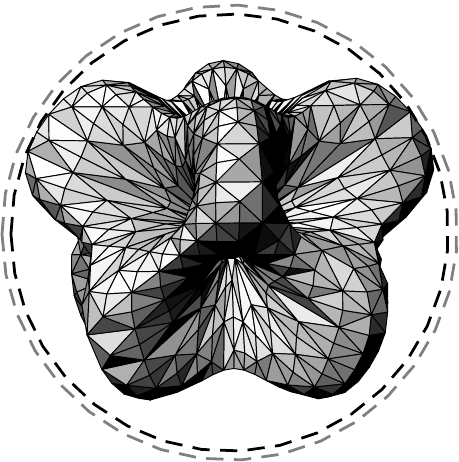}
    \put(-7.5,0){(b)}
  \end{overpic}
  \hfill
  \begin{overpic}[scale=.445]{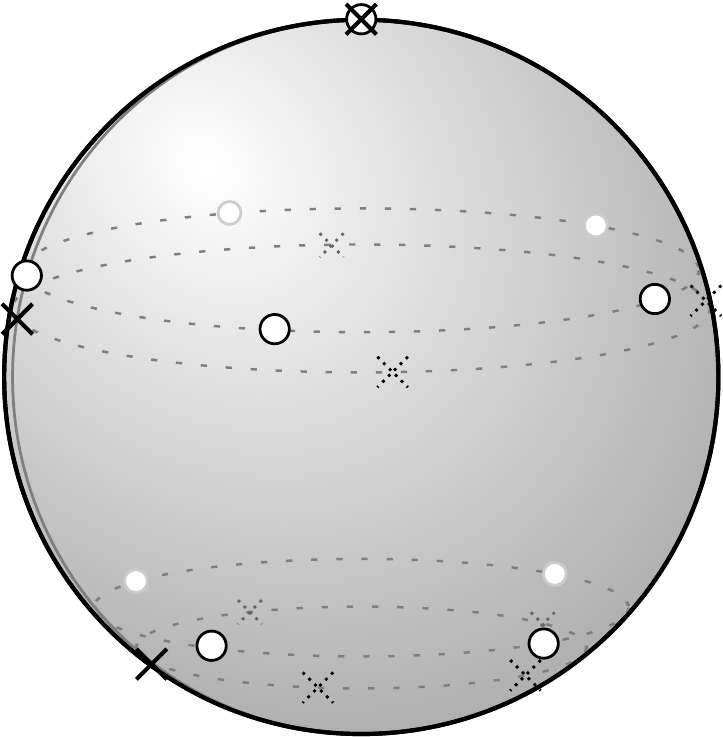}
    \put(-7.5,0){(c)}
  \end{overpic}
  \hspace{0.5mm}
  \begin{overpic}[scale=.445]{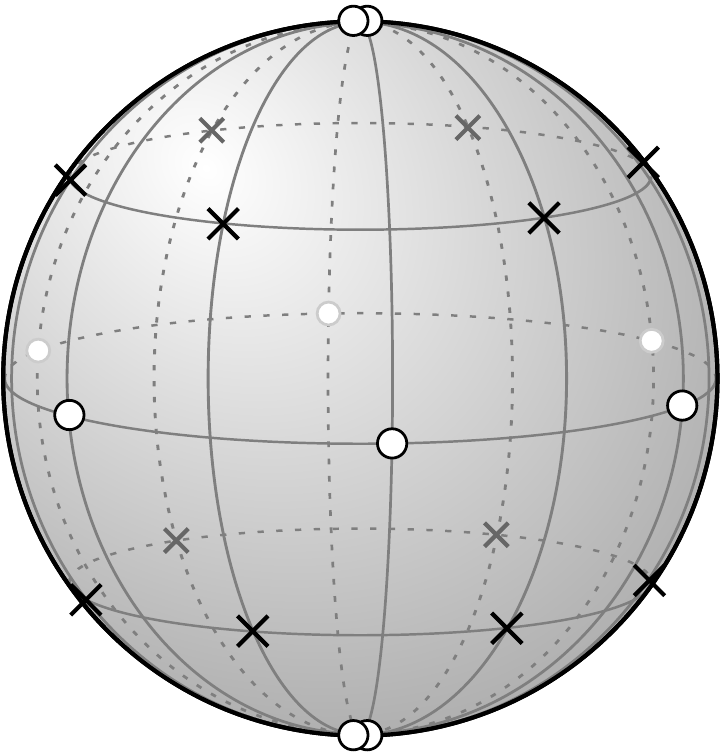}
    \put(-7.5,0){(d)}
  \end{overpic}
  \caption[Majorana representation of 10 qubit
  states]{\label{bloch_10} For 10 qubits the conjectured maximally
    entangled state $\ket{\Psi_{10}}$, shown in (a) and (b), takes the
    form of a gyroelongated square bipyramid. The Majorana
    representation of the numerically determined maximally entangled
    positive state $\ket{\Psi_{10}^{\text{pos}}}$ is shown in (c), and
    the positive state $\ket{\psi_{10}^{\text{pos}}}$ shown in (d) has
    almost the same amount of entanglement as
    $\ket{\Psi_{10}^{\text{pos}}}$.  The state
    $\ket{\Psi_{10}^{\text{pos}}}$ has three \acp{CPP}, with the
    locations of further local maxima of $g^2 ( \theta , \varphi )$
    indicated by dashed crosses.  The outer and inner circle in (b)
    corresponds to the value of $G^2$ for
    $\ket{\Psi_{10}^{\text{pos}}}$ and $\ket{\Psi_{10}}$,
    respectively.}
\end{figure}

The 10 qubit case is the first one where the conjectured maximally
entangled symmetric state cannot be cast with positive coefficients.
Since the search for maximal entanglement is more reliable within the
subset of positive states, we will separately consider the positive
case.

A numerical search returns a state of the form
$\ket{\Psi_{10}^{\text{pos}}} = \alpha \sym{0} + \beta \sym{4} +
\gamma \sym{9}$ as the positive state with the highest amount of
geometric entanglement, namely $\Eg ( \ket{\Psi_{10}^{\text{pos}}} )
\approx 2.679 \: 763 \: 092$.  The approximate values of the
coefficients are
\begin{equation}\label{10_positive_coeff}
  \alpha \approx 0.395 \: 053 \: 091 \ens , \quad
  \beta \approx  0.678 \: 420 \: 822 \ens , \quad
  \gamma \approx 0.619 \: 417 \: 665 \ens .
\end{equation}
The \ac{MP} distribution is shown in \fig{bloch_10}(c).  From
\lemref{rot_symm} it is clear that this state is not rotationally
symmetric around the $Z$-axis.  The state has only three \acp{CPP},
which are all positive (cf.  \theoref{cpp_number_locations}), but the
spherical amplitude function $g( \theta , \varphi )$ has six other
local maxima with values close to those at the \acp{CPP}.  The
positions of these local maxima are shown as dashed crosses in
\fig{bloch_10}(c).  One would expect that the \acp{MP} on the two
\quo{circles}, one with five \acp{MP} and another with four \acp{MP},
have the same latitude and are equidistantly spaced apart. However,
this is not the case, as the locations of the \acp{MP} deviate by very
small amounts from such a regular distribution.  Indeed, since
equidistant circles of \acp{MP} correspond to \ac{GHZ}-type states, it
can be seen from \theoref{theo_general_maj_rep} that for perfect
\ac{MP} rings the state $\ket{\Psi_{10}^{\text{pos}}}$ would need to
have four nonvanishing basis states.

We mention that there exists a fully rotationally symmetric and
totally invariant (under the dihedral group $D_{6}$) positive state
that comes very close to $\ket{\Psi_{10}^{\text{pos}}}$ in terms of
entanglement. Its form is $\ket{\psi_{10}^{\text{pos}}} =
\tfra{1}{\sqrt{2}} ( \sym{2} + \sym{8} )$, and its Majorana
representation is shown in \fig{bloch_10}(d).  The 12 \acp{CPP} can be
determined as the solutions of a quadratic equation, yielding
\begin{equation}\label{10_sigma}
  \begin{split}
    \ket{\sigma_{1,2, \ldots ,6}}& =
    \tfrac{1}{\sqrt{3 - \sqrt{3}}} \ket{0} + \E^{\I \kappa}
    \tfrac{1}{\sqrt{3 + \sqrt{3}}} \ket{1} \ens , \\
    \ket{\sigma_{7,8, \ldots ,12}}& =
    \tfrac{1}{\sqrt{3 + \sqrt{3}}} \ket{0} + \E^{\I \kappa}
    \tfrac{1}{\sqrt{3 - \sqrt{3}}} \ket{1} \ens ,
  \end{split}
\end{equation}
with $\kappa = 0, \tfra{\pi}{3}, \tfra{2 \pi}{3}, \pi, \tfra{4
  \pi}{3}, \tfra{5 \pi}{3}$.  The entanglement is $\Eg (
\ket{\psi_{10}^{\text{pos}}} ) = \log_2 \left( \tfra{32}{5} \right)
\approx 2.678 \: 072$, which is less than $0.1 \%$ difference from
$\Eg ( \ket{\Psi_{10}^{\text{pos}}} )$.

\subsection{Eleven qubits}\label{majorana_eleven}

The known numerical solution to Thomson's problem has the form
(1-2-4-2-2) in F{\"o}ppl notation \cite{Whyte52}, yielding the
approximate entanglement $\Eg ( \ket{\psi_{11}^{\text{Th}}} ) \approx
2.482 \: 570$.  On the other hand, the solution to \toths problem is
obtained by removing one vertex of the regular icosahedron, yielding a
pentagonal antiprism with a pentagonal pyramid on one of the two
pentagonal surfaces, or (1-5-5) \cite{Melnyk77}.  From the known
geometric properties of the icosahedron the solution is found
analytically to be $\ket{\psi_{11}^{\text{\totx}}} =
\tfra{\sqrt{462}}{25} \sym{0} + \tfra{11}{25} \sym{5} -
\tfra{\sqrt{42}}{25} \sym{10}$.  Unsurprisingly, the corresponding
spherical amplitude function is very imbalanced, with the single
\ac{CPP} lying antipodal to the removed icosahedron vertex, yielding
$\Eg ( \ket{\psi_{11}^{\text{\totx}}} ) = \log_2 \left(
  \tfra{625}{462} \right) \approx 0.435 \: 963$.  By varying the
latitude of the \ac{MP} circles, however, it is possible to obtain a
state with much higher entanglement: The state shown in
\fig{bloch_11}(a) is rotationally symmetric around the $Z$-axis and
has 11 \acp{CPP}.  The form of the state is $\ket{\Psi_{11}} = \alpha
\sym{0} + \beta \sym{5} - \gamma \sym{10}$, with approximate values
\begin{equation}\label{11_maxent_coeff}
  \alpha \approx 0.376 \: 611 \: 967 \ens , \quad
  \beta  \approx 0.715 \: 661 \: 256 \ens , \quad
  \gamma \approx 0.588 \: 211 \: 181 \ens .
\end{equation}
Its \acp{MP} are
\begin{equation}\label{11_mp_maxent}
  \ket{\phi_{1}} = \ket{0} \ens , \ens
  \ket{\phi_{2,3,4,5,6}} = \co_{\theta} \ket{0} -
  \E^{\I \kappa} \si_{\theta} \ket{1} \ens , \ens
  \ket{\phi_{7,8,9,10,11}} = \si_{\vartheta} \ket{0} +
  \E^{\I \kappa} \co_{\vartheta} \ket{1} \ens ,
\end{equation}
with $\kappa = 0, \tfra{2 \pi}{5}, \tfra{4 \pi}{5}, \tfra{6 \pi}{5},
\tfra{8 \pi}{5}$, and approximate latitudinal angles $\theta \approx
1.168 \: 499 \: 343$ and $\vartheta \approx 2.253 \: 247 \: 569$.  The
entanglement is $\Eg ( \ket{\Psi_{11}} ) \approx 2.817 \: 698 \: 505$,
making this state the potentially maximally entangled symmetric state
of 11 qubits.

\begin{figure}
  \centering
  \begin{overpic}[scale=.45]{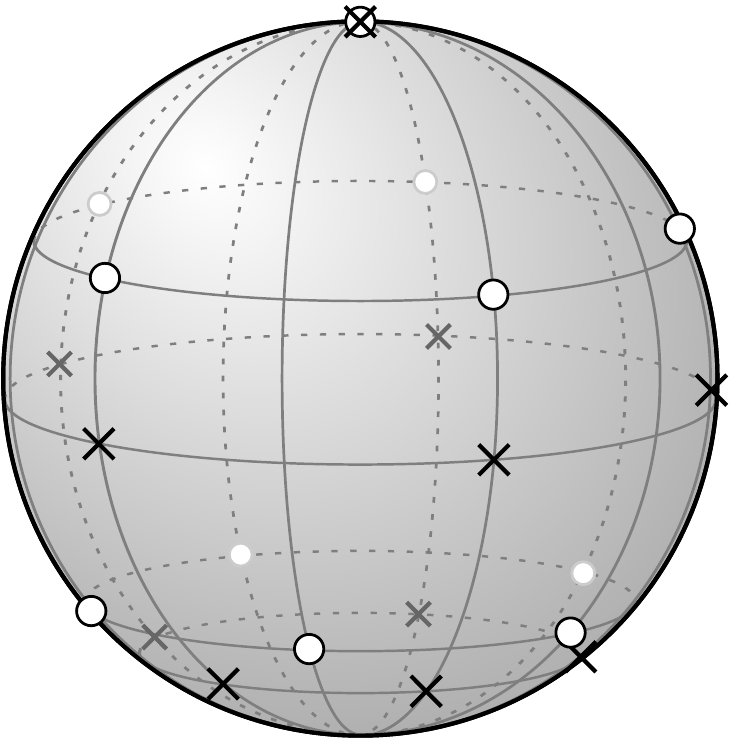}
    \put(-7,-2){(a)}
  \end{overpic}
  \begin{overpic}[scale=.77]{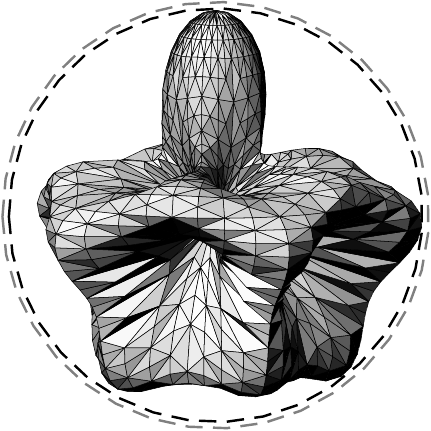}
    \put(-7,-2){(b)}
  \end{overpic}
  \hfill
  \begin{overpic}[scale=.45]{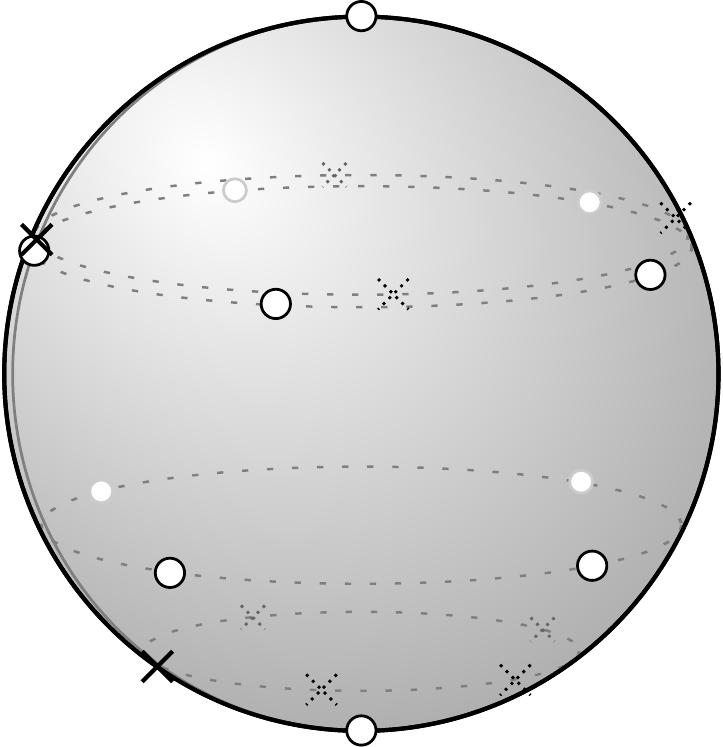}
    \put(-7,-2){(c)}
  \end{overpic}
  \begin{overpic}[scale=.77]{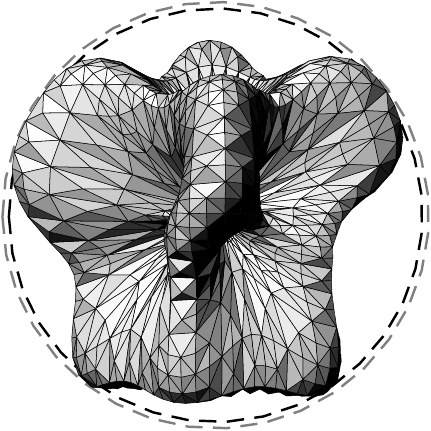}
    \put(-7,-2){(d)}
  \end{overpic}
  \caption[Majorana representation of 11 qubit
  states]{\label{bloch_11} For 11 qubits the candidate for maximal
    entanglement $\ket{\Psi_{11}}$ is shown in (a) and (b).  The
    numerically determined maximally entangled positive state
    $\ket{\Psi_{11}^{\text{pos}}}$, shown in (c) and (d), has only two
    \acp{CPP}, but its spherical amplitude function has seven more
    local maxima with values very close to those at the \acp{CPP}.}
\end{figure}

Analogous to the 10 qubit case, the numerically determined maximally
entangled positive symmetric state does not have a rotational
symmetry. The state, shown in \fig{bloch_11}(c) and (d), is of the
form $\ket{\Psi_{11}^{\text{pos}}} = \alpha \sym{1} + \beta \sym{5} +
\gamma \sym{10}$, with the approximate values
\begin{equation}\label{11_positive_coeff}
  \alpha \approx 0.550 \: 982 \: 113 \ens , \quad
  \beta \approx  0.578 \: 058 \: 577 \ens , \quad
  \gamma \approx 0.601 \: 886 \: 195 \ens .
\end{equation}
This state has only two \acp{CPP}, but the spherical amplitude
function has seven more local maxima with values close to the
\acp{CPP}.  The geometric entanglement of this state is $\Eg (
\ket{\Psi_{11}^{\text{pos}}} ) \approx 2.773 \: 622 \: 669$.

\subsection{Twelve qubits}\label{majorana_twelve}

For 12 points both \toths and Thomson's problem are solved by the
icosahedron.  Due to the high symmetry present in Platonic solids, the
icosahedron state is also a strong candidate for maximal symmetric
entanglement in the 12 qubit case.  The state can be cast with real
coefficients $\ket{\Psi_{12}} = \tfra{\sqrt{7}}{5} \sym{1} -
\tfra{\sqrt{11}}{5} \sym{6} - \tfra{\sqrt{7}}{5} \sym{11}$, and its
\acp{MP} can be derived from the known angles and distances in the
icosahedron:
\begin{equation}\label{12_maj}
  \begin{split}
    \ket{\phi_{1}}& = \ket{0} \ens , \quad
    \ket{\phi_{2, 3, 4, 5, 6}} = \sqrt{ \tfrac{3+\sqrt{5}}{5
        +\sqrt{5}}} \ket{0} + \E^{\I \kappa}
    \sqrt{\tfrac{2}{5+\sqrt{5}}} \ket{1} \ens , \\
    \ket{\phi_{12}}& = \ket{1} \ens , \quad
    \ket{\phi_{7, 8, 9, 10, 11}} = \sqrt{ \tfrac{2}{5+\sqrt{5}}}
    \ket{0} - \E^{\I \kappa} \sqrt{
      \tfrac{3+\sqrt{5}}{5 +\sqrt{5}}} \ket{1} \ens ,
  \end{split}
\end{equation}
with $\kappa = 0, \tfra{2 \pi}{5}, \tfra{4 \pi}{5}, \tfra{6 \pi}{5},
\tfra{8 \pi}{5}$. The \ac{MP} distribution is shown in
\fig{bloch_12}(a).  From the icosahedral symmetry and the antipodal
pairs of \acp{MP}, it can be easily inferred that there exist 20
\acp{CPP}, one at the centre of each face of the icosahedron,
describing a dodecahedron on the Majorana sphere.  Although
\lemref{lem_pos_cps} and \lemref{pos_symm_cpp} cannot be applied to
the icosahedron state, its \acp{CPP} can be verified analytically by
considering the values of the spherical amplitude function $g ( \theta
, \varphi )$ within the area of one spherical triangle spanned by
three neighbouring \acp{MP}.  The \acp{CPP} thus obtained are
\begin{equation}\label{12_sigma}
  \begin{split}
    \ket{\sigma_{1, 2, 3, 4, 5}}& = \text{a}_{+} \ket{0} -
    \E^{\I \kappa} \text{a}_{-} \ket{1}\ens , \quad
    \ket{\sigma_{11, 12, 13, 14, 15}} = \text{b}_{-} \ket{0} +
    \E^{\I \kappa} \text{b}_{+} \ket{1} \ens , \\
    \ket{\sigma_{6, 7, 8, 9, 10}}& = \text{b}_{+} \ket{0} -
    \E^{\I \kappa} \text{b}_{-} \ket{1}\ens , \quad
    \ket{\sigma_{16, 17, 18, 19, 20}} = \text{a}_{-} \ket{0} +
    \E^{\I \kappa} \text{a}_{+} \ket{1} \ens ,
  \end{split}
\end{equation}
with $\kappa = 0, \tfra{2 \pi}{5}, \tfra{4 \pi}{5}, \tfra{6 \pi}{5},
\tfra{8 \pi}{5}$, and
\begin{equation}\label{12_sigma_coeff}
  \text{a}_{\pm} = \sqrt{\tfrac{1}{2} \pm \tfrac{1}{2}
    \sqrt{\tfrac{5+2 \sqrt{5}}{15}}} \ens , \quad
  \text{b}_{\pm} = \sqrt{\tfrac{1}{2} \pm \tfrac{1}{2}
    \sqrt{\tfrac{5-2 \sqrt{5}}{15}}} \ens .
\end{equation}
With the knowledge of the exact positions of the \acp{MP} and
\acp{CPP}, the entanglement follows as $\Eg ( \ket{\Psi_{12}} ) =
\log_2 \left( \tfra{243}{28} \right) \approx 3.117 \: 458$. Naturally,
the icosahedron state is totally invariant under the icosahedral
rotation group $Y$, so it follows from \lemref{invariant_equivalent}
that its entanglement is the same for the three distance-like
entanglement measures.  However, since $\ket{\Psi_{12}}$ is not
positive, the conditions of \theoref{pos_inv} are not fulfilled, and
it is not known whether $\ket{\Psi_{12}}$ is additive under the
various entanglement measures.

\begin{figure}
  \begin{overpic}[scale=1.08]{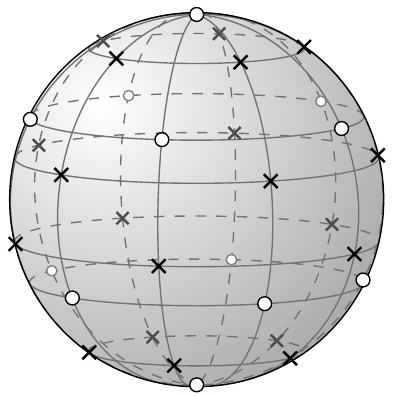}
    \put(-5,0){(a)}
  \end{overpic}
  \begin{overpic}[scale=1.42]{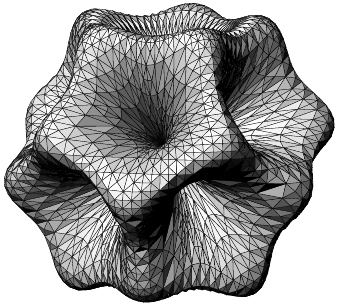}
    \put(-5,0){(b)}
  \end{overpic}
  \hfill
  \begin{overpic}[scale=.56]{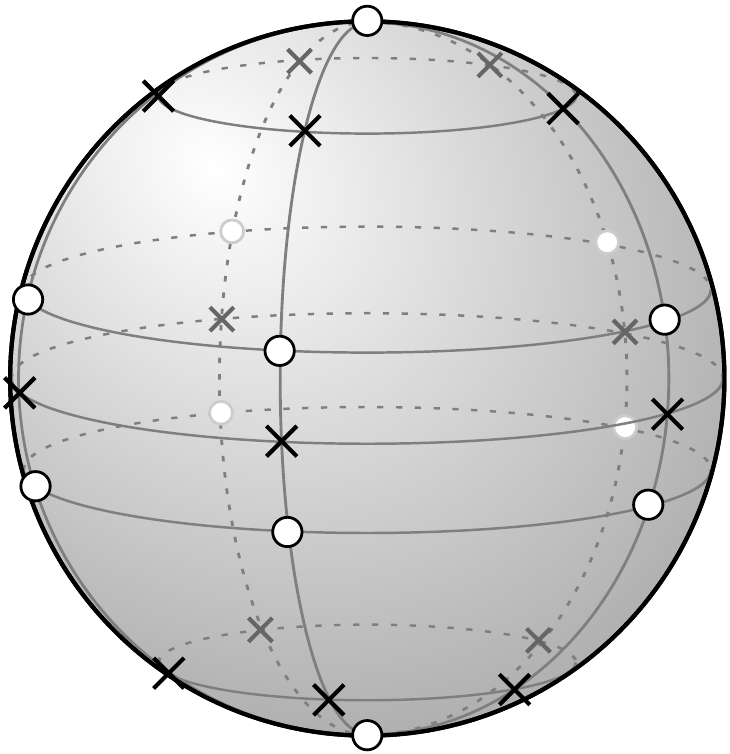}
    \put(-5,0){(c)}
  \end{overpic}
  \caption[Majorana representation of 12 qubit
  states]{\label{bloch_12} For 12 qubits the icosahedron state
    $\ket{\Psi_{12}}$, shown in (a) and (b), is conjectured to be the
    maximally entangled symmetric state.  In the subset of positive
    states the state $\ket{\Psi_{12}^{\text{pos}}}$ shown in (c) is
    detected as the maximally entangled one.}
\end{figure}

The numerical search for the maximally entangled positive state yields
a state of the form $\ket{\Psi_{12}^{\text{pos}}} = \alpha \sym{1} +
\beta \sym{6} + \alpha \sym{11}$ with
\begin{equation}\label{12_positive_coeff}
  \alpha \approx 0.555 \: 046 \: 977 \ens , \quad
  \beta  \approx 0.619 \: 552 \: 827 \ens . \quad
\end{equation}
From \fig{bloch_12}(c) it can be seen that this state is similar to
the icosahedron, with one of the horizontal circles of \acp{MP}
rotated by 36$^\circ$ so that it is aligned with the \acp{MP} of the
other circle.  There are 15 \acp{CPP} distributed on three circles,
with one circle coinciding with the equator.  The approximate amount
of entanglement is $\Eg ( \ket{\Psi_{12}^{\text{pos}}} ) \approx 2.993
\: 524 \: 700$.

\subsection{Twenty qubits}\label{majorana_twenty}

For the sake of completeness we mention the 20 qubit case, because it
contains the dodecahedron, the Platonic solid with the largest number
of vertices.  It was seen that the 20 \acp{CPP} of the icosahedron
state describe a regular dodecahedron, and therefore the \acp{MP} of
the dodecahedron state are immediately given by \eq{12_sigma}.  The
analytic form of the dodecahedron state is
\begin{equation}\label{dodecahedron_state}
  \ket{\psi_{20}} = \tfrac{1}{25 \sqrt{3}}
  \Big( \! \sqrt{187} \ket{S_{0}} + \sqrt{627} \ket{S_{5}} +
  \sqrt{247} \ket{S_{10}} - \sqrt{627} \ket{S_{15}} + \sqrt{187}
  \ket{S_{20}} \! \Big) \, .
\end{equation}
Its Majorana representation is shown in \fig{bloch_20}(a), and its
spherical volume function $g^{\frac{2}{3}} ( \theta , \varphi )$ is
shown in \fig{bloch_20}(b).  From the icosahedral symmetry and the
antipodal configuration of the \acp{MP} it can be easily inferred that
this state has 12 \acp{CPP}, one at the centre of each face of the
dodecahedron. Therefore the \acp{CPP} are given by \eq{12_maj}.  With
$\ket{\sigma_{1}} = \ket{0}$ being a \ac{CPP}, we immediately obtain
$G^2 = \frac{187}{1875}$ and $\Eg ( \ket{\psi_{20}} ) = \log_2
\frac{1875}{187} \approx 3.325 \: 780$.  Like the icosahedron state,
the dodecahedron state is totally invariant under the icosahedral
symmetry group $Y$, but it cannot be cast as a positive
state. Therefore its entanglement coincides for the three
distance-like entanglement measures, but additivity results are not
known.

\begin{figure}
  \centering
  \begin{overpic}[scale=0.84]{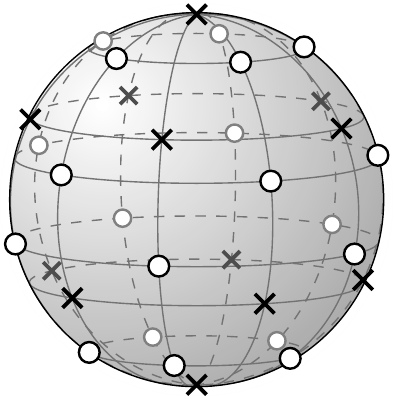}
    \put(-7,0){(a)}
  \end{overpic}
  \begin{overpic}[scale=.24]{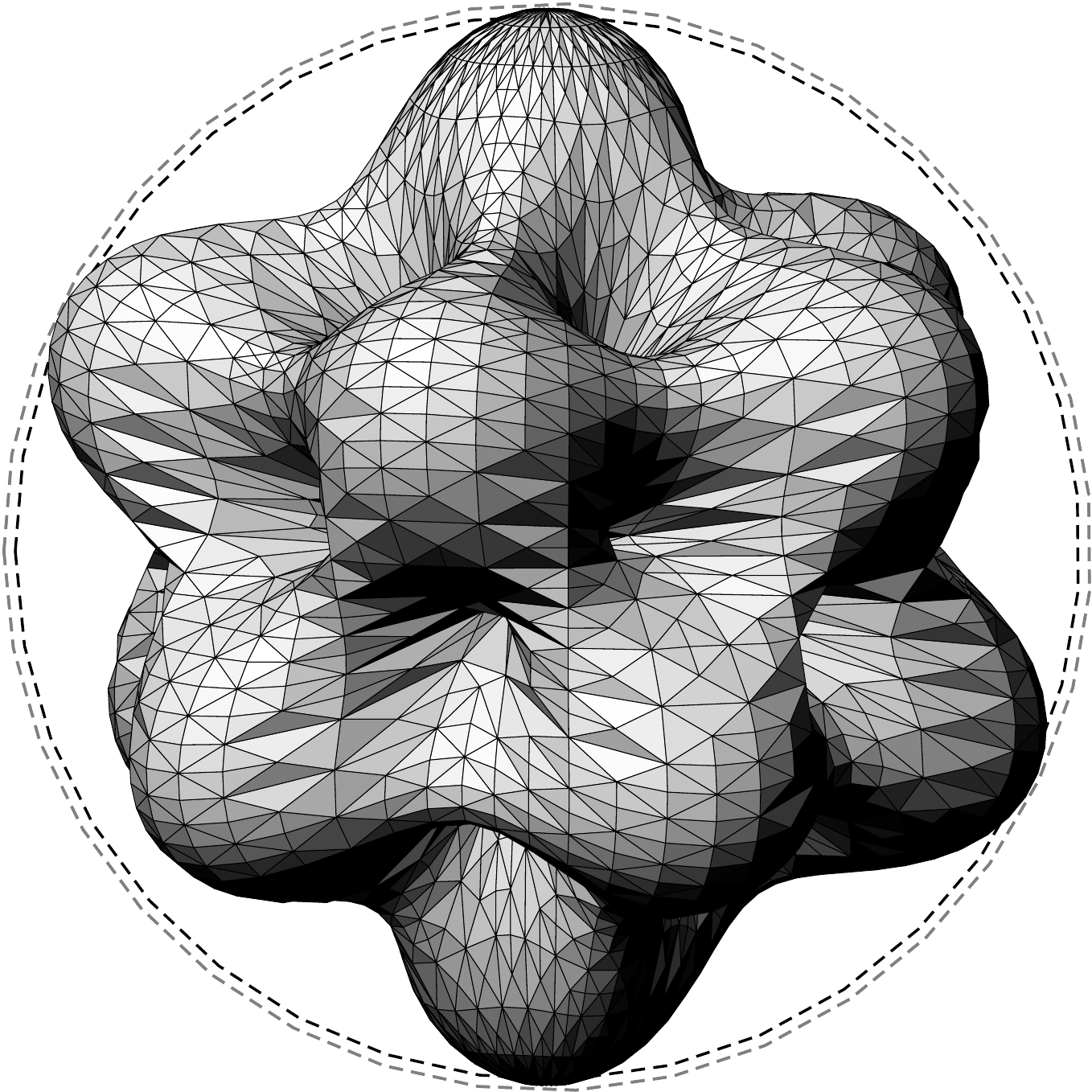}
    \put(-7,0){(b)}
  \end{overpic}
  \hfill
  \begin{overpic}[scale=.24]{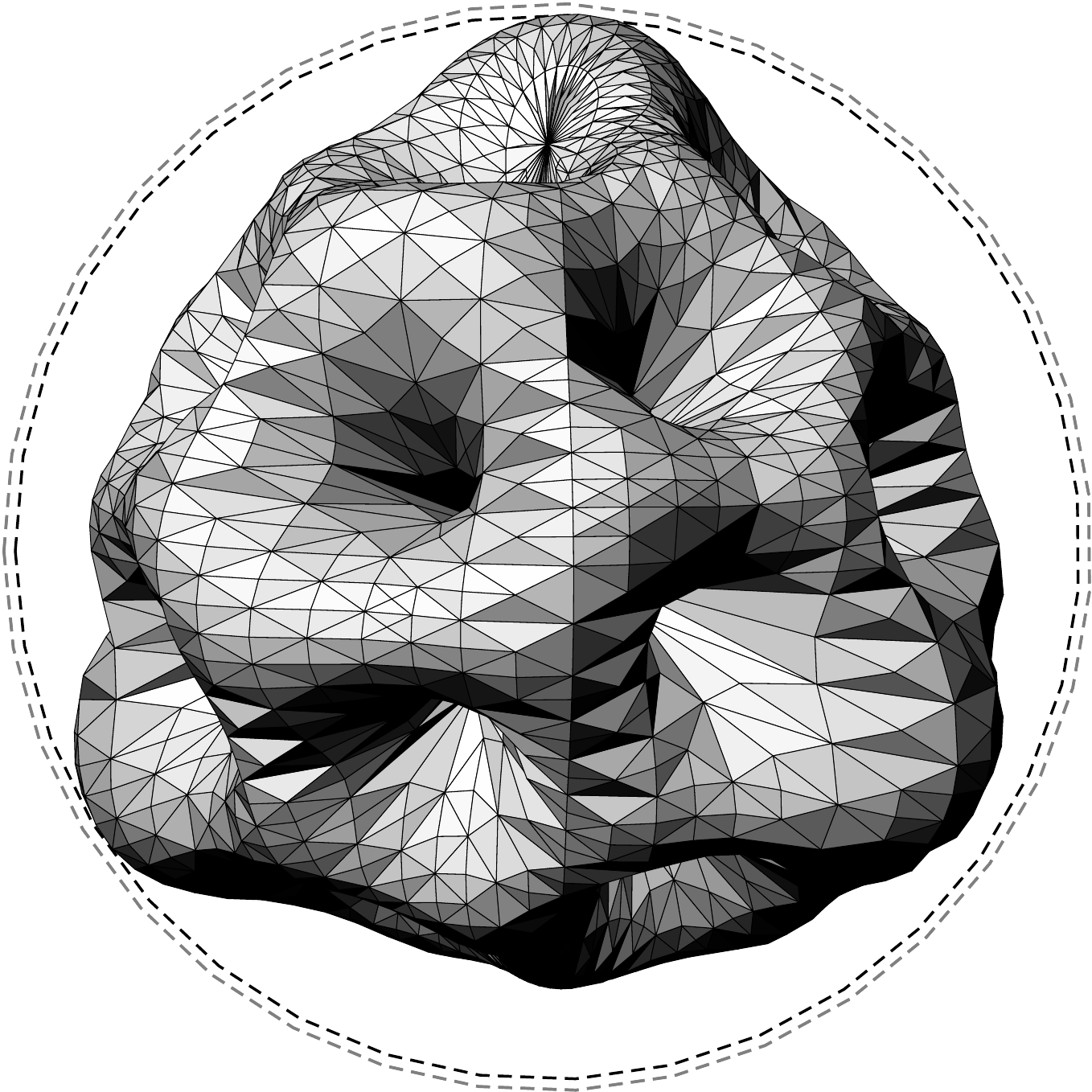}
    \put(-7,0){(c)}
  \end{overpic}
  \begin{overpic}[scale=.24]{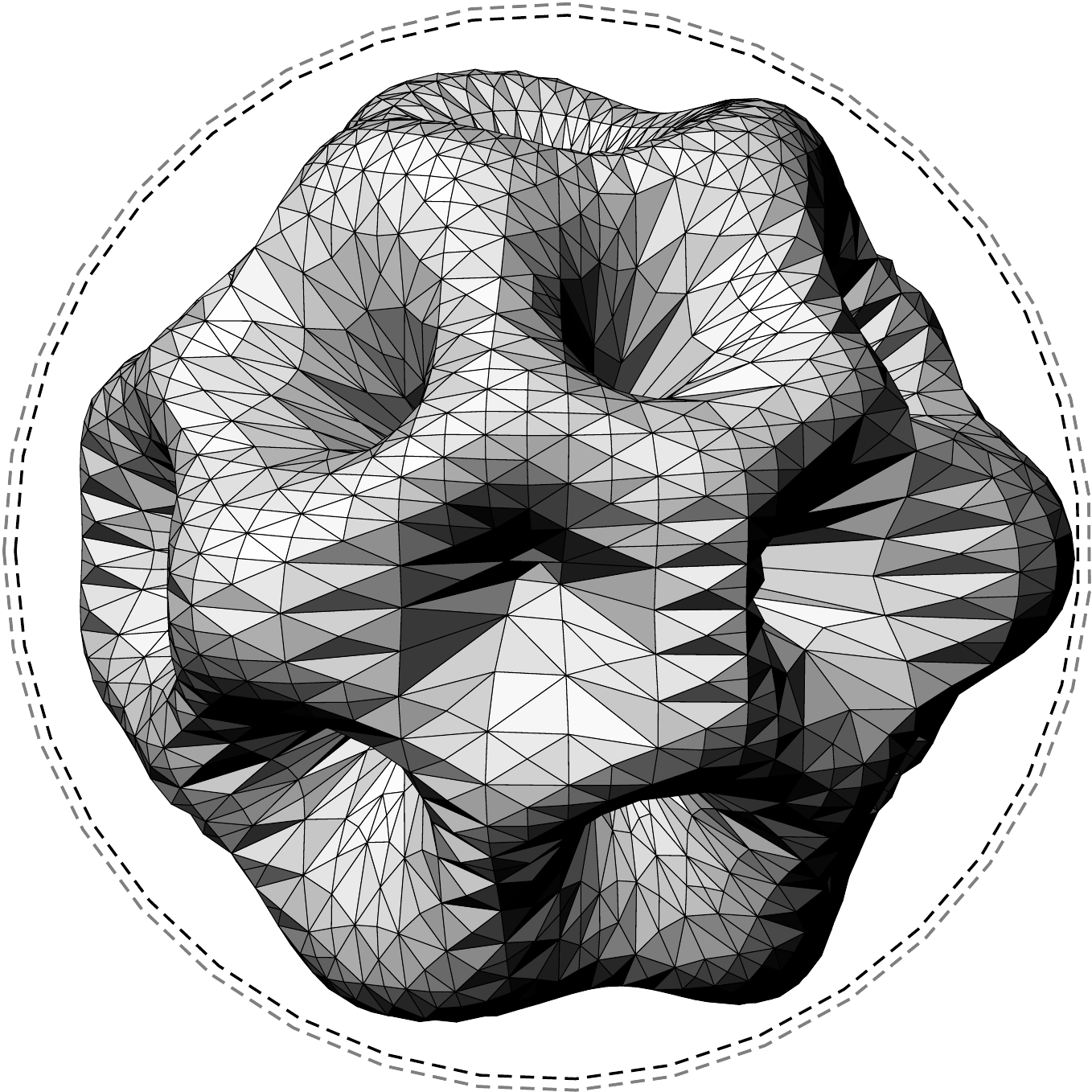}
    \put(-7,0){(d)}
  \end{overpic}
  \caption[Majorana representation of 20 qubit
  states]{\label{bloch_20} For 20 qubits the Majorana representation
    and the spherical volume function $g^{\frac{2}{3}} ( \theta ,
    \varphi )$ of the dodecahedron state $\ket{\psi_{20}}$ is shown in
    (a) and (b), respectively.  This state is not maximally entangled,
    and two counterexamples are the solutions of \protect\toths and
    Thomson's problem, numerically computed as spherical volume
    functions in (c) and (d), respectively. The radius of the outer
    and inner circle are the maximal values of $g^{\frac{2}{3}}$ for
    $\ket{\psi_{20}}$ and $\ket{\psi_{20}^{\text{Th}}}$,
    respectively.}
\end{figure}

As mentioned in \sect{extremal_point}, the dodecahedron does not solve
either of the classical problems.  Here we show that it does not solve
the Majorana problem either.  This can be easily seen by converting
the numerically known point distributions of \toths and Thomson's
problem into 20 qubit symmetric states and determining their
entanglement.  Their spherical volume functions $g^{\tfra{2}{3}} (
\theta , \varphi )$ are shown in \fig{bloch_20}(c) and (d),
respectively, and the numerically derived values of their geometric
entanglement are $\Eg ( \ket{\psi_{20}^{\text{\totx}}} ) \approx 3.327
\: 075$ and $\Eg ( \ket{\psi_{20}^{\text{Th}}} ) \approx 3.418 \:
012$.  Thus the solution of \toths problem is only marginally more
entangled than the dodecahedron state, but the solution of Thomson's
problem has a significantly higher amount of entanglement.  The latter
state has only three \acp{CPP}, which describe an equilateral triangle
on the equator, so it is reasonable to expect that yet higher
entangled 20 qubit symmetric states exist.

\section{Summary and Discussion}\label{discussion}

In the following we discuss the results gathered about highly and
maximally entangled symmetric states from several viewpoints, and
formulate some results and conjectures.

\subsection{Entanglement properties}\label{entanglement_scaling}

In \chap{geometric_measure} it was found that the maximal geometric
entanglement of $n$ qubit states scales linearly, whereas the maximal
symmetric entanglement scales logarithmically.  Combining the upper
and lower bounds for the symmetric case, it is seen that the maximal
symmetric $n$ qubit entanglement scales as
\begin{equation}\label{symmetricbounds}
  \log_2 \sqrt{\tfrac{n \pi}{2}} \leq \Eg^{\text{max}}
  \leq \log_2 (n+1) \ens ,
\end{equation}
i.e. polylogarithmically between $\Order (\log \sqrt{n})$ and $\Order
(\log n)$.  Stronger lower bounds can be found numerically from the
known solutions of \toths and Thomson's problem by translating their
point distributions into the corresponding symmetric states and
determining their entanglement.  Martin \etal \cite{Martin10} did this
for up to $n = 110$ and found $\Eg ( \ket{\Psi_{n}^{\text{Th}}} )
\approx \log_2 \frac{(n+1)}{1.71} = \log_2 (n+1) - 0.775$ for the
solutions of Thomson's problem.  While this comes close to the upper
bound, the fluctuations for the individual $n$ can be large.  In
contrast to this, the explicit form of equally weighted superpositions
of Dicke states with pseudorandom phases is known for all $n$ and
their entanglement exhibits very small fluctuations \cite{Martin10}.
The best entanglement scaling found for such states is $\Eg (
\ket{\Psi_{n}} ) \approx \log_2 \frac{(n+1)}{2.22}$, which is slightly
below that of Thomson's solutions \cite{Martin10}.

For $3$ qubits the maximally entangled state $\ket{\text{W}}$ is
symmetric. On the other hand, for $n > 5$ qubits the maximally
entangled state can be neither symmetric nor \ac{LU}-equivalent to a
symmetric state, because the lower bound $\Eg \geq \frac{n}{2}$ for
general states is higher than the upper bound $\Eg \leq \log_2 (n+1)$
for symmetric states. Regarding the cases of $n = 4,5$ qubits, we
consider the entanglement of the maximally entangled symmetric states
derived in the previous section, and find that $\Eg ( \ket{\Psi_{n}} )
< \frac{n}{2}$ in both cases, which implies that the maximally
entangled states of the general Hilbert space can be symmetric only
for $n \leq 3$ qubits.

\begin{table}
  \begin{threeparttable}
    \caption[Values of maximal symmetric $n$ qubit entanglement for up
    to $n = 12$]{\label{table4.3} Entanglement values for symmetric
      $n$ qubit states in terms of the geometric measure.  Listed from
      left to right are the entanglement of the most entangled Dicke
      state, the maximally entangled positive symmetric state, the
      conjectured maximally entangled symmetric state and the upper
      bound on symmetric entanglement.  The inequalities $\Eg \big(
      \sym{\lfloor n/2 \rfloor} \big) \leq \Eg \big(
      \ket{\Psi^{\text{pos}}_{n}} \big) \leq \Eg \big( \ket{\Psi_{n}}
      \big) < \log_2 (n + 1)$ hold for all $n$, and wherever the
      amount of entanglement does not increase from left to right, the
      respective right-hand cell has been left blank.}
    \newcolumntype{R}{>{\centering\arraybackslash}X}
    \begin{tabularx}{\textwidth}{l|RRRR}
      \toprule
      $n$ & $\Eg \big( \sym{\lfloor n/2 \rfloor} \big)$ &
      $\Eg \big( \ket{\Psi^{\text{pos}}_{n}} \big)$
      & $\Eg \big( \ket{\Psi_{n}} \big)$ & $\log_2 (n+1)$ \\
      \midrule
      2 & $1$ & & & $\log_2 3$ \\
      3 & $\log_2 (9/4)$ & & & $2$ \\
      4 & $\log_2 (8/3)$ & $\log_2 3$ & & $\log_2 5$ \\
      5 & $\approx 1.532 \: 824 \: 877$ &
      $\approx 1.742 \: 268 \: 948$\tnote{$\dagger$} & &
      $\approx 2.584 \: 962 \: 501$ \\
      6 & $\log_2 (16/5)$ & $\log_2 (9/2)$ & & $\log_2 7$ \\
      7 & $\approx 1.767 \: 313 \: 935$ &
      $\approx 2.298 \: 691 \: 396$\tnote{$\dagger$} & & $3$ \\
      8 & $\approx 1.870 \: 716 \: 983$ &
      $\approx 2.445 \: 210 \: 159$ & &
      $\approx 3.169 \: 925 \: 001$ \\
      9 & $\approx 1.942 \: 404 \: 615$ &
      $\approx 2.553 \: 960 \: 277$\tnote{$\dagger$} & &
      $\approx 3.321 \: 928 \: 095$ \\
      10 & $\approx 2.022 \: 720 \: 077$ &
      $\approx 2.679 \: 763 \: 092$ &
      $\approx 2.737 \: 432 \: 003$ & $\approx 3.459 \: 431 \: 619$ \\
      11 & $\approx 2.082 \: 583 \: 285$ &
      $\approx 2.773 \: 622 \: 669$ &
      $\approx 2.817 \: 698 \: 505$ &
      $\approx 3.584 \: 962 \: 501$ \\
      12 & $\approx 2.148 \: 250 \: 959$ &
      $\approx 2.993 \: 524 \: 700$ &
      $\log_2 (243/28)$ & $\approx 3.700 \: 439 \: 718$ \\
      \bottomrule
    \end{tabularx}
    \begin{tablenotes}
    \item [$\dagger$] {\footnotesize Closed-form analytic expressions
        are known, but not displayed due to their complicated form.}
    \end{tablenotes}
  \end{threeparttable}
\end{table}

\Tabref{table4.3} summarises the largest entanglement values that we
found for symmetric $n$ qubit states with positive and general
coefficients for up to 12 qubits.  For comparison purposes, the upper
and lower bound are also listed.  Where closed-form expressions could
not be found for the entanglement of the positive and general
solutions, numerical values were calculated with a precision of at
least ten digits.  The values for $\Eg \left(
  \ket{\Psi^{\text{pos}}_{n}} \right)$ can be considered reliable in
the sense that we detected the maximally entangled state with a high
likelihood.  In contrast to this, the values $\Eg \left(
  \ket{\Psi_{n}} \right)$ for general symmetric states are less
reliable: While the entanglement of the candidates was calculated with
high precision, there is no guarantee that these states are indeed the
maximally entangled ones.  However, even if more entangled states do
exist, they are likely to have only a slightly higher amount of
entanglement.

\begin{figure}
  \centering
  \begin{overpic}[scale=1.2]{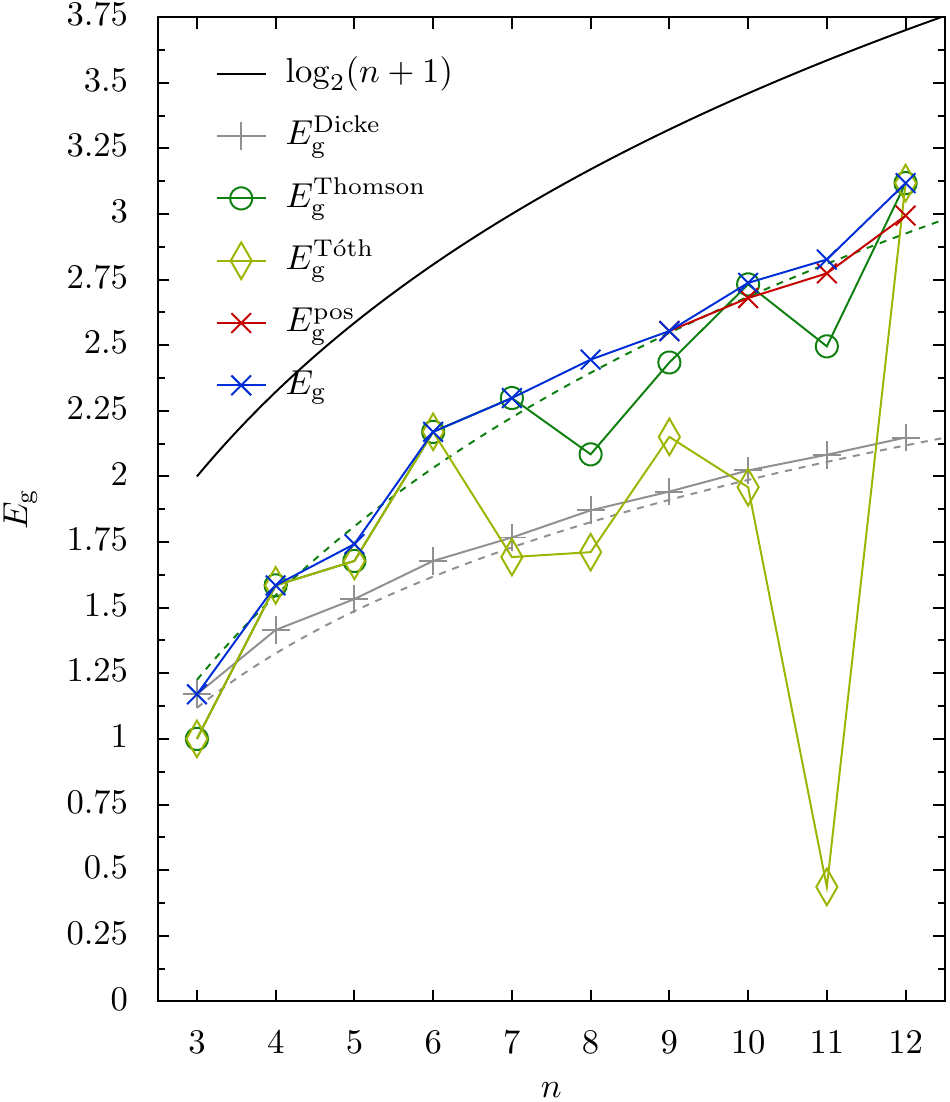}
  \end{overpic}
  \caption[Diagram with symmetric entanglement
  values]{\label{entanglement_graph} The geometric entanglement of
    different $n$ qubit symmetric states is shown.  The numerically
    determined maximally entangled symmetric states are represented by
    blue crosses. For $n = 10-12$ these states are not positive, and
    the corresponding maximally entangled positive states are denoted
    by red crosses.  The known upper and lower bound on maximal
    symmetric entanglement is shown as a black and gray line,
    respectively, with the Stirling approximation of the equally
    balanced Dicke states displayed as a dotted gray line.  The
    solutions of \protect\toths problem (olive diamonds) and Thomson's
    problem (green circles) yield nontrivial lower bounds for the
    maximal symmetric entanglement. The fitting $\Eg^{\text{Thomson}}
    \approx \log_2 \frac{(n+1)}{1.71}$ for the solutions of Thomson's
    problem of up to $n = 110$ was derived in \protect\cite{Martin10}
    and is displayed as a dashed green line.  Because of the
    relationship \protect\eqref{meas_pure_ineq} the values of $\Eg$
    are lower bounds for the maximal symmetric entanglement of the
    relative entropy of entanglement $E_{\text{R}}$ and the
    logarithmic robustness of entanglement $E_{\text{Rob}}$.}
\end{figure}

\begin{table}
  \begin{threeparttable}
    \caption[Summary of the properties of the investigated symmetric
    states]
    {\label{table4.5} Summary of the properties of the symmetric
      states discussed in \protect\chap{solutions}. The second column
      indicates whether the states are positive or not (the
      non-positive states listed here cannot be turned into positive
      states by symmetric \ac{LU} operations).  The third column
      indicates whether states are totally invariant, and if they are,
      the corresponding symmetry group is listed (cf.
      \protect\sect{invariant_and_additivity}).  All positive states
      are strongly additive under $\Eg$
      (cf. \protect\lemref{positive_additive}), and for all totally
      invariant states the three measures $\Eg$ $E_{\text{R}}$ and
      $E_{\text{Rob}}$ coincide
      (cf. \protect\lemref{invariant_equivalent}). States that are
      positive as well as totally invariant are furthermore additive
      under $E_{\text{R}}$ and $E_{\text{Rob}}$
      (cf. \protect\theoref{pos_inv}).  The last three columns
      indicate whether the listed states are the (conjectured)
      solutions of the point distribution problems discussed in
      \protect\sect{extremal_point}.}
    \newcolumntype{R}{>{\centering\arraybackslash}X}
    \begin{tabularx}{\textwidth}{R|cc|ccc}
      \toprule
      Families of states & \multirow{2}{*}{positive} & totally & \toth & Thomson & Majorana \\
      ($n \geq 3$ and $0 < k < \frac{n}{2}$) & & invar. & solution  & solution & solution \\
      \midrule
      balanced Dicke states $\sym{n,\frac{n}{2}}$ & \yestick & $\text{O}(2)$  & \notick & \notick & \notick \\
      imbalanced Dicke states $\sym{{n,k}}$ & \yestick & $\text{SO}(2)$ & \notick & \notick & \small{$n=3$} \\
      GHZ states $\ket{\text{GHZ}_{n}}$ & \yestick & $D_{n}$ & \small{$n=3$} & \small{$n=3$} & \notick \\
      \multirow{2}{*}{$\frac{1}{\sqrt{2}} \left( \sym{k} + \sym{n-k} \right)$} & \multirow{2}{*}{\yestick} &
      \multirow{2}{*}{$D_{n-2k}$} & \small{$n=5,6$} & \small{$n=5,6,7$} & \small{$n=6,7,9$} \\[-0.15em]
      &  &  & \small{$(k=1)$} & \small{$(k=1)$} & \small{$(k=1,2 \tnote{$\dagger$})$} \\
      \toprule
      States discussed in & \multirow{2}{*}{positive} & totally & \toth & Thomson & Majorana \\
      \protect\sect{maxent_results} & & invar. & solution & solution & solution \\
      \midrule
      tetrahedron $\ket{\Psi_{4}}$ & \yestick & $T$ & \yestick & \yestick & \yestick \\
      trigonal bipyramid $\ket{\psi_{5}}$ & \yestick & $D_{3}$ & \yestick & \yestick & \notick \\
      square pyramid $\ket{\Psi_{5}}$ & \yestick & \notick & \notick & \notick & \yestick \\
      octahedron $\ket{\Psi_{6}}$ & \yestick & $O$ & \yestick & \yestick & \yestick \\
      (1-3-3) $\ket{\psi_{7}^{\text{\totx}}}$ & \notick & \notick & \yestick & \notick & \notick \\
      pentagonal dipyramid $\ket{\Psi_{7}}$ & \yestick & $D_{5}$ & \notick & \yestick & \yestick \\
      regular cube $\ket{\psi_{8}^{\text{c}}}$ & \yestick & $O$ & \notick & \notick & \notick \\
      asymmetric pentagonal & \multirow{2}{*}{\yestick} & \multirow{2}{*}{\notick} &
      \multirow{2}{*}{\notick} & \multirow{2}{*}{\notick} & \multirow{2}{*}{\yestick} \\
      dipyramid $\ket{\Psi_{8}}$ & & & & & \\
      antiprism states $\ket{\psi_{8}^{\text{a}}}$ & \notick & \notick &
      \yestick\tnote{$*$} & \yestick\tnote{$*$} & \notick \\
      triaugmented triangular & \multirow{2}{*}{\notick} & \multirow{2}{*}{\notick} &
      \multirow{2}{*}{\yestick\tnote{$*$}} & \multirow{2}{*}{\yestick\tnote{$*$}} & \multirow{2}{*}{\notick} \\
      prism states $\ket{\psi_{9}}$ & & & & & \\
      pentagonal dipyramid $\ket{\Psi_{9}}$ & \yestick & $D_{5}$ & \notick & \notick & \yestick \\
      (2-2-4-2) $\ket{\psi_{10}^{\text{\totx}}}$ & \notick & \notick & \yestick & \notick & \notick \\
      gyroelongated square & \multirow{2}{*}{\notick} & \multirow{2}{*}{\notick} &
      \multirow{2}{*}{\notick} & \multirow{2}{*}{\yestick\tnote{$*$}} & \multirow{2}{*}{\yestick\tnote{$*$}} \\
      bipyramid states $\ket{\psi_{10}}$ & & & & & \\
      maximal. ent. positive $\ket{\Psi_{10}^{\text{pos}}}$ & \yestick & \notick & \notick & \notick & \notick \\[0.15em]
      rotat. symm. positive $\ket{\psi_{10}^{\text{pos}}}$ & \yestick & $D_{6}$ & \notick & \notick & \notick \\
      (1-2-4-2-2) $\ket{\psi_{11}^{\text{Th}}}$ & \notick & \notick & \notick & \yestick & \notick \\
      (1-5-5) states $\ket{\psi_{11}}$ & \notick & \notick & \yestick\tnote{$*$} & \notick & \yestick\tnote{$*$} \\
      maximal. ent. positive $\ket{\Psi_{11}^{\text{pos}}}$ & \yestick & \notick & \notick & \notick & \notick \\
      icosahedron $\ket{\Psi_{12}}$ & \notick & $Y$ & \yestick & \yestick & \yestick \\
      maximal. ent. positive $\ket{\Psi_{12}^{\text{pos}}}$ & \yestick & \notick & \notick & \notick & \notick \\
      dodecahedron $\ket{\psi_{20}}$ & \notick & $Y$ & \notick & \notick & \notick \\
      \bottomrule
    \end{tabularx}
    \begin{tablenotes}
    \item [$*$] {\footnotesize The solutions correspond to different
        values of the parameter(s) that describe the states.}
    \item [$\dagger$] {\footnotesize $k=1$ for $n=6,7$, and $k=2$ for $n=9$.}
    \end{tablenotes}
  \end{threeparttable}
\end{table}

The diagram in \fig{entanglement_graph} displays the entanglement of
our candidates and solutions along with the entanglement of the
classical problems and the upper and lower bounds.  It is seen that
the solutions of Thomson's problem generally exhibit a higher amount
of entanglement than those of \toths problem, thus demonstrating that
for large $n$ the solutions for Thomson's problem are generally a
better approximation for the Majorana problem than the solutions of
\toths problem.

In \tabref{table4.5} the qualitative properties (positive, totally
invariant, solution of an optimisation problem) of all the symmetric
states investigated in this chapter are listed.  Note that positivity
automatically implies strong additivity under $\Eg$
(\lemref{positive_additive}), that total invariance implies $\Eg =
E_{\text{R}} = E_{\text{Rob}}$ (\lemref{invariant_equivalent}), and
that the simultaneous existence of these two properties additionally
implies additivity under $E_{\text{R}}$ and $E_{\text{Rob}}$
(\theoref{pos_inv}).

\subsection{Number and locations of MPs}\label{number_mp}

The spherical amplitude function $g ( \theta , \varphi ) =
\abs{\bracket{\psi^{\text{s}}}{\sigma (\theta , \varphi ) }^{\otimes
    n}}$ proved to be a valuable tool for numerically determining the
\ac{MP} locations of a given symmetric state $\psis$, because the
zeroes of this function coincide with the antipodes of the \acp{MP}.
By considering the power $g^{\frac{2}{3}} ( \theta , \varphi )$ of
this function, we obtained the spherical volume function which
describes a three-dimensional volume that is constant for all $n$
qubit symmetric states (cf. \corref{cor_volume}).  This function can
be used to explain the Majorana representation of highly entangled
symmetric states: A \quo{bunching} of \acp{MP} in a small area,
e.g. in one half-sphere of the Majorana sphere leads to high values of
$g^{\frac{2}{3}}$ in that area, and this imbalance of the spherical
volume function leads to low entanglement.  This explains the tendency
of \acp{MP} to spread out widely over the sphere, in a similar fashion
to the classical problems.  Rather surprisingly, however, there also
exist highly entangled states where two or more \acp{MP} coincide (as
seen for $n =3,8,9$).  This is intriguing because such configurations
are the least optimal ones for classical point distributions. Again,
this can be explained with the constant integration volume: Because
the zeroes of $g^{\frac{2}{3}}$ are the antipodes of the \acp{MP}, a
lower number of \emph{different} \acp{MP} means that the spherical
volume function has fewer zeroes, and due to the constant volume, this
can lead to smaller values at the global maxima.

With regard to the Platonic solids, it was found that they solve the
Majorana problem only for $n=4,6,12$, which is in full analogy to the
classical problems.  How can this be understood?  For \toths problem
an intuitive description was already given for $n=8$ in
\fig{platonic}: By turning the cube into a regular antiprism, the
nearest neighbour distances can be increased, at the expense of
breaking the Platonic symmetry. In general, Thomson's problem and the
Majorana problem also favour such increased nearest-neighbour
distances. For $n=4,6,12$ the Platonic solids are composed of regular
triangles, whereas the cube ($n=8$) is composed of regular squares and
the dodecahedron ($n=20$) of regular pentagons. From this it can be
inferred that the vertices of optimal point distributions tend to form
triangles with their nearest neighbours.

Summing up the behaviour of \acp{MP} of maximally entangled symmetric
states, we can say that they prefer to be either well spaced apart
from each other, or to coincide into degeneracies.  Like the classical
point distributions, the \acp{MP} tend to describe polyhedra that are
made up mostly or entirely of triangles.  Because phased states are in
general considerably higher entangled than positive states
(cf. \theoref{theo_positive_entanglement}, \corref{max_pos_ent} and
\cite{Zhu10}), and because positive coefficients impose strong
restrictions on the locations of \acp{MP} and \acp{CPP} (cf.
\sect{pos_symm_states}), it is expected that for larger $n$ the
maximally entangled symmetric states no longer exhibit any rotational
and reflective symmetries in their Majorana representation.  For
Thomson's problem, the first distribution without any symmetry
features (and therefore no representation as a real state,
cf. \corref{cpp_real}) arises at $n=13$, and for \toths problem at
$n=15$. It is therefore reasonable to expect that the situation is
similar for the solutions of the Majorana problem.

From a mathematical point of view, an interesting question is in which
cases the \acp{MP} and \acp{CPP} of certain states (such as the
maximally entangled ones) can be derived analytically as an algebraic
or closed-form number.  The positions of the \acp{MP} and \acp{CPP}
are often given by the roots of polynomial equations.  Abel's
impossibility theorem states that the general quintic and higher
equation is impossible to solve algebraically \cite{Tignol}. In the
cases $n=7$ and $n=9$ we encountered such polynomials, but we could
reduce them to cubic equations by suitable substitutions.  This may
not be possible in general, and Galois theory may then be useful in
answering the question of algebraic solvability \cite{Tignol}.

\subsection{Number and locations of CPPs}\label{number_cpp}

Excluding the Dicke states with their continuous ring of \acp{CPP},
one observes that candidates for maximal entanglement tend to have a
large number of \acp{CPP}.  The prime example is the case of five
qubits, where the classical solution with only three \acp{CPP} is less
entangled than the \quo{square pyramid} state which has five
\acp{CPP}.  In \theoref{cpp_number_locations} it was shown that $2n-4$
is an upper bound on the number of \acp{CPP} of positive symmetric $n$
qubit states.  In \tabref{table4.4} this bound is compared to the
number of \acp{CPP} of all candidates and solutions. It can be seen
that the bound is obeyed by all states, including those that cannot be
cast with positive coefficients.  In many cases the number of
\acp{CPP} comes close to the bound ($n=5,8$) or coincides with it ($n
= 4,6,7,12$).  This raises the question whether the upper bound of
$2n-4$ on the number of \acp{CPP} also holds for general symmetric
states. One indication in favour of this conjecture is given by
Euler's formula for convex polyhedra, which states that a convex
polyhedron with $n$ vertices can have at most $2n-4$ faces, with the
bound being strict \ac{iff} all faces are triangles.  This is
intriguing because our proof of \theoref{cpp_number_locations} is of a
very technical nature, where the number $2n-4$ arises in a seemingly
arbitrary fashion.  This hints at a deeper lying connection between
the faces spanned by the \acp{MP} and the number of local maxima
present in the spherical amplitude function $g ( \theta , \varphi
)$. We therefore formulate the following conjecture:
\begin{conjecture}\label{conjecture_cpp}
  With the exception of the Dicke states, every $n$ qubit symmetric
  state has at most $2n-4$ \acp{CPP}.
\end{conjecture}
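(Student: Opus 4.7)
The plan is to translate the counting problem from the sphere to a harmonic polynomial question via stereographic projection, then combine a Khavinson--Neumann-type bound with Morse theory on $S^{2}$. Parametrising a candidate CPP $\ket{\sigma}$ by the complex coordinate $w$ on the Riemann sphere and using the Majorana polynomial $\psi$ of $\psis$, a direct computation gives
\[
g^{2}(w,\bar w) \;=\; \frac{C\,|P(\bar w)|^{2}}{(1+|w|^{2})^{n}} \ens ,
\]
where $P$ is a polynomial whose roots are the stereographic images of the MPs. Every CPP is an isolated global maximum of $g^{2}$ provided we exclude the Dicke states, because \lemref{rot_symm} together with \lemref{pos_symm_cpp} forces any continuous family of CPPs to come from a Dicke state; this is precisely why the Dicke exception must appear in the conjecture.

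Writing $P^{*}(z) := \overline{P(\bar z)}$, the equation $\partial_{w} g^{2} = 0$ splits cleanly into two cases: either $P(\bar w)=0$, yielding the $k$ local minima located at the antipodes of the $k$ distinct MPs, or
\[
\bar w \bigl[n\,P^{*}(w) - w\,P^{*\prime}(w)\bigr] \;=\; P^{*\prime}(w) \ens ,
\]
which has the form $\bar w = D(w)/R(w)$ with $\deg D,\deg R \leq n-1$. The Khavinson--Neumann theorem on zeros of harmonic rational functions then bounds the number of solutions of this equation by $5(n-1)-5 = 5n-10$. These solutions account for \emph{all} remaining critical points of $g^{2}$, i.e., all local maxima and all saddles.

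Invoking the Poincar\'{e}--Hopf identity on $S^{2}$, namely $\#\text{max} - \#\text{saddle} + \#\text{min} = 2$ with $\#\text{min} = k$, and adding this to the Khavinson--Neumann bound $\#\text{max} + \#\text{saddle} \leq 5n-10$, gives
\[
\#\text{max} \;\leq\; \frac{5n-8-k}{2} \ens .
\]
In the generic case $k = n$ this collapses to the announced bound $\#\text{max} \leq 2n-4$, and the calculations in \sect{maxent_results} show that the Khavinson--Neumann bound is in fact saturated for the tetrahedron, octahedron and icosahedron states, which is what makes the inequality sharp. Saturation forces every face in the gradient-induced cell decomposition of $S^{2}$ to be triangular, recovering the polyhedral intuition behind the conjecture and explaining why Euler's formula $F \leq 2V-4$ appears on the geometric side.

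The hard part will be the degenerate regime $k < n$, where the raw inequality $(5n-8-k)/2$ is worse than $2n-4$. Two refinements are needed. First, at an MP of multiplicity $m_{i}$, the numerator $D = P^{*\prime}$ and the polynomial $R$ share a common factor of order $m_{i}-1$, so after cancellation the rational function $D/R$ has effective degree at most $n - \sum_{i}(m_{i}-1) - 1 = k-1$, which tightens the Khavinson--Neumann count to roughly $5k-10$ and restores the bound $\#\text{max} \leq 2k-4 \leq 2n-4$. Second, one must verify that the degenerate minima do not secretly contribute extra maxima, which I would handle by an equisingular perturbation argument: split each degenerate MP of multiplicity $m_{i}$ into $m_{i}$ simple nearby MPs, monitor that the number of CPPs cannot decrease under such opening (by upper semicontinuity of the count of global maxima), and reduce to the generic case already treated. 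The combinatorial bookkeeping of the common-factor cancellation, together with showing that the perturbation genuinely preserves or increases $\#\text{max}$, is where the real work lies.
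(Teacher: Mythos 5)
Be aware first that the paper does not prove this statement: it is a \emph{conjecture}, and the only case actually established is the positive one, \theoref{cpp_number_locations}, which uses an entirely different elementary route (reduction to the positive half-circle or to horizontal rings, plus Descartes' rule of signs). Your proposal attacks the open general case along exactly the Morse-theoretic lines floated in \sect{outlook}, and its core is sound. In the generic situation --- $n$ distinct \acp{MP} and $g^{2}$ a Morse function --- the computation is right: every critical point away from the zeros of $P$ solves $\bar w = R(w)$ with $R$ rational of degree $n-1$, Khavinson--Neumann caps these at $5n-10$, the Morse identity gives $\#\text{saddle} = \#\text{max} + n - 2$, and the two together give $\#\text{max} \leq 2n-4$ for \emph{all} local maxima, a fortiori for the \acp{CPP}. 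Your saturation claim also checks out: for the tetrahedron, octahedron and icosahedron states one finds $4+6=10$, $8+12=20$ and $20+30=50$ against bounds $10$, $20$ and $50$. The common-factor cancellation at a repeated \ac{MP} is likewise genuine and reduces the degree of $R$ to $k-1$, where $k$ is the number of distinct \acp{MP}.

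Two steps fail as written. First, the identity $\#\text{max} - \#\text{saddle} + \#\text{min} = 2$ presupposes that $g^{2}$ is Morse; for an isolated non-extremal critical point of a gradient field the Bendixson index is $1 - h/2 \leq 0$ but can equal $-2$ or less, in which case Poincar\'{e}--Hopf only forces $\#\text{saddle} \geq (\#\text{max} + k - 2)/2$, and combined with the $5k-10$ bound this yields only the weaker $\#\text{max} \leq 3k-6$. Second, your perturbation step points the wrong way: the number of \emph{global} maxima is upper semicontinuous --- several \acp{CPP} of equal height generically split into a single one under a small perturbation --- so comparing a degenerate state with its generic neighbours bounds its \ac{CPP} count from \emph{below}, not above. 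Both gaps close simultaneously if you run the entire argument on \emph{local} maxima: an isolated local maximum survives every sufficiently small perturbation as at least one nearby local maximum, so the local-maximum count is lower semicontinuous, and it therefore suffices to prove $\#\text{locmax} \leq 2n-4$ on a dense set of states with distinct \acp{MP} and Morse $g^{2}$, the general bound following in the limit (this also renders the cancellation bookkeeping unnecessary). Finally, \lemref{rot_symm} and \lemref{pos_symm_cpp} concern positive states and cannot justify the Dicke exclusion for general states; the correct reason is that a curve of solutions of $\bar w = R(w)$ forces $R$ to be injective on an arc and hence a \mob map, i.e.\ at most two distinct \acp{MP}, and among those configurations only the antipodal ones --- the Dicke states up to \ac{LU} --- actually produce a circle of critical points.
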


\begin{table}
  \centering
  \caption[Number of CPPs and faces of highly entangled symmetric
  states]{\label{table4.4}
    The number of \acp{CPP} and polyhedral faces in the Majorana
    representation of the solutions or conjectured solutions are
    listed. The upper bound $2n-4$ must hold for the number of
    faces (due to Euler's formula) and for the number of \acp{CPP}
    (due to \protect\theoref{cpp_number_locations}) of
    $\ket{\Psi^{\text{pos}}_{n}}$. Entries are omitted where the
    underlying state is the same as the conjectured solution
    $\ket{\Psi_{n}}$ of the Majorana problem.}
  \begin{tabular}{l|cccccc}
    \toprule
    $n$ & \acp{CPP} $\ket{\psi_{n}^{\text{\totx}}}$ &
    \acp{CPP} $\ket{\psi_{n}^{\text{Th}}}$ &
    \acp{CPP} $\ket{\Psi^{\text{pos}}_{n}}$ &
    \acp{CPP} $\ket{\Psi_{n}}$ &
    faces $\ket{\Psi_{n}}$ & $2n-4$ \\
    \midrule
    4  &   &   &    & 4  & 4  & 4  \\
    5  & 3 & 3 &    & 5  & 5  & 6  \\
    6  &   &   &    & 8  & 8  & 8  \\
    7  & 3 &   &    & 10 & 10 & 10 \\
    8  & 2 & 2 &    & 10 & 10 & 12 \\
    9  & 3 & 3 &    & 10 & 10 & 14 \\
    10 & 2 & 8 & 3  & 8  & 16 & 16 \\
    11 & 1 & 2 & 2  & 11 & 16 & 18 \\
    12 &   &   & 15 & 20 & 20 & 20 \\
    \bottomrule
  \end{tabular}
\end{table}

What can we say about lower bounds on the number of \acp{CPP}?  For
maximally entangled symmetric $n$ qubit states \corref{numberofcpp}
predicts the existence of only two distinct \acp{CPP}, but our results
show that in general there is a considerably larger number of
\acp{CPP}, and that the \acp{CPP} tend to be well spread out over the
sphere.  This makes sense from the viewpoint of the necessary
condition outlined in \corref{numberofcpp}, namely that it must be
possible to write maximally entangled symmetric states as linear
combinations of their \acp{CPS}.

For $n=10,11$ the numerically determined maximally entangled positive
symmetric states do not exhibit a rotational symmetry. This is
somewhat surprising, because \lemref{pos_symm_cpp} implies that the
\acp{CPP} can then only lie on the positive half-circle of the
Majorana sphere, thus likely resulting in an imbalance of the
spherical amplitude function $g( \theta , \varphi )$. However, this
imbalance is only very weakly pronounced for the positive solutions of
$n=10,11$, with the non-global maxima of $g( \theta , \varphi )$
coming very close to the value at the \acp{CPP}. It was seen that both
$\ket{\Psi_{10}^{\text{pos}}}$ and $\ket{\Psi_{11}^{\text{pos}}}$ are
cast with only three nonvanishing basis states, and that
\theoref{theo_general_maj_rep} implies that shifting the \acp{MP} in a
way that each horizontal \ac{MP} ring assumes a rotational $Z$-axis
symmetry would result in four nonvanishing basis states.  It thus
seems that, at least for positive symmetric states, a lower number of
nonvanishing basis states is more favourable than Majorana
representations with certain symmetry features.

It was found that for $3 < n \leq 12$ the maximally entangled
symmetric $n$ qubit states are not Dicke states.  This result can be
easily extended to $n > 12$ by comparing the entanglement scaling of
the equally balanced Dicke state \eqref{dicke_ent_scaling} to e.g. the
superpositions of Dicke states shown in \tabref{table4.1}, or to the
entanglement scaling of the symmetric states defined for all $n$ in
\cite{Martin10}.  Since Dicke states are the only states whose
Majorana representation exhibits a continuous rotational symmetry, we
obtain the following result:
\begin{corollary}\label{discrete_cpp}
  For $n > 3$ the maximally entangled symmetric $n$ qubit states with
  respect to the geometric measure have only a finite number of
  \acp{CPP}.
\end{corollary}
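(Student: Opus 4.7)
The plan is to prove the corollary by contradiction. Assume there exists some $n > 3$ and a maximally entangled symmetric $n$ qubit state $\Psis$ with infinitely many \acp{CPP}. I would then establish, in two essentially independent steps, first that this forces $\Psis$ to be (\ac{LU}-equivalent to) a Dicke state, and second that no Dicke state is maximally entangled for $n > 3$, thus reaching a contradiction.

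First I would prove the latter statement. Among the Dicke states, the most entangled ones are $\sym{n, \lfloor n/2 \rfloor}$ (and $\sym{n, \lceil n/2 \rceil}$), whose geometric entanglement is given in closed form by \eq{dicke_ent} and scales as $\log_2 \sqrt{n \pi / 2}$ in the asymptotic limit \eqref{dicke_ent_scaling}. For $4 \leq n \leq 12$, \tabref{table4.3} directly shows that the explicit candidate states $\ket{\Psi_{n}}$ derived in \sect{maxent_results} are strictly more entangled than $\sym{n, \lfloor n/2 \rfloor}$. For $n > 12$, the same conclusion follows by comparing the Dicke scaling to the two-Dicke superpositions of \tabref{table4.1}, for instance $\tfra{1}{\sqrt{2}} \big( \sym{n/4} + \sym{3n/4} \big)$, whose entanglement grows like $\log_2 \sqrt{3 n \pi / 2}$, or to the pseudorandomly phased superpositions of \cite{Martin10} whose entanglement scales as $\log_2 (n+1) - \Order (1)$.

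Next I would establish that infinitely many \acp{CPP} force $\Psis$ to be a Dicke state. Mediated by the stereographic projection, the spherical amplitude function can be written as a real-algebraic function on the Majorana sphere, $g^2 ( z, \cc{z} ) = \abs{\widetilde{\psi}(z)}^2 / (1+\abs{z}^2)^{n}$, where $\widetilde{\psi}$ is a polynomial of degree at most $n$ whose coefficients are those of the Majorana polynomial \eqref{majpoly}. If the set of global maxima $\{ g^2 = G^2 \}$ were infinite, compactness of the sphere and real-analyticity of $g^2$ would force this level set to contain a positive-dimensional real-algebraic curve. I would then apply a suitable \ac{LU} rotation to place a point of this curve at the north pole, and exploit the critical-point equation $\widetilde{\psi} ' (z) (1+\abs{z}^2) = n \widetilde{\psi}(z) \cc{z}$ together with the identity $\abs{\widetilde{\psi}(z)}^2 = G^2 (1+\abs{z}^2)^{n}$ along the curve to conclude that $\widetilde{\psi}$ must be a pure monomial $\widetilde{\psi}(z) = c \, z^{k}$, which by \eq{majpoly} means $\Psis$ is a Dicke state. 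Equivalently, one may argue that a continuous curve of \acp{CPP} forces a continuous rotational symmetry on the \ac{MP} distribution of $\Psis$, and by \lemref{rot_symm} in the limit $m \to \infty$ the only such states are those whose \acp{MP} all lie on the rotation axis, i.e. the Dicke states.

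The main obstacle will be turning the analytic argument of the last step into a fully rigorous proof: extracting the monomial form of $\widetilde{\psi}$ from the bare existence of an analytic curve on the maximum level set requires careful use of the polynomial identity $\abs{\widetilde{\psi}}^2 = G^2 (1+\abs{z}^2)^n$ together with some real-algebraic geometry. If this direct route proves too delicate, an alternative is to first extend \theoref{cpp_number_locations} from positive to general symmetric states so that the $2n-4$ bound holds universally for non-Dicke states (which is essentially the content of \conref{conjecture_cpp}), and then combine this bound with the non-optimality of Dicke states from the first step.
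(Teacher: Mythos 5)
Your proposal follows essentially the same route as the paper: the paper likewise deduces the corollary from the non-maximality of Dicke states for $3 < n \leq 12$ (the explicit candidates of \tabref{table4.3}) and for $n > 12$ (scaling comparison with the superpositions in \tabref{table4.1} and the states of \cite{Martin10}), combined with the observation that Dicke states are the only symmetric states whose Majorana representation exhibits a continuous rotational symmetry and hence a continuum of \acp{CPP}. The paper asserts this second step in a single line rather than attempting the real-analytic level-set argument you sketch, so your elaboration goes beyond, but is consistent with, the argument actually given.
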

This finding is interesting in light of the question raised in
\sect{gen_results}, namely whether maximally entangled states of
arbitrary multipartite systems have a discrete or continuous amount of
\acp{CPS} (see also Tamaryan \etal \cite{Tamaryan08,Tamaryan10}).  The
answer for the general (non-symmetric) case is not known, but the
investigation of the symmetric $n$ qubit case gives reason to believe
that for most multipartite systems the maximally entangled states have
only a finite number of distinct \acp{CPS}.

\cleardoublepage

\chapter{Classification of Symmetric State
  Entanglement}\label{classification}
 
\begin{quotation}
  In the previous chapter the entanglement of symmetric states was
  investigated primarily from a quantitative point of view. Now the
  focus shifts towards the qualitative characterisation of symmetric
  states. The concepts of \ac{LOCC} and \ac{SLOCC} equivalence are
  adapted to the symmetric case, and the \acf{DC}, an entanglement
  classification scheme specifically for symmetric states, is
  reviewed. It is found that \ac{SLOCC} operations between symmetric
  states are described by the \mob transformations of complex
  analysis. This allows for an intuitive visualisation, as well as
  practical uses such as the determination of whether two symmetric
  states belong to the same \ac{SLOCC} class.  The symmetric
  \ac{SLOCC} and \ac{DC} classes for up to five qubits are studied in
  detail, and representative states are derived for each entanglement
  class.  Connections are made to known \ac{SLOCC} invariants as well
  as related works, such as the \acfp{EF} \cite{Verstraete02} or
  alternative definitions of maximal entanglement
  \cite{Osterloh05,Gisin98}.
\end{quotation}

\section{Entanglement classification schemes for
  symmetric states}\label{overview_entclass}

The entanglement classification schemes \ac{LOCC} and \ac{SLOCC} were
already discussed in \sect{entanglement_classes}. In particular, it
was seen that \ac{SLOCC} equivalence gives rise to a coarser partition
than \ac{LOCC} equivalence in the sense that every \ac{LOCC} operation
is also an \ac{SLOCC} operation, but not vice versa.  The concepts of
\ac{LOCC} and \ac{SLOCC} equivalence are now applied to the subset of
symmetric states, and a comparison is made to the \acf{DC}
\cite{Bastin09}, an entanglement classification scheme designed
specifically for symmetric states.

\subsection{Symmetric LOCC and SLOCC
  operations}\label{symm_locc_slocc}

The condition for \ac{LOCC} equivalence between two arbitrary $n$
qudit states formulated in \eq{gen_locc_cond} is a special case of the
\ac{SLOCC} equivalence \eqref{gen_slocc_cond}.  The special linear
group $\text{SL}(d, \mbbc )$ contains all invertible $d \times d$
complex matrices with unit determinant, which explains why \ac{SLOCC}
operations are also known as \acfp{ILO} \cite{Dur00}.  Note that
$\text{SL}(d, \mbbc )$ contains $\text{SU}(d)$ as a subgroup.  In the
following we focus on the qubit case ($d = 2$) and on
permutation-symmetric states.

Given two symmetric $n$ qubit states $\psis$ and $\phis$, is there a
way to simplify \eq{gen_locc_cond} and \eq{gen_slocc_cond} to take
permutation-symmetry into account?  Mathonet \etal \cite{Mathonet10}
recently discovered that there always exists a symmetric \ac{ILO}
between two \ac{SLOCC}-equivalent symmetric states:
\begin{equation}\label{slocccond}
  \psis \stackrel{\text{SLOCC}}{\longleftrightarrow} \phis
  \quad \Longleftrightarrow \quad
  \exists \, \mathb \in \slc :
  \psis = \mathb^{\otimes n} \phis \ens .
\end{equation}
This statement is far from obvious, in a fashion that bears
resemblance to the existence of symmetric \acp{CPS} for all symmetric
$n$ qubit states.  Just as the result of H\"{u}bener \etal
\cite{Hubener09} greatly simplifies the quantitative determination of
the geometric entanglement of symmetric states, \eq{slocccond} greatly
simplifies the qualitative decision problem of whether two given $n$
qubit symmetric states belong to the same \ac{SLOCC} class or
not. Instead of considering arbitrary \acp{ILO} $\mathb_{1} \otimes
\cdots \otimes \mathb_{n} \in \slc^{\otimes n}$ with $6n$ real
\ac{d.f.}, it suffices to consider only the six \ac{d.f.} present in
$\slc$, regardless of the number of qubits.

Another similarity between the results of H\"{u}bener \etal
\cite{Hubener09} and Mathonet \etal \cite{Mathonet10} is that there
are exceptions to the converse statements. Regarding the first result,
while symmetric $n$ qubit states always possess at least one symmetric
\ac{CPS}, all the \acp{CPS} are necessarily symmetric only for $n \geq
3$ qubits \cite{Hubener09}. Regarding the second result, if two
symmetric $n$ qubit states are \ac{SLOCC}-equivalent, then there must
exist a symmetric \ac{ILO} between them, but there may also exist
non-symmetric \acp{ILO} \cite{Mathonet10}.  However, non-symmetric
\acp{ILO} between symmetric states exist only for states that belong
to the separable class, the W class and the \ac{GHZ} class. For $n
\geq 4$ qubits these three \ac{SLOCC} classes constitute only an
infinitesimal fraction in the set of all \ac{SLOCC} classes
\cite{Mathonet10}.

From the arguments in \cite{Mathonet10} it can be easily inferred that
\eq{slocccond} holds in analogous form for \ac{LOCC}
operations\footnote{This was also explicitly derived alongside an
  extension to mixed symmetric states in \protect\cite{Cenci11}.}:
\begin{equation}\label{locccond}
  \psis \stackrel{\text{LOCC}}{\longleftrightarrow} \phis
  \quad \Longleftrightarrow \quad
  \exists \, \matha \in \suc :
  \psis = \matha^{\otimes n} \phis \ens .
\end{equation}
This reduces the complexity of determining the \ac{LOCC}-equivalence
of two symmetric states from the $3n$ \ac{d.f.} present in $\matha_{1}
\otimes \cdots \otimes \matha_{n} \in \suc^{\otimes n}$ to the three
\ac{d.f.} of $\suc$.  The three real \ac{d.f.} present in $\suc$ were
already identified as the rotations \eqref{symmetric_lu} of the
\ac{MP} distribution on the Majorana sphere. Therefore \eq{locccond}
implies that two symmetric $n$ qubit states are \ac{LOCC}-equivalent
\ac{iff} their \ac{MP} distributions can be converted into each other
by a rotation on the Majorana sphere.

\begin{figure}
  \centering
  \includegraphics[scale=1.2]{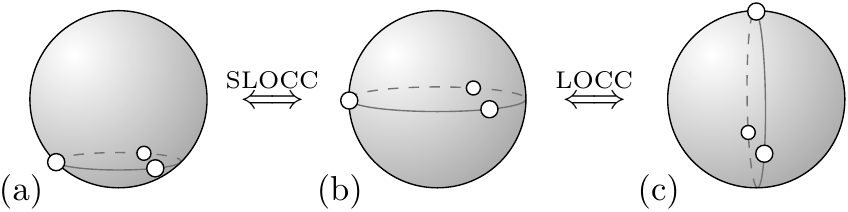}
  \caption[Majorana representations of LOCC and SLOCC equivalent
  states]{\label{ghztrafos} The \ac{MP} distributions of three
    \ac{GHZ}-type symmetric states of three qubits are shown. The
    \ac{GHZ}-state $\sym{0} + \sym{3}$ displayed in (b) is
    \ac{LOCC}-equivalent to the rotated \ac{GHZ}-state $\sym{0} +
    \sqrt{3} \sym{2}$ shown in (c).  The \ac{GHZ}-type state $\alpha
    \sym{0} + \beta \sym{3}$ in (a) is \ac{SLOCC}-equivalent, but not
    \ac{LOCC}-equivalent to the others.}
\end{figure}

Is it possible to make a similar operational statement with regard to
the \ac{MP} distribution of \ac{SLOCC}-equivalent symmetric states?
From \eq{majorana_definition} and \eq{slocccond} it is clear that
$\mathb \in \slc$ acts on each \ac{MP} individually.  Therefore, once
the action of $\slc$ on an individual Bloch vector is understood, one
automatically understands how \ac{MP} distributions transform under
the action of symmetric \ac{SLOCC} operations.  Because of $\suc
\subset \slc$, three of the six \ac{d.f.} of the special linear group
$\slc$ can be identified as the usual rotations on the Bloch sphere.
From mathematics it is known that the Lie group $\slc$ is a double
cover of the \mob group, the automorphism group on the Riemann sphere.
Because of this, the transformation of the \acp{MP} under a symmetric
\ac{SLOCC} operation is described by a \textbf{\mob transformation} of
complex analysis, with the Majorana sphere in lieu of the Riemann
sphere. The \mob transformations will be covered in detail in
\sect{mobius}, and here we only present the example in
\fig{ghztrafos}, showing three \ac{GHZ}-type states that are \ac{LOCC}
or \ac{SLOCC}-equivalent to each other.

\subsection{Degeneracy configuration}\label{degeneracyconfig}

The \textbf{\acf{DC}} is an entanglement classification scheme
introduced specifically for $n$ qubit symmetric states
\cite{Bastin09}.  Its definition incorporates the Majorana
representation by counting the number of identical \acp{MP} of a given
symmetric state.  Each $n$ qubit symmetric state belongs to exactly
one \textbf{\ac{DC} class} $\mathd_{n_1 , \ldots , n_d}$ with $n = n_1
+ \ldots + n_d$ ($n_1 \geq \ldots \geq n_d$), and where $n_1$ stands
for the number of \acp{MP} coinciding on one point of the Bloch
sphere, $n_2$ for those coinciding at a different point, and so on.
We call the $n_{i}$ the \textbf{degeneracy degrees}, and the number
$d$ the \textbf{diversity degree}. The number of different \ac{DC}
classes into which the Hilbert space of $n$ qubits is partitioned is
given by the partition function $p(n)$.  The usefulness of the concept
of \ac{DC} classes can be seen from the fact that, due to the
non-singular nature of \acp{ILO}, the \ac{MP} degeneracy of a given
symmetric state remains invariant under symmetric \ac{SLOCC}
operations: $\ket{\phi_{i}} = \ket{\phi_{j}} \Leftrightarrow \mathb
\ket{\phi_{i}} = \mathb \ket{\phi_{j}}$ for all $\ket{\phi_{i}} ,
\ket{\phi_{j}} \in \mbbc^{2}$ and all $\mathb \in \slc$.  On the other
hand, two states that belong to the same \ac{DC} class do not
necessarily belong to the same \ac{SLOCC} class \cite{Bastin09}.  Thus
we arrive at the following refinement hierarchy:
\begin{theorem}\label{hierarchy}
  The symmetric subspace of every \emph{n} qubit Hilbert space has the
  following refinement hierarchy of entanglement partitions:
  \begin{equation}
    \text{LOCC} \leq \text{SLOCC} \leq \text{DC} \ens .
  \end{equation}
\end{theorem}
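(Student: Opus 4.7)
The plan is to verify the two inclusions separately, using the simplified operational characterisations \eqref{locccond} and \eqref{slocccond} together with the Majorana picture, and then to note why neither inclusion collapses.

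For the first inclusion $\text{LOCC} \leq \text{SLOCC}$, I would simply observe that if $\psis$ and $\phis$ are LOCC-equivalent then by \eqref{locccond} there is some $\matha \in \suc$ with $\psis = \matha^{\otimes n} \phis$. Since $\suc \subset \slc$, this same $\matha$ also witnesses SLOCC-equivalence via \eqref{slocccond}. Hence every LOCC-equivalence class lies inside a single SLOCC-equivalence class, which is the set-theoretic definition of refinement. That this inclusion is strict follows from any pair of symmetric states related by a non-unitary $\mathb \in \slc$ (for instance the two \ac{GHZ}-type states of \fig{ghztrafos}(a) and (b)): unitary transformations preserve the amount of entanglement, so SLOCC classes typically contain a whole continuum of distinct LOCC classes.

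For the second inclusion $\text{SLOCC} \leq \text{DC}$, the key point is that a symmetric \ac{ILO} acts on the Majorana representation pointwise. If $\psis \stackrel{\text{SLOCC}}{\longleftrightarrow} \phis$ with \ac{MP} multisets $\{\ket{\phi_i}\}$ and $\{\ket{\vartheta_i}\}$ respectively, then by \eqref{slocccond} and \eqref{majorana_definition} one has, up to normalisation and relabelling, $\ket{\vartheta_i} = \mathb \ket{\phi_i}$ for some $\mathb \in \slc$. Because $\mathb$ is invertible, $\mathb \ket{\phi_i} = \mathb \ket{\phi_j} \Leftrightarrow \ket{\phi_i} = \ket{\phi_j}$, so $\mathb$ induces a bijection between the distinct \acp{MP} of $\phis$ and those of $\psis$ that preserves each multiplicity. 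The tuple $(n_{1}, \ldots , n_{d})$ of degeneracy degrees is therefore unchanged, and both states lie in the same class $\mathd_{n_{1}, \ldots , n_{d}}$. Strictness for $n \geq 4$ is guaranteed by the fact that generic \ac{DC} classes contain infinitely many \ac{SLOCC} classes, as will be made explicit in \sect{representative}.

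There is no real obstacle here: once \eqref{locccond} and \eqref{slocccond} are taken as given (courtesy of Mathonet \etal \cite{Mathonet10}), both refinements are one-line consequences of the group inclusion $\suc \subset \slc$ and of the non-singularity of elements of $\slc$. The conceptual content of the theorem is thus really carried by those two lemmas, which replace the unwieldy $n$-fold local groups $\suc^{\otimes n}$ and $\slc^{\otimes n}$ by their single-qubit counterparts, together with the elementary observation that invertible single-qubit maps cannot split or merge coinciding \acp{MP}.
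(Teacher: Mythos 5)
Your proposal is correct and follows essentially the same route as the paper: the inclusion $\text{LOCC} \leq \text{SLOCC}$ comes from the group inclusion $\suc \subset \slc$ (equivalently, deterministic convertibility implies probabilistic convertibility), and $\text{SLOCC} \leq \text{DC}$ comes from the non-singularity of the symmetric \ac{ILO}, which acts pointwise on the \acp{MP} and therefore preserves the degeneracy pattern, exactly as argued in the text surrounding the theorem. Your additional remarks on strictness are not needed for the stated claim but are consistent with the paper's later observations.
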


An obvious advantage of \ac{DC} classes over \ac{SLOCC} classes is
that the number of entanglement classes remains finite for arbitrary
$n$. This is in stark contrast to the number of \ac{SLOCC} classes,
which becomes infinite for $n \geq 4$ qubits, even when considering
only the symmetric subset.  Furthermore, operational implications have
been found for the concept of \ac{DC} classes: Each \ac{DC} class can
be unambiguously associated with specific parameter configurations in
experiments \cite{Bastin09}.

\section{\mob transformations}\label{mobius}

As mentioned in the previous section, \ac{SLOCC} operations between
multiqubit symmetric states can be understood by means of the \mob
transformations from complex analysis.  This intriguing link was
independently discovered and described\footnote{Unbeknownst to me as
  well as to Ribeiro and Mosseri during the writing of our
  manuscripts, some partial properties were already discovered by
  Kolenderski \protect\cite{Kolenderski10}. In that paper the effect
  of $\text{GL}(2, \mbbc )$ operations on Bloch vectors is described,
  but the connection to the \protect\mob transformations of complex
  analysis is not made.}  by me \cite{Aulbach11} and by Ribeiro and
Mosseri \cite{Ribeiro11}.  First, the definition of \mob
transformations is recapitulated, and then the transformations are
employed to analyse and visualise the freedoms present in symmetric
\ac{SLOCC} operations.

\subsection{Introduction}\label{mobius_def}

The \mob transformations are defined in complex analysis as the
bijective holomorphic\footnote{A complex-valued function of a complex
  variable is holomorphic if it is complex differentiable everywhere
  on its domain.  Complex differentiability is a very strong
  requirement, resulting in many fascinating properties of holomorphic
  functions.}  functions that project the extended complex plane
$\cext = \mbbc \cup \{ \infty \}$ onto itself \cite{Forst}.  These
isomorphic functions $f : \cext \to \cext$ take the form of rational
functions
\begin{equation}\label{mobiusform}
  f(z) = \frac{az + b}{cz + d} \ens ,
\end{equation}
with $a, b, c, d \in \mbbc$, and $ad - bc \neq 0$. The latter
condition ensures that $f$ is invertible. In the case $c \neq 0$ the
domain of $f$ is $\mbbc \backslash \{ - \tfra{d}{c} \}$ and the
codomain is $\mbbc \backslash \{ \tfra{a}{c} \}$, while for $c = 0$
both the domain and codomain are $\mbbc$. The extension to a bijective
mapping $f : \cext \to \cext$ is mediated by $f(- \tfra{d}{c}) :=
\infty$, $f ( \infty ) := \tfra{a}{c}$ for $c \neq 0$, and $f( \infty
) := \infty$ for $c = 0$. The coefficients give rise to the matrix
representation $\mathb = \left(
  \begin{smallmatrix}
    a & b \\
    c & d
  \end{smallmatrix}
\right)$ of the \mob group, and from \eq{mobiusform} it is clear that
it suffices to consider those $\mathb$ with determinant one (i.e., $ad
- bc = 1$).  Since $+\mathb$ and $-\mathb$ describe the same
transformation $f (z)$, the \mob group is isomorphic to the projective
special linear group $\pslc = \slc / \{ \pm \one \}$.

As outlined in \sect{stereo_proj}, all points of $\cext$ can be
projected onto the Riemann sphere by means of an inverse stereographic
projection.  With this projection the roots $\{ z_1 , \ldots , z_n \}$
of the Majorana polynomial \eqref{majpoly} are projected to the
surface of the Majorana sphere, where they become the \acp{MP}.  The
action of a \mob transformation $f : \cext \to \cext$ on the roots $\{
z_1 , \ldots , z_n \}$ in the extended complex plane then translates
on the Majorana sphere to a generalised rotation $\mathb \in \slc$
acting on each \ac{MP}, which is precisely a symmetric \ac{SLOCC}
operation of the form \eqref{slocccond}.  We can therefore view the
\mob transformations (or equivalently symmetric \ac{SLOCC} operations)
either as automorphisms on $\cext$ or as automorphisms on
$\mathcal{S}^2$, with the isomorphism between these two manifolds
described by the stereographic projection.

\begin{figure}
  \centering
  \includegraphics[scale=1.1]{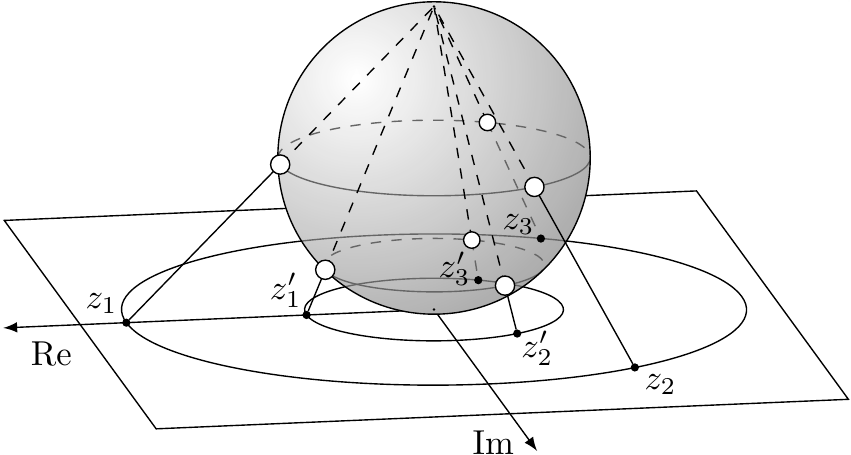}
  \caption[Stereographic projection of MPs]{\label{sterproj_ghz} A
    stereographic projection through the north pole of the Majorana
    sphere mediates between the Majorana roots in the complex plane
    and the \acp{MP} on the surface of the sphere. The \ac{SLOCC}
    operation of \protect\fig{ghztrafos} is facilitated by the
    transformation $f(z) = \tfra{z}{2}$ which maps the set of roots
    $\{ z_1 , z_2 , z_3 \}$ onto the set $\{ z'_1 , z'_2 , z'_3 \}$,
    thus lowering the ring of \acp{MP}.}
\end{figure}

As an example, \fig{sterproj_ghz} shows the action of the \mob
transformation $f(z) = \tfra{z}{2}$ which transforms the \acp{MP} of
the distribution shown in \fig{ghztrafos}(b) into that of
\fig{ghztrafos}(a).  It can be seen that circles remain circles under
the action of this transformation, both on the sphere and in the
complex plane.  Intriguingly, this property holds for all \mob
transformations: Both on the Riemann sphere and in the complex plane
circles are projected onto circles, where we consider straight lines
in the complex plane to be circles too \cite{Knopp}. Furthermore,
angles are preserved under \mob transformations, i.e. two lines or
circles that meet at an angle $\alpha$ will still meet at an angle
$\alpha$ after the transformation.  These properties\footnote{It is
  said that one picture is worth a thousand words, and this is
  probably even more true for a video.  To gain a good understanding
  of the \protect\mob transformations it is recommended to watch the
  beautiful video clip of Arnold \protect\etal \protect\cite{Arnold}
  which has featured in a visualisation competition of Science
  magazine.} become more understandable when taking into account that
every \mob transformation \eqref{mobiusform} can be composed from the
following elementary operations \cite{Forst}:
\begin{itemize}
\item Rotation \& Dilation: \quad $z \longmapsto a z \ens ,$
  \hspace{7.05mm} with $a \in \mbbc \backslash \{ 0 \}$.
  \begin{enumerate}
  \item[i)] Rotation: \hspace{11.45mm} $z \longmapsto \E^{\I \varphi}
    z \ens ,$ \hspace{4.4mm} with $\varphi \in \mbbr$.
    
  \item[ii)] Dilation: \hspace{12.35mm} $z \longmapsto r z \ens ,$
    \hspace{7.2mm} with $r > 0$.
  \end{enumerate}
  
\item Translation: \hspace{15.55mm} $z \longmapsto z + b \ens ,$ \quad
  with $b \in \mbbc$.
  
\item Inversion: \hspace{18.7mm} $z \longmapsto \frac{1}{z} \ens .$
\end{itemize}

\mob transformations can be categorised into different types,
depending on the values of the trace and eigenvalues of the
transformation matrix $\mathb$.  There exist \emph{parabolic,
  elliptic, hyperbolic} and \emph{loxodromic} \mob transformations
\cite{Knopp}, but a unifying feature is that two not necessarily
antipodal or distinct points on the Riemann sphere are left invariant.
This generalises the $\suc$ rotations, where the two invariant points
are the intersections of the rotation axis with the sphere.  As an
example, the \ac{SLOCC} operation shown in \fig{sterproj_ghz} is
mediated by a hyperbolic \mob transformation.  These transformations
are characterised by the two invariant points (here the north and
south pole) acting as attractive and repulsive centres, with the
\acp{MP} moving away from the repulsive centre towards the attractive
one.

A well-known property of \mob transformations is that for any two
ordered sets of three pairwise distinct points $\{ v_1 , v_2 , v_3 \}$
and $\{ w_1 , w_2 , w_3 \}$ there always exists exactly one \mob
transformation that maps one set to the other \cite{Knopp}.  This is
in general not possible for two sets of four pairwise distinct points,
but the \textbf{cross-ratio preservation} of \mob transformations
\cite{Knopp} can be employed to derive a necessary and sufficient
condition: An ordered quadruple of distinct complex numbers $\{ v_1,
v_2, v_3, v_4 \}$ can be projected onto another quadruple $\{ w_1,
w_2, w_3, w_4 \}$ by a \mob transformation \ac{iff}
\begin{equation}\label{cross-ratio}
  \frac{(v_1 - v_3)(v_2 - v_4)}{(v_2 - v_3)(v_1 - v_4)} = 
  \frac{(w_1 - w_3)(w_2 - w_4)}{(w_2 - w_3)(w_1 - w_4)} \ens .
\end{equation}

\subsection{Relationship to SLOCC operations}\label{slocc_mob_link}

From the preceding introduction of the \mob transformations and
\eq{slocccond} the following theorem is clear.
\begin{theorem}\label{slocc_mob}
  Two symmetric \emph{n} qubit states $\psis$ and $\phis$ are
  \ac{SLOCC}-equivalent \ac{iff} there exists a \mob transformation
  \eqref{mobiusform} between their Majorana roots.
\end{theorem}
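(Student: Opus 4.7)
The plan is to reduce the claim to an elementary computation on single qubits by invoking the result of Mathonet \etal, already introduced as \eq{slocccond}: two symmetric states are \ac{SLOCC}-equivalent \ac{iff} there exists a $\mathb \in \slc$ with $\psis = \mathb^{\otimes n} \phis$. Since $\mathb^{\otimes n}$ acts identically on each tensor factor of the Majorana representation \eqref{majorana_definition}, the full statement will follow once we establish that the action of an individual $\mathb \in \slc$ on a Bloch vector corresponds, via stereographic projection, to a \mob transformation of the associated Majorana root.

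First I would pin down this correspondence explicitly. Writing $\ket{\phi_{k}} = \alpha_{k} \ket{0} + \beta_{k} \ket{1}$, the defining condition $\bracket{\sigma(z_{k})}{\phi_{k}} = 0$ for the Majorana root (via $\ket{\sigma(z)} = \ket{0} - \cc{z} \ket{1}$, as in \sect{stereo_proj}) gives $z_{k} = \alpha_{k}/\beta_{k}$, with the convention $z_{k} = \infty$ when $\beta_{k} = 0$. Applying $\mathb = \left( \begin{smallmatrix} a & b \\ c & d \end{smallmatrix} \right) \in \slc$ to $\ket{\phi_{k}}$ produces the new qubit with coefficients $(a\alpha_{k} + b\beta_{k} , c\alpha_{k} + d\beta_{k})$, whose associated root is
\begin{equation*}
  z'_{k} \; = \; \frac{a\alpha_{k} + b\beta_{k}}{c\alpha_{k} + d\beta_{k}} \; = \; \frac{a z_{k} + b}{c z_{k} + d} \ens ,
\end{equation*}
i.e.\ precisely the \mob transformation \eqref{mobiusform} with matrix $\mathb$. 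The limit cases (roots sent to or from $\infty$) are handled by the standard conventions $f(-d/c) = \infty$ and $f(\infty) = a/c$ recalled in \sect{mobius_def}, so no exceptional arguments are needed.

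The forward direction then follows immediately: \ac{SLOCC}-equivalence yields a symmetric \ac{ILO} $\mathb^{\otimes n}$, which acts as the \mob transformation $z \mapsto (az+b)/(cz+d)$ on the entire unordered multiset of Majorana roots of $\phis$, producing exactly those of $\psis$. For the converse, given a \mob transformation with matrix $\mathb \in \slc$ (after normalising $ad - bc = 1$) that carries the roots of $\phis$ to those of $\psis$, the one-qubit computation above shows that $\mathb^{\otimes n} \phis$ has the same multiset of Majorana roots as $\psis$. By the uniqueness of the Majorana representation for symmetric states (stated in \sect{majorana_definition_sect}), $\mathb^{\otimes n} \phis$ and $\psis$ coincide up to a global phase and normalisation, hence represent the same ray, which is all \ac{SLOCC} equivalence requires.

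The only genuinely subtle point is keeping bookkeeping clean across three related but inequivalent objects: the matrix $\mathb \in \slc$, its image in $\pslc \cong$ \mob group (where $\pm\mathb$ are identified), and the physical state (defined only up to a global phase and normalisation). The twofold redundancy $\pm\mathb$ and the overall normalisation freedom of $\psis$ precisely match, so the equivalence passes through without loss; it will be worth stating this explicitly once to avoid confusion. The rest is the above one-line calculation together with an appeal to \eq{slocccond} and to the uniqueness of the Majorana representation.
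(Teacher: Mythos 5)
Your proposal is correct and follows essentially the same route as the paper: the thesis presents the theorem as an immediate consequence of Mathonet \etal's symmetric-\ac{ILO} result \eqref{slocccond} together with the identification of the $\slc$ action on individual \acp{MP} with a \mob transformation on the Majorana roots (via the double cover $\slc \to \pslc$), which is exactly your reduction. The only difference is that you write out the one-qubit computation $z_{k} = \alpha_{k}/\beta_{k} \mapsto (a z_{k} + b)/(c z_{k} + d)$ and the appeal to uniqueness of the Majorana representation explicitly, where the paper simply declares the theorem \quo{clear} from the preceding discussion.
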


How to determine whether such a \mob transformation exists?
Naturally, $\psis$ and $\phis$ must belong to the same \ac{DC} class,
as \ac{SLOCC} equivalence is a refinement of \ac{DC} equivalence.  One
crucial property of \mob transformations in this regard is that any
set of three pairwise distinct points can be projected onto any other.
This immediately leads to the following important result, first
described in \cite{Bastin09}.
\begin{corollary}\label{dc_three_points}
  If two symmetric \emph{n} qubit states $\psis$ and $\phis$ belong to
  the same \ac{DC} class $\mathd_{n_1 , \ldots , n_d}$ with diversity
  degree $d \leq 3$, then they are \ac{SLOCC}-equivalent.
\end{corollary}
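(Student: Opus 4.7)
My plan is to invoke \theoref{slocc_mob} to reduce \ac{SLOCC}-equivalence of $\psis$ and $\phis$ to the existence of a \mob transformation carrying the multiset of Majorana roots of one state to that of the other. Since both states lie in the same \ac{DC} class $\mathd_{n_1, \ldots , n_d}$, their root multisets in $\cext$ consist of the same number $d$ of distinct complex values, say $\{v_1, \ldots , v_d\}$ for $\psis$ and $\{w_1, \ldots , w_d\}$ for $\phis$, with matching multiplicity pattern $n_1 \geq \ldots \geq n_d$. Because any \mob transformation is a bijection of $\cext$ onto itself, multiplicities are automatically preserved once the supports are matched pointwise; hence it suffices to exhibit an $f \in \pslc$ with $f(v_i) = w_{\pi(i)}$ for some permutation $\pi$ of $\{1, \ldots , d\}$ that respects the assignment of multiplicities.

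The key ingredient is the classical sharply $3$-transitive action of the \mob group on $\cext$: for any two ordered triples of pairwise distinct points there exists a unique $f$ mapping one onto the other. This immediately settles $d = 3$, where we first reorder the distinct points so that $v_i$ and $w_i$ carry the same multiplicity $n_i$ and then apply $3$-transitivity. For $d = 2$ one appends an arbitrary auxiliary point to each pair and invokes $3$-transitivity again, or equivalently composes explicit elementary operations from \sect{mobius_def} (a translation to send $v_1 \mapsto 0$, an inversion paired with a rotation/dilation to place $v_2$ at $w_2$, and a final translation to reach $w_1$). For $d = 1$ a single translation, or indeed a two-parameter family of \mob transformations, sends $v_1$ to $w_1$.

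The only subtlety I would address explicitly is the pairing when several degeneracy degrees coincide (for instance $\mathd_{2,2,1}$): the permutation $\pi$ is then not unique, but any admissible choice yields a valid \ac{SLOCC}-equivalence, so the ambiguity is harmless. I do not anticipate a genuine obstacle, as the corollary is essentially a packaging of \theoref{slocc_mob} together with the $3$-transitivity of $\pslc$ on $\cext$; the hypothesis $d \leq 3$ is precisely the range in which $3$-transitivity produces the required map, which also explains why the statement does not extend to $d \geq 4$ (where the cross-ratio preservation \eqref{cross-ratio} imposes a nontrivial obstruction).
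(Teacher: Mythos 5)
Your proposal is correct and follows essentially the same route as the paper: \corref{dc_three_points} is obtained there directly from \theoref{slocc_mob} together with the fact that the \mob group maps any set of at most three pairwise distinct points of $\cext$ onto any other such set, which is exactly the $3$-transitivity argument you give. Your additional remarks on matching multiplicities and on the degenerate cases $d=1,2$ are sound elaborations of the same idea rather than a different method.
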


This corollary implies that \ac{DC} classes with a diversity degree of
three or less consist of a single \ac{SLOCC} class. In particular,
this means that for two and three qubit systems the partition into
\ac{SLOCC} classes is the same as the partition into \ac{DC} classes.
The reverse of \corref{dc_three_points} clearly does not hold in
general, and for states with diversity degree $d=4$ the cross-ratio
preservation \eqref{cross-ratio} yields the following result:
\begin{corollary}\label{dc_four_points}
  Let $\psis$ and $\phis$ be two symmetric \emph{n} qubit states that
  belong to the same \ac{DC} class $\mathd_{n_1 , \ldots , n_4}$ with
  diversity degree $d = 4$. The Majorana roots of $\psis$ and $\phis$
  corresponding to the degeneracy $n_{i}$ are labelled $v_{i}$ and
  $w_{i}$, respectively. If the $v_{i}$ and $w_{i}$ fulfil
  \eq{cross-ratio}, then $\psis$ and $\phis$ are
  \ac{SLOCC}-equivalent.
\end{corollary}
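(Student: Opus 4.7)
The plan is to reduce the claim directly to \theoref{slocc_mob} by producing an explicit \mob transformation that carries the multiset of Majorana roots of $\psis$ onto that of $\phis$. Since $\psis$ and $\phis$ share the \ac{DC} class $\mathd_{n_1 , \ldots , n_4}$, each state has exactly four distinct Majorana roots, and the labelling convention of the statement pairs them up so that the root with multiplicity $n_i$ in $\psis$ is $v_i$ and the root with multiplicity $n_i$ in $\phis$ is $w_i$. It is therefore enough to exhibit a single $f$ of the form \eqref{mobiusform} with $f(v_i) = w_i$ for $i = 1, \ldots , 4$; the multiplicities are then automatically respected and \theoref{slocc_mob} finishes the argument.

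First I would invoke the standard fact recalled just before \corref{dc_three_points}: given any two ordered triples of pairwise distinct points in $\cext$, there exists a unique \mob transformation mapping the first triple to the second. Apply this to $(v_1, v_2, v_3)$ and $(w_1, w_2, w_3)$ to obtain a unique $f \in \pslc$ with $f(v_i) = w_i$ for $i = 1, 2, 3$. The only remaining thing to check is $f(v_4) = w_4$, and this is precisely where the cross-ratio hypothesis enters.

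For this last step I would use the cross-ratio preservation property of \mob transformations stated in \eq{cross-ratio}. Applying it to $f$ on the quadruple $(v_1, v_2, v_3, v_4)$ gives
\begin{equation*}
\frac{(v_1 - v_3)(v_2 - v_4)}{(v_2 - v_3)(v_1 - v_4)} \; = \;
\frac{(w_1 - w_3)(w_2 - f(v_4))}{(w_2 - w_3)(w_1 - f(v_4))} \ens ,
\end{equation*}
while the assumed equality \eqref{cross-ratio} tells us that the left-hand side equals the same expression with $f(v_4)$ replaced by $w_4$. Since the map $z \mapsto \frac{(w_1 - w_3)(w_2 - z)}{(w_2 - w_3)(w_1 - z)}$ is itself a \mob transformation in $z$ and hence injective on $\cext$, we conclude $f(v_4) = w_4$, as required.

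I expect the only delicate point to be bookkeeping around the point at infinity: if one of the $v_i$ or $w_i$ equals $\infty$ (which happens precisely when the corresponding \ac{MP} sits at the north pole, cf.\ \corref{mp_poles}), the cross-ratio must be interpreted in its standard limiting sense and the \mob transformation extended via $f(-d/c) := \infty$, $f(\infty) := a/c$. This is purely notational rather than substantive, because both the existence/uniqueness statement for ordered triples and the cross-ratio preservation are known to hold on all of $\cext$. Beyond this, the argument is essentially the classical characterisation of \mob-equivalence of quadruples combined with \theoref{slocc_mob}, so no further analytic or geometric input is needed.
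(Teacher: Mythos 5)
Your proof is correct and is essentially the argument the paper has in mind: the paper states the corollary as an immediate consequence of the cross-ratio characterisation of \mob-equivalent quadruples quoted in \sect{mobius_def}, combined with \theoref{slocc_mob}, and you have simply made that deduction explicit (triple-determines-$f$, cross-ratio preservation forces $f(v_4)=w_4$, multiplicities respected by the labelling). Your remarks on the point at infinity and on pairing roots of equal degeneracy match the paper's own caveats following the corollary, so nothing is missing.
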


Note that the ordering of the roots has to be taken into account,
because \eq{cross-ratio} does in general not remain true under
permutations, and because Majorana roots corresponding to degenerated
\acp{MP} have to be projected onto Majorana roots with the same
degeneracy.  In case of a \ac{DC} class that contains the same
degeneracy degree several times\footnote{This is the case for all
  \ac{DC} classes of up to and including 9 qubits, as the 10 qubit
  class $\mathd_{4 , 3 , 2 , 1}$ is the first \ac{DC} class with four
  different degeneracy degrees.}, i.e. $n_{i} = n_{j}$ for some $i
\neq j$, there is obviously more than one way to designate the indices
$n_{i}$ to the roots in decreasing order, and \eq{cross-ratio} needs
to hold only for one such ordering to obtain \ac{SLOCC}-equivalence.

Next we investigate the constituents of \ac{SLOCC} operations, and
identify the freedoms that do not correspond to \ac{LOCC} operations.
The \mob transformations \eqref{mobiusform} have six real \ac{d.f.},
and are described by $\text{SL}(2, \mbbc )$.  The polar decomposition
of linear algebra states that every invertible complex matrix can be
uniquely decomposed into a unitary matrix and a positive-semidefinite
Hermitian matrix \cite{Golub}.  We use this result to factorise the
\ac{d.f.} that genuinely belong to \ac{SLOCC} operations (i.e., which
cannot be realized by \ac{LOCC} operations), and show that this
factorisation corresponds to a clear and intuitive visualisation with
the Majorana sphere.

\begin{theorem}\label{decomp}
  Every \ac{SLOCC} operation between two symmetric $n$ qubit states
  can be factorised into an affine \mob transformation of the form
  \begin{equation}\label{mobiusform2}
    \widetilde{f}(z) = A z + B \ens , \quad
    \text{with} \quad A > 0 \ens , \ens B \in \mbbc \ens ,
  \end{equation}
  and an \ac{LOCC} operation. This decomposition is unique, and the
  set of transformations \eqref{mobiusform2} forms a group that is
  isomorphic to $\slc / \suc$.
\end{theorem}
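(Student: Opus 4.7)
The plan is to exploit the correspondence between SLOCC operations on symmetric states and \mob transformations established in \theoref{slocc_mob}, and to extract a canonical affine representative from each element of $\slc$ via a matrix decomposition. Given an SLOCC operation $\psis = \mathb^{\otimes n} \phis$ with $\mathb \in \slc$, I would invoke the Iwasawa decomposition (equivalently a QR-type factorisation on $\slc$): every $\mathb \in \slc$ admits a unique factorisation $\mathb = \mathcal{T} \matha$ with $\matha \in \suc$ and $\mathcal{T}$ upper triangular with positive diagonal entries $a$ and $a^{-1}$ and off-diagonal entry $b \in \mbbc$. The normalisation $\det \mathcal{T} = 1$---and hence $\det \matha = 1$, confirming $\matha \in \suc$---follows from $\det \mathb = 1$ together with $\det \mathcal{T} > 0$ and $|\det \matha| = 1$. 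A direct calculation then shows that the \mob transformation associated with $\mathcal{T}$ is $z \mapsto (az + b)/a^{-1} = a^{2} z + ab$, which is exactly of the affine form \eqref{mobiusform2} with $A = a^{2} > 0$ and $B = ab \in \mbbc$. Combined with \eqref{locccond}, the factorisation $\mathb^{\otimes n} = \mathcal{T}^{\otimes n} \matha^{\otimes n}$ then exhibits the desired splitting of the SLOCC operation into an LOCC piece and an affine \mob piece.

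Uniqueness of the decomposition is immediate from the uniqueness of the Iwasawa factorisation: any competing decomposition $\mathcal{T}_{1} \matha_{1} = \mathcal{T}_{2} \matha_{2}$ rearranges to $\mathcal{T}_{2}^{-1} \mathcal{T}_{1} = \matha_{2} \matha_{1}^{-1}$, whose right-hand side lies in $\suc$ while the left-hand side is upper triangular with positive diagonal and determinant one; a brief computation shows that $T \cap \suc$ contains only the identity, forcing $\mathcal{T}_{1} = \mathcal{T}_{2}$ and $\matha_{1} = \matha_{2}$. To verify that the transformations \eqref{mobiusform2} form a group I would simply compose them: $(z \mapsto A_{2} z + B_{2}) \circ (z \mapsto A_{1} z + B_{1}) = (z \mapsto A_{1} A_{2} \, z + A_{2} B_{1} + B_{2})$, with $A_{1} A_{2} > 0$, identity $(A,B) = (1,0)$, and inverse $(1/A,-B/A)$; associativity is inherited from composition of functions.

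Finally, to identify this group with $\slc / \suc$, I would observe that the factorisation $\mathb = \mathcal{T} \matha$ yields a global bijection between $T$, the group of upper-triangular elements of $\slc$ with positive diagonal, and the left cosets of $\suc$ in $\slc$: for any $\mathb = \mathcal{T} \matha$, the coset $\mathb \suc = \mathcal{T} \suc$ is determined by $\mathcal{T}$ alone. The correspondence $(a,b) \mapsto (a^{2}, ab)$ then identifies $T$ with the affine group \eqref{mobiusform2}; a short check (using the matrix product $\mathcal{T}_{1} \mathcal{T}_{2}$ versus the composition law of affine maps) shows this correspondence is a group isomorphism. The subtlety that I would flag explicitly---and which I expect to be the principal obstacle, rather than any of the computational steps---is that $\suc$ is \emph{not} a normal subgroup of $\slc$, so $\slc / \suc$ does not inherit any group structure from $\slc$; the group structure referred to in the theorem must be understood as the one transported from $T$ via the Iwasawa section.
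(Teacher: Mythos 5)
Your proof is correct and rests on the same underlying factorisation as the paper's: writing an element of $\slc$ as the product of an upper-triangular matrix with positive diagonal and an element of $\suc$. The differences are worth recording. First, you invoke the Iwasawa (QR-type) decomposition as a known theorem and normalise the triangular factor to determinant one, whereas the paper constructs the factorisation by hand, lets the triangular factor $\bigl(\begin{smallmatrix} A & B \\ 0 & 1 \end{smallmatrix}\bigr)$ have determinant $A \neq 1$, and compensates with the scalar prefactor $\lambda = \sqrt{a\cc{a}+c\cc{c}}$; your route is cleaner and buys the same content, including the explicit dictionary $A = a^{2}$, $B = ab$. Second, you factorise in the opposite order: your $\mathb = \mathcal{T}\matha$ makes the \ac{LOCC} rotation act first on the Majorana roots and the affine map second, while the paper's $\lambda\mathb = \matha\,\mathcal{T}$ puts the affine map first --- the order it needs for the geometric picture of \fig{mob_translations}, a translation of the Majorana sphere \emph{followed by} a rotation. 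Both orders yield valid and unique factorisations, but they are genuinely different decompositions of the same operation, so the convention should be stated; your uniqueness argument via $T \cap \suc = \{\one\}$ is the standard one and is equivalent to the paper's elimination computation with the prefactors $\lambda_{1},\lambda_{2}$ (which also absorbs the $\pm\mathb$ ambiguity of the \mob representation). Third, your explicit caveat that $\suc$ is not normal in $\slc$, so that $\slc/\suc$ is only a coset space and the group structure asserted in the theorem is the one transported from the triangular subgroup via the Iwasawa section, is a genuine gain in precision over the paper, which asserts the isomorphism and leaves the group properties to be "verified explicitly" without comment on this point.
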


\begin{proof}
  First, the existence of a factorisation of each \ac{SLOCC} operation
  into a transformation $\widetilde{f}$ of the form
  \eqref{mobiusform2} and an \ac{LOCC} operation is shown.  For each
  $\mathb = \left(
    \begin{smallmatrix}
      a & b \\
      c & d
    \end{smallmatrix}
  \right) \in \slc$ we define $\widetilde{\mathb} = \lambda \mathb$
  with $\lambda = \sqrt{a \cc{a} + c \cc{c}} > 0$. Since
  $\widetilde{\mathb}$ describes the same \ac{SLOCC} operation as
  $\mathb$, it suffices to show that $\widetilde{\mathb}$ can be
  decomposed into an \ac{LOCC} operation $\matha \in \suc$ and a \mob
  transformation of the form \eqref{mobiusform2}:
  \begin{equation*}
    \begin{pmatrix}
      \lambda a & \lambda b \\
      \lambda c & \lambda d
    \end{pmatrix} =
    \begin{pmatrix}
      \alpha & - \cc{\beta} \\
      \beta & \cc{\alpha}
    \end{pmatrix}
    \otimes
    \begin{pmatrix}
      A & B \\
      0 & 1
    \end{pmatrix} \ens ,
  \end{equation*}
  with $A > 0$ and $\alpha , \beta , B \in \mbbc$, $\alpha \cc{\alpha}
  + \beta \cc{\beta} = 1$.  For given parameters $a,b,c,d \in \mbbc$
  with $ad - bc = 1$, this is fulfilled for $\alpha =
  \frac{a}{\lambda}$, $\beta = \frac{c}{\lambda}$, $A = \lambda^{2}$
  and $B = \frac{\lambda^{2} b + \cc{c}}{a} = \frac{\lambda^{2} d -
    \cc{a}}{c}$ (for $a=0$ or $c=0$ only one of the two identities
  holds). This proves the existence of a factorisation.

  To show the uniqueness of factorisations, it is assumed that a given
  \ac{SLOCC} operation $\mathb \in \slc$ can be factorised, up to
  scalar prefactors $\lambda_{1} , \lambda_{2} \in \mbbc \backslash \{
  0 \}$, in the above way by two sets of parameters $\{ \alpha_{1} ,
  \beta_{1} , A_{1} , B_{1} \}$ and $\{ \alpha_{2} , \beta_{2} , A_{2}
  , B_{2} \}$. Elimination of $\mathb$ from the resulting matrix
  equations yields the condition
  \begin{equation*}
    \frac{\lambda_2}{\lambda_1}
    \begin{pmatrix}
      \alpha_1 & - \cc{\beta}_1 \\
      \beta_1 & \cc{\alpha}_1
    \end{pmatrix}
    \otimes
    \begin{pmatrix}
      A_1 & B_1 \\
      0 & 1
    \end{pmatrix} =
    \begin{pmatrix}
      \alpha_2 & - \cc{\beta}_2 \\
      \beta_2 & \cc{\alpha}_2
    \end{pmatrix}
    \otimes
    \begin{pmatrix}
      A_2 & B_2 \\
      0 & 1
    \end{pmatrix}
     .
  \end{equation*}
  A straightforward calculation yields $\big|
  \frac{\lambda_{2}}{\lambda_{1}} \big| = 1$, and from this it readily
  follows that the two sets of parameters must coincide.  This
  uniqueness implies that the set of transformations $\widetilde{f}$
  is isomorphic to $\slc / \suc$, and their group properties are
  easily verified explicitly.
\end{proof}

\Theoref{decomp} is closely related to the polar decomposition of
matrices, which has also been mentioned in connection with the Bloch
sphere in \cite{Kolenderski10}.  However, while the matrices
describing the affine transformations $\widetilde{f}$ are positive,
they are in general not Hermitian (unlike in the polar decomposition),
and the introduction of the prefactor $\lambda$ in the proof is
necessary because $\matha$ and $\mathb$ are defined to have unit
determinants.

\begin{figure}
  \centering
  \includegraphics[scale=1.3]{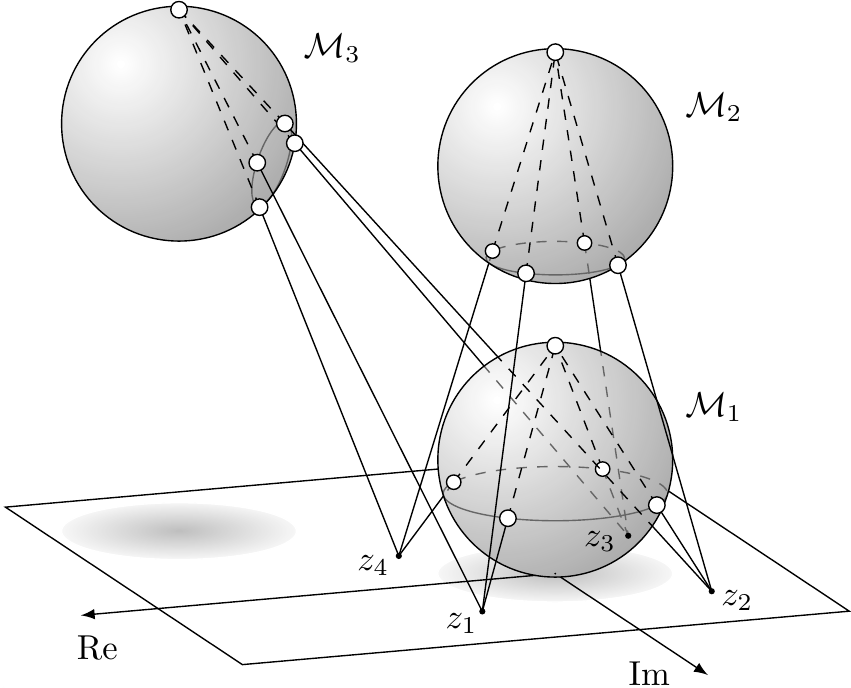}
  \caption[Alternative visualisation of \mob
  transformations]{\label{mob_translations} Alternative visualisation
    of \protect\mob transformations where a fixed set of complex
    points is projected onto the surface of a moving sphere.  The
    three innate freedoms of \ac{SLOCC} operations not present in
    \ac{LOCC} operations are then described by the translations of the
    Majorana sphere in $\mbbrr$.  The north pole of sphere
    $\mathm_{1}$ (with the \ac{MP} distribution of the five qubit
    \protect\quo{square pyramid state}) lies 2 units above the origin
    of the complex plane, while the one of $\mathm_{2}$ lies 5 units
    above, and $\mathm_{3}$ is additionally displaced horizontally by
    a vector $5 - 5 \I$.  The parameters $( A , B )$ of
    \protect\eq{mobiusform2} for the transformation of $\mathm_{1}$ to
    $\mathm_{2}$ and $\mathm_{3}$ are $( \tfra{5}{2} , 0 )$ and $(
    \tfra{5}{2} , 5 - 5 \I )$, respectively.}
\end{figure}

The orthodox way to visualise \mob transformations is to fix the
Riemann sphere in $\mbbrr$, and points $\{ z_1 , \ldots , z_n \}$ on
the complex plane are transformed to different points $\{ z'_1 ,
\ldots , z'_n \}$ under the action of \eqref{mobiusform}.
Alternatively, the points in the plane can be considered fixed, and
instead the Riemann sphere moves in $\mbbrr$, as shown in
\fig{mob_translations}.  The six \ac{d.f.} of the \mob transformations
are then split into three translational freedoms (movement of sphere
in $\mbbrr$) and three rotational freedoms ($\text{SU}(2)$-rotations
of sphere).  By considering these elementary operations it can be
verified by calculation that this is an equivalent way of viewing the
change of \acp{MP} on the sphere under the action of \mob
transformations.  A general \ac{SLOCC} operation between two symmetric
states is then described by a translation of the Majorana sphere in
$\mbbrr$, followed by a rotation.  In this approach the affine
transformations \eqref{mobiusform2} exactly describe the set of
translations in $\mbbrr$ that leave the sphere's north pole above the
complex plane.  The parameters of the affine function $\widetilde{f}
(z) = A z + B$ correspond to the translation as follows: The parameter
$A = \tfra{h_{2}}{h_{1}}$ is the ratio of the heights of the north
pole before ($h_{1}$) and after ($h_{2}$) the transformation, and $B$
is the horizontal displacement vector (cf. \fig{mob_translations}).
Regarding the subsequent rotation of the Majorana sphere, it is clear
that it leaves the relative \ac{MP} distribution invariant and can be
described by an \ac{LOCC} operation.

\section{Representative states for SLOCC classes}\label{representative}

In the following the \ac{SLOCC} and \ac{DC} classes of symmetric
states of up to $5$ qubits are characterised, and representative
states are given for each equivalence class. The aim is to provide
representations that are not only unique (i.e. the representative
states are all inequivalent to each other), but that also allow for a
simple analytical form as well as simple \ac{MP} distributions.
Before tackling this problem, we briefly investigate the relationship
between general and symmetric states under \ac{SLOCC} operations.  For
example, if it were possible to transform every nonsymmetric state
into a symmetric state by \ac{SLOCC}, then the restriction to
symmetric states would be merely an artificial one, because all states
in $\mathh$ could be represented by symmetric states.

\subsection{Relationship between symmetric and
  nonsymmetric states}\label{symm_nonsymm_relationship}

From a comparison of parameters it can be easily seen that the
aforementioned symmetrisation of generic states by \ac{SLOCC}
operations is a rare exception: Unnormalised pure states of $n$ qubits
are described by $2^n$ complex coefficients, and taking the global
phase into account, this leads to $2^{n+1} - 1$ independent real
\acf{d.f.}.  General \ac{SLOCC} operations (which include \ac{LU}
operations that can be associated with basis transformations and
standard forms) are described by $\text{SL}(2, \mbbc )^{n}$ and have
$6n$ real \ac{d.f.}, so the number of remaining independent \ac{d.f.}
is $2^{n+1} - 6n - 1$.  On the other hand, unnormalised symmetric $n$
qubit states are described by $n+1$ Dicke states, yielding $2(n+1) -
1$ independent real freedoms.  Since $2^{n+1} - 6n - 1 \leq 2(n+1) -
1$ holds only for $n \leq 4$ qubits, it is clear that generic states
of five and more qubits cannot be symmetrised by \ac{SLOCC}.  In
\sect{entanglement_classes} the \ac{SLOCC} equivalence classes of
systems with up to four qubits were already reviewed, and we will now
follow up on this by investigating whether these equivalence classes
contain symmetric states.

First, we note that the symmetrisation of the \ac{SLOCC} class of $n$
qubit product states is trivial, because every product state can be
turned into a symmetric state (e.g. $\ket{0}^{\otimes n}$) by an
\ac{LU} operation.  This \ac{SLOCC} class coincides with the \ac{DC}
class $\mathd_{n}$.

The other extreme with regard to symmetrisation are the \ac{SLOCC}
classes with states that are neither product states nor entangled over
all parties. As an example, the three qubit state
$\ket{\psi}_{\text{A-BC}} = \ket{000} + \ket{011} = \ket{0} \otimes
(\ket{00} + \ket{11})$ is \textbf{biseparable}, because the first
qubit is not entangled with the rest. This inherently asymmetric
property cannot be lifted by \ac{SLOCC} operations, since local
operations cannot create entanglement over disentangled parts. Thus
the \ac{SLOCC} class to which $\ket{\psi}_{\text{A-BC}}$ belongs does
not contain a single symmetric state.  As symmetric states are either
fully separable or entangled over all parties \cite{Ichikawa08}, this
implies that for three and more qubits there always exist
fundamentally nonsymmetric \ac{SLOCC} classes.

Every pure bipartite state (which includes the 2 qubit case) can be
cast as a symmetric state by means of the Schmidt decomposition
\eqref{schmidt_decomp}, which means that every state is
\ac{LU}-equivalent to a symmetric state. For 2 qubits there are two
\ac{SLOCC} equivalence classes, containing the separable and entangled
states, respectively.  Choosing the states $\sym{0}$ and $\sym{1}$ as
representatives of these \ac{SLOCC} classes, we display their Majorana
representations in \fig{slocc23}.

\begin{figure}
  \centering
  \includegraphics[scale=1.3]{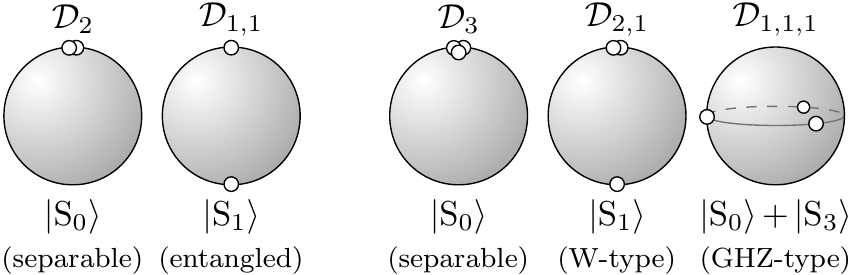}
  \caption[DC classes of 2 and 3 qubit states]{\label{slocc23} The
    \ac{SLOCC} and \ac{DC} classes of 2 and 3 qubit symmetric states
    are represented by the \ac{MP} distributions of characteristic
    states. Because of \protect\corref{dc_three_points} each \ac{DC}
    class consists of a single \ac{SLOCC} class.  Excluding the
    biseparable 3 qubit states, every 2 or 3 qubit state can be
    transformed into one of the symmetric states shown here by a
    (generally nonsymmetric) \ac{SLOCC} operation.}
\end{figure}

For 3 qubits there exist six different \ac{SLOCC} classes, the
separable class, the three biseparable classes, and the two
inequivalent classes with \ac{GHZ}-type and W-type entanglement
\cite{Dur00,Acin00}. All states of the latter two classes are
\ac{SLOCC}-equivalent to the $\ket{\text{GHZ}}$ and $\ket{\text{W}}$
state, respectively \cite{Dur00}.  Therefore, with the exception of
the biseparable states, every three qubit state can be turned into a
symmetric state by \ac{SLOCC}.  In \fig{slocc23} the three symmetric
\ac{SLOCC} classes are represented by the states $\sym{0}$, $\sym{1}$
and $\sym{0} + \sym{3}$.  From \corref{dc_three_points} it is clear
that the \ac{DC} classes coincide with the \ac{SLOCC} classes, with
$\mathd_{3}$ containing the separable states, $\mathd_{2,1}$ the
W-type entangled states and $\mathd_{1,1,1}$ the \ac{GHZ}-type
entangled states.

For 4 qubits the number of \ac{SLOCC} classes becomes infinite
\cite{Dur00}, even when considering only the subset of symmetric
states.  The symmetric entanglement classes of 4 qubits will be
investigated in detail in the next section, and in \sect{families}
these classes will be linked to the \acfp{EF} of Verstraete \etal
\cite{Verstraete02}.

\subsection{Four qubit symmetric classes}\label{rep_4_qubit}

For symmetric states of 4 qubits there exist five \ac{DC} classes and
a continuum of \ac{SLOCC} classes \cite{Bastin09}.  As shown in
\fig{slocc4}, four of the \ac{DC} classes coincide with \ac{SLOCC}
classes (which is clear from \corref{dc_three_points}), while the
generic class $\mathd_{1,1,1,1}$ with no \ac{MP} degeneracy is
comprised of a continuum of \ac{SLOCC} classes (cf. Figure 2 in
\cite{Markham11}).  We will now parameterise this continuum in a way
that exactly one state, acting as the \textbf{representative state},
is chosen from every \ac{SLOCC} class\footnote{The uniqueness implied
  by \protect\quo{exactly one} is an improvement over alternative
  representations such as the \acp{EF} \protect\cite{Verstraete02}
  where some of the representative states are \ac{SLOCC}-equivalent to
  each other.}.  The high symmetry present in an equidistant
distribution of three \acp{MP} along the equator facilitates the
restriction of the remaining \ac{MP} to a well-defined, connected area
on the sphere's surface:

\begin{theorem}\label{theorem4qubit}
  Every pure symmetric state of 4 qubits is \ac{SLOCC}-equivalent to
  exactly one state of the set
  \begin{gather*}
    \{ \sym{0} , \, \sym{1} , \, \sym{2} , \, 2 \sym{0} + t \sym{1} +
    \sym{3} + 2t \sym{4} \} \ens ,
    \\[0.75em]
    \text{with} \quad t = \E^{\I \varphi} \tan \tfra{\theta}{2} \ens ,
    \quad \text{and} \quad ( \theta , \varphi ) \in \bmr{\{} [0,
    \tfra{\pi}{2}) \times [0, \tfra{2 \pi}{3}) \bmr{\}} \cup \bmr{\{}
    \{ \tfra{\pi}{2} \} \times [0 , \tfra{\pi}{3}] \bmr{\}} \ens .
  \end{gather*}
\end{theorem}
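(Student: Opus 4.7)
The plan is to proceed by DC classification, handling the four DC classes of low diversity degree via \corref{dc_three_points} and parameterising the generic class $\mathcal{D}_{1,1,1,1}$ by fixing three MPs at an equilateral triangle on the equator while tracking the fourth.

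First, I would partition the symmetric subspace into the five DC classes $\mathcal{D}_{4}, \mathcal{D}_{3,1}, \mathcal{D}_{2,2}, \mathcal{D}_{2,1,1}, \mathcal{D}_{1,1,1,1}$. The first four have diversity degree at most 3, so by \corref{dc_three_points} each is a single SLOCC class. The Dicke representatives $\sym{0}, \sym{1}, \sym{2}$ manifestly realise the $(4), (3,1), (2,2)$ degeneracy patterns. For $\mathcal{D}_{2,1,1}$ I would identify the representative with the boundary value $t=1$ of the parametric family, i.e.\ $(\theta,\varphi) = (\tfra{\pi}{2},0)$: a direct computation via \eq{majpoly} shows that the Majorana polynomial of $2\sym{0}+t\sym{1}+\sym{3}+2t\sym{4}$ factorises as $2(tz-1)(z-1)(z-\omega)(z-\omega^{2})$ with $\omega = \E^{\I 2\pi/3}$, so its roots $\{1,\omega,\omega^{2},1/t\}$ collapse to $\{1,1,\omega,\omega^{2}\}$ precisely at $t=1$, exhibiting the $(2,1,1)$ pattern.

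For the generic class $\mathcal{D}_{1,1,1,1}$, given any state with four distinct MPs, I would apply the three-transitive action of the \mob group (equivalently, the symmetric SLOCC freedom of \theoref{slocc_mob}) to send three chosen MPs to the fixed triangle $\{1,\omega,\omega^{2}\}$. The fourth MP lands at some $z_{0} = 1/t \in \cext \setminus \{1,\omega,\omega^{2}\}$, putting the state into the parametric form above. This exhausts the six real degrees of freedom of the \mob group, leaving only the discrete stabiliser of the unordered triangle $\{1,\omega,\omega^{2}\}$ in the \mob group, which is the dihedral symmetry $D_{3} \cong S_{3}$, generated by the rotation $z \mapsto \omega z$ (inducing $t \mapsto \omega^{2}t$) and the inversion $z \mapsto 1/z$ (swapping $\omega \leftrightarrow \omega^{2}$ and inducing $t \mapsto 1/t$).

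The fundamental domain is then the intersection of fundamental domains for these two generators: the $\mbbz_{3}$-rotation restricts $\varphi \in [0,\tfra{2\pi}{3})$, while the $\mbbz_{2}$-inversion restricts $|t| \le 1$, i.e.\ $\theta \in [0,\tfra{\pi}{2}]$. On the boundary circle $|t|=1$ (equivalently $\theta = \tfra{\pi}{2}$), the inversion specialises to complex conjugation $t \mapsto \bar{t}$, contributing an additional reflection $\varphi \mapsto -\varphi$ that halves the azimuth range to $[0,\tfra{\pi}{3}]$, recovering the stated domain. The main obstacle will be rigorously verifying that this $S_{3}$ is all of the \mob-induced ambiguity: a priori, relabelling which MP is the "free" one gives a full $S_{4}$-action on orderings, but only the quotient $S_{4}/V_{4} \cong S_{3}$ acts effectively on $t$, since the Klein four-group $V_{4} \subset S_{4}$ preserves the cross-ratio of four points, of which $t$ is a \mob function. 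Existence then follows by applying the appropriate $S_{3}$ element to bring any $t \in \mbbc$ into the domain, while uniqueness follows from the $S_{3}$ action being free on the domain's interior, with the specified boundary identifications picking out exactly one orbit representative on the boundary loci $\theta = \tfra{\pi}{2}$ and $t = 1$.
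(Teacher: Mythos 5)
Your proof is correct and follows essentially the same route as the paper's: reduce to the generic class $\mathd_{1,1,1,1}$ via \corref{dc_three_points}, use the 3\dash transitivity of the \mob group to pin three \acp{MP} to the equatorial triangle, quotient by the residual $D_3$ stabiliser (the paper's $\{\rotxs(\pi),\rotzs(\tfra{2\pi}{3})\}$ rotations are exactly your generators $z\mapsto 1/z$ and $z\mapsto\omega z$), and settle uniqueness by cross-ratio preservation. Your $S_4/V_4\cong S_3$ framing of the uniqueness step is a cleaner articulation of what the paper dismisses as \quo{easily verified explicitly} over all $4!$ projections, but it is the same argument.
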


\begin{proof}
  First it will be shown that every symmetric 4 qubit state $\psis$
  can be transformed by \ac{SLOCC} into one of the above states.  From
  the previous discussion and \fig{slocc4}, this is clear for all
  \ac{DC} classes except $\mathd_{1,1,1,1}$.  Given an arbitrary state
  $\psis \in \mathd_{1,1,1,1}$, there always exists a \mob
  transformation $f: \psis \mapsto \ket{\psi '^{\text{s}}}$ s.t.
  three \acp{MP} are projected onto the three corners of an
  equilateral triangle in the equatorial plane. If the fourth \ac{MP}
  $\ket{\phi_4}$ is not projected into the area parameterised by $(
  \theta , \varphi ) \in \bmr{\{} [0, \tfra{\pi}{2}) \times [0,
  \tfra{2 \pi}{3}) \bmr{\}} \cup \bmr{\{} \{ \tfra{\pi}{2} \} \times
  (0 , \tfra{\pi}{3}] \bmr{\}}$ (cf. \fig{slocc4}), then it can be
  projected into that area by a combination of $\{ \rotxs ( \pi ) ,
  \rotzs ( \tfra{2 \pi}{3} ) \}$-rotations which preserve the
  equatorial \ac{MP} distribution.

  It remains to show that this set of states is unique, i.e., two
  different \acp{MP} $\ket{\phi_4}$ and $\ket{\phi_4 '}$ within the
  aforementioned parameter range give rise to two different states
  $\psis \neq \ket{\psi '^{\text{s}}}$ which are
  \ac{SLOCC}-inequivalent.  By considering all 4! possible projections
  between the \acp{MP} of $\psis$ and $\ket{\psi '^{\text{s}}}$ it can
  be easily verified explicitly with the cross-ratio preservation
  \eqref{cross-ratio} that a transformation is possible only if
  $\ket{\phi_4} = \ket{\phi_4 '}$.
\end{proof}

\begin{figure}
  \centering
  \includegraphics[scale=1.3]{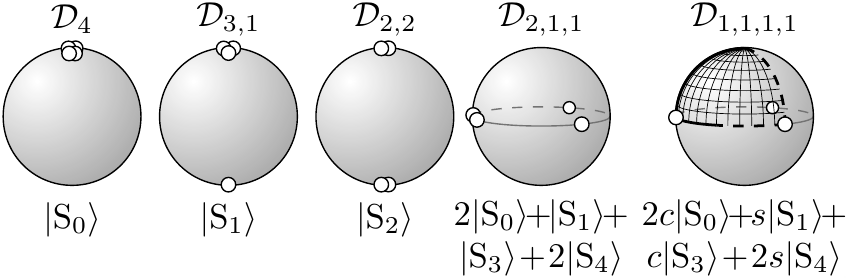}
  \caption[DC classes of 4 qubit states]{\label{slocc4} Only four of
    the five \ac{DC} classes of 4 qubit symmetric states coincide with
    a single \ac{SLOCC} class. Due to the continuum of \ac{SLOCC}
    classes present in $\mathd_{1,1,1,1}$, only three \acp{MP} can be
    fixed in its representative state, with the unique locations for
    the fourth \ac{MP} $c \ket{0} + s \ket{1}$ parameterising the set
    of representative states.  Here $c = \cos \tfra{\theta}{2}$ and $s
    = \E^{\I \varphi} \sin \tfra{\theta}{2}$, and the range of
    parameters is $( \theta , \varphi ) \in \bmr{ \{ } [0,
    \tfra{\pi}{2}) \times [0, \tfra{2 \pi}{3}) \bmr{ \} } \cup \bmr{
      \{ } \{ \tfra{\pi}{2} \} \times (0 , \tfra{\pi}{3}] \bmr{ \} }$,
    shown as a black grid.  The fixed equatorial \acp{MP} of the
    representative states are equidistantly spaced.}
\end{figure}

\subsection{Five qubit symmetric classes}\label{rep_5_qubit}

The \ac{DC} classes of 5 qubits and representative states for the
\ac{SLOCC} classes can be seen in \fig{slocc5}. The \ac{SLOCC} classes
of the generic class $\mathd_{1,1,1,1,1}$ can be parameterised by two
complex variables, corresponding to two \acp{MP} in the black and
white area, respectively.  Unlike the 4 qubit case, however, this
parameterisation is neither unique, nor confined to the generic
\ac{DC} class.  Different sets of parameters $( \theta_1 , \varphi_1 ,
\theta_2 , \varphi_2 ) \neq ( \theta'_1 , \varphi'_1 , \theta'_2 ,
\varphi'_2 )$ can give rise to \ac{SLOCC}-equivalent states, and for
$( \theta_1 , \varphi_1 ) = ( \theta_2 , \varphi_2 )$ the
corresponding state does not even belong to $\mathd_{1,1,1,1,1}$
because of an \acp{MP} degeneracy.  A unique set of representative
states can therefore be provided only for the subset of symmetric
states with at least one \ac{MP} degeneracy:

\begin{theorem}\label{theorem5qubitdeg}
  Every pure symmetric state of 5 qubits with an \ac{MP} degeneracy
  (i.e., diversity degree $d < 5$) is \ac{SLOCC}-equivalent to exactly
  one state of the set
  \begin{gather*}
    \{ \sym{0} , \, \sym{1} , \, \sym{2} , \, \sqrt{10} \left( \sym{0}
      + t \sym{5} \right) + t \sym{2} + \sym{3} + \sqrt{2} \left( 1 +
      t \right) \left( \sym{1} + \sym{4} \right) \} \ens ,
    \\[0.75em]
    \text{with} \quad t = \E^{\I \varphi} \tan \tfra{\theta}{2} \ens ,
    \quad \text{and} \quad ( \theta , \varphi ) \in \bmr{\{} [0,
    \tfra{\pi}{2} ) \times [0, 2 \pi) \bmr{\}} \cup \bmr{\{} \{
    \tfra{\pi}{2} \} \times [0 , \pi ] \bmr{\}} \ens .
  \end{gather*}
\end{theorem}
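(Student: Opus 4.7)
The proof decomposes into handling each \ac{DC} class separately, since \ac{DC} is coarser than \ac{SLOCC} (\theoref{hierarchy}) and any two states in different \ac{DC} classes are automatically \ac{SLOCC}-inequivalent. For 5 qubits, the \ac{DC} classes with diversity degree $d<5$ are $\mathcal{D}_{5},\mathcal{D}_{4,1},\mathcal{D}_{3,2},\mathcal{D}_{3,1,1},\mathcal{D}_{2,2,1},\mathcal{D}_{2,1,1,1}$. I would first factor the Majorana polynomial of the parametric representative as $\psi(z)=\sqrt{10}\,(1-z)^{2}(1+z+z^{2})(1-tz)$, so that its Majorana roots are $\{1,1,\omega,\omega^{2},1/t\}$ with $\omega=e^{2\pi i/3}$. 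Reading off the root multiplicities, generic $t$ in the stated domain produces $\mathcal{D}_{2,1,1,1}$, while the special values $t=1$ and $t=\omega$ collapse one root onto $z=1$ or $z=\omega^{2}$, producing $\mathcal{D}_{3,1,1}$ and $\mathcal{D}_{2,2,1}$ respectively.

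The five \ac{DC} classes with diversity $d\leq 3$ are dealt with first. By \corref{dc_three_points} each is a single \ac{SLOCC} class, so I only need one representative per class. The Dicke states $\sym{0},\sym{1},\sym{2}$ trivially represent $\mathcal{D}_{5},\mathcal{D}_{4,1},\mathcal{D}_{3,2}$, and from the factorisation above the parametric state at $t=1$ and $t=\omega$ represents $\mathcal{D}_{3,1,1}$ and $\mathcal{D}_{2,2,1}$. Existence for an arbitrary state in any of these classes follows from a single \mob transformation sending its (at most three) distinct \ac{MP} locations to those of the chosen representative, which is always possible because a \mob transformation can map any ordered triple to any other.

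For the generic class $\mathcal{D}_{2,1,1,1}$ the \ac{SLOCC} classification has a continuous parameter, governed by \corref{dc_four_points}. Given any state with \acp{MP} $\{a,a,b,c,d\}$, I would construct the unique \mob transformation sending $a\mapsto 1$ and two of the singles $\{b,c,d\}$ onto $\omega$ and $\omega^{2}$; the third single then lands at some $1/t\in\cext$, identifying the state with a parametric member. The freedom in choosing which two singles go to $\omega,\omega^{2}$ (and in which order) yields several candidate values of $t$, and I would argue that at least one of them lies in the stated fundamental domain.

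Uniqueness --- the genuinely delicate part --- is handled by identifying the subgroup $\Gamma$ of \mob transformations that stabilise the multiset $\{1,1,\omega,\omega^{2}\}$ and hence map the parametric family into itself. Using $1+\omega+\omega^{2}=0$ and $\omega^{3}=1$ one verifies that $\Gamma$ is generated by $z\mapsto 1/z$ (the only non-identity \mob that fixes $z=1$ and permutes $\{\omega,\omega^{2}\}$), which acts on the fifth \ac{MP} by $1/t\mapsto t$, hence on the parameter by $t\mapsto 1/t$. A fundamental domain for this $\mathbb{Z}_{2}$-action is the closed unit disc in $t$ with the lower half of the boundary circle removed, exactly matching $(\theta,\varphi)\in\{[0,\pi/2)\times[0,2\pi)\}\cup\{\{\pi/2\}\times[0,\pi]\}$. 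Distinct parameter values in this domain give distinct cross-ratios (up to the $\Gamma$-orbit), so the states are pairwise \ac{SLOCC}-inequivalent; combined with the \ac{DC} argument separating these from the Dicke representatives, this closes the proof.

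The main obstacle I expect is the uniqueness argument inside $\mathcal{D}_{2,1,1,1}$: one must carefully verify that every \mob equivalence between two parametric states is realised by an element of $\Gamma$, and that the boundary prescription $\{\pi/2\}\times[0,\pi]$ closes the domain without introducing duplicates at the degenerate values $t\in\{1,\omega\}$ where the \ac{DC} class changes.
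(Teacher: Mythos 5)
Your setup is sound and matches the paper's (very terse) proof in outline: the factorisation $\psi(z)=\sqrt{10}\,(1-z)^{2}(1+z+z^{2})(1-tz)$ is correct, the identification of $t=1$ and $t=\omega$ with $\mathd_{3,1,1}$ and $\mathd_{2,2,1}$ is right, and the diversity-$\leq 3$ classes are handled exactly as the paper does, via \corref{dc_three_points}. The genuine gap sits where you flagged it, but it is worse than an unfinished verification: the claim that every \mob equivalence between two parametric states is realised by the stabiliser $\Gamma=\langle z\mapsto 1/z\rangle$ of $\{1,1,\omega,\omega^{2}\}$ is false. An \ac{SLOCC} equivalence need only send the double root to the double root and the three simple roots onto the three simple roots; it is free to interchange the variable root $1/t$ with $\omega$ or $\omega^{2}$. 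Each of the $3!=6$ such bijections determines its own candidate \mob map, so a generic class meets the parameter space in six values of $t$ --- three orbits of your $\mathbb{Z}_{2}$ --- and a fundamental domain for $t\mapsto 1/t$ alone contains three representatives of each class rather than one. This is the same phenomenon as in \theoref{theorem4qubit}, where the residual freedom is an order-six group and the domain is correspondingly one sixth of the sphere, not one half.

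Concretely, take $t=1/2$ and $t'=\omega/2$; both satisfy $|t|<1$, so both lie in the stated range. Their root multisets are $\{1,1,\omega,\omega^{2},2\}$ and $\{1,1,\omega,\omega^{2},2\omega^{2}\}$, and the cross-ratio of the ordered quadruple $(1,\omega,\omega^{2},2)$ equals that of $(1,\omega^{2},2\omega^{2},\omega)$ --- both are $\tfrac{1}{2}-\tfrac{3\sqrt{3}}{2}\I$ --- with the double root matched to the double root, so \corref{dc_four_points} makes the two states \ac{SLOCC}-equivalent even though they are distinct members of the representative set. The uniqueness step therefore cannot be repaired by a more careful argument over the stated domain; the quotient must be taken by the full $S_{3}$ of permutations of the three simple roots, shrinking the domain to roughly one sixth of the sphere as in the four-qubit case. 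Your instinct to organise the paper's ``check all possible cross-ratios'' group-theoretically is the right one --- you have just identified the wrong group, and the discrepancy is exactly what a complete cross-ratio check would have exposed.
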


\begin{proof}
  The proof runs analogous to the one of \theoref{theorem4qubit}, with
  the observation that the representative states of the
  $\mathd_{3,1,1}$ and $\mathd_{2,2,1}$ class are readily subsumed in
  the parameter range of $\mathd_{2,1,1,1}$.  The fixed \acp{MP} of
  $\mathd_{2,1,1,1}$ are left invariant under a $\rotxs ( \pi
  )$-rotation, thus ensuring that the remaining \ac{MP} can be
  projected into the desired parameter range. The uniqueness is again
  verified by considering all possible cross-ratios.
\end{proof}

\begin{figure}
  \centering
  \includegraphics[scale=1.2]{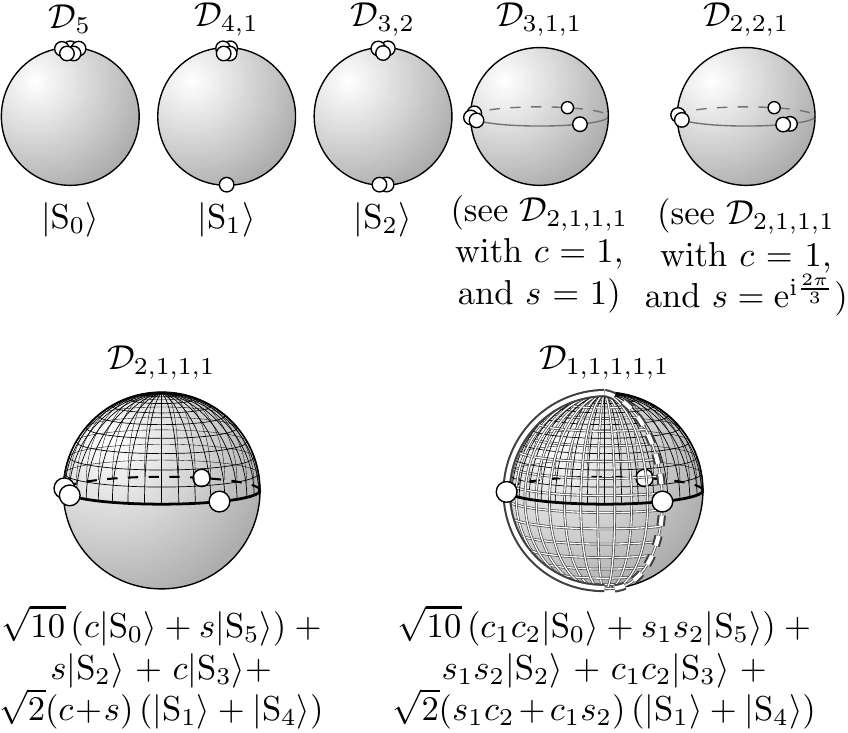}
  \caption[DC classes of 5 qubit states]{\label{slocc5} The first five
    of the seven \ac{DC} classes of 5 qubit symmetric states coincide
    with \ac{SLOCC} classes, while the representative states of
    $\mathd_{2,1,1,1}$ are parameterised by one \ac{MP} $c \ket{0} + s
    \ket{1}$ (black grid), and those of $\mathd_{1,1,1,1,1}$ by two
    \acp{MP} (black and white grid).  The parameter range for
    $\mathd_{2,1,1,1}$ is $( \theta , \varphi ) \in \bmr{\{} [0,
    \tfra{\pi}{2} ) \times [0, 2 \pi) \bmr{\}} \cup \bmr{\{} \{
    \tfra{\pi}{2} \} \times (0 , \pi ] \bmr{\}} \backslash \bmr{\{} \{
    \tfra{\pi}{2} \} \times \{ \tfra{2 \pi}{3} \} \bmr{\}}$.  For
    $\mathd_{1,1,1,1,1}$ the range of $( \theta_1 , \varphi_1 )$ is
    the same as $( \theta , \varphi )$, and $( \theta_2 , \varphi_2 )
    \in \bmr{\{} [0, \pi ] \times [0, \tfra{2 \pi}{3}) \bmr{\}}
    \backslash \bmr{\{} \{ \tfra{\pi}{2} \} \times \{ 0 \} \bmr{\}}$.
    The fixed equatorial \acp{MP} of the representative states are all
    equidistantly spaced.}
\end{figure}

An over-complete set of representative states for the general case can
be given as follows:

\begin{corollary}\label{theorem5qubit}
  Every pure symmetric state of 5 qubits is \ac{SLOCC}-equivalent to
  one or more state of the set
  \begin{equation*}
    \{ \sym{0}, \, \sym{1}, \, \sym{2} , \, \sqrt{10} \left( \sym{0} +
      t_1 t_2 \sym{5} \right) + t_1 t_2 \sym{2} + \sym{3} + \sqrt{2}
    \left( t_1 + t_2 \right) \left( \sym{1} + \sym{4} \right) \} \ens
    ,
  \end{equation*}
  \vspace{-3em}
  \begin{align*}
    \text{with} \quad t_i = \E^{\I \varphi_i} \tan \tfra{\theta_i}{2}
    \ens , \quad \text{and} \quad ( \theta_1 , \varphi_1 )& \in
    \bmr{\{} [0, \tfra{\pi}{2}] \times [0, 2 \pi) \bmr{\}} \cup
    \bmr{\{} \{
    \tfra{\pi}{2} \} \times (0 , \pi ] \bmr{\}} \ens , \\
    ( \theta_2 , \varphi_2 )& \in \bmr{\{} [0, \pi] \times [0, \tfra{2
      \pi}{3}) \bmr{\}} \ens .
  \end{align*}
\end{corollary}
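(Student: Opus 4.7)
The plan is to extend the construction of \theoref{theorem5qubitdeg} by letting both of the non-fixed \acp{MP} vary independently, at the cost of losing uniqueness. I would begin by noting that the \ac{DC} classes $\mathd_{5}$, $\mathd_{4,1}$ and $\mathd_{3,2}$ all have diversity degree at most two, so by \corref{dc_three_points} each consists of a single \ac{SLOCC} class, and these are represented by $\sym{0}$, $\sym{1}$ and $\sym{2}$ respectively. This reduces the problem to states whose Majorana representation contains at least three distinct \acp{MP}.

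For such states, the idea is to invoke the triple transitivity of the \mob group on $\cext$: I would pick any three pairwise distinct \acp{MP} and use \theoref{slocc_mob} to map them to the three cube roots of unity, i.e.\ to the same equilateral equatorial triangle used in \theoref{theorem5qubitdeg}. The two remaining Majorana roots then take arbitrary positions $t_{1}, t_{2} \in \cext$, parameterised by $(\theta_{i}, \varphi_{i})$ through $t_{i} = \E^{\I \varphi_{i}} \tan (\theta_{i}/2)$. The explicit form of the representative state follows by expanding the Majorana polynomial $\psi(z) \propto (z - t_{1})(z - t_{2})(z^{3} - 1)$ and translating the coefficients into the Dicke basis via \eq{majpoly}; the triangle factor contributes only $z^{3}$ and the constant term, which after rescaling produces the coefficient pattern symmetric in $k \leftrightarrow 5-k$ (up to the factor $t_{1} t_{2}$) stated in the corollary.

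The over-completeness announced by the phrase \quo{one or more state} reflects three residual freedoms that I would describe but not attempt to quotient out. First, there are $\binom{5}{3}$ choices of which three \acp{MP} to send to the triangle, each of which generically yields a different pair $(t_{1}, t_{2})$. Second, the dihedral symmetry $D_{3}$ of the equatorial triangle lifts to six \mob transformations that permute the fixed roots while acting non-trivially on $t_{1}, t_{2}$, so each \ac{SLOCC} class in $\mathd_{1,1,1,1,1}$ appears at multiple parameter values related by these transformations. Third, the exchange $t_{1} \leftrightarrow t_{2}$ manifestly yields the same state by the symmetry of the coefficient formula. Boundary limits where the two free roots coincide, or where one approaches a cube root of unity, land in the lower-diversity \ac{DC} classes already represented by $\sym{k}$ or covered by the degenerate cases of \theoref{theorem5qubitdeg}.

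The main part of the argument that I expect to require care is verifying that the chosen parameter ranges actually surject onto every \ac{SLOCC} class, rather than missing a positive-measure subset. This reduces to checking that allowing $(\theta_{2}, \varphi_{2})$ to range over the full sphere together with $(\theta_{1}, \varphi_{1})$ ranging over the fundamental domain for the $\rotxs(\pi)$ reflection inherited from \theoref{theorem5qubitdeg} exhausts all possible unordered pairs of sphere positions modulo the triangle symmetry. The remaining freedoms then merely introduce the over-counting that explains the weaker \quo{one or more} claim; unlike in the degenerate case, no fundamental domain can be singled out canonically, because the generic orbit of $D_{3}$ acting on pairs in $\cext \times \cext$ has no obvious preferred section.
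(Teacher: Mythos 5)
Your overall route is the same as the one taken in the text: dispose of the low-diversity \ac{DC} classes via \corref{dc_three_points}, use the triple transitivity of the \mob group to pin three \acp{MP} to the equatorial equilateral triangle, read the representative off the factored Majorana polynomial, and let the residual symmetry of the triangle account for the over-completeness. (Two small remarks: the printed proof reduces directly to $\mathd_{1,1,1,1,1}$ because the degenerate classes are already covered by \theoref{theorem5qubitdeg}, whose fourth representative is the $t_{1}=1$ specialisation of the family here; and the factorisation consistent with \eq{majpoly} and the stated coefficients is $\psi(z)\propto(1-z^{3})(1-t_{1}z)(1-t_{2}z)$ rather than $(z^{3}-1)(z-t_{1})(z-t_{2})$, so the free roots sit at $1/t_{i}$ --- a harmless flip of convention.)

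The genuine gap is in the step you yourself flag as needing care, and it is mis-stated in a way that would make the check prove the wrong statement. The corollary does \emph{not} let $(\theta_{2},\varphi_{2})$ range over the full sphere: it is confined to the azimuthal sector $\varphi_{2}\in[0,\tfra{2\pi}{3})$. Verifying that every orbit contains a pair whose first \ac{MP} lies in the closed upper hemisphere while the second is unrestricted is easy (apply $\rotxs(\pi)$ if needed), but that establishes only a strictly weaker claim. What the corollary requires --- and what the actual proof supplies --- is the \emph{sequential} use of the triangle's residual rotation group: first apply $\rotxs(\pi)$, if necessary, to bring the fourth \ac{MP} into $\theta_{1}\leq\tfra{\pi}{2}$; then apply one of $\one$, $\rotzs(\tfra{2\pi}{3})$, $\rotzs(\tfra{4\pi}{3})$ to bring the fifth \ac{MP} into the sector $\varphi_{2}\in[0,\tfra{2\pi}{3})$. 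The observation that makes this work is that the $Z$-axis rotations preserve inclinations and therefore cannot expel the fourth \ac{MP} from its range once it has been placed there. Without that ordering argument the azimuthal restriction on the second free \ac{MP} is unjustified, and your proposal as written does not close it.
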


\begin{proof}
  Only the generic class $\mathd_{1,1,1,1,1}$ needs to be considered.
  Given an arbitrary state of this class, three of its \acp{MP} can be
  projected onto the vertices of an equilateral triangle by means of a
  \mob transformation. These \acp{MP} are left invariant under $\{
  \rotxs ( \pi ) , \rotzs ( \tfra{2 \pi}{3} ) \}$-rotations. If the
  fourth \ac{MP} does not lie in the $( \theta_1 , \varphi_1 )$-area,
  it can be projected there by a $\rotxs ( \pi )$-rotation. Subsequent
  $\rotzs ( \tfra{2 \pi}{3} )$-rotations can project the fifth \ac{MP}
  into the $( \theta_2 , \varphi_2 )$-area, while leaving the fourth
  \ac{MP} in the $( \theta_1 , \varphi_1 )$-area.
\end{proof}

As the number of qubits increases, the picture gradually becomes more
complicated, because \ac{DC} classes with diversity degree $n$ contain
a continuous range of \ac{SLOCC} classes that is parameterised by
$n-3$ variables \cite{Bastin09}.

\section{Entanglement families of four qubits}\label{families}

The concept of \textbf{\acfp{EF}} was already briefly touched upon in
\sect{entanglement_classes}.  Derived by Verstraete \etal
\cite{Verstraete02} with some advanced methods of linear algebra, this
classification scheme reduces the complexity of the four qubit case by
replacing the infinite amount of \ac{SLOCC} classes with nine
different \acp{EF}.  The \ac{EF} a state belongs to does not change
under \ac{SLOCC} operations, thus making the partition into \ac{SLOCC}
classes a refinement of the partition into \ac{EF} classes (\ac{SLOCC}
$\leq$ \ac{EF}).  The nine \acp{EF} are represented by the following
ranges of states, with $a,b,c,d \in \mbbc$ being arbitrary complex
parameters.
\begin{itemize}
  \compactlist
\item $G_{abcd} =
  \frac{a+d}{2} \big( \ket{0000} + \ket{1111} \big) +
  \frac{a-d}{2} \big( \ket{0011} + \ket{1100} \big) +
  \frac{b+c}{2} \big( \ket{0101} + \ket{1010} \big) \\
  \hspace*{1.7cm}
  + \frac{b-c}{2} \big( \ket{0110} + \ket{1001} \big)$
  
\item $L_{abc_{2}} =
  \frac{a+b}{2} \big( \ket{0000} + \ket{1111} \big) +
  \frac{a-b}{2} \big( \ket{0011} + \ket{1100} \big) +
  c \big( \ket{0101} + \ket{1010} \big) + \ket{0110}$

\item $L_{a_{2} b_{2}} =
  a \big( \ket{0000} + \ket{1111} \big) +
  b \big( \ket{0101} + \ket{1010} \big) +
  \ket{0110} + \ket{0011}$

\item $L_{ab_{3}} =
  a \big( \ket{0000} + \ket{1111} \big) +
  \frac{a+b}{2} \big( \ket{0101} + \ket{1010} \big) +
  \frac{a-b}{2} \big( \ket{0110} + \ket{1001} \big) \\
  \hspace*{1.7cm}
  + \frac{\I}{\sqrt{2}} \big( \ket{0001} + \ket{0010} + \ket{0111}
  + \ket{1011} \big)$

\item $L_{a_{4}} =
  a \big( \ket{0000} + \ket{0101} + \ket{1010} + \ket{1111} \big)
  + \big( \I \ket{0001} + \ket{0110} - \I \ket{1011} \big)$

\item $L_{a_{2} 0_{3\oplus\bar{1}}} =
  a \big( \ket{0000} + \ket{1111} \big) +
  \big( \ket{0011} + \ket{0101} + \ket{0110} \big)$

\item $L_{0_{5\oplus\bar{3}}} =
  \ket{0000} + \ket{0101} + \ket{1000} + \ket{1110}$

\item $L_{0_{7\oplus\bar{1}}} =
  \ket{0000} + \ket{1011} + \ket{1101} + \ket{1110}$
  
\item $L_{0_{3\oplus\bar{1}}0_{3\oplus\bar{1}}} =
  \ket{0000} + \ket{0111}$
\end{itemize}
Up to permutations, every pure 4 qubit state is \ac{SLOCC}-equivalent
to a state from exactly one of these families. Unlike in our
\theoref{theorem4qubit}, however, the parameterisation of the \acp{EF}
is not unique, i.e. two different sets of parameters $( a,b,c,d ) \neq
( a' , b' , c' , d' )$ can give rise to two \ac{SLOCC}-equivalent
states.  This non-uniqueness can be already seen from the
non-normalised nature of the generic family $G_{abcd}$ which is due to
the choice of the parameters $a,b,c,d$ as the eigenvalues of a matrix
employed for the proof in \cite{Verstraete02}. A less trivial example
are the two symmetric states $\ket{\psi^{a}} = \sym{2}$ and
$\ket{\psi^{b}} = ( \sym{0} + \sym{4} ) + \sqrt{\tfra{2}{3}} \sym{2}$
which are both present in the family $G_{abcd}$. Their \ac{LU}
equivalence $\ket{\psi^{a}} \stackrel{\text{LU}}{\longleftrightarrow}
\ket{\psi^{b}}$ can be immediately seen from their \ac{MP}
distributions $\ket{\phi^{a}_{1,2}} = \ket{0}$, $\ket{\phi^{a}_{3,4}}
= \ket{1}$, and $\ket{\phi^{b}_{1,2}} = \tfra{1}{\sqrt{2}} (\ket{0} +
\I \ket{1})$, $\ket{\phi^{b}_{3,4}} = \tfra{1}{\sqrt{2}} (\ket{0} - \I
\ket{1})$.

Here we are interested in the subset of symmetric 4 qubit states.  In
the following we will determine in which \acp{EF} the symmetric
\ac{SLOCC} classes are located, and we will elucidate the relationship
between the \ac{DC} and \ac{EF} classes, both of which are coarser
partitions than the \ac{SLOCC} classes.

First, we identify the \acp{EF} of the \ac{SLOCC} and \ac{DC} classes
shown in \fig{slocc4}.  The separable state $\sym{0}$, and therefore
the entire $\mathd_{4}$ class, is \ac{LU}-equivalent to the state
$\ket{0110}$ embedded in the family $L_{{abc}_{2}}$ for parameters
$a=b=c=0$.  The W state $\sym{1}$ representing $\mathd_{3,1}$ is
recovered from the family $L_{{ab}_{3}}$ by setting $a=b=0$ and
spin-flipping the last two qubits.  The state $\sym{2}$ representing
$\mathd_{2,2}$ can be found in the general family $G_{abcd}$ by
setting $a=1, b=2, c=0, d=-1$.  A state of the degeneracy class
$\mathd_{2,1,1}$ is found in $L_{{abc}_{2}}$ by setting $a=1, b=0,
c=\tfra{1}{2}$ and spin-flipping the second and third qubit, yielding
the state $\ket{\psi}= \sqrt{\tfra{2}{5}} \sym{0} + \sqrt{\tfra{3}{5}}
\sym{2}$ which is made up of the \acp{MP} $\ket{\phi_{1,2}} = \ket{0}$
and $\ket{\phi_{3,4}} = \tfra{1}{2} \ket{0} \pm \I \tfra{\sqrt{3}}{2}
\ket{1}$.  The continuum of \ac{SLOCC} classes present in the generic
class $\mathd_{1,1,1,1}$ has previously been parameterised in
\cite{Bastin09} as $\left( \sym{0} + \sym{4} \right) + \mu \sym{2}$,
with $\mu \in \mbbc \backslash \{ \pm \sqrt{\tfra{2}{3}} \}$.  These
states are recovered from the general family $G_{abcd}$ for $a = 1 +
\tfra{\mu}{\sqrt{6}}, b = \sqrt{\tfra{2}{3}} \mu, c = 0, d=1 -
\tfra{\mu}{\sqrt{6}}$.  The reason for the exclusion of $\mu = \pm
\sqrt{\tfra{2}{3}}$ is that the \ac{MP} distribution then becomes
degenerate, and it was already seen above that $\ket{\psi^{b}} =
\left( \sym{0} + \sym{4} \right) + \sqrt{\tfra{2}{3}} \sym{2}$ is
\ac{LU}-equivalent to $\sym{2} \in \mathd_{2,2}$.  Summing up, we
found
\begin{itemize}
  \compactlist
\item $\mathd_{1,1,1,1}$, $\mathd_{2,2} \subset G_{abcd}$
  
\item $\mathd_{2,1,1}$, $\mathd_{4} \subset L_{{abc}_{2}}$
  
\item $\mathd_{3,1} \subset L_{{ab}_{3}}$
\end{itemize}
In particular, only three of the nine \acp{EF} contain all the states
of the symmetric subspace of the 4 qubit Hilbert space, including
those non-symmetric states that are \ac{SLOCC}-equivalent to symmetric
states. The other six \acp{EF} only contain genuinely non-symmetric
states that cannot be symmetrised by \ac{SLOCC} operations.
Furthermore, it is noteworthy that the families $G_{abcd}$ and
$L_{{abc}_{2}}$ each contain two different \ac{DC} classes.  This is
somewhat unexpected, because there exist nine different \acp{EF}, but
only five different \ac{DC} classes.  We can therefore conclude that
the \acp{EF} are not a particularly useful classification scheme for
symmetric states, due to their coarseness.  Since all states of a
given \ac{DC} class are contained in only one \ac{EF},
\theoref{hierarchy} can be specified for the 4 qubit case:
\begin{theorem}\label{hierarchy4qubit}
  The symmetric subspace of the 4 qubit Hilbert space has the
  following refinement hierarchy of entanglement partitions:
  \begin{equation}
    \text{LOCC} < \text{SLOCC} < \text{DC} < EF \ens .
  \end{equation}
\end{theorem}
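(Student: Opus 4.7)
The plan is to establish each of the three strict refinements in turn, using the explicit representative states derived in the preceding sections together with the classification of the nine entanglement families recalled at the start of \sect{families}. Since \theoref{hierarchy} already gives the non-strict chain $\text{LOCC} \leq \text{SLOCC} \leq \text{DC} \leq \text{EF}$, the task reduces to exhibiting, at each level, two symmetric 4 qubit states that are equivalent under the coarser relation but inequivalent under the finer one.

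For $\text{LOCC} < \text{SLOCC}$, I would pick two GHZ-type states from $\mathd_{1,1,1,1}$, for example $\psis_{1} = \frac{1}{\sqrt{2}}(\sym{0} + \sym{4})$ and $\psis_{2} = \alpha \sym{0} + \beta \sym{4}$ with $|\alpha| \neq |\beta|$. Both are SLOCC-equivalent since any triple of pairwise distinct Majorana roots can be sent to any other by a \mob transformation (\corref{dc_three_points}), but they have different Schmidt coefficients under any bipartite cut, hence are not LU-equivalent by \eq{locc_schmidt}, and therefore not LOCC-equivalent by \eq{locccond}.

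For $\text{SLOCC} < \text{DC}$, I would invoke \theoref{theorem4qubit} directly: the generic class $\mathd_{1,1,1,1}$ contains a full two-real-parameter family of pairwise SLOCC-inequivalent representatives $2\sym{0} + t \sym{1} + \sym{3} + 2t \sym{4}$, so $\mathd_{1,1,1,1}$ splits into a continuum of SLOCC classes. Any two distinct values of $t$ in the stated fundamental domain give the required witness pair.

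For $\text{DC} < \text{EF}$, I would use the explicit embeddings computed just before the theorem: the states $\sym{2} \in \mathd_{2,2}$ and $(\sym{0} + \sym{4}) + \mu \sym{2} \in \mathd_{1,1,1,1}$ (for $\mu \notin \{ \pm \sqrt{2/3}\}$) both lie in the family $G_{abcd}$, via the parameter assignments spelled out in the text. Thus two different DC classes sit inside a single EF, proving strict coarseness. Similarly, the pair $\mathd_{4}$ and $\mathd_{2,1,1}$ inside $L_{abc_{2}}$ provides a second independent witness.

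The only genuine subtlety will be the last step: one must be careful that the EF containment of a symmetric state is not spoiled by a qubit permutation, since the EF classification of \cite{Verstraete02} is defined up to permutations of the four qubits. For symmetric states this is a non-issue because permutations act trivially, so the symmetric representatives can be directly matched against the $G_{abcd}$ and $L_{abc_{2}}$ normal forms by a single SLOCC operator, exactly as done in the explicit parameter assignments given above. With that observation the three witness pairs assemble into a complete proof.
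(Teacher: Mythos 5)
Your overall strategy is the one the paper uses: \theoref{hierarchy4qubit} is justified in the text by the non-strict chain of \theoref{hierarchy} together with the explicit embeddings of the five symmetric \ac{DC} classes into the \acp{EF} computed in \sect{families}, and your witness pairs for the three strict inclusions (a \ac{GHZ}-type pair for $\text{LOCC}<\text{SLOCC}$, the continuum of \theoref{theorem4qubit} for $\text{SLOCC}<\text{DC}$, and the cohabitation of $\mathd_{2,2}$ with $\mathd_{1,1,1,1}$ in $G_{abcd}$ and of $\mathd_{4}$ with $\mathd_{2,1,1}$ in $L_{abc_{2}}$ for $\text{DC}<\text{EF}$) are exactly the ones implicit in the surrounding discussion.

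There are, however, two flaws in the write-up. First, your justification of the \ac{SLOCC}-equivalence of $\tfrac{1}{\sqrt{2}}(\sym{0}+\sym{4})$ and $\alpha\sym{0}+\beta\sym{4}$ via \corref{dc_three_points} is invalid: that corollary applies only to \ac{DC} classes of diversity degree $d\leq 3$, whereas both states lie in $\mathd_{1,1,1,1}$ with $d=4$. Read literally, your argument would show that \emph{all} states of $\mathd_{1,1,1,1}$ are \ac{SLOCC}-equivalent, contradicting your own second step (and \theoref{theorem4qubit}). The chosen pair is nevertheless \ac{SLOCC}-equivalent, but for a different reason: the dilation $f(z)=\lambda z$ maps one ring of four equidistant Majorana roots onto the other (this is precisely the mechanism of \fig{ghztrafos} and \fig{sterproj_ghz}), or equivalently one checks the cross-ratio condition of \corref{dc_four_points}. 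Second, \theoref{hierarchy} asserts only $\text{LOCC}\leq\text{SLOCC}\leq\text{DC}$; the containment $\text{DC}\leq\text{EF}$ is not given there and must be proved. For the four \ac{DC} classes of diversity degree $\leq 3$ it is automatic, since each is a single \ac{SLOCC} class and the \ac{EF} membership is \ac{SLOCC}-invariant; for $\mathd_{1,1,1,1}$ one needs the full parameterised family $(\sym{0}+\sym{4})+\mu\sym{2}$ to land in $G_{abcd}$ for every admissible $\mu$, which is the computation you cite but should present as establishing the containment rather than merely the strictness. With these two repairs the argument is complete.
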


\section[Determining SLOCC inequivalence from the MP
distribution]{Determining SLOCC inequivalence from the
  \protect\newline MP distribution}\label{determiningslocc}

The known properties of \mob transformations can be employed to
immediately determine whether two symmetric $n$ qubit states with the
same degeneracy of their \acp{MP} could be \ac{SLOCC}-equivalent.  As
outlined in \sect{mobius_def}, circles on the surface of the Majorana
sphere are always projected onto circles, and the angles at which two
circles meet are preserved.  These properties can be exploited by
identifying and comparing circles with \acp{MP} on them.

As an example, \fig{maxentstates_slocc} shows the \ac{MP}
distributions of some states investigated in \chap{solutions}.  The
two 5 qubit states $\ket{\psi_{5}}$ and $\ket{\Psi_{5}}$ are not
\ac{SLOCC}-equivalent, because $\ket{\Psi_5}$ exhibits a ring with
four \acp{MP}, whereas no ring with four \acp{MP} can be found for
$\ket{\psi_5}$.  Similarly, it can be shown that most of the states
investigated in \chap{solutions} are \ac{SLOCC}-inequivalent to each
other.  For 12 qubits it is not immediately clear that the positive
solution $\ket{\psi_{12}}$ and the icosahedron state
$\ket{\Psi_{12}}$, shown in \fig{maxentstates_slocc}, are
\ac{SLOCC}-inequivalent, since both states have several rings with
four or five \acp{MP} each.  In the icosahedron state
$\ket{\Psi_{12}}$ it is possible to identify 20 circles, each through
three adjacent \acp{MP} (the corners of all faces of the icosahedron),
so that the interior of each circle is free of \acp{MP}. This property
must be preserved under \mob transformations, but for
$\ket{\psi_{12}}$ it is not possible to find such twenty distinct
circles that are all free of \acp{MP} in their interior.

\begin{figure}
  \centering
  \begin{overpic}[scale=1.3]{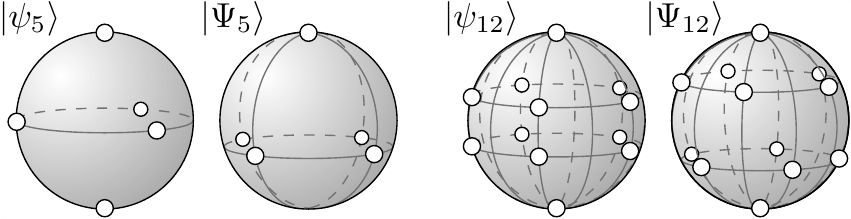}
    \put(-1,0){(a)}
    \put(24,0){(b)}
    \put(53,0){(c)}
    \put(77,0){(d)}
  \end{overpic}
  \caption[Determination of SLOCC inequivalence from MP
  distribution]{\label{maxentstates_slocc} The \ac{MP} distributions
    of four highly or maximally entangled symmetric states introduced
    in \protect\chap{solutions} are shown.  The 5 qubit
    \protect\quo{trigonal bipyramid state} $\ket{\psi_{5}}$ is
    \ac{SLOCC}-inequivalent to the \protect\quo{square pyramid state}
    $\ket{\Psi_{5}}$. Likewise, the 12 qubit icosahedron state
    $\ket{\Psi_{12}}$ cannot be reached from $\ket{\psi_{12}}$ by
    \ac{SLOCC} operations.}
\end{figure}

Markham \cite{Markham11} determined the \ac{SLOCC}-inequivalence of
the four qubit \ac{GHZ} state $\ket{\text{GHZ}_4}$ and tetrahedron
state $\ket{\text{T}}$ analytically from the values of their Schmidt
rank and geometric entanglement.  Interestingly, with the
geometrically motivated approach employed here there is no need for
such calculations: The \acp{MP} of $\ket{\text{GHZ}_4}$ all lie on the
single ring, but those of $\ket{\text{T}}$ don't.

The \ac{SLOCC}-inequivalence of all totally invariant states of up to
7 qubits was determined in \cite{Markham11} by considering the \ac{MP}
degeneracies as well as the Schmidt rank, and a conjecture was made
that all totally invariant symmetric states are
\ac{SLOCC}-inequivalent to each other.  Using the preservation of
circles and angles under \mob transformations, it is expected that
this conjecture becomes much easier to verify.

The existence of $n$ qubit states that are not \ac{LOCC} or
\ac{SLOCC}-equivalent to their complex conjugates has been affirmed
\cite{Acin00b,Kraus10b}, and the operational consequences of this for
distinguishing such states have been discussed
\cite{Kraus10b,Vicente11}.  Taking the complex conjugation into
account, the number of parameters to describe pure three qubit states
up to \ac{LU} were reduced from six to five \cite{Vicente11}.  In the
case of symmetric states we immediately see that complex conjugation
corresponds to a reflection of the \acp{MP} along the $X$-$Z$-plane.
We can therefore explain the \ac{LOCC}-inequivalence of complex
conjugate symmetric states with the geometric concept of
\textbf{chirality}, or handedness: Although having the same distances
and angles (and therefore the same geometric entanglement), the mirror
image of an arbitrary \ac{MP} distribution is in general not
\ac{LU}-equivalent to the original.  This idea can be easily extended
to \ac{SLOCC}-equivalence, with the result that general symmetric
states are not even \ac{SLOCC}-equivalent to their complex conjugate.

\section{Symmetric SLOCC invariants on the Majorana
  sphere}\label{symm_inv}

We are already familiar with the property of \mob transformations to
preserve circles and angles, as well as the cross-ratios
\eqref{cross-ratio} of ordered quadruples of points.  These quantities
can therefore be considered to be \textbf{symmetric \ac{SLOCC}
  invariants}.

A detailed study of symmetric \ac{LU} and \ac{SLOCC} invariants with
the Majorana representation was recently undertaken by Ribeiro and
Mosseri \cite{Ribeiro11}.  With regard to symmetric \ac{LU} operations
(which are equivalent to \ac{LOCC} operations between symmetric
states, cf. \eq{locccond}), they found that the well-known six \ac{LU}
invariants for 3 qubits \cite{Kempe99,Coffman00} can be expressed in
terms of the $3!$ angles between pairs of \acp{MP}.  With regard to
symmetric \ac{SLOCC} operations, the cross-ratios were identified as a
natural basis for invariants.  Given a cross-ratio,
\begin{equation}\label{cross-ratio_single}
  \lambda :=
  \frac{(v_1 - v_3)(v_2 - v_4)}{(v_2 - v_3)(v_1 - v_4)} \in \cext \ens ,
\end{equation}
the $24$ different possible ways to permute the entries $\{ v_1, v_2,
v_3, v_4 \}$ generally leads to six different values $\{ \lambda_{i}
\}$ for the cross ratio: $\{ \lambda , \tfra{1}{\lambda} , 1 - \lambda
, \tfra{1}{1 - \lambda}, \tfra{\lambda}{\lambda - 1} , \tfra{\lambda -
  1}{\lambda} \}$ \cite{Ribeiro11}.  These different values partition
the complex plane into six distinct regions, as seen in
\fig{slocc_invariants}(b).

\begin{figure}
  \centering
  \begin{minipage}{135mm}
    \centering
    \begin{overpic}[scale=1.3]{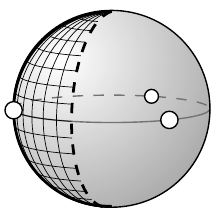}
      \put(-12,0){(a)}
      \put(31,155){$\mathd_{1,1,1,1}$}
      \put(15,127){$2 c \sym{0} + s \sym{1} +$}
      \put(18,107){$c \sym{3} + 2 s \sym{4}$}
    \end{overpic}
    \hspace{10mm}
    \begin{overpic}[scale=0.9]{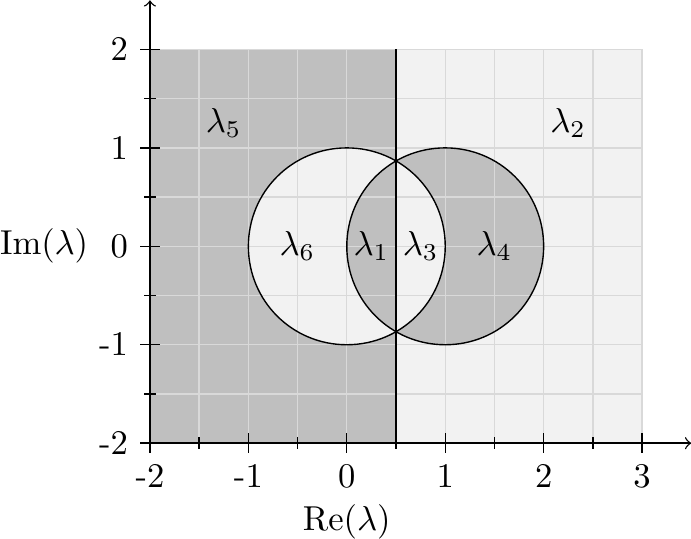}
      \put(0,0){(b)}
    \end{overpic}
  \end{minipage}
  \begin{minipage}{135mm}
    \vspace{8mm}
    \begin{overpic}[scale=1.02]{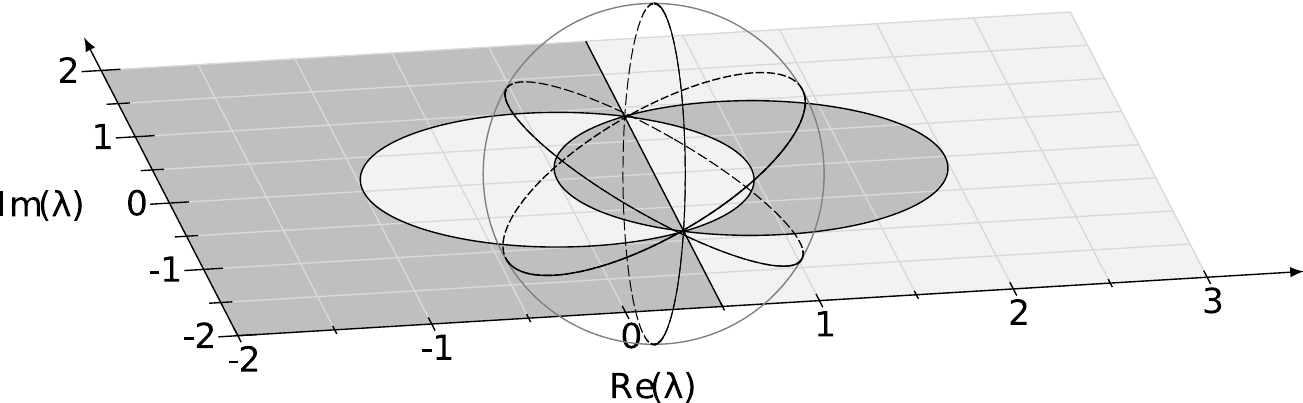}
      \put(2,3){(c)}
    \end{overpic}
  \end{minipage}
  \caption[Relationship between cross-ratio and SLOCC
  invariants]{\label{slocc_invariants} An alternative parameterisation
    of the generic 4 qubit \ac{DC} class $\mathd_{1,1,1,1}$ is shown
    in (a), with the parameterisation of the fourth \ac{MP} $c \ket{0}
    + s \ket{1}$ being $( \theta , \varphi ) \in \bmr{ \{ } [0, \pi)
    \times [0, \tfra{ \pi}{3}) \bmr{ \} }$. The six different areas
    $\{ \lambda_{i} \}$ in the complex plane that correspond to
    permutations of the entries in \protect\eq{cross-ratio_single} are
    shown in (b).  The plane is cut by the line $\RE (z) =
    \tfra{1}{2}$, as well as two unit circles with centres $z = 0$ and
    $z = 1$.  Aligning the poles of the sphere (a) with the points $z
    = \frac{1}{2} + \I$ and $z = \frac{1}{2} - \I$ in (b) yields the
    arrangement (c), which can be understood as a stereographic
    projection of the parameter range $( \theta , \varphi )$ onto one
    of the six areas $\{ \lambda_{i} \}$.}
\end{figure}

Here we point out a relationship of this partition to the generic
\ac{SLOCC} equivalence classes of 4 qubits, namely the degeneracy
class $\mathd_{1,1,1,1}$ studied in \sect{rep_4_qubit}.  The parameter
range chosen for the unique representative states of
$\mathd_{1,1,1,1}$ in \fig{slocc4} is not the only possibly choice,
and an equivalent parameterisation with a high degree of symmetry is
shown in \fig{slocc_invariants}(a).  There the fourth \ac{MP}
$\ket{\phi_{4}} = c \ket{0} + s \ket{1}$ has the parameterisation $(
\theta , \varphi ) \in \bmr{ \{ } [0, \pi) \times [0, \tfra{ \pi}{3})
\bmr{ \} }$.  As shown in \fig{slocc_invariants}, this parameter range
can be projected onto one of the six areas $\{ \lambda_{i} \}$ in the
complex plane by means of a stereographic projection.  This is by no
means a coincidence, but rather a consequence of \ac{SLOCC}
invariants: Every value within the parameter range of $( \theta ,
\varphi )$ corresponds to a unique state that is
\ac{SLOCC}-inequivalent to any other value within that range.
Likewise, two 4 qubit symmetric states with no \ac{MP} degeneracies
are \ac{SLOCC}-inequivalent \ac{iff} the cross-ratios
\eqref{cross-ratio_single} of their Majorana roots are different for
all possible permutations, which implies that the values of their
cross-ratios need to be considered only for one of the six areas $\{
\lambda_{i} \}$.

\section{Global entanglement measures}\label{global_ent}

Polynomial \ac{SLOCC} invariants that provide information about the
type of entanglement present in a system were already mentioned at the
end of \sect{entanglement_classes}.  Here we review two well-known
approaches in light of our results about symmetric states.

\subsection{Maximal
  \protect\texorpdfstring{$n$-tangles}{n-tangles}}\label{max_tangles}

Osterloh and Siewert \cite{Osterloh05,Osterloh06} constructed
entanglement monotones from antilinear operators that are invariant
under \ac{SLOCC} operations, and that can be understood as
generalisations of the concurrence (2-tangle) \cite{Wootters98} and
the 3-tangle \cite{Coffman00}.  This allows for the construction of a
\quo{global entanglement measure} with the aim to detect only genuine
$n$ qubit entanglement in the sense that it is blind for $k$-partite
entanglement with $k<n$.  More formally, an $n$ qubit state
$\ket{\Psi_{n}}$ is a \textbf{\quo{maximal $n$-tangle}} if the
following two conditions hold \cite{Osterloh05,Osterloh06,Osterloh10}:
\begin{enumerate}
\item[i)] All reduced density matrices of $\ket{\Psi_{n}}$ with rank
  $\leq 2$ are maximally mixed.
  
\item[ii)] All $k$-site reduced density matrices of $\ket{\Psi_{n}}$
  have zero $k$-tangle\footnote{No unique definition of the $k$-tangle
    for $k > 3$ exists in the literature, apart from the requirement
    that it should be an entanglement monotone generalising the
    2-tangle and 3-tangle.  Often the tangles are defined as
    polynomial $\slc$-invariants which are entanglement monotones by
    construction, and which take the form of homogeneous functions of
    the coefficients of the given state \cite{Verstraete03}.  For
    example, in \protect\cite{Osterloh05,Osterloh06} these
    entanglement monotones are constructed from the expectation values
    of antilinear operators.}  ($1 < k < n$).
\end{enumerate}
This definition of maximal entanglement is very similar to those
proposed in \cite{Gisin98,Verstraete03}.  The first condition implies
that a maximal amount of information is gained when reading out a
qubit, something that is closely related to the \emph{stochastic
  states} of \cite{Verstraete03}.  The second condition excludes
hybrids of various types of entanglement, thus following the concept
of \emph{monogamy} \cite{Coffman00}, i.e. that the total entanglement
is a resource distributed among different types of entanglement.
Occasionally, a third condition is imposed, namely that the maximally
$n$-tangled states have a phase-independent canonical form, i.e. that
the first two properties shall be unaffected by relative phases in the
coefficients \cite{Osterloh06}.  This implies that maximally
$n$-tangled states can always be written with positive coefficients,
which is quite interesting for us because of our focus on positive
states.  On the other hand, this is a first indication that
\quo{global entanglement measures} are qualitatively different from
e.g. the geometric measure, because the latter is not expected to be
maximised for positive states in general.  Another interesting
property of global entanglement measures is that they are closely
related to the concept of symmetry, in the sense that permutational
invariance was identified as a characteristic property of such
measures \cite{Coffman00}.

The $n$ qubit \ac{GHZ} state is a maximally $n$-tangled state for all
$n$, and in the case of 3 qubits it is the only such state.  Therefore
the \ac{SLOCC} class of 3 qubit states with genuine tripartite
entanglement is represented by the \ac{GHZ} state.

Another maximally $n$-tangled state for every $n \geq 3$ is the
following state,
\begin{equation}\label{x-states}
  \ket{X_{n}} = \sqrt{n} \sym{1} + \sqrt{n-2} \sym{n} \ens ,
\end{equation}
coined the \textbf{X-state} \cite{Osterloh10}.  Note that in the 3
qubit case the state $\ket{X_{3}}$ is \ac{LU}-equivalent to
$\ket{\text{GHZ}_{3}}$.  The maximally $n$-tangled states
$\ket{\text{GHZ}_{n}}$ and $\ket{X_{n}}$ are the two extremes in the
sense that $\ket{\text{GHZ}_{n}}$ is always the maximally $n$-tangled
state of minimal length whereas $\ket{X_{n}}$ is the one of maximal
length. Here, \emph{length} means the number of components in the
canonical form of a state \cite{Osterloh06}.

For 4 qubits there exist three inequivalent \ac{SLOCC}
invariants\footnote{ Ren \protect\etal \protect\cite{Ren08} discovered
  that one of the three entanglement monotones defined in
  \protect\cite{Osterloh05} is not permutation-invariant, and they
  proposed a new permutation-invariant monotone in place of the old
  one. The states detected by the corresponding invariant (the 4 qubit
  cluster states) are not affected by this redefinition.}, and a
corresponding \quo{basis} of three inequivalent maximally 4-tangled
states with neither 3-tangle nor concurrence has been determined
\cite{Osterloh05,Dokovic09}.  These states are the \ac{GHZ} state
\begin{equation}\label{4qubit_max1}
  \ket{\Psi^{a}_{4}} =
  \tfrac{1}{\sqrt{2}} \left( \sym{0} + \sym{4} \right) \ens ,
\end{equation}
the 4 qubit X-state
\begin{equation}\label{4qubit_max2}
  \ket{\Psi^{b}_{4}} =  \sqrt{\tfrac{2}{3}} \sym{1} +
  \sqrt{\tfrac{1}{3}} \sym{4} \ens ,
\end{equation}
and the 4 qubit cluster state
\begin{equation}\label{4qubit_max3}
  \ket{\Psi^{c}_{4}} = \tfrac{1}{2} \left( \ket{1111} + \ket{1100} +
    \ket{0010} + \ket{0001} \right) \ens .
\end{equation}
We immediately notice that $\ket{\Psi^{b}_{4}}$ is identical to the
tetrahedron state defined in \eqref{tetrahedron_state}, up to a
$\rotxs (\pi)$-rotation (spin-flip).  Quite surprisingly, this link
does not seem to have been discovered before, despite the state
\eqref{4qubit_max2} having periodically appeared in various analytical
forms in the literature since at least 1998.  In \cite{Gisin98} the
state was determined as the symmetric state that maximises a more
stringent version of the global entanglement measure (outlined further
below), and it is also the unique state that maximises a variant of
the global measure defined in \cite{Love07}.  The distinguished
position of the tetrahedron state for polynomial invariants and global
measures is further seen from the fact that it is the only one of the
three maximally 4-tangled states that can be detected by the
hyperdeterminant introduced by Miyake \cite{Miyake03}.  This is yet
further proof that the Platonic symmetry of the Majorana
representation is a signature of distinguished properties of the
underlying symmetric states.

The 4 qubit cluster state and its permutations are representative
states of one of the two classes of graph states that exist for 4
qubits, with the other class of graph states being represented by the
\ac{GHZ} state \cite{Hein04,Ren08,Bai08,Dokovic09}.  The positive
state $\ket{\Psi^{c}_{4}}$ displayed in \eq{4qubit_max3} is
\ac{LU}-equivalent to the canonical form for cluster states introduced
by Briegel \etal \cite{Briegel01}, which can be seen by flipping the
first two qubits and applying the Hadamard gate $H =
\tfrac{1}{\sqrt{2}} \left(
  \begin{smallmatrix}
    1 & 1 \\
    -1 & 1
  \end{smallmatrix}
\right)$ on the last two qubits:
\begin{equation}\label{max_ent_cluster_state}
  \begin{split}
    \sigma_{x} \otimes \sigma_{x} \otimes H
    \otimes H \, \ket{\Psi^{c}_{4}}& =
    \tfrac{1}{2} \big( \ket{00\!-\!-} + \ket{00\!+\!+} +
    \ket{11\!-\!+} + \ket{11\!+\!-} \big) \\
    {}& = \tfrac{1}{2} \big( \ket{0000} + \ket{0011} +
    \ket{1100} - \ket{1111} \big) \ens .
  \end{split}
\end{equation}
Somewhat surprisingly, $\ket{\Psi^{c}_{4}}$ is neither symmetric nor
\ac{LU}-equivalent\footnote{The \ac{LU}-inequivalence can be verified
  e.g. by the different eigenvalues of the reduced density matrices
  $\rho_{12} = \Trace_{34} \left( \pure{\Psi^{c}_{4}} \right)$ and
  $\rho_{23} = \Trace_{14} \left( \pure{\Psi^{c}_{4}} \right)$.  The
  question of whether $\ket{\Psi^{c}_{4}}$ is \ac{SLOCC}-equivalent to
  a symmetric state is irrelevant in this context, because the
  resulting state would no longer be in the normal form required for
  maximally $n$-tangled states.} to a symmetric state, which implies
that maximally $n$-tangled states are not necessarily symmetric,
despite the monotones for global entanglement being invariant under
qubit permutations.

The three different types of genuine entanglement detected by the
4-tangle are not distinguished by the \acp{EF} of \cite{Verstraete02},
because the states \eqref{4qubit_max1}, \eqref{4qubit_max2} and
\eqref{4qubit_max3} all belong to the generic family
$G_{abcd}$. Therefore, the global entanglement measure can be more
useful than the \acp{EF} to distinguish different types of 4 qubit
entanglement.

States of 5 and 6 qubits with maximal global entanglement were found
in \cite{Osterloh06}, and in the case of 5 qubits four different types
of entanglement are detected \cite{Osterloh06,Luque06}.  The normal
forms of the corresponding states are the \ac{GHZ} state, two states
that can be easily verified to be \ac{LU}-nonsymmetric, as well as the
5 qubit X-state
\begin{equation}\label{5qubit-x}
  \ket{\Psi^{d}_{5}} =
  \sqrt{\tfrac{5}{8}} \sym{1} + \sqrt{\tfrac{3}{8}} \sym{5} \ens .
\end{equation}
Unlike the other three states, however, the state $\ket{\Psi^{d}_{5}}$
does not satisfy all the conditions imposed on maximal $5$-tangles,
because it has a nonvanishing 4-tangle \cite{Osterloh06}.
Nevertheless, $\ket{\Psi^{d}_{5}}$ can be considered to have an
extremal amount of global entanglement. A comparison of \eq{5qubit-x}
with \eq{5_opt_form} reveals that the Majorana representation of
$\ket{\Psi^{d}_{5}}$ is a spin-flipped square pyramid state.  Defining
$\ket{\Phi_{5}} = \rotxs ( \pi ) \ket{\Psi^{d}_{5}} =
\sqrt{\tfrac{3}{8}} \sym{0} + \sqrt{\tfrac{5}{8}} \sym{4}$, we compare
the 5 qubit X-state to the maximally entangled symmetric 5 qubit state
$\ket{\Psi_{5}}$ in terms of the geometric measure, derived in
\sect{majorana_five}.  The latitudinal angle of the \ac{MP} circle of
$\ket{\Psi_{5}}$ is $\theta \approx 1.874$, whereas for $\ket{\Phi}$
one obtains $\tan^4 (\frac{\theta}{2}) = \frac{5}{\sqrt{3}}$, yielding
$\theta \approx 1.833$.  The imbalance present in the spherical
amplitude function of $\ket{\Phi_{5}}$ results in $\ket{\sigma} =
\ket{0}$ being the only \ac{CPP}, and the geometric entanglement is
$\Eg ( \ket{\Phi_{5}} ) = \log_2 (\frac{8}{3}) \approx 1.415$, which
is well below that of $\ket{\Psi_{5}}$ and which coincides with the
entanglement of the maximally entangled 5 qubit Dicke state $\Eg (
\sym{2} ) = \log_2 (\frac{8}{3})$.  Even though the small difference
in the latitudinal \ac{MP} angle leads to a large decrease of the
entanglement, the fidelity remains very close to the original state:
\begin{equation}\label{fidelity}
  F = \abs{\bracket{\Phi_{5}}{\Psi_{5}}}
  \stackrel{\eqref{5_opt_form}}{=} \frac{\sqrt{3} +
    \sqrt{5} A}{2 \sqrt{2} \sqrt{1 + A^2}} \approx 0.996 \: 752 \ens .
\end{equation}

It is remarkable that the 5 qubit X-state -- apart from the \ac{GHZ}
state the only symmetric state to be detected by the global
entanglement measure -- is so close to the square pyramid state which
we found to solve the Majorana problem of 5 qubits.  In contrast to
this, the classical optimisation problems of \toth and Thomson are
solved by the trigonal bipyramid state, a state with a qualitatively
different \ac{MP} distribution, as evidenced by the
\ac{SLOCC}-inequivalence shown in \sect{determiningslocc}.

\subsection{Maximal mixture in all reduced density
  matrices}\label{max_reduced}

Gisin \etal \cite{Gisin98} introduced and studied five different
criteria for maximal global entanglement.  Their investigation is
limited to symmetric states, which they justified with the argument
that all of the $n$ qubits of maximally entangled states should be
equivalent, with no privileged part.  The conclusion is that four of
the five criteria are compatible (with \ac{GHZ} states having maximal
global entanglement), while the fifth criterion is qualitatively
different from the others.  This criterion is that \emph{all} reduced
density matrices shall be maximally mixed, which is a more stringent
variant of condition~1 outlined at the beginning of
\sect{max_tangles}.  States that satisfy this strong criterion exist
only for $n=2,3,4$ and $6$ qubits.  Recast in our notation, the states
found in \cite{Gisin98} are:
\begin{subequations}\label{gisin_states}
  \begin{align}
    \ket{\Psi_{3}}_{\pm 1}& =
    \tfrac{1}{\sqrt{2}} \big( \sym{0} \pm \sym{3} \big) \ens , \\
    \ket{\Psi_{3}}_{\pm 2}& =
    \tfrac{1}{2 \sqrt{2}} \big( \sym{0} \pm \sqrt{3} \sym{1} -
    \sqrt{3} \sym{2} \mp \sym{3} \big) \ens , \\
    \ket{\Psi_{4}}_{\pm 1}& = \tfrac{1}{4}
    \big( - \sqrt{3} \sym{0} \pm 2 \sym{1} + \sqrt{2} \sym{2}
    \pm 2 \sym{3} - \sqrt{3} \sym{4} \big) \ens , \\
    \ket{\Psi_{4}}_{2 \phantom{\pm}}& =
    \tfrac{1}{2} \big( \sym{0} +
    \I \sqrt{2} \sym{2} + \sym{4} \big) \ens, \\
    \ket{\Psi_{6}}_{\pm 1}& = \tfrac{1}{\sqrt{2}}
    \big( \sym{1} \pm \sym{5} \big) \ens , \\
    \ket{\Psi_{6}}_{2 \phantom{\pm}}& =
    \tfrac{1}{4} \big( - \sqrt{3} \sym{0} + \sqrt{5} \sym{2} +
    \sqrt{5} \sym{4} - \sqrt{3} \sym{6} \big) \ens , \\
    \ket{\Psi_{6}}_{\pm 3}& =
    \tfrac{1}{3} \big( \sqrt{2} \sym{0} \pm
    \I \sqrt{5} \sym{3} + \sqrt{2} \sym{6} \big) \ens .
  \end{align}
\end{subequations}
The first index denotes the number $n$ of qubits and the second index
counts the different states.  Interestingly, the states with same $n$
are all \ac{LU}-equivalent to each other. This can be verified by
calculating and comparing their Majorana representations.  For
example, $\ket{\Psi_{3}}_{\pm 1}$ and $\ket{\Psi_{3}}_{\pm 2}$ are all
equivalent to the 3 qubit \ac{GHZ} state via symmetric \acp{LU}:
\begin{equation}\label{ghz_gisin}
  \ket{\Psi_{3}}_{+1} =
  \rotzs ( \pi ) \ket{\Psi_{3}}_{-1} =
  \rotzs ( \pi ) \rotxs ( \tfrac{\pi}{2} ) \ket{\Psi_{3}}_{+2} =
  \rotzs ( \pi ) \rotxs ( \tfrac{\pi}{2} )  \rotzs ( \pi )
  \ket{\Psi_{3}}_{-2} \ens .
\end{equation}
Analogous identities hold for the 4 and 6 qubit states listed above,
and these states represent the tetrahedron state and the octahedron
state, respectively.  We thus arrive at the conclusion that symmetric
states whose reduced density matrices are all maximally mixed possess
an exceptionally high amount of geometric symmetry in their \ac{MP}
distributions. Only four such states exist: Two antipodal points
($n=2$), the equilateral triangle ($n=3$), the regular tetrahedron
($n=4$), and the regular octahedron ($n=6$).

A common property of these point distributions is that they look
exactly the same from the viewpoint of each vertex\footnote{This is
  one of the possible ways to define the Platonic solids. Since
  polyhedra (three-dimensional polytopes) need to have at least four
  vertices, the configurations of two antipodal points ($n=2$) and the
  equilateral triangle ($n=3$) are not considered to be Platonic
  solids.}. It is therefore legitimate to ask why only two of the five
Platonic solids give rise to this kind of maximal global
entanglement. Considering that the cube ($n=8$) and the dodecahedron
($n=20$) neither solve the classical point distribution problems nor
maximise the \ac{GM}, it is perhaps not surprising that they are
missing. But how to explain the absence of the icosahedron ($n=12$)?
We put forward the conjecture that this is because the icosahedron
state cannot be cast with positive coefficients.  It is well-known
that the maximally $n$-tangled states generally allow for a positive
representation, which is why the criterion of a phase-independent
canonical form is sometimes added to the list of conditions for
maximally $n$-tangled states \cite{Osterloh06}.  It is therefore
conceivable that a positive computational representation (up to
\ac{LU}-equivalence) is a necessary property for any state whose
reduced density matrices are all maximally mixed.

\cleardoublepage

\chapter{Links and Connections}\label{connections}

\begin{quotation}
  The aim of this penultimate chapter is to outline some novel links
  between the study of quantum states in terms of the Majorana
  representation and other topics in mathematics and physics.  On
  their own these results are not strong enough to warrant their own
  chapters, and so they are subsumed as independent sections under
  this chapter.

  Firstly, the \quo{Majorana problem} is compared and contrasted to
  spherical point distribution problems that seek to find the most
  non-classical states, such as \quo{anticoherent} spin states and the
  \quo{queens of quantum}.  Secondly, the dual polyhedra of the five
  Platonic solids are linked to the Majorana representations of the
  corresponding symmetric states, thus discovering a quantum analogue
  to the Platonic duals of classical geometry.  Thirdly, the Majorana
  representation is employed to investigate the permutation-symmetric
  ground states of the \ac{LMG} model, with a new proof given for the
  derivation of its \acp{CPP}.
\end{quotation}

\section{\quo{Anticoherent} spin states and \quo{queens of
    quantum}}\label{anticoherent_queens}

As outlined in \sect{majorana_definition_sect}, there exists an
isomorphism between the states of a single spin-$j$ particle and the
symmetric states of $2j$ qubits, and this isomorphism is mediated by
the Majorana representation.  The coherent states of a quantum
mechanical particle are considered to be the most classical states,
and in terms of the Majorana representation these states are those
whose \acp{MP} all coincide at a single point, thus describing as
precisely as possible a \quo{classical} spin vector. This classical
nature of coherent states can furthermore be seen from their
resistance to entanglement formation \cite{Markham03}.  It is
therefore natural to ask whether there also exist \quo{least classical
  states} which are the opposite of spin-coherent states in some
sense, with the expectation that such states exhibit a large amount of
non-locality.  Zimba \cite{Zimba06} defined the \quo{anticoherent}
spin states as those whose polarisation vector vanishes, $\bmr{p} =
\braket{\psi | \bmr{n S} | \psi} = 0$, and whose corresponding
variance $\braket{\psi | ( \bmr{n S} )^{2} | \psi}$ is uniform over
the sphere. Here $\bmr{n}$ is a unit vector in $\mbbrr$, and $\bmr{S}
= ( S_{x} , S_{y} , S_{z} )$ is the spin operator.  This concept is
sometimes generalised to higher-order anticoherence, with a state
$\ket{\psi}$ being anticoherent of order $k$ if $\braket{\psi | (
  \bmr{n S} )^{i} | \psi}$ is independent of $\bmr{n}$ for all $i \leq
k$ \cite{Crann10}.  It has been shown that the quantum states
represented by the five Platonic solids are all anticoherent
\cite{Zimba06}, and this can be understood by means of the
mathematical concept of spherical designs \cite{Crann10}.

Bearing in mind the isomorphism between general spin-$j$ states and
symmetric states of $2j$ qubits, we put forward the question whether
the Majorana representation of the maximally entangled symmetric
states coincides with that of anticoherent spin states, and vice
versa. This question is motivated by the observation that the coherent
states of a spin-$j$ particle are described by $2j$ coinciding
\acp{MP}, which precisely corresponds to the symmetric $2j$ qubit
states with zero entanglement. Unfortunately, it turns out that no
such direct link exists for the \quo{opposite} case. From the fact
that the maximally entangled symmetric state of eight qubits is not a
Platonic solid, it follows that anticoherent \ac{MP} distributions do
not guarantee extremal entanglement.  On the other hand, the maximally
entangled symmetric state of five qubits, the square pyramid state
$\ket{\Psi_{5}}$ discussed in \sect{majorana_five}, is neither
anticoherent nor a spherical design, because the \quo{centre of mass}
of the five \acp{MP} does not coincide with the origin of the Majorana
sphere.  In terms of anticoherent spin states this leads to
$\braket{\Psi_{5} | S_z | \Psi_{5}} \neq 0$.  For spherical designs we
observe that by setting $p(x) = x$ in Definition 2 of \cite{Crann10},
it follows that for all spherical designs the \quo{centre of mass}
must coincide with the sphere's origin.

Anticoherent spin states are only one possibility to define the most
non-classical states.  Giraud \etal \cite{Giraud10} put forward the
concept of the \quo{queens of quantum}. These states have the property
to be furthest away (in terms of the Hilbert-Schmidt metric $\norm{A}
= \Trace (A^{\dagger} A)^{1/2}$) from the set of \quo{classical
  states}, with the latter defined as those states that can be written
as a convex sum of projectors onto coherent states.  This definition
bears resemblance to that of the relative entropy of entanglement
\cite{Vedral98,Vedral97}.  It turns out that the \quo{queens of
  quantum} can always be found in pure states, and that their Majorana
representations have a high degree of geometric symmetry with no
\ac{MP} degeneracies.  In general, however, their Majorana
representations differ from those of our maximally entangled symmetric
states, and they are not identical to the solutions of \toths and
Thomson's problem either.

Intriguingly, Martin \etal \cite{Martin10} found that the \quo{queens
  of quantum} coincide with the solutions of the Majorana problem if
the Hilbert-Schmidt distance is replaced with the Bures distance
$D_{B}$.  For product states the \quo{Bures quantumness} reads
\begin{equation}\label{bures1}
  Q_{B} \left( \pure{\psi} \right) = \min_{\rho_{c} \in \mathcal{C}}
  D_{B} \left( \pure{\psi} , \rho_{c} \right) =
  \min_{\rho_{c} \in \mathcal{C}}
  \sqrt{2 - 2 \sqrt{\braket{\psi | \rho_{c} | \psi}}} \ens ,
\end{equation}
where $\mathcal{C}$ is the convex hull of spin coherent states.  Thus
every $\rho_{c}$ can be written as $\rho_{c} = \sum_{i} \lambda_{i}
\pure{\Phi_{i}}$, where the $\ket{\Phi_{i}}$ are coherent states. This
problem can be reduced to finding the largest overlap of $\ket{\psi}$
with a coherent state $\ket{\Phi}$,
\begin{equation}\label{bures2}
  Q_{B} \left( \pure{\psi} \right) = 
  \sqrt{2 - 2 \max_{\ket{\Phi}} \abs{\bracket{\Phi}{\psi}}} \ens ,
\end{equation}
which is equivalent to the Majorana problem \eqref{maj_problem} of
symmetric states.  In other words, the Majorana representation of the
spin-$j$ state with the largest \quo{Bures quantumness} is identical
to that of the maximally entangled symmetric state of $2j$ qubits in
terms of the geometric measure \cite{Martin10}.

Concluding this section, we have seen that the Majorana
representations of our maximally entangled symmetric states are in
general different from \quo{anticoherent} spin states as well as the
original definition of the \quo{queens of quantum}, but they are
identical to the \quo{queens of quantum} in terms of the Bures
metric. This link means that any solution found for the Majorana
problem of $2j$ qubits will also be a spin-$j$ \quo{Bures-queen of
  quantum}, and vice versa.

\section{Dual polyhedra of the Platonic solids}\label{dualpoly}

For a given polyhedron in 3D space the corresponding dual is obtained
by associating each vertex with a face, and vice versa.  The dual
polyhedra of the five Platonic solids are particularly simple, as they
are Platonic solids themselves \cite{Wenninger83}. As seen in
\fig{platonic_dual}, the octahedron and cube form a dual pair, and so
do the icosahedron and dodecahedron, while the tetrahedron is
self-dual, i.e. it is its own dual.

Interestingly, these duality relationships can also be spotted in the
Majorana representations of the corresponding symmetric quantum
states. For example, we have already seen that the 20 \acp{CPP} of the
icosahedron state $\ket{\Psi_{12}}$ form the vertices of a
dodecahedron. On the other hand, when considering the 20 qubit
dodecahedron state \eqref{dodecahedron_state}, it is easy to verify
that this state has 12 \acp{CPP}, one at the centre of each face, thus
forming the vertices of an icosahedron.  The Majorana representations
of the icosahedron and dodecahedron state are therefore dual to each
other with respect to an interchange of the \acp{MP} and \acp{CPP}.

\begin{figure}
  \centering
  \begin{overpic}[scale=1.5]{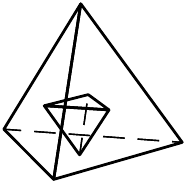}
  \end{overpic}
  \hspace{10mm}
  \begin{overpic}[scale=1.4]{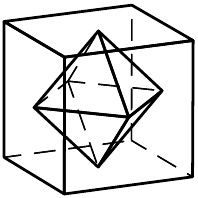}
  \end{overpic}
  \hspace{10mm}
  \begin{overpic}[scale=1.4]{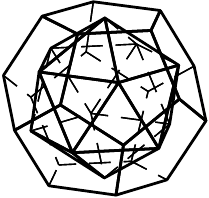}
  \end{overpic}
  \caption[Dual polyhedra of the Platonic
  solids]{\label{platonic_dual} The relationship between the Platonic
    solids and their duals.}
\end{figure}

\begin{figure}
  \centering
  \begin{overpic}[scale=0.65]{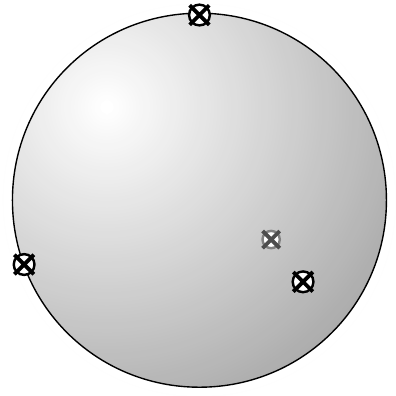}
  \end{overpic}
  \hspace{2mm}
  \begin{overpic}[scale=0.65]{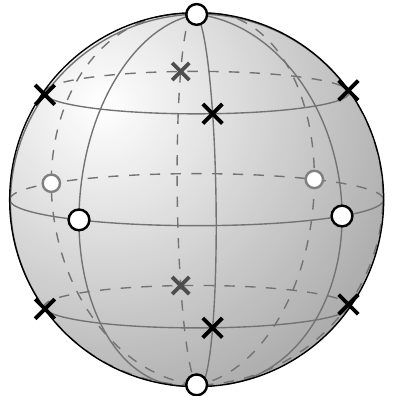}
  \end{overpic}
  \begin{overpic}[scale=0.65]{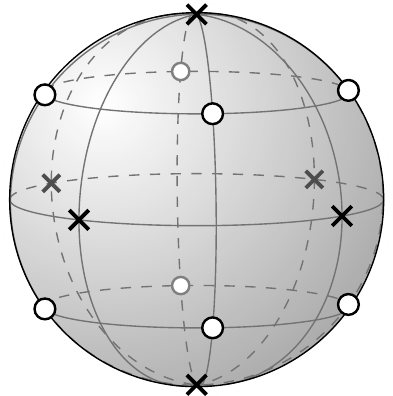}
  \end{overpic}
  \hspace{2mm}
  \begin{overpic}[scale=0.65]{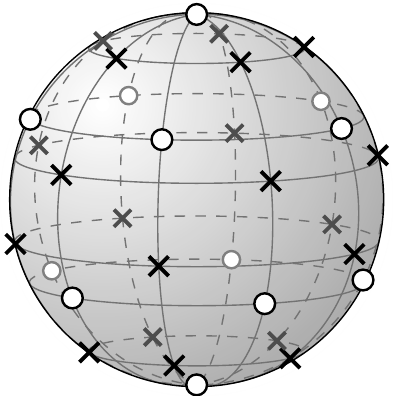}
  \end{overpic}
  \begin{overpic}[scale=0.65]{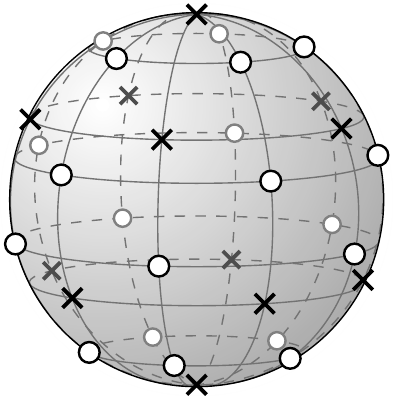}
  \end{overpic}
  \caption[MP and CPP distributions of the \quo{Platonic solids
    states}]{\label{platonic_dual_mpcpp} The \ac{MP} and \ac{CPP}
    distributions of the five \protect\quo{Platonic solid states}.}
\end{figure}

As shown in \fig{platonic_dual_mpcpp}, the same duality relationship
can be observed between the octahedron state and the cube
state. Furthermore, the tetrahedron state is its own dual, with the
\acp{MP} and \acp{CPP} being identical. Unlike the dual of the
Platonic solid, however, the dual state of the tetrahedron state is
not turned \quo{upside down}, but rather coincides with the original
state.

We thus find that the five \quo{Platonic solid states} exhibit the
same duality relationships as the classical Platonic solids, in the
sense that the vertex-face association is replaced with a
\ac{MP}-\ac{CPP} association.  We do not know the benefits of this
mathematical property for quantum information science and related
fields, but it is imaginable that uses can be found, since the five
\quo{Platonic solid states} are exceptional states.  They have been
coined the \quo{perfect states} \cite{Zimba06}, and it was found that
their property of being spherical designs directly implies their
anticoherence \cite{Crann10,Zimba06}. Furthermore, they were found to
be the optimal states for aligning Cartesian reference frames
\cite{Kolenderski08}.  With these recent discoveries in mind, it is
conceivable that the quantum analogue to the Platonic duals could come
in handy at some point in the future.

\section{Lipkin-Meshkov-Glick model}\label{LMG}

Quantum phase transitions are transitions between qualitatively
distinct phases of quantum many-body systems \cite{Sachdev}. Such
transitions play an important role in physical systems, and recent
studies have focused on analysing their phase diagrams in terms of
entanglement.  One-dimensional models such as Ising spins in a
magnetic field allow for exact solutions, but they do not exhibit a
particularly rich structure.  Higher-dimensional models are usually
accessible only through difficult numerical treatment, although
certain symmetries of the Hamiltonian can make the model exactly
solvable. One such integrable model is the \acf{LMG} model, whose
solutions can be derived from an algebraic Bethe ansatz \cite{Pan99},
but the model can also be efficiently treated numerically.  Originally
introduced for nuclear physics \cite{Lipkin65,Lipkin65b,Lipkin65c},
the \ac{LMG} model has since been employed to describe the quantum
tunnelling of bosons between two levels, and thus the Josephson effect
in two-mode Bose-Einstein-condensates \cite{Cirac98}.  It consists of
a system of $n$ mutually interacting spin-$\tfra{1}{2}$ particles
embedded in a transverse magnetic field $h$:
\begin{equation}\label{LMGhamiltonian}
  H = - \frac{1}{n} (\gamma_{x} S_{x}^{2} +
  \gamma_{y} S_{y}^{2}) - h S_z \ens .
\end{equation}
Due to the symmetries of the Hamiltonian it suffices to consider $h
\geq 0$ and $\abs{\gamma_{y}} \leq \gamma_{x}$.  This anisotropic
$X$-$Y$-system is known to undergo a second-order quantum phase
transition at $h = \gamma_{x}$ or $h = \gamma_{y}$.  Investigating the
zero-temperature phase diagram for the ground state reveals two
phases, namely a \textbf{symmetric phase} for $h > \gamma_{x}$ where
the ground state is unique, and a \textbf{broken phase} for $h <
\gamma_{x}$ where the ground state becomes two-fold degenerate in the
thermodynamic limit ($n \to \infty$).  The full spectrum is more
complicated, with four different zones arising in the phase diagram
\cite{Ribeiro07,Ribeiro08}.  The ground state always lies in the
maximum spin sector, and is therefore symmetric.  In the large-field
limit ($h \to \infty$) the ground state becomes separable $\ket{\psi}
= \ket{\!  \uparrow}^{\otimes N}$, and in the thermodynamic limit ($n
\to \infty$) the spectrum of $H$ remains discrete.

Among other entanglement measures, the von Neumann entropy $S(h)$,
which characterises the entanglement of a bipartite decomposition, has
been used to analyse the \ac{LMG} model \cite{Latorre05}. A maximum at
the critical point was found, which is consistent with a theoretical
conjecture in \cite{Hines05}.  Furthermore, the von Neumann entropy of
the ground state scales logarithmically with the block size $L$ of the
bipartite decomposition \cite{Latorre05}.

\subsection{Distribution of the MPs}

The symmetric eigenstates of the \ac{LMG} model
\cite{Ribeiro08,Ribeiro07} can be represented in the spin coherent
basis by their Majorana polynomial $\psi( \alpha ) \propto
\prod_{k=1}^{2 s} ( \alpha - \alpha_k )$, and the Majorana roots
$\alpha_k$ become the \acp{MP} $\ket{\phi_k}$ of the corresponding
Majorana representation by means of an inverse stereographic
projection.

As outlined in \cite{Ribeiro07}, the Schr\"{o}dinger equation of the
\ac{LMG} model has the form
\begin{equation}\label{schrodinger}
  \left[ \frac{P_2 (\alpha)}{(2s)^2} \partial^{2}_{\alpha} +
    \frac{P_1 (\alpha)}{2s} \partial_{\alpha} + P_{0} ( \alpha ) \right]
  \Psi ( \alpha ) =  \epsilon \, \Psi ( \alpha ) \enspace ,
\end{equation}
where $P_{0}, P_{1}$ and $P_{2}$ are polynomials in $\alpha \in
\mbbc$, and $\Psi ( \alpha )$ is the Majorana polynomial of the
eigenstate.  We easily verify that if \eq{schrodinger} is solved for a
tuple of roots $\{ \alpha_{1} , \ldots , \alpha_{2s} \}$, then this is
also the case for the tuples $\{ - \alpha_{1} , \ldots , - \alpha_{2s}
\}$ and $\{ \cc{\alpha}_1 , \ldots , \cc{\alpha}_{2s} \}$.  This
implies that if $\alpha_k$ is a Majorana root, then so are $-
\alpha_k$, $\cc{\alpha}_k$ and $- \cc{\alpha}_k$.  On the Majorana
sphere this leads to reflective symmetries of the \ac{MP} distribution
along the $X$-$Z$-plane and the $Y$-$Z$-plane, and combining these two
reflections, they give rise to a rotational symmetry along the
$Z$-axis with rotational angle $\varphi = \pi$.  These symmetries are
also visible in the examples of eigenstates shown in
\cite{Ribeiro07,Ribeiro08}.  From \lemref{rot_symm} and
\lemref{maj_real} we obtain the result that the coefficients of all
eigenstates $\ket{\psi} = \sum_{m= -s}^{s} a_{m} \ket{s,m}$ of the
\ac{LMG} model are constrained to $a_{m} \in \mathbb{R}$ for even $m$
and $a_{m} = 0$ for odd $m$.

It is known that for all eigenstates of the \ac{LMG} model the
\acp{MP} are distributed along two curves $\mathcal{C}_{0}$ and
$\mathcal{C}_{1}$ on the Majorana sphere \cite{Ribeiro07,Ribeiro08}.
The curve $\mathcal{C}_{0}$ always coincides with the imaginary great
circle, and while $\mathcal{C}_{1}$ coincides with the real great
circle in the simplest case, it is in general different.  The \acp{MP}
on the curves change with the parameters of the Hamiltonian
\eqref{LMGhamiltonian}.  For the ground state all \acp{MP} lie on
$\mathcal{C}_{0}$, as shown in \fig{lmg_figure}(a), and transitions
between neighbouring energy levels correspond to pairs of \acp{MP}
switching from one curve to the other.  This implies that the $k$-th
excited state has $2k$ \acp{MP} on $\mathcal{C}_{1}$ and $2(s-k)$
\acp{MP} on $\mathcal{C}_{0}$.  From our discussion in
\sect{determiningslocc} it is clear that this qualitative difference
of the \ac{MP} distributions renders these states
\ac{SLOCC}-inequivalent to each other, which means that the different
energy levels of the \ac{LMG} model correspond to different types of
entanglement.

In order to study the phase transition of the ground state in the
thermodynamic limit, it is convenient to simplify the Hamiltonian
\eqref{LMGhamiltonian} as follows
\begin{equation}\label{LMGhamiltoniansimple}
  H = - \frac{1}{n} \gamma S_y^2 - h S_z \ens .
\end{equation}
This simplified Hamiltonian does not give rise to the full phase
diagram anymore, but otherwise it is expected to exhibit the same
qualitative behaviour as the Hamiltonian \eqref{LMGhamiltonian}.  In
the thermodynamic limit the amount of \acp{MP} becomes infinite, so
the discrete distribution of \acp{MP} turns into a continuous
probability distribution.  From the simplified Hamiltonian
\eqref{LMGhamiltoniansimple} the following distribution has been
derived for the latitudinal angle of the \acp{MP} in the ground state
\cite{Ribeiro,Ribeiro08}:
\begin{equation}\label{mp_lmg}
  P_h (\theta) = \left\{ 
    \begin{array}{l l}
      \frac{\gamma + h \cos \theta}{2 \pi \gamma}&
      \quad \text{for $h \leq \gamma$ (broken phase)}
      \vspace{0.2cm} \\
      \frac{\sqrt{h (1 + \cos \theta) (2 \gamma - h +
          h \cos \theta)}}{2 \pi \gamma}&
      \quad \text{for $h \geq \gamma$ (symmetric phase)} \\
    \end{array} \right.
\end{equation}
These two expressions converge at $h = \gamma$, and since only the
ratio $\tfra{h}{\gamma}$ is physically significant, we can set $\gamma
= 1$ in the following.  The function $P_h (\theta)$ is normalised for
all values of $h$ within the respective areas where it is
well-defined, i.e.  $\theta \in [- \pi , \pi]$ for $h \leq 1$, and
$\theta \in [- \arccos \frac{h-2}{h} , \arccos \frac{h-2}{h}]$ for $h
\geq 1$.  Plots of $P_h (\theta)$ for three different values of $h$
are shown in \fig{lmg_figure}(b).

In the corresponding finite-spin case the \acp{MP} are distributed
along the imaginary circle in a pairwise fashion $\ket{\phi_i} =
\co_{\theta} \ket{0} \pm \I \si_{\theta} \ket{1}$, as shown in
\fig{lmg_figure}(a).  The total state $\ket{\Psi}$ is therefore
positive, and because of \lemref{lem_pos_cps} it suffices to determine
a positive \ac{CPP}.  The \acp{CPP} come in pairs too, $\ket{\sigma} =
\co_{\vartheta} \ket{0} \pm \si_{\vartheta} \ket{1}$, so the overlap
between any \ac{MP} and \ac{CPP} has the form
$\bracket{\phi_i}{\sigma} = \co_{\theta} \co_{\vartheta} \pm \I
\si_{\theta} \si_{\vartheta}$.

\begin{figure}[ht]
  \hspace{8mm}
  \begin{overpic}[scale=0.4]{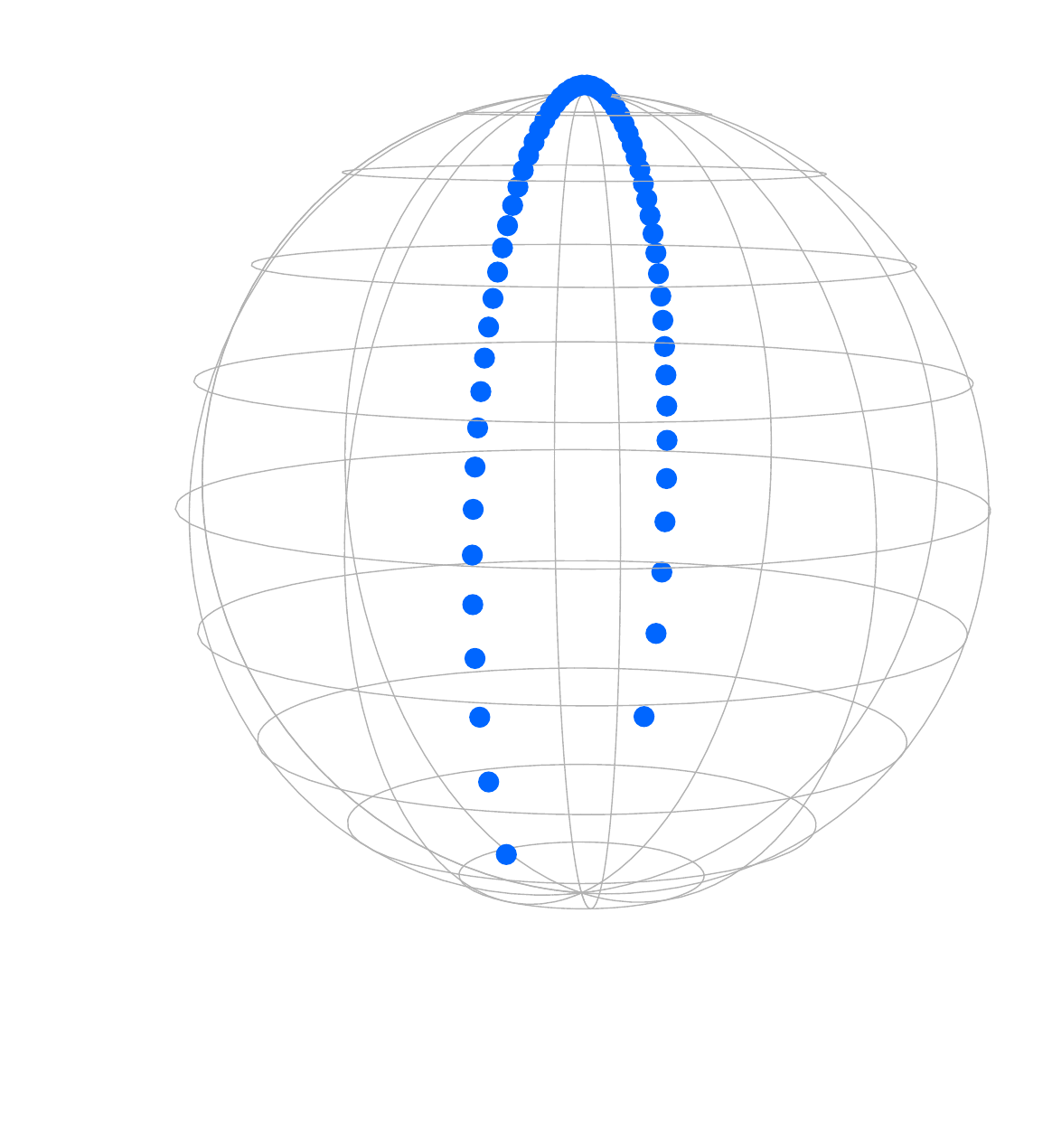}
    \put(5,5){(a)}
  \end{overpic}
  \hspace{12mm}
  \begin{overpic}[scale=0.65]{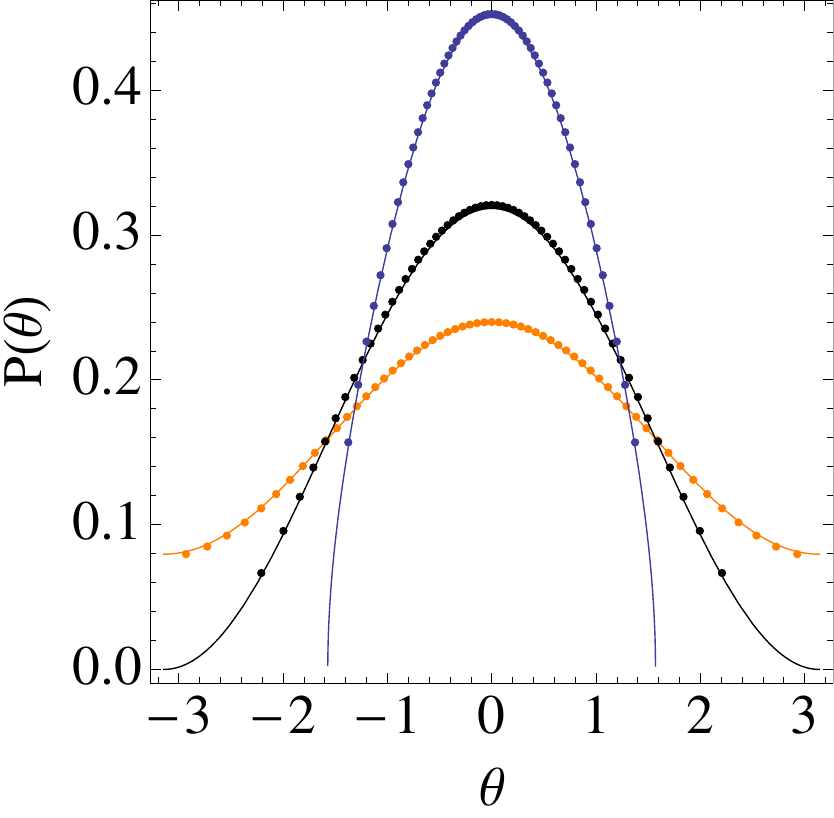}
    \put(-10,5){(b)}
  \end{overpic}
  \caption[MP distribution of the ground state in the LMG
  model]{\label{lmg_figure} The \ac{MP} distribution of the ground
    state of the \ac{LMG} model at the phase transition ($h=1$) is
    shown in (a) for total spin $s=30$. The $60$ \acp{MP} all lie on
    the imaginary great circle, and the corresponding eigenstate is
    positive.  The latitudinal probability distribution
    \protect\eqref{mp_lmg} of the \acp{MP} is shown in (b) for the
    three values $h = \frac{1}{2}$ (orange), $h = 1$ (black) and $h =
    2$ (blue). The finite-spin case and the thermodynamic limit are
    shown as point distributions and continuous lines,
    respectively. It is seen that the ring of \acp{MP} closes at the
    phase transition. [Figures generated from \textsc{Mathematica}
    code provided by Pedro Ribeiro.]}
\end{figure}

\subsection{Determination of the CPPs}

In the finite-spin case the location of the positive \ac{CPP} is
determined by the maximum of the spherical amplitude function $g^2 (
\vartheta ) = \abs{\bracket{\Psi}{\sigma (\vartheta ) }^{\otimes n}}^2
= \prod_{i = 1}^{n} \abs{\bracket{\phi_i}{\sigma (\vartheta )}}^2$
over all single-qubit states $\ket{\sigma ( \vartheta )} =
\co_{\vartheta} \ket{0} + \si_{\vartheta} \ket{1}$.  In order to find
a continuous variant of this quantity for the thermodynamic limit, we
consider its logarithm because it will turn the product into a sum,
which naturally becomes an integral over a probability distribution in
the infinite limit:
\begin{subequations}\label{limes}
  \begin{gather}
    \mathcal{G} (\vartheta ) := - \log_2 \, g^2 ( \vartheta ) = - \log_2
    \left[ \prod_{i = 1}^{n} \abs{\bracket{\phi_i}{ \sigma ( \vartheta )
        }}^2 \right] = - \sum_{i = 1}^{n} \log_2
    \abs{\bracket{\phi_i}{\sigma (\vartheta ) }}^2 \\
    \quad
    \xrightarrow{n \to \infty}
    \quad -
    \int\limits_{0}^{2 \pi} {P_{h} (\theta)} \log_2
    \abs{\bracket{\phi(\theta)}{\sigma (\vartheta ) }}^2 \,
    \D \theta \ens .
  \end{gather}
\end{subequations}
The monotonicity of the logarithm ensures that the maximum, and thus
the \ac{CPP}, remains the same. The integral runs over the imaginary
great circle where the \acp{MP} lie.  Note that $g^2 (\vartheta )$ is
missing the normalisation factor $K$ required for calculating the
geometric entanglement, and an analytic calculation of this factor is
believed to be intractable. For the determination of the \acp{CPP},
however, we do not need the normalisation factor, and our derivation
of the \acp{CPP} can prove helpful for further studies of the
geometric entanglement of the \ac{LMG} ground state.

In the following we present an analytic calculation of the positive
\ac{CPP} in the broken phase ($h \leq 1$).  The logarithmic amplitude
function along the real great circle is then
\begin{equation}\label{int}
  \begin{split}
    \mathcal{G} ( \vartheta )& =
    - \int\limits_{0}^{2 \pi} \frac{(1 + h \cos \theta )}{2 \pi}
    \log_2 \left[ \co_{\theta}^2 \co_{\vartheta}^2 +
      \si_{\theta}^2 \si_{\vartheta}^2 \right] \D \theta \\
    {}& = - \frac{1}{2 \pi} \int\limits_{0}^{2 \pi} (1 + h \cos \theta )
    \log_2 \left[ \tfrac{1}{2} ( 1 + \cos \vartheta \cos \theta ) \right]
    \D \theta \enspace .
  \end{split}
\end{equation}

This integral can be solved with the help of the following two
definite integrals (found with \textsc{Mathematica}) that hold for $a
\in [0,1]$:
\begin{align}\label{defints}
  {}& \int\limits_{0}^{2 \pi} \ln \left[ 1 + a \cos \theta \right] \D
  \theta = 2 \pi \ln \left[ \tfrac{1}{2}
    \left(1 + \sqrt{1-a^2} \right) \right] \ens , \\
  {}& \int\limits_{0}^{2 \pi} \cos \theta \ln
  \left[ 1 + a \cos \theta \right] \D \theta =
  \frac{2 \pi}{a} \left( 1 - \sqrt{1-a^2} \right) \ens .
\end{align}

With this we obtain
\begin{equation}\label{logamp}
  \mathcal{G} ( \vartheta ) =  2 - \log_2
  \left( 1 + \sqrt{ 1 - \cos^2 \vartheta } \right) - \frac{h}{\ln 2}
  \frac{1 - \sqrt{ 1 - \cos^2 \vartheta }}{\cos \vartheta } \enspace .
\end{equation}

A numerical integration of \eq{int} verifies that \eq{logamp} is
correct.  Setting the first derivative to zero yields the locations of
the \ac{CPP}:
\begin{equation}\label{integral}
  \frac{\partial \mathcal{G} (\vartheta)}{\partial \vartheta}
  = \frac{1}{\ln 2} \left( \frac{h -
      \cos \vartheta }{1+ \sin \vartheta} \right) \qquad , \qquad
  \frac{\partial \mathcal{G} (\vartheta)}{\partial \vartheta} = 0 \quad
  \Longleftrightarrow \quad h = \cos \vartheta \enspace .
\end{equation}

Thus the relationship between the parameter $h$ and the latitude
$\vartheta$ of the \acp{CPP} is $h = \cos \vartheta$, yielding the two
\acp{CPP} in the broken phase: $\ket{\sigma} = \sqrt{\frac{1+h}{2}}
\ket{0} \pm \sqrt{\frac{1-h}{2}} \ket{1}$.  This result has previously
been obtained via the mean-field approach in \cite{Dusuel05}, but our
method is different because instead of minimising the energy, we
maximised the spherical amplitude function of the geometric measure.

In the symmetric phase ($h \geq 1$) the logarithmic amplitude function
reads
\begin{equation}\label{intsymm}
  \mathcal{G} ( \vartheta ) = -
  \hspace{-6mm}
  \int\limits_{- \arccos \frac{h-2}{h}}^{\arccos \frac{h-2}{h}}
  \hspace{-6mm}
  \tfrac{1}{2 \pi}
  \sqrt{h (1 + \cos \theta ) (2 - h + h \cos \theta )}
  \log_2 \left[ \tfrac{1}{2} ( 1 +
    \cos \vartheta \cos \theta ) \right] \D \theta
  \enspace .
\end{equation}
An analytic solution of this integral is not known, but it is easily
verified numerically that the only \ac{CPP} is the north pole
$\ket{\sigma} = \ket{0}$ for all $h > 1$, something that has been
noted before \cite{Orus08}.

Surprisingly, the \acp{MP} and \acp{CPP} of the \ac{LMG} ground state
exhibit precisely the same qualitative behaviour under variation of
the magnetic field $h$ as observed in the simple 3 qubit model that we
investigated in \sect{two_three}.  In \fig{3_graph} the \ac{CPP}
initially stays at the north pole until the pair of \acp{MP} has moved
sufficiently far downwards.  At the distribution in \fig{3_graph}(c) a
\quo{phase transition} occurs, where the \ac{CPP} abruptly leaves the
north pole.  This is the same behaviour that the \ac{CPP} of the
\ac{LMG} ground state exhibits around the phase transition $h=1$ in
the thermodynamic limit, i.e. for infinitely many \acp{MP}.

\cleardoublepage

\chapter{Conclusions}\label{conclusion}

\begin{quotation}
  The aim of this final chapter is to provide a brief review of the
  main results presented in this thesis, and to give an outlook at
  promising further ideas and strategies that may be worthy of further
  investigation.
\end{quotation}

\section{Summary of  main results}\label{summary_results}

In this thesis permutation-symmetric quantum states were investigated
from various perspectives, such as entanglement classification,
extremal entanglement, invariants, and connections to related physical
phenomena.  For this the Majorana representation as well as the
geometric measure of entanglement were the essential tools.

\subsubsection*{Chapter 2: Geometric Measure of Entanglement}

A variety of analytical results about the \ac{GM} was derived in
\chap{geometric_measure}. \Theoref{numberofcps} predicts that the
maximally entangled pure state in terms of the \ac{GM} lies in the
span of its \acp{CPS}, and this theorem can be straightforwardly
adapted for the case of permutation-symmetric states.  Considering
arbitrary $n$ qubit states with $\ket{0}^{\otimes n}$ as a \ac{CPS},
the necessary conditions for the coefficients of these states were
presented in \theoref{coeff_theo}, and it was argued that the
conditions can be viewed as a standard form similar to the one of
Carteret \etal \cite{Carteret00}.  From
\theoref{theo_positive_entanglement} and \corref{max_pos_ent} we can
conclude that complex phases are in general an indispensable
ingredient for highly entangled multipartite states.  A new proof for
the upper bound $\Eg \leq \log_2 ( n+1 )$ on the maximal geometric
entanglement of $n$ qubit symmetric states was derived in
\theoref{const_integral}, and the proof of this theorem ignited the
idea of visualising symmetric states by spherical bodies of constant
volume.  Finally, it was shown in \theoref{mbqc_example} that W states
of an arbitrary, but fixed number of excitations cannot be useful for
\ac{MBQC}, not even in the approximate regime.

\subsubsection*{Chapter 3: Majorana Representation and Geometric
  Entanglement}

In \chap{majorana_representation} the Majorana representation was
employed to gain a better understanding of the geometric entanglement
of permutation-symmetric multiqubit states.  Effective visualisations
of the entire information about the entanglement of symmetric states
were presented in \sect{visualisation}, and in \sect{two_three} they
were applied to review the two and three qubit case.  With regard to
the concepts of totally invariant states and additivity, we discovered
in \sect{invariant_and_additivity} that states which are both positive
and totally invariant (e.g. the tetrahedron, octahedron and cube
state) are additive with respect to three distance-like entanglement
measures.  The relationship between the \quo{Majorana problem} of
determining the maximal symmetric entanglement and classical
optimisation problems on the sphere was elucidated in
\sect{extremal_point}.  In \sect{analytic} a variety of analytical
results were derived that link the analytic form of symmetric states
to the distribution of their \acp{MP} and \acp{CPP}.  For example,
\lemref{rot_symm} established that for a cyclic symmetry around the
$Z$-axis of the Majorana sphere to exist, many coefficients of the
underlying state need to vanish, and \lemref{maj_real} showed that
states are real \ac{iff} they exhibit an $X$-$Z$-plane reflective
symmetry.  In \theoref{theo_general_maj_rep} a \quo{generalised
  Majorana representation} was derived where the sum over all
permutations of the $n$ \acp{MP} is replaced with a sum over all
permutations of the subsets of an arbitrary, but fixed partition of
the \acp{MP}.  This generalisation has the advantage that the
analytical treatment of many \ac{MP} distributions can be simplified
by considering certain subsets of \acp{MP}, e.g. those \acp{MP} that
are equidistantly distributed over a circle.  In the concluding
\sect{pos_symm_states} some interesting results were obtained for the
Majorana representation of positive symmetric $n$ qubit states.  In
particular, they can have at most $2n - 4$ \acp{CPP}, with the
possible locations narrowly pinned down to either the positive
half-circle of the Majorana sphere, or to horizontal circles
corresponding to a cyclic $Z$-axis symmetry.

\subsubsection*{Chapter 4: Maximally Entangled Symmetric States}

Strong candidates for the maximally entangled symmetric states of up
to 12 qubits in terms of the \ac{GM} were found in \chap{solutions}.
The combination of analytical and numerical methods employed for the
search were outlined in \sect{methodology}.  Visualisations by means
of the \acp{MP} and \acp{CPP}, as well as the spherical amplitude and
volume functions proved to be very useful tools in the search for high
and maximal entanglement.  The \acp{CPP} of the positive-valued
Platonic states (tetrahedron, octahedron and cube) were found to
follow immediately from the rotation properties of the \acp{MP}.
Comparisons with the extremal distributions of \toths and Thomson's
problems show that, in some cases, the optimal solution to the
Majorana problem is the same, but in other cases it significantly
differs.  In \sect{discussion} the results obtained were interpreted
and discussed, and an overview of the properties of the investigated
states was presented in \tabref{table4.5}.  With regard to the
entanglement scaling it was found that our results are consistent with
the theory and similar studies, and that maximally entangled symmetric
states seem to admit a positive computational form only for $n<10$
qubits.  The distribution behaviour of the \acp{MP} and \acp{CPP}
could be appropriately explained with the spherical amplitude and
volume functions, which is due to the fact that the global maxima and
minima are manifestations of the \acp{CPP} and \acp{MP}, respectively.
Motivated by Euler's formula, \conref{conjecture_cpp} affirms our
belief in a deeper-lying geometric connection between the \ac{MP}
distribution and the maxima of the spherical amplitude function,
something that would yield a general upper bound of $2n-4$ \acp{CPP}.

\subsubsection*{Chapter 5: Classification of Symmetric Entanglement}

In \chap{classification} the entanglement of $n$ qubit symmetric
states was investigated from qualitative viewpoints.  The three
entanglement classification schemes \ac{LOCC}, \ac{SLOCC} and the
\acf{DC} were reviewed for symmetric multiqubit states in
\sect{overview_entclass}.  It was found in \sect{mobius} that the \mob
transformations from complex analysis do not only allow for a simple
and complete description of the freedoms present in symmetric
\ac{SLOCC} operations, but also provide a straightforward
visualisation of these freedoms by means of the Majorana sphere.  The
symmetric \ac{SLOCC} classes of up to 5 qubits were fully
characterised by representative states with simple \ac{MP}
distributions in \sect{representative}, and in the 4 qubit case these
representations are unique, unlike other classification schemes such
as the \acp{EF}.  Comparing the symmetric \ac{SLOCC} classes to the
\acp{EF}, it is found in \sect{families} that the partition into
symmetric \ac{SLOCC} classes is a refinement of the partition into
\acp{EF}.  In \sect{determiningslocc} it was seen how \quo{invariants}
of \mob transformations, such as circles, angles and cross-ratios,
allow one to check whether symmetric states are \ac{SLOCC}-equivalent
or not. In particular, the (S)LOCC-inequivalence of complex conjugate
states could be readily explained with geometric chirality.  The
different values of the cross-ratio under permutations was linked to
the generic \ac{DC} class of 4 qubits by means of \ac{SLOCC}
invariants in \sect{symm_inv}.  Global entanglement measures were
reviewed in \sect{global_ent}, and it was found that the tetrahedron
state and other symmetric states with high geometric entanglement or
symmetries in their \ac{MP} distribution play a prominent role in the
maximisation of these entanglement monotones.

\subsubsection*{Chapter 6: Links and Connections}

In the tripartite \chap{connections} several links between the
Majorana representation and related topics in mathematics and physics
were highlighted.

Two different definitions of maximally non-classical spin-$j$ states,
namely the \quo{anticoherent} spin states and the \quo{queens of
  quantum}, were linked to the corresponding symmetric states by means
of the Majorana representation in \sect{anticoherent_queens}.  It was
found that the \ac{MP} distribution of maximally entangled symmetric
$n$ qubit states in terms of the geometric measure does in general not
describe anticoherent spin states, but it coincides with the
\quo{Bures-queens of quantum}.

In \sect{dualpoly} it was discovered that an analogue to the dual
polyhedra of the five Platonic solids exists for the corresponding
symmetric states, in the sense that the sets of \acp{MP} and \acp{CPP}
are interchanged.  This deepens the relationship between the Majorana
representation and the polyhedra of classical geometry.

Finally, the permutation-symmetric ground state of the \ac{LMG} model
in the thermodynamic limit was investigated in light of the Majorana
representation in \sect{LMG}.  By making a suitable transition from a
discrete to a continuous \ac{MP} distribution for the thermodynamic
limit, we found a new method to prove the degeneracy and locations of
the \acp{CPP} in the broken phase, something that could be used for
investigating the geometric entanglement of the \ac{LMG} ground state.

\section{Outlook and new ideas}\label{outlook}

It is all too natural that some of the ideas and strategies that
spring up during a research degree cannot be investigated with the
rigour they deserve. This is especially true for major open questions
which may require a substantial amount of further literature review,
calculations or programming.  Here I will briefly outline several open
questions and promising new ideas related to topics of the present
thesis.

\subsection*{Quantum computing with liquid Helium}

Liquid Helium bubbles with stable electron patterns above their
surface were already mentioned in \sect{extremal_point}.  When liquid
$^4$He undergoes an electrohydrodynamic instability in a vacuum, the
system can emanate small bubbles of liquid Helium with millions to
billions of electrons hovering above the surface
\cite{Albrecht87,Leiderer95}.  The electrons are localised in a stable
potential well generated by the long-range positive mirror charge in
the dielectric Helium surface, and the (short-range) Pauli principle
which prevents the electrons from falling back into the liquid Helium
\cite{Albrecht87,Leiderer95}.  This leads to the creation of a nearly
ideal 2D electron gas described by a 1D hydrogenic spectrum
\cite{Albrecht87,Platzman99}.

The radius of such Helium drops is typically between 10$\mu$m and
100$\mu$m, with the electrons located around 100\text{\AA} above the
surface, and separated by 2000\text{\AA} or more from their nearest
neighbours \cite{Albrecht87}, see \fig{helium}. Typical electron
numbers are $10^5$ to $10^7$, but there can be as many as $10^9$.
Lifetimes for the Helium drop of more than $100$ms have been achieved
experimentally, and the feasibility of much longer times in a
quadrupole configuration has been proposed \cite{Albrecht87}.  Very
low decoherence rates can be achieved by cooling the system down into
the millikelvin range, yielding spin coherence times beyond 100s for
the electrons above the Helium \cite{Lyon06}.

\begin{figure}[ht]
  \centering
  \includegraphics[scale=0.55]{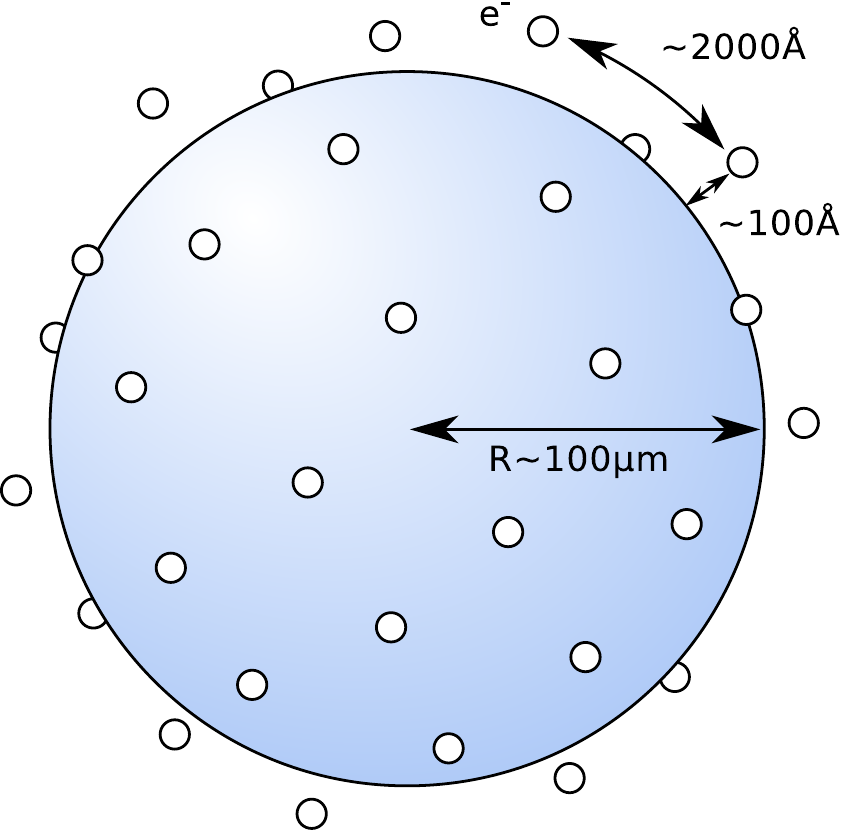}
  \caption[Electrons above the surface of liquid Helium
  drops]{\label{helium} Schematic diagram of electrons hovering above
    the surface of a liquid $^4$Helium drop. The number of electrons
    is typically in the range of $10^5$ to $10^7$, and when cooling
    the system down into the millikelvin range the electrons are
    expected to \protect\quo{freeze} into a pattern that solves
    Thomson problem \protect\eqref{thomson_def}.}
\end{figure}

For these reasons several schemes have been proposed for quantum
computing with liquid Helium, utilising the spin of electrons above a
flat Helium layer on a gate electrode arrangement \cite{Lyon06}, or by
employing the lowest two hydrogenic levels of individual electrons in
the potential well above a Helium film \cite{Platzman99}.  Here we
briefly sketch a scheme employing the spin of the electrons around a
Helium bubble, as seen in \fig{helium}.  The Coulomb energy between
pairs of neighbouring electrons is of the order 10K which is much
larger than $k_{\text{B}} T$ in a system that is cooled down to
millikelvins, and a phase transition of the electrons into a
\quo{frozen} 2D pattern has been observed experimentally
\cite{Grimes79}.  Considering that the electrons describe a nearly
ideal 2D Coulomb system, we expect this pattern to be a solution of
Thomson's problem \eqref{thomson_def}. Since the point distributions
of Thomson's problem have been verified to represent highly entangled
symmetric states, one can ask whether the geometry of the physical
setup could be employed to generate almost maximally entangled $n$
qubit symmetric states with possible values of $n$ spanning several
orders of magnitude.  To do so, one would need to find a way to
associate the spin of the electrons with the states of the
corresponding \acp{MP} in the Majorana representation of the symmetric
state.  It is not clear how exactly this could be achieved
experimentally, and in particular how the permutation present in the
Majorana representation \eqref{majorana_definition} translates to an
experimental setup, but any technique should take advantage of the
fact that the spatial distribution of the spin-$\tfrac{1}{2}$ systems
precisely matches the Majorana representation of the desired symmetric
state.  If it were possible to do this, one could easily generate
highly entangled symmetric quantum states that are subject to very
little decoherence and that could be used for applications in quantum
information theory that rely on symmetric states. To do so, it would
also be necessary to devise techniques for addressing and manipulating
the individual spins in a controlled way.

\subsection*{Maximally entangled states}

The question of which states of a given Hilbert space are maximally
entangled with respect to the geometric measure (or other entanglement
measures) is still not solved, although we were able make significant
progress for symmetric states.  For symmetric $n$ qubit states we
found the solutions for the first few $n$, and in the process of doing
so we gained a good understanding of how the \acp{MP} and \acp{CPP} of
highly entangled symmetric states are distributed.  For arbitrary
multipartite systems we found in \theoref{numberofcps} that the
maximally entangled state always lies in the span of its
\acp{CPS}. The dimension of this span remains an open problem,
although we expect it to be large in general, namely of the same order
as the Hilbert space itself.  Another open question is whether the set
of \acp{CPS} is in general discrete or continuous.

In the context of maximally 4-tangled states {\DJ}okovi\'{c} and
Osterloh recently stated that \quo{The account of the complementary
  set of non-graph states as a resource for quantum information
  processing is largely unexplored} \cite{Dokovic09}.  Being the
single non-graph state among the three maximally 4-tangled states, the
tetrahedron state should therefore be considered a prime focus of
future research.  A summary of the exceptional position that the
tetrahedron state holds in the 4 qubit Hilbert space was already given
in \sect{majorana_four}, and it is likely that further intriguing
properties can be found for this state or similar states (e.g. the
other Platonic states).

\subsection*{Morse theory}

There are still open questions with regard to the number and
distribution of the \acp{CPP} of symmetric states. Most notably, it
was shown in \sect{pos_symm_states} that with the exception of the
Dicke states (which are the only ones with a continuous rotational
symmetry of the \ac{MP} distribution) all positive symmetric $n$ qubit
states have at most $2n-4$ \acp{CPP}.  In \sect{number_cpp} the same
upper bound was conjectured to hold for general symmetric states as
well, with Euler's formula for convex polyhedra being a strong
indication in favour of this.  The conjectured relationship between
the surfaces of the polyhedron described by the \acp{MP} and the local
maxima in the spherical amplitude function is particularly apparent
for the five \quo{Platonic states} shown in \fig{platonic_dual_mpcpp}:
each \ac{CPP} of the tetrahedron state lies antipodal to the centre of
a face, and the \acp{CPP} of the other four Platonic states all lie at
the centre of a face.

One potential way to shed light on this relationship is Morse theory,
a branch of differential topology that investigates how the topology
of a manifold is related to the stationary points, such as maxima and
minima, of real-valued functions defined on that manifold
\cite{Milnor,Matsumoto}.  To give an example, the different topologies
of a line and a circle manifest themselves in the fact that a line
admits continuous functions with arbitrarily large values (e.g. $f(x)
= x^2$), while the codomain of continuous functions defined on a
circle is limited to finite values, thus assuming a maximum value
somewhere on the circle.  Recalling that the global maxima of the
spherical amplitude function are the \acp{CPP}, and that the global
minima -- the zeroes -- lie diametrically opposite to the \acp{MP}, it
is straightforward to expect that the topology of the Majorana sphere,
which is simply the 3D sphere $\mathcal{S}^2$, can tell us more about
the number of \acp{CPP}.

\subsection*{Extensions of the Majorana representation}

Finally we remark that questions still remain open as to how the
elegant Majorana representation of pure spin-$J$ states -- or
equivalently pure symmetric states of $2J$ qubits -- can be extended
to mixed states or to states of several spin-$J$ particles.  Efforts
have been made to find meaningful generalisations to $n$ spin-$J$
particles, most notably in \cite{Kolenderski10}, where an insightful
generalisation was derived with the help of the Schur-Weyl duality.  A
generalisation to mixed states is considered much more difficult,
because the number of parameters present in mixed states of a spin-$J$
particle scales much faster with the dimension than $2J$ Bloch vectors
in the interior of the Majorana sphere can account for
\cite{Kolenderski10}.  Nevertheless, an \ac{SLOCC} entanglement
classification of mixed symmetric $n$ qubit states has been achieved
very recently with the help of the \ac{DC} classes \cite{Bastin11}.

Judging by the attention that the Majorana representation has gathered
in recent years, and the multitude of recently discovered results, it
would not be surprising to see significant further progress in this
area.

\cleardoublepage
\singlespacing

\pagestyle{fancy} \fancyhead[LO,RE]{}
\fancyhead[LE]{\nouppercase{\bfseries \leftmark}}
\fancyhead[RO]{\nouppercase{\bfseries \rightmark}}
\phantomsection
\addcontentsline{toc}{chapter}{Bibliography}

\end{document}